\documentclass[american,reqno,openany]{amsbook}
\usepackage{amsmath,amssymb}
\usepackage{graphics,epsfig,color}
\usepackage{booktabs,siunitx}  
\usepackage[mathscr]{euscript}
\usepackage{verbatim}
\usepackage{quoting}
\usepackage{mathrsfs}
\usepackage{hyperref,color}
\usepackage{appendix}
\usepackage{wasysym}
\usepackage[dvipsnames]{xcolor}
\usepackage[framemethod=default]{mdframed}
\usepackage[Lenny]{fncychap}
\usepackage{fancyhdr}
\usepackage{titletoc}
\usepackage{appendix}

\mdfsetup{skipabove=\topskip,skipbelow=\topskip}
\global\mdfdefinestyle{fig}{%
linecolor=gray,linewidth=3pt,align=center,
leftmargin=1cm,rightmargin=1cm}
\global\mdfdefinestyle{th}{%
linecolor=gray,linewidth=3pt}
\global\mdfdefinestyle{eq}{%
linecolor=black,linewidth=0.5pt,align=center}

\setlength{\footskip}{1cm}

\usepackage[nowrite,infront,suftesi]{frontespizio}

\usepackage{tikz}
\usepackage[all]{xy}
\xyoption{all}

\newtheoremstyle{theorem}{}{}{\itshape}{}{\bfseries}{.}{.5em}{}
\theoremstyle{theorem}
\newtheorem{theorem}{Theorem} 
\newtheorem{proposition}[theorem]{Proposition}
\newtheorem{lemma}[theorem]{Lemma}
\newtheorem{corollary}[theorem]{Corollary}

\newtheoremstyle{definition}{}{}{}{}{\bfseries}{.}{.5em}{}
\theoremstyle{definition}
\newtheorem{definition}[theorem]{Definition}
\newtheorem{assumption}[theorem]{Assumption}

\newtheoremstyle{remark}{}{}{}{}{\bfseries}{.}{.5em}{}
\theoremstyle{remark}
\newtheorem{remark}[theorem]{Remark}
\newtheorem{example}[theorem]{Example}

\setcounter{tocdepth}{2}
\makeatletter
\def\l@subsection{\@tocline{2}{0pt}{2.5pc}{5pc}{}}
\makeatother
\numberwithin{section}{chapter}
\numberwithin{subsection}{section}
\numberwithin{paragraph}{subsection}
\numberwithin{equation}{chapter}
\numberwithin{theorem}{chapter}
\numberwithin{figure}{chapter}
\numberwithin{table}{chapter}

\renewcommand{\div}{\textrm{\,\,div\,}}
\newcommand{\Ker}{\textrm{Ker\,}}
\newcommand{\Ima}{\textrm{Im\,}}
\newcommand{\us}{\underset}
\newcommand{\os}{\overset}
\newcommand{\mc}[1]{{\mathcal #1}}
\newcommand{\mtt}[1]{{\mathtt #1}}
\newcommand{\mf}[1]{{\mathfrak #1}}
\newcommand{\bb}[1]{{\mathbb #1}}

\newcommand{\rt}[1]{\textcolor{red}{#1}}
\newcommand{\upbar}[1]{\,\overline{\! #1}}


\newcommand{\vphi}{\varphi}
\newcommand{\p}{\partial}

\newcommand{\hrho}{\hat\rho}
\newcommand{\e}{\emph}
\newcommand{\D}{\Delta}
\newcommand{\n}{\nabla}
\newcommand{\veps}{\varepsilon}
\renewcommand{\tilde}{\widetilde}
\renewcommand{\hat}{\widehat}
\renewcommand{\k}{\kappa}
\renewcommand{\j}{\jmath}
\newcommand{\sign}{\text{sign}}

%

%
\definecolor{light}{gray}{.9}

\newcommand{\facciatabianca}{\newpage\shipout\null\stepcounter{page}}


\usepackage[some]{background}
\usepackage[absolute,overlay]{textpos}

\definecolor{color1}{RGB}{31,55, 61}
\definecolor{background}{RGB}{238,237,229}

\definecolor{comp}{RGB}{98,99,101}
\definecolor{phys}{RGB}{31,55,61}
\definecolor{math}{RGB}{133,63,72}
\definecolor{urban}{RGB}{51,95,60}

\definecolor{color1}{RGB}{133,63,72}

\usepackage{pgf,tikz}
\usetikzlibrary{arrows}

\usepackage{scrextend}

\backgroundsetup{
	scale=1,
	angle=0,
	opacity=1,
	contents={\begin{tikzpicture}[remember picture,overlay]
		\path [fill=color1] (-0.5\paperwidth,0.5\paperheight) rectangle (0.5\paperwidth,-0.0\paperheight);  
		\path [fill=background] (-0.5\paperwidth,-0.5\paperheight) rectangle (0.5\paperwidth,-0.0\paperheight);  
		\end{tikzpicture}}
}

\begin{document}

\begin{titlepage} 

	\BgThispage
	
	\fontfamily{phv}\selectfont{
		
		\begin{flushleft}
			\noindent
			{\Large \textcolor{background}{Gran Sasso Science Institute}}
			\vspace{4mm}
			\newline
			\fontsize{11}{1.12}\selectfont{
				\textbf{\textcolor{background}{PHD PROGRAMME IN}}
				\textbf{\textcolor{background}{MATHEMATICS IN NATURAL, SOCIAL AND LIFE SCIENCES}}
				\vspace{2mm} \newline
				\textcolor{background}{Cycle XXIX - Academic year 2016/2017-L'Aquila}

				\vspace{10mm}
			}
		\end{flushleft}
		
		\begin{flushleft}
			\fontsize{19}{1.12}\selectfont{
				\noindent
				
				\textbf{\textcolor{background}{\centering{Microscopic and macroscopic\\
							perspectives on\\
							\hspace{1.1cm}stationary nonequilibrium states}}}\newline
				\vspace{25mm}
			}
		\end{flushleft}
		
		\begin{flushright}
			\noindent
			\textcolor{background}{PHD CANDIDATE} \\
			{\large \color{background} \textbf{Leonardo De Carlo}}
		\end{flushright}
		\vspace{28mm}
		\begin{flushright}
			\noindent
			\textcolor{color1}{ADVISOR} \\
			{\large \color{color1} \textbf{Prof. Davide Gabrielli}} \\
			{\color{color1} Universit\`{a} degli studi  dell'Aquila}   
		\end{flushright}

		\begin{figure}[b]
			\begin{center}
				\begin{minipage}[c]{.40\textwidth}
					\centering
					\includegraphics[height=20mm]{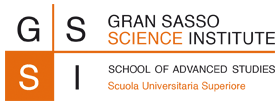}
				\end{minipage}
				\hspace{20mm}
				\begin{minipage}[c]{.40\textwidth}
					\centering
					\includegraphics[height=20mm]{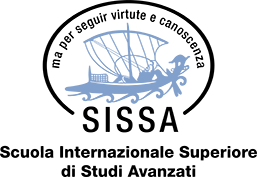}
				\end{minipage}
			\end{center}
		\end{figure}

	}
	
\end{titlepage}

\begin{frontespizio}

\end{frontespizio}

\facciatabianca

\thispagestyle{empty}
\vspace*{5cm}
\textbf{Thesis Jury Members} \newline

\noindent Cedric Bernardin ( Université de Nice Sophia-Antipolis)\newline
Anna De Masi (Università degli studi  dell'Aquila)\newline
Cristian Giardinà (Università degli studi di Modena e Reggio Emila)\newline
Mauro Piccioni (Università degli studi di Roma,  "La Sapienza")\newline
Errico Presutti (GSSI, L'Aquila)\newline

\textbf{Thesis Referees} \newline

\noindent Cedric Bernardin ( Université de Nice Sophia-Antipolis)\newline
Cristian Giardinà (Università degli studi di Modena e Reggio Emila)\newline

\pagestyle{fancy}
\fancyhf{}
\rhead{}
\lhead{}
\cfoot{\footnotesize \thepage}
\renewcommand{\headrulewidth}{0pt}

\facciatabianca

\thispagestyle{empty}
\vspace*{5cm}
\begin{center}
{\LARGE \emph{
To my grandparents\\
for all the time we spent together and  what they gave me since before the very beginning.\\}}
\end{center}

\facciatabianca

\setcounter{page}{1}

{\Large\sffamily  \tableofcontents}

\section*{Introduction: structure of the thesis}

The main subject of the thesis is the study of stationary nonequilibrium  states trough the use of microscopic stochastic models that encode the physical interaction in the rules of Markovian dynamics for particles configurations. These models are known as interacting particles systems and are simple enough to be treated analytically but also complex enough to capture essential physical behaviours.

The thesis is organized in two parts. The  part \ref{p:mt} is devoted to the microscopic theory of the stationary states. We characterize these states developing some general  structures  that have an interest in themselves.

The part \ref{p:mact} studies the problem  macroscopically. In particular we consider the large deviations asymptotic
behavior for a class of solvable one dimensional models of heat conduction.

Both part \ref{p:mt} and \ref{p:mact} begin with  an introduction of motivational character followed by an overview of the relevant results  and a summary explaining the  organization. Even tough the two parts are strictly connected  they can be  read independently  after chapter  \ref{ch:IPSmodels}. The material is presented in such a way to be self-consistent as much as possible.

The result of part \ref{p:mt} can be found in \cite{DG17a} while that ones of part \ref{p:mact} in \cite{DG17b}.

\vspace{2cm}

\subsection*{Acknowledgement}

First of all I thanks my advisor Davide Gabrielli to have led me with great attention during my research and for its careful reading of the manuscript. I acknowledge the Gran Sasso Science Istitute (GSSI) and the coordinators A. DeSimone, P. Marcati and E. Presutti for the opportunity to be involved in this research project. I especially thanks E. Presutti for having listened to me on many occasions.


{\sffamily\part{Microscopic theory}\label{p:mt}}

\section*{Why a microscopic theory?}
Idealization of the microscopical physics of the matter by simple models had a fundamental role to understand the macroscopic phenomenology of the equilibrium states \cite{Ma93, KTN91}. This efficacy of ideal models is due to the fact that to a reasonable extent the macroscopic behaviour is independent of the microscopic details, namely different systems exhibit qualitatively the same phenomenology at large scales. The same pictures is expected for nonequilibrium states, in these context using deterministic models is beyond the reach of the nowdays mathematics. Therefore, to study nonequilibrium states, it is common to assume a random motion at small scale. Even if at microscopical level particles move along Newton's equation, this assumption is physically reasonable, indeed  their motion is pretty chaotic in  a random fashion without exhibiting some aim or particular coordination and  the real reason of why we observe particles forming  macroscopic pattern like deterministic  field (i.e. density) equations (PDEs) is that locally there are some constraints imposed to their dynamics like local conservation of mass, momentum and energy. These constraints are not immediately visible on microscopic scale but manifest themselves on the large scale where a deterministic motion appears for macroscopic observables like density, temperature, charge, etc. . This is also the reason of why to a reasonable extent macroscopic behaviour is independent of the microscopic details. Since with stochastic models it is still possible to impose local conservation laws for the mass, energy, etc. that are exchanged by the microscopical dynamics  this approach allows to prove  rigorous autonomous macroscopic dynamics \cite{KipLan99}.
So studying different stochastic microscopical dynamics it is an important tool to comprehend which are the microscopical detail that are important to see different macroscopical behaviour. 
Here we are dealing both with models with discrete variables and models with continuous variables, we are calling the first ones interacting  particles systems and the second ones interacting energies-masses systems. The dynamics in these models are Markovian.

\vspace{0.8cm}

To get a satisfactory understanding of the nonequilibrium stationary states it is necessary to be able to describe the measures characterizing these states, that is the equivalent of the Gibbs measures in equilibrium  which lead to the thermodynamics ensembles.
Their construction starts from the understanding of the long time behaviour of the microscopic dynamics and they provide the bridge to pass from the microscopic to the macroscopic description of the system.  In nonequilibrium, the problem in the construction of these ensemble is presence of non-conservative terms in the Newton's equations describing the microscopic evolution of the molecules of a gas for example, for this reason classic arguments to justify the construction of the equilibrium statistical mechanics ensembles fail. For example, because of dissipation, the long time dynamics of a systems of many particles will be restricted to a subset of the space states of smaller dimension, i.e. of zero Lebesgue measure. An attempt of ensembles theory out of equilibrium is that based on the costruction of SRB measures for deterministic systems having  as asympotic motion an Anosov motion on a chaotic attractor \cite{Ga14}. This phenomenon can be  analytically studied in a  relatively simple way  in systems with low degrees of freedom, for example in \cite{DGG16}. In extended physical systems appear a certain number of "central directions" \cite{Ru12}, namely degrees of freedom  neither strictly contractive nor strictly expansive, under the "chaotic hypothesis", that extend "ergodic hypothesis" to nonequilibrium systems, 
these particular degrees of freedom "disappear" because the asymptotic motion is conceived as an Anosov motion on a chaotic attractor and the SRB measures are proposed as candidate  for nonequilibrium ensembles, see \cite{Ga14}.
To make things easier from the analytical point of view, another way is that to consider stochastic models with many microscopic degrees of freedom and studying their stationary states. Of course these  model have to be able to capture (hopefully) the essential peculiarities of real models in such a way to allow quantitative observable predictions. This last direction is  that one we are following.

In this first part we will be motivated by this framework to develop two different techniques  which will allow to construct invariant measures for Markov dynamics describing some interacting stochastic models, many of which are proved (more or less rigorously) to exhibit a collective behaviour described by a partial differential equation, as in mentioned in the previous paragraph. The microscopic distinction between equilibrium and non-equilibrium for our stochastic models will  be the condition to satisfy or not the detailed balance condition, i.e. reversibility. When it is not fulfilled we have  nonequilibrium models which microscopic currents are characterized by non-zero averages. The mathematical description of the microscopic  currents is another fundamental part in the comprehension of nonequilibrium states and it will be the second topic of our interest in this first part of the thesis.

\section*{Microscopic Results: overview}

First, we discuss two possible boundary driven totally asymmetric versions
of the KMP model. In one case the invariant measure is given by a product measure
of exponentials. For the other dynamics the invariant measure has not a simple
structure and we give a representation in terms of convex combinations of products
of Gamma distributions. This is done using a duality representation of the process.
The same could be done also for the dual model of the KMP (that we shortly call
KMPd). This is done in section \ref{sec:Amodels} where we  define
some totally asymmetric dynamics having the same weakly asymmetric diffusive
scaling limit, this will be discussed in chapter \ref{ch:HSL} jointly with subsection \ref{subsec:Ie-mc}.

A new functional Hodge decomposition for translational covariant discrete vector fields is proved in section \ref{sec: HD}. The discrete Hodge decomposition is the splitting of any discrete vector field into a gradient part, a curl part and an harmonic part. Consider a discrete vector field depending in a translational covariant  way (see \eqref{eq:jinv}) on  configurations of particles on lattice. We discuss that it is possible to do a similar splitting where the gradient component and the curl term are respectively a gradient and a a curl (an orthogonal gradient in two dimensions) with respect to the shift of  functions on configurations.

A motivation for the functional Hodge decomposition comes from the theory of hydrodynamic scaling limits. In the case of diffusive systems the derivation of the hydrodynamic scaling limit is much simplified in the case of models having the instantaneous current of gradient type. When also the invariant measure is known it is possible to have an explicit form of the transport coefficients.
In the non gradient cases the theory develops starting from an orthogonal decomposition \cite{VY97} and the transport coefficients have a representation in terms of the Green Kubo formulas \cite{KipLan99,Spo91}.
While the orthogonal splitting in \cite{VY97} is in general not explicit, our splitting is constructive. It is an interesting issue to study the information encoded by our decomposition, its relation with the variational representation of the transport coefficients \cite{KipLan99,Spo91} and in particular its relation with the finite dimensional approximation of the transport coefficients analyzed in \cite{AKM17}.

In chapter \ref{ch:SNS} we  propose and study two different ways to generate non reversible stochastic particle systems having a fixed Gibbsian invariant measure. The two methods are quite general and correspond to two different discrete geometric constructions that have an interest in themselves.

The construction of reversible Markovian dynamics having a fixed target invariant measure is a classic problem that is at the core  of the Monte Carlo method. This is very useful in several applicative problems and the literature concerning the method is huge. We suggest for example \cite{Dia09,LB15} and the references therein. Instead, the construction of non reversible Markovian dynamics is more difficult. This is an interesting problem since the violation of detailed balance is expected for example to speed up the convergence of the dynamics \cite{Bi15,KJZ16,SB15}. Another motivation come from the fact that the understanding of collective and macroscopic behaviour of interacting multi-component systems is based on the knowledge of the stationary state as discussed above in the introduction to the microscopic theory. Several references are dedicated to the problem of finding non reversible Markov chains having a fixed target invariant measure in specific models, see for example \cite{BB15,CT16,FGS16,Go13,GL15,LG06,PSS17}. We study the problem in the case of interacting particle systems and describe two somehow general approaches. We consider for simplicity the case of stochastic lattice gases but the methods can be generalized to more general models.

Our first approach of section \ref{sec:applications} is based on the equivalence between the problem of finding rates that have a fixed invariant measure and the problem of finding a divergence free flow on the transition graph. A flow is a map that associates a positive number, representing the mass flowed, to each oriented edge of a graph. The divergence of a flow is the amount of mass flowing outside of a vertex minus the amount of mass flowing into the vertex.   The usefulness of this equivalence is on the fact that the structure of divergence free flows is quite well understood. Any divergence free flow can be indeed represented as a superposition of  flows associated with elementary cycles \cite{BFG15,GV12,MacQ81} (see section 2 for precise definitions). A local non reversible dynamics is then generated considering elementary cycles on the configuration space for which the configuration of particles is frozen outside of a bounded region.

The second approach of section \ref{sec:sta&ort} is based on the  functional Hodge decomposition just described few lines above. 
The stationary condition for stochastic lattice gases can be naturally interpreted as an orthogonality condition with respect to a given harmonic discrete vector field. To satisfy this condition it is enough to consider discrete vector fields in the orthogonal complement and this can be done using the functional decomposition. 

With both methods, we obtain very general classes and families of models naturally parametrized by functions on the configuration space and having similar structures. We are not going to compare the results with the two methods but in principle this is possible. In particular for any model the stationary condition can be interpreted with the two geometric constructions. For simplicity, we show all our results in dimensions 1 and 2 and for stochastic lattice gases satisfying an exclusion rule. All the arguments can be directly generalized to higher dimensions and to more general state space. In particular in dimensions higher than 2 the same constructions can be done. The basic ideas are the same but more notation and more general geometric constructions are necessary. We concentrate on finite range translational invariant  measures but different situations with long range interactions and/or non translational invariant measures are also interesting. We consider only cycles obtained by local perturbations of configurations of particles but the case of non local cycles is also interesting \cite{PSS17}.

\subsection*{Structure of the part \ref{p:mt}}

The structure of part \ref{p:mt} is as follows.
In chapter \ref{ch:IPSmodels} we present the microscopic models that we are studying in the thesis. The problem of the stationary measures is introduced. Particular attention is given to the definition of  asymmetric (weakly and not weakly) models   and the description of the instantaneous current that is the basic object to understand the connection between the microscopic and the macroscopic description.

In chapter  \ref{ch:DEC} we set a general discrete exterior calculus suitable also for our purposes and here we prove the functional Hodge decomposition described before in the abstract.
In chapter \eqref{ch:SNS}, first we introduce  notation and discuss some basic tools in graph theory and  discrete geometry, later on in section \ref{sec:applications} we generate non reversible stochastic lattice gases starting from elementary cycles in the configuration space and in section \ref{sec:sta&ort} from our functional Hodge decomposition. 

 In appendix \ref{a:mic} are shown as complements  a functional Hodge decomposition for  vertex functions and a consequent  construction to generate invariant dynamics

\bigskip	

\noindent \textbf{\textit{Keywords:}} Markov chains, interacting particles systems, lattice gases, discrete exterior calculus, Hodge decomposition, nonequilibrium stationary states.


\clearpage
\pagestyle{fancy}
\fancyhf{}
\fancyhead[RE]{\sffamily \fontsize{10}{12} \selectfont \nouppercase{\leftmark}}
\fancyhead[RO]{\sffamily \fontsize{10}{12} \selectfont \nouppercase{\rightmark}}
\fancyhead[LE]{\sffamily \fontsize{10}{12} \selectfont Part \thepart: Microscopic theory}
\fancyhead[LO]{\sffamily \fontsize{10}{12} \selectfont Part \thepart: Microscopic theory}
\cfoot{\footnotesize \thepage}
\renewcommand{\headrulewidth}{0.01cm}

\chapter{Stochastic interacting systems}\label{ch:IPSmodels}

Before entering into detail of the study of microscopic models we spend few words to remind basic facts on continuous time                                                                                                                                                               Markov chains. Later in this chapter we introduce the basic notions to treat  stochastic interacting systems and the properties of the reference models. Moreover, we set the language we are carrying on trough all the text.

\section{Continuous time Markov chain }\label{sec:CTMC}

Here we treat Continuous Time Markov Chains (CTMCs) on a   finite  set  from  an infinitesimal (in time) description of the evolution. In this restriction to countable or finite state spaces  it is common to use the word "chain" rather than "process". A reference for general state space is \cite{Lig10} (chapter 3). CTMCs are very useful toy models to describe from a stochastic point a view phenomena with a fast loss of memory. They allow a very deep analytical treatment, in particular their long time behavior can be studied by some simple equations. 
CTMCs find application in many fields like electric networks \cite{DS06}, biology (epidemic) \cite{Nor97}, queuing theory \cite{Nor97}, information theory  and  web searching \cite{VL06}. The present context of our interest for the use of Markov Chains  is  the so called "Interacting particles systems", classical references are \cite{KipLan99}, \cite{Lig05} and \cite{Spo91}.  We  give a brief review of continuous time  Markov chains  with emphasis on stochastic interacting systems.
\vspace{1cm}

We consider a  squared domain $ \Lambda=(0,L]^n $  and its discrete approximation $\Lambda\cap\veps\mathbb{Z}^n $   with   scale separation  $ \veps $ such that the set of vertices $  V_N $ of the lattice obtained in this way     contains exactly $ N $ vertices separated by a distance of $ \veps $. At the boundary periodic conditions can be considered. In what follows we are using  two scale separations, one is $ \veps= 1 $ and the second one is $  \veps=\frac 1N $.
The first scale is called \e{microscopic} because the distance between sites (and then the minimum one between particles) is fixed to one and to consider lattice with many sites we have to take large domains $ \Lambda $. The second is called \e{macroscopic} because   when $ N $ is chosen larger and larger  the discretization  appears like a continuum and  at this scale  deterministic macroscopic behaviours   (quoted in the introduction to this  part I) will manifest in the models we are introducing. This phenomena is going to be the subject of part II.
For different models  we will choose a separation scale that is appropriate for the respective problems that follow.  The microscopic configurations are given by the collection of variable $ \eta(x) $, $ x\in V_{N} $, representing the number of particles, energy or mass at $ x $ along the model. When the variables $ \eta(x) $ are discrete we interpret them as number particles and  when continuous as quantity of mass-energy. Calling $\Sigma $ the state space at $ x $ we define   the configuration space as $\Sigma_N:= \Sigma^{ V_N}$. 
\begin{remark}\label{r:sum}
When a general formula contains an implicit sum over $ \Sigma_N $ the explicit counterpart is written for the particles case with a sum $ \us{\eta'\in \Sigma_N}{\sum} $,  for the energies-masses case the analogous equation is not going to be written apart for specific indicated cases and it is obtained replacing the sum with a suitable integral over $ \Sigma_N $.  
\end{remark}
The microscopic dynamics is given by the Markov process $ \{\eta_t\}_{t\in\mathbb{R}} $. Associated to each lattice site  there are some independent Poisson clocks of parameter depending on the local configuration nearby the site. When a clock rings, the configuration $ \eta $ changes into $ \eta' \neq\eta$ along the  \emph{transition rates}
\begin{equation}\label{eq:Mrates}
c(\eta,\eta')=\us{t\to 0}{\lim}\frac{\mathbb{P}^\eta(\eta_t=\eta')}{t}=\us{t\to 0}{\lim}\frac{p_t(\eta,\eta')}{t},
\end{equation}
where $ \mathbb{P}^\eta $ is the probability distribution over the trajectories on the configuration space starting from $ \eta $ at time $ 0 $. Let $ f:\Sigma_N\to \mathbb{R} $ be a real function on the configuration space, we define the action of an operator $ A $ on $ f $ as
\begin{equation}\label{eq:Aonf}
A f(\eta)=\us{\eta'\in\Sigma_N}{\sum}A(\eta,\eta')f(\eta').
\end{equation}
Then the dynamics is encoded in its \emph{infinitesimal generator $ \mc{L}_N $}, defined trough the equation
\begin{equation}\label{eq:Mgen}
\mathbb{E}(f(\eta_{t+h})|\eta_t)-f(\eta_t)=(\mc{L}_Nf)(\eta_t)h+o(h),
\end{equation}
so that the expected infinitesimal increment of $f(\eta_t)  $ is $ (\mc{L}_Nf)(\eta_t)dt $, with $ \mathbb{E} $ the expectation respect to the distribution $ \mathbb{P} $ over the trajectory on the configuration space. Taking $ \eta_t=\eta, $ from \eqref{eq:Mrates} one deduces  $ \mc{L}_N(\eta,\eta')=c(\eta,\eta') $ and $ c(\eta,\eta)=-\us{\eta'\neq\eta}{\sum}c(\eta,\eta') $.
The solution to the backwards equation 
$
\frac{d}{dt}f_t=\mc{L}_Nf_t,
$
with initial conditions $ f_0=f $ is the Markov semigroup $ e^{t\mc{L}_N} $ according to
\begin{equation}\label{eq:fev}
 f_t(\eta)=\us{\eta'\in\Sigma_N}{\sum}e^{t\mc{L}_N }(\eta,\eta')f(\eta').
\end{equation}
The transition probabilities of the Markov process $ \eta_t $ are then given by  the kernel of the semigroup generated by $ \mc{L}_N $, i.e.
$
p_t(\eta,\eta')=e^{t\mc{L}_N}(\eta,\eta')
$.
For an initial probability measure $ \mu_0=\mu $ on $ \Sigma_N $ the solution to
$
\frac{d}{dt}\mu_t=\mu_t\mc{L}_N
$
is
\begin{equation}\label{eq:muev}
\mu_t(\eta)=\sum_{\eta'\in\Sigma_N}\mu(\eta')e^{t\mc{L}_N}(\eta',\eta).
\end{equation}

\vspace{1cm}
We want to study  treatable and physically reasonable models, so we are treating  local dynamics.  To each $ x $ we associate a family $ \mathscr{A}_x$ of  bounded domains $ \Lambda_x\subset V_N $ such that  $ x \in \Lambda_x $ and in each of them we have a Poisson clock, when the one in $ \Lambda_x $ rings   $ \eta $  updates  inside   this domain  depending on the configuration in a  bounded region $ \tilde\Lambda_x \supset\Lambda_x$. We call $\eta^{\Lambda_x} $  the configuration with the values of $ \eta  $ updated in $ \Lambda_x $ and the same configuration outside of it. The updating is according to some rules that will be specified later on to define different microscopic models. The locality for our dynamics is defined as follows
\begin{definition}\label{d:locrate}
A Markov dynamics   is \e{local} if $c_{\Lambda_x}(\eta):= c(\eta,\eta^{\Lambda_x}) $ is function only of the configuration $ \eta  $ restricted to a bounded domain $ \tilde\Lambda_x\supset\Lambda_x $. 
The domains $ \Lambda_x $ that we are using are single vertexes or couples of vertexes called  \e{edges}, these ones can be considered oriented or non-oriented.  We should use two different notation for $ \Lambda_x $  to specify if we are using oriented or non-oriented objects,  for convenience we do this just when a model is specified explicitly.
\end{definition}
Let's to do a simple  example to clarify the notation.
\begin{example}\label{e:eta^x}
 For each $ x $ we can consider the collection of subset $ \mc A_x $ containing  the only set $ \Lambda_x={x} $. Let $ \Sigma $ be $ \{0,1\} $. When the clock in $ x  $ rings the configuration $ \eta $ is updated with $ \eta^{\Lambda_x}=\eta^x $ defined for example by $ \eta^{x}(z):=1-\eta(x)  \textrm{ if }\ z=x  $  and $\eta^{x}(z):= \eta(z)  \textrm{ if }\ z\neq x  $.
\end{example}
We focus on translational invariant systems. Let $ \tau_z $ be the shift by $ z $ on $ \mathbb{Z}^n $ defined by the relation $ \tau_z\eta(x):=\eta(x-z) $ with $ z\in\mathbb{Z}^n $ and for a function $ h:\eta\to h(\eta)\in \mathbb{R} $ we define $\tau_z h(\eta):=h(\tau_{-z}\eta)  $, moreover for a domain $ B\subseteq V_N $ we define  $ \tau_z B:=B+z $. The invariance property that we want is what we call covariance by translations, i.e.

\begin{definition}
Let $  V_N $ be the set of vertices of the   \e n-dimensional discrete torus $ \mathbb{T}^n_N = \mathbb{Z}^n/N \mathbb{Z}^n$.
We say that a dynamics $ c(\eta,\eta') $ is \emph{translational covariant} if    we have
\begin{equation}\label{eq:tcov}
c_{\Lambda_x}(\eta)=c_{\tau_z\Lambda_x}(\tau_z\eta)
\end{equation}
for all $ z\in\mathbb{Z}^n$, all $ \Lambda_x $  
and all $ \eta\in\Sigma_N $. The translation $ z $ is defined modulo $  N $.
\end{definition}

We introduce also the notion of local function. Given a configuration $\eta$ and a subset $B \subseteq  V_N$ we denote by $\eta_B$
the restriction of the configuration to the subset $B$. Given two configurations $\eta,\xi$ and two
subsets $B,B'$ such that $B\cap B'=\emptyset$, we denote by $\eta_B\xi_{B'}$ the configuration of particles on $B\cup B'$ that coincides with $\eta$ on $B$ and coincides with $\xi$ on $B'$.

\begin{definition}
A function $h:\Sigma_N\to \mathbb R$ is called \e{local} if there exists a finite subset $B$ (independent of $N$ large enough) such that $h\left(\eta_B\xi_{B^c}\right)=h(\eta)$ for any $\eta,\xi$. We denote by $B^c$ the \e{complementary set} of $B$. The minimal subset $B$ for which this holds is called the \e{domain} of dependence of the function $h$ and is denoted by $D(h)$.
\end{definition}

The domain $  V_N $ has to be taught as the physical domain where particles or masses interact.
The interaction is expressed stochastically with the transition rates $ c(\eta,\eta') $ from $ \eta $ to $ \eta' $ along the rules encoded in $ \mc{L}_N $. Dividing the right hand side  of \eqref{eq:Mgen} where $ \eta_t=\eta $ by an infinitesimal increment $ h=dt $ we get
\begin{equation}\label{eq:gen}
\mc{L}_Nf(\eta)=\sum_{\eta'\in\Sigma_N}c(\eta,\eta')[f(\eta')-f(\eta)],
\end{equation}
where $ f $ is an observable. Particles and masses are  considered to be indistinguishable. Collecting together the possible transitions for each  $ x $, we write the right hand side of \eqref{eq:gen} for the particles case as
\begin{equation}\label{eq:partgen}
(\mc{L}_Nf)(\eta)=\sum_{x\in V_N}\sum_{\Lambda_x\in\mathscr{A}_x}c_{\Lambda_x}(\eta)(f(\eta^{\Lambda_x})-f(\eta)),
\end{equation}
where $ \eta(x)\in\Sigma\subseteq\bb N $.
In  the case of example \ref{e:eta^x} dynamics \eqref{eq:partgen} becomes
\begin{equation}\label{eq:xgen}
(\mc{L}_Nf)(\eta)=\sum_{x\in V_N} c_{x}(\eta)(f(\eta^{x})-f(\eta)).
\end{equation}
While for interacting masses-energies, the  variables $ \eta $  range in a continuous subset  of $ \bb R $  and we write \eqref{eq:gen} as
\begin{equation}\label{eq:massgen}
(\mc{L}_Nf)(\eta)=\sum_{x\in V_N}\sum_{\Lambda_x\in\mathscr{A}_x}\int\Gamma^\eta_{\Lambda_x}(dj)(f(\eta^{\Lambda_x,j})-f(\eta)),
\end{equation} 
where $ j $ is a real parameter  distributed according to  $ \Gamma^\eta_{\Lambda_x}(dj) $ and determining the updated configuration $ \eta^{\Lambda_x,j} $ in $ \Lambda_x $. When $ \Lambda_x $ is a couple  of vertices $ j $ will be  a current flowing from a vertex to the other one along the rules in the dynamics. Rates  can be conservative, in this case  particles change the positions in $ \Lambda_x $ but the total number is the same and masses-energies are exchanged between the variables in $ \Lambda_x $ in such a way the conservation of mass-energy is respected. Of course the dynamics can be non-conservative, in this case particles and masses-energies can be created or annihilated in $ \Lambda_x $. 

Let's do few examples for the particles case after having introduced some notation. Other examples with more details are being treated hereafter.
We called the discrete physical domain of the dynamics with $  V_N $. When we are thinking about it as a graph with the edges connecting its lattice sites we are calling it  $  V_N $  and its elements  vertices. We denote with   $\mathcal E_N$ the set of all couples of vertices $\{x,y\}$ of $ V_N$ such that $ y=x\pm \veps\, e_i $ where $ e_i $ is the canonical versor in $ \mathbb{Z}^n $ along the direction $ i $. Elements of $ \mc E_N $ are named  non-oriented edges or simply edges. In this way we have an non-oriented graph $ ( V_N,\mc E_N) $. To every non-oriented graph $( V_N, \mathcal E_N)$ we associate canonically an oriented graph $( V_N,E_N)$ such that the set of oriented edges $E_N$ contains all the ordered pairs $(x,y)$ such that $\{x,y\}\in\mathcal E_N$. Note that if $(x,y)\in E_N$ then also $(y,x)\in E_N$. If $e=(x,y)\in E_N$ we denote
$e^-:=x$ and $e^+:=y$ and we call $ \mf e:=\{x,y\} $ the corresponding non-oriented edge.  
Accordingly we introduce the  notation $\eta^x$, $\eta^{x,y}$ and $\eta^{\mf e}$ where $x,y\in  V_N$ and $\mf e\in \mc E_N$. The notation $\eta^x$ was already introduced in example \ref{e:eta^x}, it is used when  $\Sigma=\{0,1\}$ and denotes a configuration of particles obtained changing the value in the single site $x$. The configuration $\eta^{x,y}$ is obtained moving one particle from $x$ to $y$. The configuration $\eta^{\mf e}$ is obtained exchanging the values at the two endpoints of $\mf e\in \mc E_N$ (note that $\eta^{x,y}$ and $\eta^{\{x,y\}}$ are different notation). More precisely we have
\begin{equation}\label{afterj1}
\eta^{x}(z):=\left\{
\begin{array}{ll}
1-\eta(x) & \textrm{if}\ z=x \\
\eta(z) & \textrm{if}\ z\neq x\
\end{array}
\right.,
\end{equation}
\begin{equation}\label{afterj2}
\eta^{x,y}(z):=\left\{
\begin{array}{ll}
\eta(x)-1 & \textrm{if}\ z=x \\
\eta(y)+1 & \textrm{if}\ z=y\\
\eta(z) & \textrm{if}\ z\neq x,y
\end{array}
\right.,
\end{equation}
\begin{equation}\label{afterj3}
\eta^{\mf e}(z):=\left\{
\begin{array}{ll}
\eta(e^-) & \textrm{if}\ z=e^+ \\
\eta(e^+) & \textrm{if}\ z=e^-\\
\eta(z) & \textrm{if}\ z\neq e^\pm\,.
\end{array}
\right.
\end{equation}
We call respectively $c_{x}(\eta)$, $c_{x,y}(\eta)$ and $c_{\mf e}(\eta)$ the corresponding rate of transitions. These dynamics are defined from definition \ref{d:locrate} similarly to $ c_x(\eta) $ in example \ref{e:eta^x}, to get a dynamic $ c_{x,y}(\eta) $ we  orient the domains $ \Lambda_x= \{x,y\}\in \mc E_N $ and associate with each $ x $  the oriented edges $ \mc A_x=\{(x,y),(y,x):y=x+\veps\,e_i,i\in 1,\dots,n,{x,y}\in \mc E_N\} $, expression \eqref{eq:partgen} becomes
\begin{equation}\label{eq:x,y gen}
\mc{L}_Nf(\eta)=\sum_{(x,y)\in E_N}c_{x,y}(\eta)(f(\eta^{x,y})-f(\eta)).
\end{equation}
Note in the first case \eqref{afterj1} the dynamics is non-conservative because a particle is created or annihilated in $ x $, while the other two \eqref{afterj2} and \eqref{afterj3} are conservative because particles just move between vertices. We can also conceive dynamics that are a superimposition of a conservative part and non-conservative one, these are called \emph{reaction-diffusion} dynamics, for example superimposing \eqref{eq:x,y gen} to \eqref{eq:xgen} we have
\begin{equation*}\label{key}
\mc{L}_Nf(\eta)=\sum_{(x,y)\in E_N}c_{x,y}(\eta)(f(\eta^{x,y})-f(\eta))+\sum_{x\in  V_N}c_x(\eta)(f(\eta^x)-f(\eta)).
\end{equation*}

The models we presented so far describe the dynamics inside the domain $  V_N $. The dynamics at its boundary $ \p V_N $ has to be introduced superimposing a boundary generator as in next subsection \ref{ss:boundary}.
When the boundary conditions are periodic, that is $  V_N=\mathbb{T}^n_N $, we have just the  bulk part $ \mc L_Nf(\eta) $.

\subsection{Driven interacting systems}\label{ss:boundary} We can represent non-isolated systems too. Physically, these are systems which are open with respect to the number of particles, in other words matter can be exchanged with the surroundings.  We model this effect of external reservoirs superimposing at the boundaries  a creation/annihilation of particles or masses at exponential times. In general, the boundary term forces a flux of current in the steady state and the dynamics is not reversible, usually we have reversibility\footnote{This concept is introduced later in this chapter in section \ref{sec:inv measure}} only for very special boundary condition. Since nonzero steady current is a peculiarity of nonequilibrium stationarity states,  driven interacting systems with an invariant measure can be considered a good representation for these physical states at the microscopical scale.

So to get driven interacting systems we add to the generators $ \mc{L}_N $ of \eqref{eq:partgen} and \eqref{eq:massgen} a boundary generator $ \mc{L}_b $ defined respectively
\begin{equation}\label{eq:bgenpart}
(\mc{L}_bf)(\eta)=\sum_{x\in\partial V_N}c_{x,b}(\eta)(f(\eta^{x})-f(\eta))
\end{equation}
and
\begin{equation}\label{eq:bgenmass}
(\mc{L}_bf)(\eta)=\sum_{x\in\partial V_N}\int\Gamma^{\eta}_{x,b}(dj)(f(\eta^{x,j})-f(\eta)),
\end{equation}
where $ \Gamma^{\eta,b}_{x}(dj) $ is a distribution on the real line defined in a similar way to $ \Gamma^\eta_{\Lambda_x}(dj) $ in \eqref{eq:massgen}. The indexes $ b  $ stay for " boundary ".

\section{Reversibility and invariant measure}\label{sec:inv measure}
First, we remember we deal with finite space state. A stationary state of the system corresponds to a probability distribution $ \mu_N $ (ensemble) on the configuration space $ \Sigma_N $. A state is \emph{invariant} (stationary) under the dynamics if
\begin{equation}\label{eq:invcond1}
\bb E_{\mu_N}(\mc{L}_Nf)=0 \textrm{ for all observables } f.
\end{equation} 
where $ \bb E_{\mu_N} $ denotes the expectation with respect to $ \mu_N $.
In the particle case (for the energies-masses case see remark \ref{r:sum}), a necessary and sufficient condition for a state $ \mu_N $ to be invariant is
\begin{equation}\label{eq:inv}
\sum_{\eta\in\Sigma_N}\mu_N(\eta)e^{t\mc L_N}(\eta,\eta')=\mu_N(\eta'),
\end{equation}
This means that distributing the initial condition $ \eta $ according to $ \mu_N $, then the distribution of $ \eta_t $, at any later time $ t\geq 0 $, is again $ \mu_N $.
Condition \eqref{eq:invcond1} is also equivalent to \begin{equation}\label{invcond2}
\mu_N\mc{L}_N=0.
\end{equation} 
We consider only \emph{irreducibile} models, i.e. there is a strictly positive probability to go from any configuration to any other. In this case, according to general results on Markov chains, the invariant state is unique, strictly positive and it coincides with the limiting distribution of the systems when $ t\to\infty $.

If the generator $ \mc{L}_N $ satisfies the detailed balance condition with respect to some probability distribution $ \mu_N $, namely
\begin{equation}\label{eq:dbc2}
\bb E_{\mu_N}(g\mc{L}_N f)=\bb E_{\mu_N}(f\mc{L}_N g) \textrm{ for all observables $ f,g $},
\end{equation}      
or equivalently
\begin{equation}\label{eq:dbc1} 
{\mu_N}(\eta)c(\eta,\eta')= {\mu_N}(\eta')c(\eta',\eta) \textrm{ for any pair of states $ \eta,\eta'\in\Sigma_N $} 
\end{equation}                             
then $ \mu_N $ is necessarily an invariant state. In this case the process is said to be \emph{time reversal invariant} or simply \emph{reversible} (with respect to $ {\mu_N} $). We talk about reversibility because of the following fact. Let $ \mathbb{P_{\mu_N}} $ be the probability distribution on the space of paths $ \{\eta_t\}_{t\geq 0} $ with initial configuration $ \eta_0$ distributed according to the invariant state $ \mu_N $. Since ${\mu}_N  $ is invariant, the probability $ \bb P_{\mu_N}  $ is invariant with respect to time shifts.
We can thus regard $ \bb P_{\mu_N} $ as a distribution on paths defined
also for $ t \leq 0 $. This probability distribution is invariant
under time reversal if and only if the detailed balance
condition \eqref{eq:dbc2} holds. Indeed, if $ \theta $ is the time reversal,
i.e., $ (\theta\eta)_t := \eta_{-t} $ , we have that $ \bb P_{\mu_N}\circ\theta $ is the stationary
Markov process with generator $  \mc{L}_N^*  $, the adjoint of $ \mc{L}_N  $ with
respect to $ \mu_N $ , and condition \eqref{eq:dbc2} is precisely the condition that $ \mc{L}_N^* = \mc{L}_N $. When the unique invariant state does not satisfy the
detailed balance condition \eqref{eq:dbc2}, the corresponding process is not time reversal invariant. The time reversed Markov process having the same invariant measure $ \mu_N $ and generator $ \mc L_N^* $ can be constructed defining on the same state space of $ \mc L_N $ the Markov chain with transition rates
\begin{equation}\label{eq:trevMC}
 c^*(\eta,\eta'):=\frac{\mu_N(\eta')c(\eta',\eta)}{\mu_N(\eta)} ,
\end{equation}
in the present stationary case it gives to a set of trajectories exactly the same probability that the original process gives to the set of time reversed trajectories and has the same invariant measure $ \mu_N  $ of the process $ c(\eta,\eta') $, moreover when \eqref{eq:dbc1} holds $ c^*(\eta,\eta')=c(\eta,\eta') $ otherwise they are different. 
Time reversal invariant processes correspond to equilibrium thermodynamic states: the probability to see a path or its time reversal is the same, i.e we don't "distinguish" the future from the past. The converse is not necessarily true:
there can be microscopic models not invariant under
time reversal corresponding to equilibrium macroscopic
states \cite{BJ-L04},\cite{GJ-LL96,GJ-LL99}, and \cite{GJLV97}. This is not surprising: going from
the microscopic to the macroscopic description there is
loss of information.

\subsection{Transition Graph and Kolmogorov criterion}\label{ssec:Tgrap+Kol} We recall the Kolmogorov criterion for reversibility of a Markov Chain. This is
particularly useful when the invariant measure is not known. Consider a continuous
time Markov chain with finite state space $ V $ and having transition rate $ c(\eta, \eta') $ for
the jump from $ \eta\in V $ to $ \eta'\in V $. We define its \emph{transition graph}. This is the graph
having as vertices the set  $ V $ and  edges $ E  $ all the pairs $ (\eta,\eta' ) $ such that
$ c(\eta, \eta' ) > 0 $. Here to denote the invariant measure we are not using the subscript $ N $ as in last section \ref{sec:inv measure} because the discussion is for  general Markov processes, including the cases of the dynamics with $ V=\Sigma_N $.
We consider finite ($ |V|<\infty $) and irreducible dynamics hence there exists a unique invariant measure $ \mu $   characterized by the stationary \eqref{invcond2}, i.e.
\begin{equation*}\label{eq:invgraph}
\sum_{\eta':(\eta,\eta')\in E}\mu(\eta)c(\eta,\eta')=\sum_{\eta':(\eta',\eta)\in E}\mu(\eta')c(\eta',\eta).
\end{equation*}
Reversibility will be satisfied when
\begin{equation}\label{eq:revgraph}
\mu(\eta)c(\eta,\eta')=\mu(\eta')c(\eta',\eta), \hspace{1cm} \forall (\eta,\eta')\in E.
\end{equation}
 A set $ C = (\eta_0 , \dots , \eta_n ) $ is  \e{cycle} in the transition graph if  $\eta_i\neq\eta_{i+1}$ for all  $i\in\{0,n-1\}$, $ \eta_{n+1}=\eta_0$ and $ (\eta_i, \eta_{i+1} ) \in E  $.  Reversibility  has an equivalent formulation, that is condition \eqref{eq:revgraph} holds if and only if for any cycle in the transition graph we have
\begin{equation*}\label{eq:kol1}
\prod_{i=0}^{n}c(\eta_i,\eta_{i+1})=\prod_{i=0}^{n}c(\eta_{n+1-i},\eta_{n-i}),
\end{equation*}
this condition in turn is equivalent to ask 
\begin{equation}\label{eq:kol2}
\sum_{i=0}^{n}\log \frac{c(\eta_i,\eta_{i+1})}{c(\eta_{i+1},\eta_{i})}=0
\end{equation}
for any cycle $ C = (\eta_0, \dots, \eta_n ) $.
\begin{definition}
A discrete vector field is a function $ j: E \to\mathbb{R} $ that is \emph{antisymmetric}, i.e.
$ j(\eta,\eta') = - j(\eta',\eta) $ for any $ (\eta,\eta') \in E  $. A discrete vector field $ j $ is called a \emph{gradient}
vector field if there exists a function $ h:V\to\mathbb{R}  $ such that $ j(\eta, \eta') = f (\eta') - f (\eta) $.
\end{definition}
Condition \eqref{eq:kol2} has a simple and direct geometric
interpretation. First of all observe that both conditions \eqref{eq:dbc1} and \eqref{eq:kol2} can hold
only if when $ (\eta,\eta') \in E $ then also $ (\eta' , \eta) \in E $. We can then define on the transition
graph $ (V, E) $ the discrete vector field \begin{equation}\label{eq:kolvec}
j(\eta, \eta' ) := \log\frac{c(\eta,\eta')}{c(\eta',\eta)}
\end{equation}
\begin{proposition}
Condition \eqref{eq:kol2} is equivalent to require that the discrete vector field \eqref{eq:kolvec} is of gradient type.
\end{proposition}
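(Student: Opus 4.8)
I would prove the equivalence as two separate implications, the essential ingredients being the antisymmetry of $j$ recorded in the definition and the irreducibility of the dynamics. Note first that, since we are assuming \eqref{eq:kol2}, the observation made just above the statement guarantees that $E$ is symmetric (if $(\eta,\eta')\in E$ then $(\eta',\eta)\in E$); together with irreducibility this makes the transition graph $(V,E)$ connected.

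For the direction ``gradient $\Rightarrow$ \eqref{eq:kol2}'', suppose $j(\eta,\eta')=h(\eta')-h(\eta)$ for some $h:V\to\mathbb R$. Then along any cycle $C=(\eta_0,\dots,\eta_n)$ with $\eta_{n+1}=\eta_0$ the sum telescopes:
\[
\sum_{i=0}^{n}j(\eta_i,\eta_{i+1})=\sum_{i=0}^{n}\bigl(h(\eta_{i+1})-h(\eta_i)\bigr)=h(\eta_{n+1})-h(\eta_0)=0,
\]
which is precisely \eqref{eq:kol2} because $j(\eta_i,\eta_{i+1})=\log\frac{c(\eta_i,\eta_{i+1})}{c(\eta_{i+1},\eta_i)}$.

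For the converse I would construct a potential by integrating $j$ along paths. Fix a reference vertex $\eta_\star$; for each $\eta$ choose a path $\gamma=(\eta_\star=\gamma_0,\gamma_1,\dots,\gamma_k=\eta)$ with $(\gamma_i,\gamma_{i+1})\in E$ and set $h(\eta):=\sum_{i=0}^{k-1}j(\gamma_i,\gamma_{i+1})$. The crux is that this does not depend on the chosen path. Given two paths $\gamma,\gamma'$ from $\eta_\star$ to $\eta$, glue $\gamma$ to the reversal of $\gamma'$ to obtain a closed walk $W$ based at $\eta_\star$; every consecutive pair of $W$ is an edge, so $W$ is a cycle in the sense defined above, and \eqref{eq:kol2} forces $\sum_{W}j=0$. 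By antisymmetry the reversed portion contributes $-\sum_{i}j(\gamma'_i,\gamma'_{i+1})$, so $\sum_{W}j$ equals the path-sum along $\gamma$ minus the one along $\gamma'$; hence the two agree and $h$ is well defined. Finally, for an edge $(\eta,\eta')\in E$ I extend a path reaching $\eta$ by this edge to reach $\eta'$: by path-independence this computes $h(\eta')$, so $h(\eta')-h(\eta)=j(\eta,\eta')$ and $j$ is a gradient.

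The step I expect to be the main obstacle is the well-definedness of $h$, since it is the only place where the full strength of the hypothesis is used. Its correctness rests on three facts working together: connectedness (so that every vertex can be reached from $\eta_\star$), antisymmetry (so that reversing $\gamma'$ negates its contribution and degenerate gluings such as $\gamma=\gamma'$ are automatically consistent), and the liberal notion of cycle in the paper's definition, which only requires consecutive vertices to differ, so that the concatenation $W$ qualifies directly as a cycle and \eqref{eq:kol2} may be invoked without any further decomposition.
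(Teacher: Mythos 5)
Your proof is correct and follows essentially the same route as the paper: the paper's proof of this proposition simply refers to Proposition \ref{prop:prop1form}, whose argument is exactly your telescoping sum for the forward implication and the path-integration construction of the potential, with path-independence obtained by gluing a path to the reversal of another to form a closed cycle on which \eqref{eq:kol2} is invoked. Your additional remarks on the symmetry of $E$, irreducibility, and the liberal notion of cycle correctly justify the steps the paper leaves implicit.
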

\begin{proof}
The proof is identical to the one of proposition \ref{prop:prop1form} in chapter \ref{ch:DEC} where discrete vector field are treated extensively, therefore we refer there for the proof.
\end{proof}

\section{Particles models}\label{sec:part models}
Here we describe the microscopic particles models we are using in later sections. We consider periodic boundary conditions, i.e the generator consists only of the bulk dynamics. If the dynamics is conservative then irreducibility will be restricted to the configurations with the same number of particles. We name \emph{ nearest neighbours dynamics}  the conservative dynamics such that particles are exchanged between nearest neighbours sites, for example $ c_{\Lambda_x}(\eta) $ with $ \Lambda_x=(x,y)=e $ or $ \Lambda_x=\{x,y\} $, remember that   $ (x,y)\in E_N $ as in section \ref{sec:CTMC}. The conservative models we are going to study in this work belong to this subclass.  

\subsection{Exclusion process}\label{ss:expr} In exclusion process particles move according to independent random walks with the exclusion rule that there cannot be more than one particle in a single lattice site (hard core interaction). This is a kind of classical Pauli principle. Hence this is a conservative dynamics preserving the number of particles and for
which the elementary random jumps are constituted by the exchange of the occupation numbers between nearest neighbors vertices.
We  consider $  V_N=\mathbb{T}^n_N $, i.e periodic boundary conditions on a torus of edges one with $ N $ sites.  If $ (x,y)\in E_N $ then $ (x,y)=(x,x\pm e_i) $ for some $ i\in{1,\dots,n} $, with $ e_i $ canonical versors. The space of state is $ \Sigma_N=\{0,1\}^{\mathbb{T}^n_N} $.
The stochastic dynamics can be summarized writing down the generator \eqref{eq:partgen}, it takes the form
\begin{equation}\label{eq:EPgen}
\mathcal L_N^\textrm{E}f(\eta)=\sum_{\mf e \in \mc{E}_N}c_{\mf e}(\eta)\Big(f(\eta^{\mf e})-f(\eta)\Big).
\end{equation}
Rates $ c_{\{x,y\}}(\eta) $ can be written in the general form
\begin{equation}\label{eq:EPrate}
c_{\{x,y\}}(\eta)=\eta(x)(1-\eta(y))\tilde c_{x,y}(\eta)+\eta(y)(1-\eta(x))\tilde c_{y,x}(\eta),
\end{equation}
We define these jump rates $\tilde c_{e}(\eta)$ so that, taking $ e=(x,y) $, if the configuration $\eta$ has a particle at $x$ and an empty site at $y$ then the particle at $x$ jumps to $y$ with rate $\tilde c_{(x,y)}(\eta)$ while if the configuration $\eta$ has a particle at $y$ and an empty site at $x$ then the particle at $y$ jumps to $x$ with rate $\tilde c_{(y,x)}(\eta)$. When $ \tilde c_{x,y}(\eta)  $ and $ \tilde c_{y,x}(\eta) $ don't depend on $ \eta $ the dynamics is called \e{simple exclusion process} (SEP), if they are equal the dynamics is a \e{symmetric simple exclusion process} (SSEP) while  if they are different it is an \e{asymmetric simple exclusion process} (ASEP). If we want to write the generator with ordered sets, we proceed like we did at the end of section \ref{sec:CTMC} to get \eqref{eq:x,y gen}. Since we are on a torus we can take  $ \mc A_x=\{(x,y),(y,x) \} $ for all $ x $ in \eqref{eq:partgen}, then the transition rates have the general form
\begin{equation}\label{eq:EPjump}
c_{x,y}(\eta):=c(\eta,\eta^{x,y})=\eta(x)(1-\eta(y))\tilde c_{x,y}(\eta)
\end{equation}
and the generator becomes
\begin{equation}\label{eq:EPjumpgen}
\mathcal L_N^\textrm{E}f(\eta)=\sum_{e\in E_N}\eta(x)(1-\eta(y))\tilde c_{e}(\eta)\Big(f(\eta^{e})-f(\eta)\Big).
\end{equation}
Jump rates  $\tilde c_{x,y}(\eta)$ and $\tilde c_{y,x}(\eta)$ are local function non-depending on the configuration in $x$ and in $y$.

\subsection{The 2-SEP}The model we are considering is the 2-SEP (2-simple exclusion process) for which on each lattice site there can be at most 2 particles. The generalization to k-SEP can be easily done likewise. Namely the interaction is simply hardcore but with respect to SEP, where in each site can be at most one particle, in every site there can be at most $ k $ particles. The state space is $ \Sigma_N=\{0,1,2\}^{\bb T^n_N} $. We  consider $  V_N=\mathbb{T}^n_N $ and $\Sigma=\{0,1,2\}$.
The dynamics is defined by 
\begin{equation}\label{eq:2SEPgen}
\mathcal L_N^\textrm{2-SEP}f(\eta)=\sum_{(x,y)\in E_N}c_{x,y}(\eta)\left(f(\eta^{x,y})-f(\eta)\right).
\end{equation}
with rates
$$
c_{x,y}(\eta)=\chi^+(\eta(x))\chi^-(\eta(y))
$$
where $\chi^+(\alpha)=1$ if $\alpha >0$ and zero otherwise while $\chi^-(\alpha)=1$ if $\alpha<2$ and zero otherwise.

\subsection{Glauber dynamics}\label{ss:Gdyn} 
Glauber dynamics  is an irreducible non-conservative dynamics where the configuration changes in one site at a time through the annihilation of a particle when is  present and  the creation of a particle in the opposite case. We consider $ V_N=\bb T^n_N $ and $ \Sigma_N=\{0,1\}^{\bb T^n_N} $. At each event of the Markov chain the configuration increases or decreases by one the number of particles in one site. 
This class of models has  generator of the form
\begin{equation}\label{eq:Glgen}
\mathcal L_N^\textrm{G}f(\eta)=\sum_{x\in  V_N}c_x(\eta)\Big(f(\eta^x)-f(\eta)\Big), \mathrm{ where}\ c_x(\eta):=c(\eta,\eta^x).
\end{equation}
In this case the general form of the rates is
\begin{equation*}\label{Glauber rate form}
c_{x}(\eta)=\eta(x)c_{x}^{-}(\eta)+(1-\eta(x))c_{x}^{+}(\eta),
\end{equation*}
where $c_{x}^{-}$ and $c_{x}^{+}$ are local function non-depending on the configuration in $x$.  We can consider a dynamics where particles can both jump and be created-annihilated too, the  generator is of the form
\begin{equation}\label{Gl+Kaw}
\mathcal{L}_Nf(\eta)=\left[ \mathcal{L}_N^\textrm{E}+\mathcal{L}_N^\textrm{G} \right]f(\eta).
\end{equation}
Let the Glauber dynamics be defined by the rates
\begin{equation}\label{eq:specialG}
\left\{ 
\begin{array}{ll}
c_x^+=\tau_x c^+(\eta)\\
c_x^-=\tau_x c^-(\eta)\\
\end{array}.
\right.
\end{equation}
If the exclusion process is the simple exclusion process (SEP), i.e. $ c_e(\eta)=1  $ in \eqref{eq:EPgen},  the full dynamics \eqref{Gl+Kaw} is reversible with respect to the unique Bernoulli invariant measure
\begin{equation}
\mu_p(\eta)=\underset{x\in V_N}{\prod}p^{\eta(x)}(1-p)^{(1-\eta(x))},
\end{equation}
provided that for all $ \eta\in\Sigma_N $ the rates \eqref{eq:specialG} satisfies 
\begin{equation}\label{eq:Glspecial2}
\frac{c^-_x(\eta)}{c^+_x(\eta)}=\frac{1-p}{p}.
\end{equation}

\subsection{Zero range model} The zero range model   is a conservative dynamics but without bound on the number of particles which can occupy a single site. We consider $ V_N=\bb T^n_N $ and $ \Sigma_N=\bb N^{\bb T^n_N} $. In this dynamics a particle interact only with the other particles occupying the same site. The interaction can be either attractive or repulsive and the dynamics is nearest neighbours.  This class of models has  generator of the form
\begin{equation}\label{eq:Zgen}
\mathcal L_N^\textrm{Z}f(\eta)=\sum_{(x,y)\in E_N}c_{x,y}(\eta)\Big(f(\eta^{x,y})-f(\eta)\Big), 
\end{equation}
where the general form of the rates is defined trough a function $ g:\bb N\to \bb R^+ $  such that $ g(0)=0 $ and $ g(k)>0 ,k\geq 1$ and that describes the type of interaction, i.e.
\begin{equation}\label{eq:Zrate}
\left\{ 
\begin{array}{ll}
c_{x,y}(\eta)=g(\eta(x)) \textrm{  for $ (x,y)\in E_N $},\\
c_{x,y}(\eta)=0 \textrm{  otherwise}.
\end{array}
\right.
\end{equation}

\subsection{Weakly asymmetric particles models} \label{ss:wapmod}The effect of an external field is modeled by perturbing the rates and giving a net drift toward a specified direction.  

Let $ F:\Lambda\to\mathbb{R}^n $ be a smooth vector field with components $ F(x)=(F_1,\dots,F_n) $, describing the force acting on the particles of the systems. We associate to $ F $ a discrete vector field on the lattice that corresponds to a discrete version of the continuous vector field defined by 
\begin{equation}\label{eq:disF}
\mathbb{F}(x,y)=\int_{(x,y)}F(z)\cdot dz,
\end{equation}
note that $ \bb F(x,y)=-\bb F(y,x) $. In \eqref{eq:disF} $(x,y)$ is an oriented edge of the lattice (which has length of order $\frac{1}{N}  $), the integral is a line integral
that corresponds to the work done by the vector field $F$ when a particle moves from $x$ to
$y$. So we think about $ \mathbb F(x,y) $ as work done per particle. The value of   $\mathbb F\left(y,x\right)$, by antisymmetry,  corresponds to minus the value in \eqref{eq:disF}.
Let $ c_{\Lambda_x}(\eta) $ a nearest neighbours conservative dynamics, that is $ \Lambda_x=(x,y)\in E_N $.  For example  the exclusion process \eqref{eq:EPjump}  and the 2-SEP \eqref{eq:2SEPgen}, respectively the perturbed rates are defined by 
\begin{equation}\label{eq:pertrate}
c_{\Lambda_x}^{\mathbb F}(\eta)=c_{\Lambda_x}(\eta)e^{\mathbb F(x,y)j},
\end{equation}
where $ j $ is the number of particle moving from $ x $ to $ y $ during the jump $ c_{\Lambda_x}(\eta) $, for models \eqref{eq:EPjump} and \eqref{eq:2SEPgen} the current $ j=1 $. 
When the size $ |y-x| $ is of order $ 1/N $, then the discrete vector field \eqref{eq:disF} is of order $ 1/N $ too and we have $ c^{\mathbb F}_{\Lambda_x}(\eta)=c_{\Lambda_x}(\eta)\left(1+\mathbb F(x,y)\right)+o(1/N) $. If $ F=-\nabla H $ is a gradient vector field, then $ \mathbb F(x,y)=H(x)-H(y) $, and equation \eqref{eq:pertrate} becomes $ c_{\Lambda_x}^{\mathbb F}(\eta)=c_{\Lambda_x}(\eta)e^{(H(x)-H(y))j} $. We already underlined that $ \mathbb{F}(x,y) $ in \eqref{eq:disF} is of order $ 1/N $, this means that on the microscopic scale the external field is small with the scaling parameter and  this is the  reason for the name weakly asymmetric when $ N $ is large.

\subsection{Instantaneous particle current }\label{subsec:Ipc} Given $ (x,y)\in E_N $ now we indicate with $ c_{x,y}(\eta) $ a general nearest neighbours local  dynamics where  a particle in $ x $ jumps in $ y=x\pm e_i $ according to a  conservative dynamics and correspondingly $ \eta $ change into $ \eta^{xy} $. Therefore the generator is \begin{equation*}\label{eq:jumpgen}
\mc{L}_N f(\eta)=\sum_{(x,y)\in E_N}c_{x,y}(\eta)\big(f(\eta^{x,y})-f(\eta)\big).
\end{equation*}
The \e{instantaneous current} for these stochastic lattices  is defined by
\begin{equation}\label{eq:istcur}
j_\eta(x,y):=c_{x,y}(\eta)-c_{y,x}(\eta)\,.
\end{equation}
For each fixed configuration $\eta$ this is a discrete vector field. The intuitive interpretation of the instantaneous current is the rate at which particles cross the bond $(x,y)$. Let $\mathcal N_t (x,y)$ be the number of particles that jumped from site $x$ to site $y$ up to time $t$. The \e{current flow} across the bond $(x,y)$ up to time $t$ is defined as
\begin{equation}\label{eq:currver}
J_t (x,y):=\mathcal N_t (x,y)-\mathcal N_t (y,x)\,.
\end{equation}
This is a discrete vector field depending on the trajectory $ \{\eta_t\} $ of the system of particles,  by definition $ J_t (x,y)=-J_t (y,x) $. Note that while $ J_t(x,y) $ is a function on path space, i.e. it depends on the particular history, the instantaneous current $ j_{\eta}(x,y) $ is a function on configuration space, i.e. it depends only on the present configuration.
Between the instantaneous current $ j_\eta(x,y) $ and the current flow $ J_t (x,y) $ there is a strict  connection given by the key observation (see for example \cite{Spo91} section 2.3 in part II) that
\begin{equation}\label{eq:mart}
M_t (x,y)=J_ t (x,y)-\int_0^tj_{\eta(s)}(x,y)ds
\end{equation}
is a martingale. This allows to treat the difference between $ J_ t (x,y)  $ and the integral $ \int_0^t ds\, j_{\eta(s)}(x,y) $ as  microscopic fluctuation term and computing $ j_\eta(x,y) $ as follows. Consider an initial configuration $ \eta_0=\eta $, the explicit expression of the instantaneous current can be naturally obtained as the average flow per unit of time integrated over an infinitesimal time interval, i.e.
\begin{equation}\label{eq:flow/t}
 j_\eta(x,y):=\us{ t\to0}{\lim}\frac{{\mathbb{E}^{\eta}(J_{t}(x,y))}}{t}.
\end{equation}
The expectation is $ \mathbb{E}^{\eta}(J_t(x,y))=\int\mathbb{P}^\eta(d\{\eta_t\}_t)J_t(x,y) $, where the integration is over all trajectories $ \{\eta_t\} $ starting from $ \eta $ at time 0.  For a trajectory $\{\eta_t\}_t $ the probability to observe more than one jump goes like $ O(t^2) $, then it is negligible since we are interested in an infinitesimal time interval. While  from \eqref{eq:Mrates} we have that  
$ \us{t\to 0}{\lim}\frac{\mathbb{P}^\eta(\eta_t=\eta')}{ t}=\us{t \to 0}{\lim}\frac{p_{ t}(\eta,\eta')}{t} =c(\eta,\eta')$ when $ \eta'=\eta^{xy} $ or $ \eta'=\eta^{yx} $. Therefore  when the limit for $ t $ that goes to zero is considered, $ J_t(x,y) $  takes value $ +1 $ if a jump from $ x $ to $ y $ happens,  $ -1 $ it the opposite case happens and  $ 0 $ in the other cases. So the current defined in  \eqref{eq:flow/t}  becomes 
\begin{equation*}\label{eq:jfromJ}
j_\eta(x,y)= c_{x,y}(\eta)-c_{y,x}(\eta).
\end{equation*}

Defining the discrete divergence for a discrete vector field $ \phi $ on $ E_N $, i.e. an antisymmetric function $ \phi:E_N\to\mathbb{R} $, as $ \div\phi(x):=\us{y\sim x}{\sum}\phi(x,y) $, where the sum is on the nearest neighbours $ y\sim x $ of $ x $,  the local conservation of the number of particles is expressed by
\begin{equation}\label{eq:numcon1}
\eta_t(x)-\eta_0(x)+\div J_t(x)= 0.
\end{equation}
 Using \eqref{eq:mart} in \eqref{eq:numcon1} we get
\begin{equation}\label{eq:numcon2}
\eta_t(x)-\eta_0(x)+\int_0^t ds\, \div j_s(x)+\div M_t(x)=0.
\end{equation}

We can deduce that  at the equilibrium, that is when the detailed balance condition is true, the average flow $ \mathbb{E}^\eta_{\mu_N} (J_t(x,y)) $ is constantly zero, where  the subscript $ \mu_N $   indicates  the average respect to the equilibrium measure $ \mu_N $.   For a small time interval $ \Delta t $  from \eqref{eq:istcur}, \eqref{eq:flow/t} and  the detailed balance condition $ \mathbb{E}^\eta_{\mu_N}(J_{\Delta t}(x,y)) \sim \bb E_{\mu_N} (j_\eta(x,y)) {\Delta t}=0 $, since this is true for any time interval $ \Delta t $ and the current flow $ J_t(x,y) $ is additive we conclude $ \mathbb{E}^\eta_{\mu_N}(J_{t}(x,y))=0 $.  We use the fact that the right side of \eqref{eq:mart} is a martingale and we have that 
\begin{equation}\label{eq:asJ=asj}
 \bb E^\eta_{\mu_N}  (J_t(x,y))=\bb E_{\mu_N} (j_\eta(x,y)) t.
\end{equation}

A relevant notion in the derivation of the hydrodynamic behavior (see chapter \ref{ch:HSL} later) for diffusive particle systems is the definition of gradient particle systems. The basic definition is the following. A particle system is called of \emph{gradient type} if there exists a local function $h$ such that
\begin{equation}\label{eq:gradj}
j_\eta(x,y)=\tau_yh(\eta)-\tau_x h(\eta)\,.
\end{equation}
The relevance of this notion is on the fact that the proof of the hydrodynamic limit for gradient systems is extremely simplified.
Moreover for gradient models it is possible to obtain explicit expressions of the transport coefficients.
The instantaneous current of any translational covariant stochastic lattice gas, i.e. in the present framework \eqref{eq:tcov} becomes $ c_{x,y}(\eta)=c_{x+z,y+z}(\tau_z\eta) $, is translational covariant, this means that it satisfies the symmetry relation
\begin{equation}\label{eq:jinv}
j_\eta(x,y)=j_{\tau_z\eta}(x+z,y+z).
\end{equation}

\section{Energies-masses models}\label{sec:en models}
These models as we have seen in section \ref{sec:CTMC} have the peculiarity to exchange  continuous quantity between interacting lattice sites. We interpret lattice variables of these models as energy or mass along the context. 
For these models here and in what follows we indicate the configuration with $ \xi=\{\xi(x)\}_{x\in V_N} $.
We start describing the most famous model of this class the Kipnis-Marchioro-Presutti (KMP) model \cite{KMP82}, later on we describe others nearest neighbours model.

\subsection{KMP model} This is a generalized stochastic lattice gas on which energies associated to oscillators located at the vertices of a lattice randomly evolve. More precisely let $\Lambda=(0,1]^n$  and let $ V_N= \Lambda\cap \frac 1N \mathbb Z^n $. The oscillators are located at the vertices of the lattice. The value $\xi(x)\in \mathbb R^+$ is the energy associated to the oscillator at site $x$. The interpretation of the configuration $\xi$ as a configuration of energy follows by the original definition of the model \cite{KMP82}. Since we are discussing also generalizations and different models we prefer to switch the interpretation to a configuration of mass.  We will mainly consider the one dimensional case for which $\Lambda$ is an interval and $ V_N$ is a linear lattice. We call $x\in  V_N$ an internal vertex if all its nearest neighbors $y\in \frac 1N\mathbb Z^n$ belong also to $ V_N$. A vertex $x\in  V_N$ that is not internal is instead a boundary vertex in  $\partial  V_N$. The stochastic evolution is encoded in the generator that is of the type
\begin{equation}\label{eq:genKMP}
\mathcal L_N f(\xi) =\sum_{\{x,y\}\in\mc{E}_N} \mc{L}_{\{x,y\}}f(\xi) +\sum_{x\in \partial  V_N}\mc{L}_{b,x} f(\xi)\,.
\end{equation}
The first term in \eqref{eq:genKMP}
is the bulk contribution to the stochastic evolution while the second term is the boundary part of the dynamics that modelizes the interaction of the system with external reservoirs.

We introduce the model using the language of section \ref{sec:CTMC}, slightly different language from the classic one. This approach simplifies notation and is suitable for generalizations. Let $\varepsilon^x=\left\{\varepsilon^x(y)\right\}_{y\in  V_N}$ be the configuration of mass with all the sites different from $x$ empty and having unitary mass at site $x$. This means that $\varepsilon^x(y)=\delta_{x,y}$ where $\delta$ is the Kronecker symbol. The bulk contribution to the stochastic dynamics is defined by
\begin{equation}\label{eq:KMPbulkgen}
\mc{L}_{\{x,y\}}f(\xi):=\int_{-\xi(y)}^{\xi(x)}\frac{dj}{\xi(x)+\xi(y)}\big[f(\xi-j\left(\varepsilon^x-\varepsilon^y\right))-f(\xi)\big]\,.
\end{equation}
It is immediate to check that definition \eqref{eq:KMPbulkgen} is symmetric in $x$ and $ y $ so that we can consider without ambiguity a sum over unordered pairs in \eqref{eq:genKMP}. Formula \eqref{eq:KMPbulkgen} define the model as a random current model. The intuition behind the formula is the following. On each bond of the system there is a random flow of mass that happens accordingly to a random exponential clock with rate $1$. When this random clock rings there is a flow between the two endpoint sites that is uniformly distributed among all the possible currents that keep the masses at the two sites positive. The new configuration $\xi'=\xi-j\left(\varepsilon^x-\varepsilon^y\right)$ is the starting configuration minus the divergence of a current on the lattice different from zero on the single edge $(x,y)$ where it assumes the value $j$. The choice of a uniformly random current in \eqref{eq:KMPbulkgen} corresponds to the usual KMP dynamics.

The boundary part of the generator can be defined in several ways. Let us fix a possible definition that is good for symmetric and weakly asymmetric models. Consider the left boundary of a one dimensional system on $\Lambda=(0,1)$. The system is in contact with an external reservoir with chemical potential $\lambda<0$. The effect of the interaction with the source is that at rate one the value of the variable $\xi(1/N)$ is substituted by a random value exponentially distributed with parameter $-\lambda$. When the value $\xi(1/N)$ is substituted by the value $z$ we imagine that there is a current $j=z-\xi(1/N)$ across the edge
$\left(0,1/N\right)$. With a change of variables we can then write the boundary term of the generator with a random current representation as
\begin{equation}\label{eq:bKMPgen}
\begin{split}
 & \mc{L}_{b,1/N}\,f(\xi)=\int_0^{+\infty}|\lambda| e^{\lambda z}\left[f(\xi+\varepsilon^{1/N}(z-\xi(1/N)))-f(\xi)\right]\,dz \\
& =\int_{-\xi(1/N)}^{+\infty}|\lambda| e^{\lambda (\xi(1/N)+j)}\left[f(\xi+\varepsilon^{1/N}j)-f(\xi)\right]\,dj\,.
\end{split}
\end{equation}
At the  other boundary $ x=1 $ we have an identical dynamics but possibly with a different chemical potential. The dynamics \eqref{eq:KMPbulkgen}  can be generalized, as in \eqref{eq:massgen}, substituting the uniform distribution on $[-\xi(y),\xi(x)]$ with
a different probability measure (or just positive measure) $\Gamma_{x,y}^\xi(dj)$ , i.e.
\begin{equation}\label{eq:KMPgeneralized}
\mc{L}_{x,y}f(\xi):=\int\Gamma_{x,y}^\xi(dj)[f(\xi-j\left(\varepsilon^x-\varepsilon^y\right))-f(\xi)\big]\,.
\end{equation}

\subsection{Dual KMP model} A natural choice for $ \Gamma_{x,y}^\xi(dj) $ in \eqref{eq:KMPgeneralized} is the discrete uniform distribution on the integer points in $[-\xi(y),\xi(x)]$. This means that if $\xi$ is a configuration of mass assuming only integer values then
\begin{equation}\label{eq:KMPdrate}
\Gamma_{x,y}^\xi(dj)=\frac{1}{\xi(x)+\xi(y)+1}\sum_{i\in [-\xi(y),\xi(x)]}\delta_i(dj)
\end{equation}
where $\delta_i(dj)$ is the delta measure concentrated at $i$ and the sum is over the integer values belonging to the interval.
If the initial configuration is such that the values of the variables $\xi$ are all integers then this fact is preserved by the dynamics and we obtain a model that can be interpreted as a model of evolving particles. This is exactly the dual model of KMP \cite{KMP82}. We call the stochastic dynamics associated to the choice \eqref{eq:KMPdrate} KMPd where the last letter means \emph{dual}. The boundary dynamics can be fixed similarly to \eqref{eq:bKMPgen}. In this case it is natural to substitute the exponential distribution by a geometric one.

\subsection{Gaussian model} Another interesting model could be related to Gaussian distributions. In this case the interpretation in terms of mass is missing since the variables can assume also negative values. The bulk dynamics is defined by a distribution of current having support on all the real line and defined by
\begin{equation}\label{eq:gaussrate}
\Gamma_{x,y}^\xi(dj)=\frac{1}{\sqrt{2\pi \gamma^2}}e^{-\frac{\left(j-\frac{(\xi(x)-\xi(y))}{2}\right)^2}{2\gamma^2}}dj\,.
\end{equation}
In general we use the same notation both for a  measure and the corresponding density.
Note that all these models \eqref{eq:KMPbulkgen},\eqref{eq:KMPdrate} and \eqref{eq:gaussrate} share the symmetry $\Gamma^\xi_{x,y}(j)=\Gamma^\xi_{y,x}(-j)$ and for this reason we can write
in \eqref{eq:genKMP} a sum over unordered nearest neighbours sites. Also in this case it is possible to introduce a boundary part of the dynamics.

\subsection{Weakly asymmetric energies-masses models}\label{subsec:WAm} As in the particle  case we consider dynamics perturbed by a space and time dependent external field, so the random distribution of the current is changed. In particular suppose that on the lattice it is defined a discrete vector field $\mathbb F$. 
We can think about $ \mathbb{F} $ as discrete version of a continuous field $ F $ as in \eqref{eq:disF}.
Since $ \mathbb{F} $ is a discrete vector field, it is a collection of numbers $\mathbb F(x,y)$ for any ordered pair of nearest neighbors lattice points satisfying the antisymmetry relationships $\mathbb F(x,y)=-\mathbb F(y,x)$. If the vector field is time dependent these numbers are time dependent. The motion of the mass is influenced by the presence of the field and we have a model with a random current across each bond distributed according to a measure $\Gamma^{\mathbb F}$, since we want to perturb the symmetric nearest neighbours models we are considering in these subsections, we modify \eqref{eq:KMPgeneralized} as follows
\begin{equation}\label{eq:bulkxyE}
\mc{L}_{\{x,y\}}^{\mathbb F}f(\xi):=\int\Gamma^{\xi,\mathbb F}_{x,y}(dj)\big[f(\xi-j\left(\varepsilon^x-\varepsilon^y\right))-f(\xi)\big]\,,
\end{equation}
where in analogy with \eqref{eq:pertrate} the natural choice of the measure $\Gamma^{\mathbb F}$ is
\begin{equation}\label{eq:pertEnrate}
\Gamma^{\xi,\mathbb F}_{x,y}(dj)=\Gamma^{\xi}_{x,y}(dj)e^{\frac{\mathbb F(x,y)}{2}j}\,.
\end{equation}
The factor $\frac 12$ in the exponent appears just for convenience of notation in the following.
A perturbation of this type is for example the one used in \cite{BGL05} to compute dynamic large deviations for the KMP model and  corresponds therefore to the choice
\begin{equation}\label{eq:KMPpertrate}
\Gamma^{\xi,\mathbb F}_{x,y}(dj)=\frac{e^{\frac{\mathbb F(x,y)}{2}j}}{\xi(x)+\xi(y)}\chi_{[-\xi(y),\xi(x)]}(j)\,dj\,.
\end{equation}

By the symmetry of the measure $\Gamma$ and the antisymmetry of the discrete vector field $\mathbb F$ we have that $\Gamma^{\xi,\mathbb F}_{x,y}(j)=\Gamma^{\xi,\mathbb F}_{y,x}(-j)$ and we can define the generator considering sums over unordered bonds, hence the full dynamics is
\begin{equation}\label{eq:fullKMPgen}
\mc{L_N}f(\xi)=\sum_{\{x,y\}\in\mc{E}_N}\mc{L}_{\{x,y\}}^{\mathbb F}f(\xi)+\sum_{x\in\partial V_N}\mc{L}_{b,x}f(\xi).
\end{equation}

In principle the perturbation $ \bb{F} $ should affect also the boundary but we don't consider this because we are studying fast boundary dynamics like \eqref{eq:bKMPgen}. Macroscopically this fact won't be relevant, see the discussion in sections \ref{sec:SL} and \ref{s:per2}.

\subsection{Stationarity}\label{subsec:sta} For  models \eqref{eq:KMPbulkgen},\eqref{eq:KMPdrate} and their perturbed versions \eqref{eq:pertEnrate}  we want to study the detailed balance condition  with respect to   a full dynamics like in  \eqref{eq:fullKMPgen}. 
We have for the symmetric KMP model that the detailed balance condition
\begin{equation}\label{eq:KMPdbc}
\mu_N^\lambda(\xi)\Gamma_{x,y}^\xi(j)=\mu_N^\lambda\big(\xi-j\left(\varepsilon^x-\varepsilon^y\right)\big)
\Gamma_{x,y}^{\xi-j\left(\varepsilon^x-\varepsilon^y\right)}(-j)
\end{equation}
is satisfied when
\begin{equation}\label{eq:KMPrevmea}
\mu_N^\lambda(\xi)=\prod_z|\lambda| e^{\lambda\xi(z)}
\end{equation}
is the density of the product of exponentials of the same parameter $-\lambda >0$. Considering perturbed KMP models like in \eqref{eq:KMPpertrate}, for any ${x,y}$ the detailed balance condition
\begin{equation}\label{detbal2}
\mu_N^{\lambda(\cdot)}(\xi)\Gamma_{x,y}^{\xi,\mathbb F}(j)=\mu_N^{\lambda(\cdot)}\big(\xi-j\left(\varepsilon^x-\varepsilon^y\right)\big)
\Gamma_{x,y}^{\xi-j\left(\varepsilon^x-\varepsilon^y\right),\mathbb F}(-j)
\end{equation}
is satisfied provided that $\mathbb F(x,y)=\lambda(y)-\lambda(x)$ and
\begin{equation}\label{prolampo}
\mu_N^{\lambda(\cdot)}(\xi)=\prod_z|\lambda(z)| e^{\lambda(z)\xi(z)}
\end{equation}
is the density of an inhomogeneous product of exponentials.

The boundary dynamics \eqref{eq:bKMPgen} satisfies the detailed balance condition with respect to an exponential measure with parameter $-\lambda>0$ coinciding with the one of the external source. This means that we have for a boundary site $x$
\begin{equation}\label{detbal3}
\mu_N^{\lambda(\cdot)}(\xi)\Gamma_{x,b}^\xi(j)=\mu_N^{\lambda(\cdot)}(\xi+j\varepsilon^x)\Gamma_{x,b}^{\xi+j\varepsilon^x}(-j)
\end{equation}
where $\mu_N^{\lambda(\cdot)}$ is like \eqref{prolampo}  and the value $\lambda(x)$ coincides with $\lambda$ in \eqref{eq:bKMPgen}. In \eqref{detbal3} we called
$$
\Gamma_{x,b}^\xi(dj)=|\lambda| e^{\lambda(\xi(x)+j)}\chi_{[-\xi(x),+\infty)}(j)\,dj\,.
$$
In agreement with general results in \cite{BDeSGJ-LL07}, by the above computations we obtain that a KMP model is reversible if the external field is of gradient type $\mathbb F(x,y)=\psi(y)-\psi(x)$ for a function $\psi$ such that $\psi(x)=\lambda(x)$ if $x\in \partial  V_N$, where $\lambda(x)$ is the parameter of the external source at $x$. In this case the invariant measure is product like in
\eqref{prolampo} with $\lambda(\cdot)$ replaced by $\psi(\cdot)$.
When the model is not reversible the invariant measure is not known. A similar result can be obtained also for KMPd.
For the Gaussian model \eqref{eq:gaussrate}  the detailed balance condition is satisfied with respect to a product of Gaussian distributions having the same arbitrary mean value and variance equal to $2\gamma^2$ where $\gamma^2$ is the variance of the stochastic current across an edge.

A special situation is when the system is in contact with sources having all the same chemical potential and there is no  external field. In this case we have that the KMP and the KMPd models are equilibrium models reversible with respect to homogeneous product measures. Given a reference measure $\mu$ on $\mathbb R$, we denote by $\mu^\lambda$ the probability measure obtained inserting a chemical potential term of the form
\begin{equation}\label{eq:mu_lambda}
\mu^{\lambda}(dx)=\frac{\mu(dx)e^{\lambda x}}{Z(\lambda)}\,,
\end{equation}
where $Z(\lambda)$ is the normalization constant.
The corresponding average density given by $\rho[\lambda]:=\int \mu^{\lambda}(dx)x=(\log Z(\lambda))'$ is increasing in $\lambda$ and we call $\lambda[\rho]$ the inverse function. In the case of the KMP model it is natural to fix $\mu(dx)=dx$ and restrict to negative values of $\lambda$. In this case $Z(\lambda)=-\lambda^{-1}=\rho[\lambda]$ and $\lambda[\rho]=-\rho^{-1}$. For the KMPd model we fix $\mu(dx)=\sum_{k=0}^{+\infty}\delta_k(dx)$ and again we restrict to negative values of $\lambda$. In this case we have $Z(\lambda)=(1-e^\lambda)^{-1}$, $\rho[\lambda]=\frac{e^\lambda}{1-e^\lambda}$ and $\lambda[\rho]=\log\frac{\rho}{1+\rho}$.

\subsection{Instantaneous energy-mass current }\label{subsec:Ie-mc} Here we have to adapt the definition of instantaneous current to the formalism of the interacting nearest neighbours energies-masses models of this section. The generator is  \eqref{eq:bulkxyE}, the case $ \mathbb{F}=0 $ is treated as a subcase  and we omit the index when the external field is zero. 
The \e{instantaneous current} for the bulk dynamics is defined as
\begin{equation}\label{eq:istce-m}
j^{\mathbb{F}}_\xi(x,y):=\int \Gamma^{\xi,\mathbb{F}}_{x,y}(dj)j\,.
\end{equation}
Its interpretation is the rate at which masses-energies cross the bond $ (x,y) $.
The \e{current flow} now is indicated with $\mathcal J_t(x,y)$ and it is the net total amount of mass-energy that has flown from $x$ to $y$ in the time window $[0,t]$. It can be defined as sum of all the differences between  the mass-energy measured in $ x  $ before of every jump on the bond $ \{x,y\}=\{x,x+e_i\} $ and the mass-energy measured in $ x $ after every jump on the bond $ \{x,y\}=\{x,x+e_i\} $ . Let $ \tau_i $ be the time of the $ i-th $ jump on the bond $ \{x,y\}=\{x,x+ e_i\} $ for some $ i $, we write the current flow as follows
\begin{equation}\label{eq:e-mflow1}
\mc{J}_t(x,y):=\sum_{\tau_i:\tau_i\in[0,t]} J_{\tau_i}(x,y) \,,
\end{equation}
where  $ J_{\tau}(x,y) $ is the \emph{present flow} defined as the current flowing from $ x $ to $ y $ at the   jump time  $ \tau $
\begin{equation}\label{eq:eq_e-mflow2}
J_{\tau}(x,y):=\lim_{h\downarrow 0}\xi_{\tau-h}(x)-\lim_{h\downarrow 0}\xi_{\tau+h}(x).
\end{equation}
The flow $ \mc{J}_{t}(x,y) $ has to be a discrete vector depending on the trajectory $ \{\xi_t\} $, so we set $ J_{\tau}(y,x):=-J_{\tau}(x,y) $.
As in the particles case the flow $ \mc{J}_t(x,y) $ is a function on the path space, while the instantaneous current $ j^{\mathbb{F}}_\xi(x,y) $ is a function on the configuration space. Mutatis mutandis, see \cite{Spo91} section 2.3 part II, the difference 
\begin{equation}\label{eq:massmart}
M_t(x,y)=\mc{J}_t(x,y)-\int^t_0 ds\,j_{\xi(s)}(x,y).
\end{equation}
is a martingale.  
Repeating what we did  subsection \ref{subsec:Ipc} to get \eqref{eq:istcur} with a formalism suitable to energies-masses models,   the instantaneous current \eqref{eq:istce-m} can be naturally obtained as the average flow per unit of time integrated over an infinitesimal time interval, i.e.  
\begin{equation}\label{eq:massflow/t}
 j^{\mathbb{F}}_\xi(x,y):=\us{ t\to0}{\lim}\frac{{\mathbb{E}^{\xi}(\mc{J}_{t}(x,y))}}{t}.
\end{equation}

Moreover, analogous considerations to the ones we did subsection \ref{subsec:Ipc} starting from the conservation law \eqref{eq:numcon1} can be done also in this case with the use of the martingale \eqref{eq:massmart}.

\begin{example}
For example the instantaneous current across the edge $(x,y)$ for the KMP process is given by
\begin{equation}\label{eq:KMPic}
\int_{-\xi(y)}^{\xi(x)}\frac{jdj}{\xi(x)+\xi(y)}=\frac 12\left(\xi(x)-\xi(y)\right)\,.
\end{equation}
\end{example}
This computation shows that the KMP model is of gradient type. In general a model of stochastic masses-energies on a lattice is called of \emph{gradient} type  if the instantaneous current can be written as
\begin{equation}\label{eq:massgra}
j^{\mathbb{F}}_{\xi}(x,y)=\tau_y h(\xi)-\tau_x h(\xi),
\end{equation}
where $h$ is a local function and $\tau_z$ is the shift operator by the vector $z$ defined in section \ref{sec:CTMC}. Formula \eqref{eq:KMPic} shows that for KMP formula \eqref{eq:massgra} holds with $h(\xi)=-\frac{\xi(0)}{2}$. Also KMPd is gradient with respect to the same function $h$.
The local conservation of masses-energies analogous of \ref{eq:numcon1} and \ref{eq:numcon2} is going to be discussed later in section \ref{sec:TC}, when to study their scaling limits  we will need the respective discrete continuity equations. 

\begin{example}
For the weakly asymmetric KMP model in the case of a constant external field $E$ in the direction from $x$ to $y$ the density is
\begin{equation*}\label{eq:KMPEgamma}
\Gamma^{\xi,E}_{x,y}(j)=\frac{1+Ej}{\xi(x)+\xi(y)}+o(E)
\end{equation*}
and the instantaneous current is
\begin{eqnarray}\label{eq:KMPEic}
& & j_{\xi}^E(x,y)=\int_{-\xi(y)}^{\xi(x)}\Gamma^{\xi,E}_{x,y}(j)jdj\nonumber \\
& &=\frac{2}{E(\xi(x)+\xi(y))}\left[e^{\frac E2\xi(x)}\xi(x)
+e^{-\frac E2\xi(y)}\xi(y)-2\frac{e^{\frac E2\xi(x)}-e^{-\frac E2\xi(y)}}{E}\right]\nonumber  \\
& &=\frac 12\big(\xi(x)-\xi(y)\big)+\frac E6\big[\xi(x)^2+\xi(y)^2-\xi(x)\xi(y)\big]+o(E)\,.
\end{eqnarray}
The hydrodynamic behavior of the model under the action of an external field in the weakly asymmetric regime, i.e. when the external field $E$ is of order $1/N$, is determined by the first two orders in the expansion \eqref{eq:KMPEic}. In particular any perturbed KMP model having the same expansion as in \eqref{eq:KMPEic} will have the same hydrodynamic behavior of the model \eqref{eq:KMPpertrate} in the weakly asymmetric regime.
\end{example}
\begin{example}
For the KMPd model with a discrete version of the above computation we get
\begin{equation}\label{eq:KMPdEic}
j_{\xi}^E(x,y)
=\frac 12\big(\xi(x)-\xi(y)\big)+\frac{E}{12}\big[2\xi(x)^2+2\xi(y)^2-2\xi(x)\xi(y)+\xi(x)+\xi(y)\big]+o(E)\,.
\end{equation}
\end{example}

With the same definitions of \ref{subsec:Ipc}, the local conservation of the mass-energy  is expressed by $ \xi_t(x)-\xi_0(x)+\div \mc{J}_t(x)= 0 $, i.e.
\begin{equation}\label{eq:masscon}
\xi_t(x)-\xi_0(x)+\int_0^t ds\, \div j^{\mathbb{F}}_{\xi(s)}(x)+\div M_t(x)=0.
\end{equation}
In particular the microscopic fluctuation \eqref{eq:massmart} has mean zero and the expected value of the current through a bond
can be obtained from the expected value of an additive functional involving the instantaneous current trough the same bond and the analogous considerations of subsection \ref{subsec:Ipc} to conclude that the average currents are zero in the equilibrium case  are true.  

The natural scaling limit for this class of processes is the diffusive one, where the rates have to be multiplied by $N^2$ to get a non trivial scaling limit, this is discussed in the introduction to part \ref{p:mact} and chapter \ref{ch:HSL}.  Instead of \eqref{eq:massmart} we will consider, in the the second part regarding the macroscopic theory, the speeded up martingale   
\begin{equation*}\label{eq:macromart}
M_t(x,y)=\mc{J}_t(x,y)-N^2\int^t_0 ds\,j^\mathbb{F}_{\xi(s)}(x,y).
\end{equation*}

\section{Asymmetric energies-masses models}\label{sec:Amodels}

For their own interest and for comparison with the results we will get later when we are studying the strong limit of a constant external field for the large deviation functional in the case of the perturbed KMP \eqref{eq:KMPpertrate}, i.e. $ \bb F(x,y)=E $ for all $ (x,y)=(x,x+e_i) $. We consider now  some possible one dimensional totally asymmetric models for which the mass can move only in one preferred direction and we study their invariant measure.
If on a bond $(x,y)$ the asymmetry is from $x$ to $y$ then we have that the measure $\Gamma_{x,y}^\xi$ determining the distribution
of the current has a support contained on the interval
$[0,\xi(x)]$. We consider only the KMP case and assume that the density $\Gamma_{x,y}^\xi$ depends only on $\xi(x)$ and not on $\xi(y)$.
The distribution $\Gamma_{x,y}^\xi$ of a totally asymmetric model should be obtained as a limit for large values of a constant external field of the distribution $\Gamma_{x,y}^{\xi,E}$ of a weakly asymmetric model having an instantaneous current with an expansion like \eqref{eq:KMPEic}. This is because macroscopically we will discuss the limit for large values of the field of weakly asymmetric models  having an hydrodynamic behavior deduced from the expansion \eqref{eq:KMPEic}. Since the expansion \eqref{eq:KMPEic} is for small values of the field while we discuss here microscopically the behavior for large values of the field it is reasonable to have some freedom in the determination of the limiting models. We discuss indeed two different cases. One has a product invariant measure while for the other one we discuss a duality representation of the invariant measure using a convex analytic approach.

The macroscopic domain is $\Lambda=(0,1]$ and the asymmetry is in the positive direction. For simplicity of notation we consider the models defined on the lattice $\{1,2,\dots ,N\}$ instead of $ V_N$. Since the computations in this section are only microscopic the lattice size is not relevant.

\subsection{Totally asymmetric KMP model version 1}\label{subsec:tkmp1dis}

In this section we discuss a model with  distribution of the current flowing across a bond in the bulk given by
\begin{equation}\label{eq:TAKMP1}
\Gamma_{x,x+1}^\xi= \chi_{[0,\xi(x)]}(j)\, dj\,.
\end{equation}
We fix the interaction with the boundary left source like in \cite{BDeSGJ-LL07}. We imagine to have a ghost site at $0$ where an exponential random energy of parameter $-\lambda>0$ is available and this random energy available at site $0$ is transported into site
$1$ with the same mechanism \eqref{eq:TAKMP1} of the bulk. We have then the boundary part of the dynamics at the boundary site $1$
given by
\begin{eqnarray}\label{eq:pera}
\mc{L}_{b,1}f(\xi) &=& \int_{0}^{+\infty}|\lambda| e^{\lambda z}\left(\int_0^{z}dj\left[f(\xi+j\varepsilon^{1})-f(\xi)\right]\right)\,dz\\
& = & \int_0^{+\infty} e^{\lambda j} \left[f(\xi+j\varepsilon^{1})-f(\xi)\right]\,dj\,.\nonumber
\end{eqnarray}
At the right boundary the dynamics is like on the bulk but the mass that is moving to the right exits
from the system and disappears. More precisely at site $N$ with rate $\xi(N)$ the amount of mass present is transformed into
$\xi'(N)$ that is uniformly distributed on $[0,\xi(N)]$. We could imagine  a different mechanism that allows also a creation of mass connected with a reservoir with a given chemical potential. This mechanism however changes the distribution of the mass just at site $N$ and in particular is not  observable macroscopically.
Consider the product measure (recall $\lambda<0$)
\begin{equation}
\label{eq:scat}
\mu_N^\lambda(d\xi)=\prod_{x\in  V_N}|\lambda| e^{\lambda\xi(x)}d\xi(x)\,.
\end{equation}
With a change of variables we get
 \begin{eqnarray}\label{eq:1}
\mathbb E_{\mu_N^\lambda}\left[\mc{L}_{\{x,x+1\}}f\right] &=&\int_{ (\mathbb R^+)^N}  \mu_N^\lambda(d\xi)\int_0^{\xi(x)}dj\left[f\left(\xi-j\left(\varepsilon^x-\varepsilon^{x+1}\right)\right)-f(\xi)\right]\nonumber\\
&=&\int_{(\mathbb R^+)^N}  \mu_N^\lambda(d\xi)f(\xi)\left[\xi(x+1)-\xi(x)\right]\,.
 \end{eqnarray}
Still with a changes of variables at the boundaries we get
 \begin{eqnarray}\label{eq:2}
\mathbb E_{\mu_N^\lambda}\left[\mc{L}_{b,1}f\right] &=&\int_{ (\mathbb R^+)^N} \mu_N^\lambda(d\xi)\int_0^{+\infty}e^{\lambda j}\left[f(\xi+j\varepsilon^{1})-f(\xi)\right]\, dj\nonumber \\
&=&\int_{(\mathbb R^+)^N} \mu_N^\lambda(d\xi)f(\xi)\left[\xi(1)+\lambda^{-1}\right]\,,
 \end{eqnarray}
 and
 \begin{eqnarray}\label{eq:3}
\mathbb E_{\mu_N^\lambda}\left[\mc{L}_{b,N}f\right] &=&\int_{(\mathbb R^+)^N} \mu_N^\lambda(d\xi)\int_0^{\xi(N)}\left[f\left(\xi-j\varepsilon^{N}\right)-f(\xi)\right]\, dj\nonumber\\
&=&-\int_{(\mathbb R^+)^N} \mu_N^\lambda(d\xi)f(\xi)\left[\lambda^{-1}+\xi(N)\right]\,.
 \end{eqnarray}
Summing up \eqref{eq:1}, \eqref{eq:2} and \eqref{eq:3} we obtain that \eqref{eq:scat} is invariant for the dynamics.

\subsection{Totally asymmetric KMP version 2}\label{subsec:tkmp2dis}

In this section we discuss a second possible totally asymmetric limit dynamics. This is the model that is obtained considering a constant external field in \eqref{eq:KMPpertrate} and taking the limit suitably normalizing the rates. The dynamics on the bulk is defined by a distribution of the current flowing across a bond given by
\begin{equation}\label{TAKMP2}
\Gamma_{x,x+1}^\xi=\delta_{\xi(x)}\,.
\end{equation}
This means that at rate one all the mass present on a site jumps to the nearest neighbor site on the right.
On the torus this dynamics is not irreducible since eventually all the mass will concentrate on a single lattice site moving randomly like an asymmetric random walk. The boundary driven case has not this problem and the dynamics is irreducible.

\smallskip

Given two probability measures $\mu$ and $\nu$ on $\mathbb R$ we define their
convolution $\mu\ast\nu=\nu\ast\mu$ as the measure on $\mathbb R$
defined by
$$
[\mu\ast\nu](A)=\int_\mathbb R \mu(A-x) d\nu(x)\,,
$$
for any measurable subset $A$. Let us also define the family of
Gamma measures $\left\{\gamma_n\right\}_{n\geq 0}$ of parameter $|\lambda|$  as follows. We
set $\gamma_0:=\delta_0$, then we define $\gamma_1$ as the absolute
continuous probability measure on $\mathbb R^+$ having density $|\lambda|
e^{\lambda x}$. Finally we define $\gamma_n:=\gamma_1^{\ast n}$
where the right hand side symbol means a n-times convolution of
$\gamma_1$. Note that $\gamma_j\ast\gamma_i=\gamma_{i+j}$.
We fix the dynamics at the boundaries like
\begin{equation}\label{casanuova}
\mc{L}_{b,1}f(\xi)=\int_0^{+\infty}\gamma_1(j)\left[f(\xi+j\varepsilon^{1})-f(\xi)\right]\,dj
\end{equation}
and
\begin{equation}\label{casanuovaN}
\mc{L}_{b,N}f(\xi)=\left[f\left(\xi-\xi(N)\varepsilon^{N}\right)-f(\xi)\right]\,.
\end{equation}

The invariant measure for this second version of the totally asymmetric KMP model is not of product type and it seems not to have a simple expression. We give a representation of this measure as a convex combination of products of Gamma distributions. This is done developing a kind of duality between this process and a totally asymmetric version of KMPd. It is not clear if there is a usual duality relationship between the two processes that we are
intertwining.  It is interesting to analyze this duality within the general approach to duality in \cite{CGRS16} where the case of an asymmetric KMP model is also discussed.

Consider a product measure $\nu_N$ having marginals $\nu^{(x)}$, i.e.
\begin{equation}\label{prodguess}
\nu_N(d\xi):=\prod_{x=1}^N\nu^{(x)}(d\xi(x))=:\otimes_{x=1}^N\nu^{(x)}\,.
\end{equation}
For a measure of this type we have
\begin{eqnarray}
& &\int_{\left(\mathbb R^+\right)^n}\nu_N(d\xi)\int_0^{+\infty}\gamma_k(j)f(\xi+j\varepsilon^{1})\,dj=\nonumber \\
& &\int_{\left(\mathbb R^+\right)^n}(\nu^{(1)}\ast \gamma_k)(d\xi(1))\nu^{(2)}(d\xi(2))\dots \nu^{(N)}(d\xi(N))\, f(\xi)\,.\nonumber
\end{eqnarray}
Likewise we have
\begin{eqnarray}
& &\int_{\left(\mathbb R^+\right)^n}\nu_N(d\xi)f\big(\xi +\xi(x)(\varepsilon^{x+1}-\varepsilon^x)\big)=\nonumber\\
& &\int_{\left(\mathbb R^+\right)^n}\nu^{(1)}(d\xi(1))\dots \gamma_0(d\xi(x))(\nu^{(x)}\ast\nu^{(x+1)})(d\xi(x+1))\dots \nu^{(N)}(d\xi(N))\,f(\xi)\,.\nonumber
\end{eqnarray}
The right hand side in the above formula is the expected value of the function $f$ with respect
to a product measure having $x$-marginal equal to $\gamma_0$,
$(x+1)-$marginal equal to $\nu^{(x)}\ast\nu^{(x+1)}$ and all the remaining
marginal equal to the one of $\nu_N$. Finally we have also that
\begin{eqnarray}
& &\int_{\left(\mathbb R^+\right)^n}\nu_N(d\xi)f\left(\xi-\xi(N)\varepsilon^{N}\right)=\nonumber \\
& &\int_{\left(\mathbb R^+\right)^n}\nu^{(1)}(d\xi(1))\dots
\gamma_0(d\xi(N))\,f(\xi)\,.\nonumber
\end{eqnarray}
The above computations give that the action of the asymmetric generator $\mathcal L_{N,a}$ on a measure $\nu_N$ like in \eqref{prodguess}
is given by
\begin{eqnarray}
\nu_N\mathcal L_{N,a}&=& \Big\{\left[\left(\nu^{(1)}\ast \gamma_1\right)\otimes \nu^{(2)}\otimes \dots \otimes \nu^{(N)}\right]-\left[\nu^{(1)}\otimes \dots \otimes \nu^{(N)}\right]\Big\}\nonumber \\
&+& \sum_x\Big\{\left[\nu^{(1)}\otimes \dots \gamma_0\otimes \left(\nu^{(x)}\ast\nu^{(x+1)}\right)\otimes \dots \otimes \nu^{(N)}\right]-
\left[\nu^{(1)}\otimes \dots \otimes \nu^{(N)}\right]\Big\}\nonumber \\
& & +\Big\{\left[\nu^{(1)}\otimes \dots \otimes \nu^{(N-1)}\otimes \gamma_0\right] -\left[
\nu^{(1)}\otimes \dots \otimes \nu^{(N)}\right]\Big\}\,.\label{eq:gammamarg}
\end{eqnarray}
We can now show that there is a solution of the Kolmogorov equation $\partial_t\nu_N(t)=\nu_N(t)\mathcal L_{N,a}$ that can be written in the form
\begin{equation}
\nu_N(t)=\sum_{\eta\in \mathbb N^N}c_t(\eta)\gamma_{\eta(1)}\otimes\gamma_{\eta(2)}\otimes \dots
\otimes\gamma_{\eta(N)}\,.\label{eq:gammaprod}
\end{equation}
In the formula \eqref{eq:gammaprod} $\eta=(\eta(1),\dots \eta(N))\in \mathbb N^N$ can be interpreted as a configuration
of particles on the lattice and
$c_t(\eta)\geq 0$ for any fixed $t$ is a suitable probability measure on $\mathbb N^N$
to be determined. Formula \eqref{eq:gammaprod} says that we are searching for a
solution that can be written as a mixture of products
of Gamma measures for any time. Defining
$$
\gamma^\eta:=\gamma_{\eta(1)}\otimes\gamma_{\eta(2)}\otimes \dots
\otimes\gamma_{\eta(N)}
$$
we write compactly \eqref{eq:gammaprod} as
$\sum_\eta c_t(\eta)\gamma^{\eta}$. For a measure of the type \eqref{eq:gammaprod} we have
\begin{equation}\label{eq:gammacomb}
\nu_N(t)\mathcal L_{N,a}=\sum_\eta c_t(\eta)\left(\gamma^\eta\mathcal L_{N,a}\right)\,,
\end{equation}
and we can now use formula \eqref{eq:gammamarg}.
Reorganizing the terms, the right hand side of \eqref{eq:gammacomb} becomes
\begin{eqnarray}\label{eq:gammacomb1}
& &\sum_\eta \gamma^\eta\Big\{\left[c_t(\eta-\varepsilon^1)\chi(\eta(1)>0)-c_t(\eta)\right]\nonumber \\
& &+\sum_{x=1}^{N-1}\sum_{k=0}^{\eta(x+1)}
\left[\chi(\eta(x)=0)c_t(\eta+k(\varepsilon^{x}-\varepsilon^{x+1}))-c_t(\eta)\right]\nonumber \\
& &\left.+\left[\chi(\eta(N)=0)\sum_{k=0}^{+\infty}c_t(\eta+k\varepsilon^N)-c_t(\eta)\right]\right\}\,,
\end{eqnarray}
where $\chi$ denotes the characteristic function.
Using \eqref{eq:gammacomb1} we can write formula \eqref{eq:gammacomb} compactly as
\begin{equation}\label{eq:gammacomb2}
\sum_\eta\gamma^\eta\partial_t c_t(\eta)=\nu_N(t)\mathcal L_{N,a}=\sum_\eta\gamma^\eta\left(c_t(\eta)\mathcal L_{N,a}^d\right)
\end{equation}
where $\mathcal L_{N,a}^d$ is a Markov generator of a stochastic dynamics on the variables $\eta$. We interpret
\eqref{eq:gammacomb2} as a duality relationship between the two stochastic dynamics $\mathcal L_{N,a}$ and $\mathcal L_{N,a}^d$.
The upper index $d$ is the shorthand of \emph{dual}. The variables $\eta$ represent configurations of particles on the lattice
and $\eta(x)$ that is always an integer number is the number of particles at site $x$. By formula \eqref{eq:gammacomb1} the stochastic dynamics associated to $\mathcal L_{N,a}^d$ can be described as follows. In the bulk the dynamics has a distribution of current
given by $\Gamma_{x,x+1}^\eta=\delta_{\eta(x)}$. At the left boundary one particle is created with rate $1$ while all the particle at the right boundary are erased at rate $1$. This is a totally asymmetric version of the model KMPd.

We proved that the model with generator $\mathcal L_{N,a}$ starting at time zero with a distribution of the type $\sum_\eta c_0(\eta)\gamma^\eta(d\xi)$ will have a distribution of energies at time $t$ that is $\sum_\eta c_t(\eta)\gamma^\eta(d\xi)$ where $c_t(\eta)$ is the distribution of particles at time $t$ for the model with generator $\mathcal L_{N,a}^d$ starting at time $0$ with the distribution of particles given by $c_0(\eta)$. In particular, considering the limit for $t\to +\infty$, this relationship between the two processes will hold also for the corresponding invariant measures for which we get
\begin{equation}\label{eq:dual-inv}
\mu_N(d\xi)=\sum_\eta \mu_{N,d}(\eta)\gamma^\eta(d\xi)\,.
\end{equation}
In \eqref{eq:dual-inv} $\mu_N$ is the invariant measure for the process $\mathcal L_{N,a}$ while $\mu_{N,d}$ is the invariant measure for the process $\mathcal L^d_{N,a}$.

\bigskip We conjecture
that the large deviations rate functional for the empirical measure when particles are distributed according to the invariant measure of the original  model is the same of the corresponding one associated to a product of exponentials, as it is for the asymmetric model of   subsection \ref{subsec:tkmp1dis}.  A direct microscopic computation of this rate functional would be very interesting.

\chapter{Discrete  calculus in IPS}\label{ch:DEC}
 
 In this chapter we introduce the so called "discrete exterior calculus" (DEC) in a suitable language to treat the discrete operators we will need for our lattice gases.
 
Discrete exterior calculus (DEC) is  motivated by
potential applications in computational methods for field theories (elasticity, fluids, electromagnetism) and in
areas of computer vision/graphics,  for these applications see for example   \cite{DeGDT15}, \cite{DMTS14} or   \cite{BSSZ08}.  DEC is developing as an alternative approach for computational science to the usual discretizing process from the continuous theory. It considers the discrete mesh as the only  thing given and develops an entire calculus using only discrete combinatorial and geometric
operations. The derivations may require that the objects on the discrete mesh, but not the mesh itself, are
interpolated as if they come from a continuous model. Therefore DEC could have interesting applications in fields where there isn't any continuous underlying structure as Graph theory \cite{GP10} or problems that are inherently discrete since  they are defined on a lattice \cite{AO05} and it can stand in its own right as theory that parallels the continuous one.  The language of DEC is founded on the concept of discrete differential form, this characteristic allows to preserve in the discrete context some of the usual geometric and topological structures of continuous models, in particular the Stokes'theorem 
\begin{equation}\label{eq:Stokes}
\int_\mc{M} d\omega=\int_{\partial\mc{M}}\omega.
\end{equation}
Stating that a differential form $\omega$ over the boundary of some orientable manifold  $\mc{M}$ is equal to the integral of its exterior derivative $d\omega$ over the whole of $\mc{M}$. Equation (\ref{eq:Stokes}) can be considered the milestone to define the discrete exterior calculus since it contains the main objects to set a discrete exterior calculus, namely the concepts of discrete differentials form, boundary operator and  discrete exterior derivative, moreover it is very natural in the discrete setting.
A   qualitative review for DEC   is  \cite{DKT08}, while to have a deeper view we suggest \cite{Cra15} and \cite{Hir03}. 

\vspace{0.5cm}
First we are establishing the necessary objects to introduce the discrete analogous of differential forms, i.e. the discrete exterior derivative $d$ and its adjoint operator $\delta$ in the context of a cubic cellular complex for a  lattice on a discrete manifold. In this way we can  state the Hodge decomposition. This is done in section \ref{sec:DEC}, here the ideas for the construction of DEC  come from \cite{DKT08} and \cite{DHLM05}, but respect to this works we  present more precise definitions for a manifold setting and   cubic cells instead of simplexes, the concepts of comparability, consistency  and local orientation are introduced to define rigorously what is a  discrete manifold. Here we are not exposing  the continuous theory, good references for a treatment of Geometry with differential forms (and its application) to compare with DEC are \cite{AMR88,Fla89} and \cite{Fra12}.  
Then, in section \ref{sec:disop}, we are illustrating how these operators work up to dimension three using as manifold the discrete torus.

\section[DEC on cubic mesh]{Discrete exterior calculus on cubic mesh}\label{sec:DEC}

Intuitively, \e k-differential forms are objects that can be integrated on a \e k dimensional region of the space. For example 1-forms are like $dF=f(x)dx$ or $dG=\frac{\partial G}{\partial x}dx+\frac{\partial G}{\partial y}dy+\frac{\partial G}{\partial z}dz$,  which can be integrated respectively over a interval in $\mathbb{R}$ or  over a curve in $ \mathbb{R}^3 $. With this idea in mind discrete differential forms are going to be defined. As we said we are working  with an abstract cubic complex, instead of a simplicial one. This abstract complex can be tought as a collection of discrete sets of maximal  dimension  $ n $. This collection could be previously derived  from  a continuous structure on a manifold $\mc{M}$ of dimension $n$.

\subsection{Primal cubic complex and dual cell complex}

The next definitions fix in an abstract way  the objects on which DEC operates, the language is the typical one in  algebraic topology \cite{Mun84}.
\begin{definition}\label{d:simdef}
	A \emph{k-simplex} is the convex span $\mf s_k=\{v_0,v_1,\dots,v_k\}$ of $k+1$ geometrically independent points of $ \mathbb{R}^N $ with $ N\geq k $, they are called \emph{vertices of the k-simplex} and  $ k $ its  \emph{dimension}.  A simplex  $ s_k=(v_0,v_1,\dots,v_{k+1}) $ is \e{oriented}   assigning one of the two possible   equivalence classes of   ordering of its vertices $ v_i $. Two orderings are in the same  class if they differ for an even permutation, while they are not for an odd permutation.  The \e k-simplex with same vertices but different ordering from $  s_k $ is said to have \e{opposite orientation} and   denoted with $ -s_k $.  
\end{definition}

One orientation of a simplex can be called conventionally \e{positive} and the opposite one  \e{negative}. 
\begin{definition}[Orientation convention\footnote{This convention tell us the following. We consider $ \mathbb{R}^k $  with a right handed orthonormal basis $ e_1,\dots, e_k $. A simplex $s_1=(v_0,v_1)$ embedded in $ \bb R^1 $  can take orientation from $v_0$ to $v_1$, let's call it positive assuming $ v_1>v_0 $, otherwise from $v_1$ to $v_0$, that is negative. A simplex $s_2=(v_0,v_1,v_2)$ embedded in $ \bb R^2 $  can take anticlockwise orientation with the normal pointing outside the plane along the "right hand rule", let's call it positive,  otherwise clockwise with the normal pointing outside the plane along the "left-hand rule", that is negative. A simplex $s_3=(v_0,v_1,v_2,v_3)$ embedded in $ \bb R^3 $ can take  orientation along the "screw-sense" about the simplex embodied in the familiar "right-hand rule", let's call it positive, otherwise  orientation along the "left-hand rule", that is negative.} for simplexes]\label{d:orsim}
	A way to define the sign of a \e k-simplex $ s_k=(v_0,\dots,v_{k+1}) $ is that of embedding it  in $ \mathbb{R}^k $ equipped with a  right handed orthonormal basis and saying it is positive oriented  if $ \det(v_1-v_0,v_2-v_0,\dots,v_{k+1}-v_0)>0 $ and negative in the opposite case.
\end{definition}

"Inside" a \e k-simplex we can individuate  some proper simplexes, we need to define this and how they relate with the "original" one.
\begin{definition}
	A \emph{ j-face} of a \e k-simplex is any \emph{j}-simplex ($j<k$) spanned by  a proper  subset of  vertices of $ \mf s_k $, this gives a strict  partial order relation $\mf s_j\prec \mf s_k$ and if $ \mf s_j $ is a face of $ \mf s_k $ we denote it $ \mf s_j (\mf s_k)$. A \e j-face is \e{shared} by two \e k-simplexes  $ (j<k) $ if it is a face of both.
\end{definition}

We will need to say    when it is possible and how to compare two \e k-simplexes (of the same dimension), i.e. their reciprocal orientations. The idea of next definition is that this is possible when there exists an hyperplane where  both the \e k-simplexes lie.

\begin{definition}\label{def:consistency}
	We say that two   \e k-simplexes in $ \bb R^N $ with $ N\geq k $ are \e{comparable} if they belong to the same \e k-dimensional hyperplane. Moreover, we say that two comparable oriented simplexes are \e{consistent} if they have same  orientation sign.
\end{definition}

The orientation sign refers to definition \ref{d:orsim}. The condition for two \e k-simplexes to be in the same hyperplane is equivalent   to ask that   any convex span of $ k+2 $ points  chosen from the union of their vertices is not a \e {(k+1)}-simplex. Another relevant  concept in DEC is that one of  induced orientation, that is the orientation that  a face of a simplex inherited when the last one is oriented.  

\begin{definition}\label{def:indor}
	We call \emph{induced orientation} by an oriented \emph k-simplex $ s_k $ on a \e j-face $ s_j(s_k) $  the corresponding  ordering of its vertices in the sequence ordering the vertices   of $ s_k $. If  two \e k-simplexes $ s_k $ and $ s'_k $ induce opposite orientations on a shared \e j-face we say that the \e j-face \e{cancels}.
\end{definition}

Now we introduce the definition of \e k-cube.

\begin{definition}\label{def:s-dec}
	A \e k-cube $\mf c_k=\{v_0,v_1,\dots,v_{2^k-1}\}$  is the convex span of $ 2^k $ points of $ \bb R^N $ with $ N\geq k $ such that there exist $ k! $ different \e k-simplexes having their $ k+1 $ vertices chosen between the \e{vertices} $ v_i $ of $ \mf c_k $ and sharing two by two only one \e {(k-1)}-face. Moreover, vertexes are extremal\footnote{Extremal means that a vertex  can not be written as convex combination of the other vertexes.} points of the convex combination. Each one of these $ k! $ simplexes $ \mf s^i_k $ is said \e{a proper  k-simplex of $ \mf c_k $} and we denote it $ \mf s^i_k(\mf c_k) $ where $ i\in\{0,1,\dots,k!\} $. The dimension of $ \mf c_k $ is $ k $. Note that there is more than one  way to choose these $ k! $ proper \e k-simplex and we call each of them a \e{simplicial decomposition of $ \mf c_k $} denoted $  \varDelta \mf c_k= \os{k!}{\us{i=1}{\cup}}\mf s^i_k $.
\end{definition}
When we don't need to specify the index $ i $ of the these internal simplexes we  omit it. 
\begin{remark}
	A shared \e{(k-1)}-face $ \mf s_{k-1}(\mf s^i_k(\mf c_k))=\mf s_{k-1}(\mf s^{i'}_k(\mf c_k)) $( with $ i\neq i' $) is inside $ \mf c_k $ and not on its boundary. A precise definition of boundary for simplexes and cubes will be given later.
\end{remark}

\begin{definition}
	A \emph{ j-face} of a \e k-cube is any \emph{j}-cube ($j<k$) spanned by  a proper  subset of  vertices of $ \mf c_k $ and not intersecting its interior, this gives a strict  partial order relation $\mf c_j\prec \mf c_k$ and if $ \mf c_j $ is a face of $ \mf c_k $ we denote it $ \mf c_j (\mf c_k)$. A \e j-face is \e{shared} by two \e k-cubes  $ (j<k) $ if it is face of both.
\end{definition}

The concept of orientation for \e k-cubes follows  from that one for \e k-simplexes.

\begin{definition}\label{d:cubor}
	A \e k-cube $ c_k=(v_0,v_1,\dots, v_{2^k-1}) $ is \e{oriented} assigning to each  \e k-simplex  $ \mf s_k^i $ in a simplicial decomposition of $ \mf c_k $ an orientation such that  the  \e{(k-1)}-faces $ s_{k-1}(s^i_k(c_k)) $ that they share cancel\footnote{Namely,  on them, it is induced an opposite orientation,  see definition \ref{def:indor}.}. Two  \e{oriented simplicial decompositions} are in the same equivalence class of orientation if any two comparable  not shared  \e {(k-1)}-simplexes $ s_{k-1}(s^i_k(c_k)) $ (i.e. lying on the same \e{(k-1)}-face on the boundary of $ \mf c_k $) are consistent. We denote $ \varDelta  c_k=\os{k!}{\us{i=1}{\cup}}s^i_k $ an oriented simplicial decomposition.
\end{definition}

\begin{definition}
	Analogously to definition \ref{def:consistency} for simplexes, two \e k-cubes are \e{comparable} if they lie in the same \e k-dimensional hyperplane, while we say that they are \e{consistent} if the \e k-simplexes of the two simplicial decompositions are consistent.
\end{definition}
It is enough to check the consistency between  any \e k-simplex in the simplicial decomposition of one of the two \e k-cube and any \e k-simplex in the decomposition of the other \e k-cube because of definition \ref{d:cubor}.

\begin{example}
	Consider the  2-cube $\mf c_2=\{v_0,v_1,v_2,v_3\} $. A simplicial decomposition is given by  $ s^A_2=(v_0,v_1,v_3) $ and $ s^B_2=(v_1,v_2,v_3) $. Indeed let $ \mf s_1=\{v_1,v_3\} $ be the 1-simplex shared by $ s^A_2 $ and $ s^B_2$, then they cancel on $ \mf s_1 $ because $ s_1(s^A_2)=(v_1,v_3)$ and $s_1(s^B_2)=(v_3,v_1)=-(v_3,v_1) $. Another decomposition is that one given by $  s^C_2=(v_0,v_2,v_3) $ and $  s^D_2=(v_0,v_1,v_2) $. These two decompositions  are in the same equivalence class  because they induce on the 1-simplexes  $ \{v_0,v_1\} $, $\{v_1,v_2\} $, $ \{v_2,v_3 \} $ and $\{v_3,v_0\} $ consistent orientations.  
\end{example}

\begin{proposition}\label{p:cubor}
	There are only two possible equivalence classes of orientation\footnote{ [Orientation convention for cubes]The result of this simplicial decomposition is that also a \e k-cube has only two possible orientations.
		In one dimensions a 1-cube is also a 1-simplex;
		in two dimensions a 2-cube $(v_0,v_1,v_2,v_3)$ can be anticlockwise oriented (positive) with the normal pointing outside the plane along the "right hand rule" or clockwise oriented (negative) otherwise clockwise with the normal pointing outside the plane along the "left-hand rule";
		in three dimensions a 3-cube $(v_0, \dots,v_7)$ can have, looking at it from outside,  all the faces  anticlockwise oriented (positive) or viceversa all clockwise (negative). These two possibilities corresponds respectively  to have all the  normals to its faces pointing outside or inside the volume.} for a k-cube $ \mf c_k $, when an orientation $ c_k $ is assigned the other one is denoted $ -c_k $. One orientation can be  conventionally defined to be positive and the other negative. 
\end{proposition}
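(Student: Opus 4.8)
The plan is to reduce the combinatorial equivalence of Definition \ref{d:cubor} to the two orientations of the ambient $k$-plane $H$ that carries $\mf c_k$. Since by Definition \ref{d:orsim} the sign of a $k$-simplex lying in $H$ is computed from a fixed right-handed basis of $H\cong\mathbb R^k$, there are exactly two choices of ambient orientation, and I aim to show that the equivalence classes of oriented simplicial decompositions of $\mf c_k$ are in bijection with these two choices. This recasts the essentially topological statement ``a connected orientable piece has two orientations'' in the present cubic-complex language.

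First I would establish a \emph{local coherence lemma}: if two $k$-simplexes $s^i_k,s^{i'}_k$ of a decomposition share a $(k-1)$-face $f$ and lie on opposite sides of the hyperplane spanned by $f$ (which is always the case for an interior shared face, as noted in the remark following Definition~\ref{def:s-dec}), then they induce opposite orientations on $f$ — that is, $f$ \emph{cancels} in the sense of Definition \ref{def:indor} — if and only if $s^i_k$ and $s^{i'}_k$ carry the same ambient sign of Definition \ref{d:orsim}. The proof is a direct computation: writing each simplex as the shared face together with its one extra vertex, the induced orientation on $f$ differs between the two simplexes by the sign of the determinant recording on which side of $f$ the extra vertex sits, and the two extra vertices sit on opposite sides. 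This is precisely the standard condition for coherently orienting a pseudomanifold across a common facet.

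Next I would invoke connectivity of the \emph{adjacency graph} of the decomposition, the graph with one node per proper $k$-simplex $s^i_k(\mf c_k)$ and one edge for each shared interior $(k-1)$-face; this holds because $\mf c_k$ is a convex body, so any two simplexes of $\varDelta\mf c_k$ are joined by a chain of simplexes crossing interior facets. Combining this with the local lemma, an orientation of $\varDelta\mf c_k$ satisfies the cancellation requirement of Definition \ref{d:cubor} on every interior $(k-1)$-face if and only if all its simplexes share one common ambient sign. Hence each simplicial decomposition admits exactly two coherent orientations, one for each orientation of $H$.

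Finally I would compare orientations across different decompositions through the boundary, matching the definition of equivalence in Definition \ref{d:cubor}. A coherently oriented decomposition induces on each boundary $(k-1)$-simplex the orientation coming from its unique parent $k$-simplex, whose sign is the common ambient sign; since the parent always lies on the interior side of that boundary face, the induced boundary orientation depends only on the chosen orientation of $H$ and not on the decomposition. Therefore two coherent orientations, drawn from the same or from different decompositions, induce \emph{consistent} orientations (Definition \ref{def:consistency}) on every comparable boundary $(k-1)$-simplex exactly when they correspond to the same ambient orientation. This produces a bijection between equivalence classes and the two orientations of $H$, which proves the proposition; one class is then declared positive and the other negative, and its opposite is written $-c_k$. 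I expect the main obstacle to be the local coherence lemma, together with the careful bookkeeping of which side the extra vertices occupy, whereas the connectivity of the adjacency graph and the boundary comparison are geometrically transparent for a convex cube.
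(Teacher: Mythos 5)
Your proposal is correct and follows the same basic route as the paper's proof: the cancellation condition on shared $(k-1)$-faces propagates the orientation of a single simplex to the whole decomposition, and a single simplex admits only two orientations, so the cube does too. The paper states this in one sentence, whereas your version is considerably more explicit — in particular the local coherence lemma identifying the two classes with the two ambient orientations via the determinant sign of Definition \ref{d:orsim}, and the final comparison of different simplicial decompositions through boundary consistency, fill in details (especially the cross-decomposition step required by the equivalence in Definition \ref{d:cubor}) that the paper's one-line argument leaves implicit.
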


\begin{proof} Once a single simplex $ \mf s_k(\mf c_k) $ is oriented, the proposition follows from the facts that  a simplex can be oriented in only two ways because the cancelling condition on  the shared \e{(k-1)}-faces in a simplicial decomposition force all the others to assume an orientation propagating to the entire cubic cell. 
\end{proof}

Now the concept of induced orientation for simplexes can be transferred to cubes.

\begin{definition}\label{d:indorcub}
	We call \emph{induced orientation} by an oriented \emph k-cube\footnote{We saw in definition \ref{d:cubor} and proposition \ref{p:cubor} that  a \emph k-cube $ \mf c_k $ is oriented trough the orientation of  the \emph k-simplexes of its decomposition, hence we talk equivalently of  orientation  induced by the oriented \emph k-simplexes of the simplicial decomposition of $ c_k $.}  $ c_k $ on a \e j-face   $ c_j(c_k) $ the orientation assigned on it by the inductively oriented  \emph{j}-simplexes $ s_{j}(s_k(c_k)) $  of its simplicial decomposition. If  two \e k-cubes induce opposite orientations on a shared \e j-face we say that the face \e{cancels}.
\end{definition} 

We do an example considering the  oriented 2-cube   $c_2=(v_0,v_1,v_2,v_3)$. Let be  $ s^A_2=(v_0,v_1,v_3) $ and $ s^B_2=(v_1,v_2,v_3) $ the 2-simplexes  orienting a simplicial decomposition of  $ c_2 $. The inductively oriented 1-faces  $ c_1(c_2) $ are $ (v_0,v_1) $,$ (v_1,v_2) $, $ (v_2,v_3) $ and $(v_3,v_0) $.

Our intention is to operate on object made by many cubes, like lattice. So we introduce collections of cubes suitable to define a discrete calculus.  Later we will restrict ourselves to the case of discrete manifold. 

\begin{definition}
	A \emph{cubic complex} $\mc{C}$  of \e{dimension n} is a finite collections of  \emph{elementary} cubes $ \mf c_k $, called also \emph{cells},  such that $ 0 \leq k\leq n $, every face of an elementary cube  is in $\mc{C}$ and the intersection of any two cubes of $\mc{C}$ is either empty or a face of both. To each cubes is  assigned  an orientation $ c_k $.  The \e{local orientation} of $ \mc C $ is the orientation of the cubes of dimension $ n $. We denote with $ |\mc C| $ the topological set of $ \mathbb{R}^N $ ($ n\leq N $) given by the union of all \e n-cubes $ \mf c_n $ in $ \mc C $. 
\end{definition}

Fixing an  orientation of the \emph n-cube is like orienting the  \e n-volume of the \e n-hyperplane containing it or equivalently the \e n-volume of the space  $ \mathbb{R}^n $ where it can be embedded. For a  discrete manifold is like to orient  the tangent space for a continuous one.
The meaning of having finite collections of cubes is that  to deal with compact sets in the continuous case.

In particular we are interested in cubic complexes which are  discrete version of an orientable compact  boundaryless manifold $ \mc M $ of dimension $ n $. The idea to define a discrete manifold of dimension $n$ is that to have a cubic complex (in local sense)   topologically equivalent  to a \e n-ball in $\mathbb{R}^n$. Moreover we want to orient a discrete manifold, this is possible using the "cancelling" notion of definition \ref{d:indorcub}.

\begin{definition}\label{def:dMan}
	A \emph{cubic complex} $\mc{C}$  of dimension \e n is a  \emph{discrete manifold} (boundaryless) if every  \e{(n-1)}-cube is shared exactly by two \e n-cubes. A manifold is \e{orientable} if the orientations of all \e n-cubes can be chosen such that  every shared \e{(n-1)}-face cancels.
\end{definition}

The meaning of this definition is that  if we consider a \e{(n-1)}-cube  ${\mf c}_{n-1}\in\mc{C}$ the set $\us{\mf c_n:\,\mf c_{n-1}\in \,\mf c_n}{\cup}\mf c_n$   is  simply connected and homeomorphic to a unit \e n-dimensional ball.

In DEC  important concepts are the ones of dual cell and dual complex. To define a centre of a cube we use barycentric coordinates,  while in \cite{DHLM05} they use the concept of circumcentre for a simplicial complex. This last concept is very simple in the simplicial case but for our case we should introduce the concept of Voronoi diagram \cite{JosTheo13} and we prefer to avoid this.

\begin{definition}\label{def:centre}
	The \emph{centre} of a \emph k-cube $\mf c_k$ is the barycentre of its vertices, denoted  with $\mathtt{b}(\mf c_k)$ $  $. 
\end{definition}

\begin{remark}
	The union $ c_k\cup c_k'/s_k\cup s_k' $ of two comparable consistent \e k-cubes/\e k-simplexes sharing a \e{(k-1)}-face inherits an orientation that is consistent with that one of $ c_k/s_k $ and $ c_k'/s'_k $, in the sense that any \e k-simplex that  can be generated by the vertices in the union and  properly contained in it is defined to be consistent with  $ c_k/s_k $ and $ c'_k/s'_k $. 
\end{remark}

\begin{definition}
	We define the operation +  as  $  \mf c_k+\mf c'_k:=\mf c_k\cup \mf c'_k $,   when $ \mf c_k=\mf c'_k $ we set  $ \mf c_k+\mf c_k=2 \mf c_k:={\mf c_k\cup \mf c_k} $ as multiset\footnote{ Multisets are sets that can differentiate for  multiple instances of the same element, for example $ {a,b} $ and $ {a,a,b} $ are different multisets.} $ \{\mf c_k,\mf c_k\} $. For oriented cubes $ c_k $ we define an analogous + operation and in addition we define the inverse element, that is  $ c_k+(-c_k)=\emptyset $. The same operation is defined for \e k-simplexes substituting the occurrences of $ \mf c_k $ and $ c_k $ respectively with $ \mf s_k $ and  $ s_k  $.
\end{definition}

The dual of a \emph k-cube, called \emph{dual cell}, is derived from the  duality operator (see next definition \ref{def:dualop}) $\ast:c_k\rightarrow \ast(c_k)$ and   the set of dual cells of a cubic complex will be the \emph{dual complex} $\ast \mc{C}$. Remember that when two \e k-cubes induce two opposite orientation on a shared  \e{(k-1)}-face this last one cancels.

{\begin{definition}\label{def:dualop}
		For a discrete manifold $ \mc C $ of dimension $ n $  the \emph{ duality map} acts on a \emph k-cube $ \mf c_k $ giving  the   \e{(n-k)}-\e{dual cell} $ \ast(\mf c_k) $ obtained with the following union of  (\e{n-k})-simplexes 
		\begin{equation}\label{eq:*op}
		\ast(\mf c_k)=\us{\mf c_n:\mf c_k\prec \mf c_n}{\bigcup}\,\,\us{ \mf c_k\prec \mf c_{k+1}\prec\dots\prec\ \mf c_n}{\bigcup}\{\mathtt{b}(\mf c_k),\mathtt{b}(\mf c_{k+1}),\dots,\mathtt{b}(\mf c_n)\},
		\end{equation}
		where in both union $ \mf c_k $ is fixed, $ \mf c_n $ varies on the first union and fixed in the second one, while the \e{(n-k-2)}-tuple $ \mf c_{k+1},\dots,\mf c_{n-1} $   vary on the second union according to the rule specified in subscript. 
		For an oriented \e k-cube $ c_k=(v_0,\dots, v_{2^{k-1}}) $ the oriented dual cell $ \ast (c_k) $ is obtained assigning to each \e{(n-k)}-simplex \footnote{These simplexes give a simplicial decomposition for the \e{(n-k)}-cube $ \us{ \mf c_k\prec \mf c_{k+1}\prec\dots\prec\ \mf c_n}{\bigcup}\{\mathtt{b}(\mf c_k),\dots,\mathtt{b}(\mf c_n)\} $.} $\{\mathtt{b}(\mf c_k),\dots,\mathtt{b}(\mf c_n)\}  $  an orientation $(\mathtt{b}(\mf c_k),\dots,\mathtt{b}(\mf c_n)) $ if  the oriented \e n-cube $ (v_0,\dots,v_{k-1},\mathtt{b}(\mf c_k),\dots,\mathtt{b}(\mf c_n)) $ is consistent with the local orientation of $ c_n $ and $- (\mathtt{b}(\mf c_k),\mathtt{b}(\mf c_{k+1}),\dots,\mathtt{b}(\mf c_n))  $ otherwise.  The \emph{dual complex} $*\mc{C}$, or \e{dual discrete manifold} ,is the collection $\{{\ast(c_k)}\}_{c_k\in\mc C} $. 
	\end{definition}

	\begin{remark}\label{r:asy} 
		The oriented (\e{n-k})-simplexes in the union \eqref{eq:*op} share two by two exactly one \e{(n-k-1)}-face that cancels. 
		In definition \ref{def:dualop} with  the oriented \e n-cubes $ c_n $ and $ -c_n $ we always associate a non-oriented object $ \mtt c(\mf c_n)=c_0  $, while with a 0-cube $ c_0 $ we associate an oriented \e n-cube always consistent with the local orientation. This "asymmetry"  is due to the fact that a vertex  doesn't have an intrinsic orientation. Formally we could fix this considering a vertex $ c_0 $ with two orientations $ \pm c_0 $ such that $*(\pm c_n)=\pm c_0  $ and $ *(\pm c_0)=\pm c_n  $. 
	\end{remark}

	\subsection{Discrete differential forms and exterior derivative}
	
	At this point we are ready to define  discrete versions of differential forms and Stokes Theorem (\ref{eq:Stokes}). Remind the idea of continuous \emph k-form as  object that can be integrated only on a \e k-submanifold and  defined as a linear map from  \emph k-dimensional sets to $\mathbb{R}$. When  \emph k-dimensional sets are defined on a mesh of a discrete manifold  we call them chains, a linear mapping from  chains to a real numbers is  quite a  natural discrete counterpart of a differential form.
	
	\begin{definition}
		Let $\mc{C}$ be a cubic complex and  $\{c^i_k\}_{i\in I_k}$ the collection of all elementary oriented \e k-cubes in $ \mc C $ indexed by $ I_k $. The space of   \emph{k-chains } $C_k(\mc{C})$ is the space with basis $\{c^i_k\}_{i\in I_k}$ of  the finite formal sums $\gamma_k=\us{i\in I_k}{\sum} \gamma_k^i c_k^i$  where the coefficient $ \gamma_k^i $ is an integer.
	\end{definition}

	In defining the discrete \emph k-forms we are not technical as usual in algebraic topology, we want just to stress that a discrete \emph k-form is a map from the space of \e k-chains to $\mathbb{R}$.

	\begin{definition}
		A discrete \emph{k-form} $\omega^k$ is a linear mapping from $C_k(\mc{C})$ to $\mathbb{R}$, i.e.
		\begin{equation}
		\omega^k(\gamma_k)=\omega^k\left(\us{i\in I_k}{\sum}\gamma_k^i c_k^i\right)=\us{i\in I_k}{\sum}\gamma^i_k\omega^k \left(c_k^i\right) .
		\end{equation}
		We add two forms $\omega_1$ and $\omega_2$ adding their values in $\mathbb{R}$, i.e. $
		(\omega_1+\omega_2)\left(\gamma\right)=\omega_1\left(\gamma\right)+\omega_2\left(\gamma\right)$. The vector space of \emph k-forms is denoted $\Omega^k(\mc{C})$.
	\end{definition}
	Any discrete \e k-form can be written as finite linear combination respect to a basis $\{\alpha^k_i\}_i$ with   same cardinality of $\{c_k^i\}_i$ and determined by the relaton $\alpha_i^k(c_k^j)=\delta_{ij}$.
	So we have a \emph{natural pairing} between chains and discrete forms, that is the bilinear pairing 
	\begin{equation}\label{eq:pairing}
	[\omega^k,\gamma_k]:=\omega^k(\gamma_k).
	\end{equation}
	Writing $\omega^k=\underset{i}{\sum}\omega^k_i\alpha^k_i$, here $\omega^k_i$ are real coefficient, the pairing \eqref{eq:pairing} becomes $\underset{i\in I_k}{\sum}\omega^k_ic_k^i$.  So the natural pairing \eqref{eq:pairing} leads to a natural notion of \emph{duality} between chains and discrete forms.
	In DEC natural pairing plays the role that integration of forms plays in differential exterior calculus. The two can be related by a discretization procedure, for example in the manifold case, thinking to have a piecewise linear \footnote{In case of non-piecewise linear manifold the discretization process present some technicalities, but it is still possible to give meaning to \eqref{eq:omega^k_d}} manifold that can be subdivided in cubes $\{\sigma_k^i\}_i$ and a differential \emph k-form $\omega^k$, the integration  of $\omega^k$ on each \emph k-cube gives its discrete counterpart $\omega^k_d$, where the subscript \emph d is for discrete, defined as
	\begin{equation}\label{eq:omega^k_d}
	\omega^k_d(\sigma_k):=\int_{\sigma_k}\omega^k.
	\end{equation}
	In this way a discrete \emph k-form is  a natural representation of a continuous \emph k-form.
	\begin{remark}
		A \emph discrete k-form can be viewed as a \emph{k-field} taking different values on different \emph k-cubes of an oriented cubic complex, e.g. a \emph k-form $\omega^k$ on  a \emph k-cube $c_k$ is such that 
		\begin{equation}\label{eq:omegafield}
		\omega^k(-c_k)=-\omega^k(c_k).
		\end{equation}
		With this in mind, for  a discrete \emph 1-form  we use also the name  \e{discrete vector field}.
	\end{remark}
	
	To define a discrete exterior derivative, that will give us a  discrete version of \eqref{eq:Stokes}, we have to introduce a discrete boundary operator. As we did so far the definition for cubes goes trough the one for simplexes.
	\begin{definition}
		The \emph{boundary operator} $\partial_k:C_k(\mc{C})\rightarrow C_{k-1}(\mc{C})$ is the linear operator that acts  on an oriented \e k-simplex $s_k=(v_0,\dots,v_{k})$ as
		\begin{equation}
		\partial_k s_k=\partial(v_0,\dots,v_{k})=\us{i=0}{\os{k}\sum}(-1)^i(v_0,\dots,\hat{v}_i,\dots, v_{k}),
		\end{equation}
		where $(v_0,\dots,\hat{v}_i,\dots, v_{k})$ is the oriented \emph{(k-1)}-simplex obtained omitting the vertex $v_i$.  Let $\varDelta c_k$ be a simplical decomposition (see definition \ref{def:s-dec}) of $c_k$, the boundary operator on \e k-cubes acts as   
		\begin{equation}
		\partial_k c_k=\us{s_k\in\varDelta c_k}{\sum}\partial_k s_k.
		\end{equation}
	\end{definition}
	
	The boundary of  non-oriented objects is obtained doing the boundary of the correspondent oriented objects  and then considering the resulting sets without orientation.
	
	\begin{example} Consider $c_2=(v_0,v_1,v_2,v_3)$ and $\varDelta c_k=(v_0,v_1,v_3)\cup(v_1,v_2,v_3)$, then $\partial_2 c_2=\partial_2(v_0,v_1,v_3)+\partial_2(v_1,v_2,v_3)=(v_1,v_3)-(v_0,v_3)+(v_0,v_1)+(v_2,v_3)-(v_1,v_3)+(v_1,v_2)=(v_0,v_1)+(v_1,v_2)+(v_2,v_3)+(v_3,v_0)$.
	\end{example}
	In practice $\partial_k$, applied to $c_k$, gives back the faces of $c_k$ with the orientation induced by $c_k$. 
	In other terms this $\partial_k$ extracts the oriented border of  an oriented \emph k-cube. A remarkable property of this operator is that  the border of a border is the void set, therefore
	\begin{equation}\label{eq:bofb}
	\partial_k\circ\partial_{k+1}=0.
	\end{equation}
	
	Now with the duality defined by the  natural paring \eqref{eq:pairing} the time is ripe to introduce the discrete exterior derivative (or coboundary operator) $d^k:\Omega^k(\mc{C})\rightarrow\Omega^{k+1}(\mc{C})$  defined by  duality \eqref{eq:pairing} and  the boundary operator.
	\begin{definition}\label{d:dder}
		For  a cubic complex $ \mc{C} $ the \emph{ discrete exterior derivative (or coboundary  operator)}    is the linear operator $d^k:\Omega^k(\mc{C})\rightarrow\Omega^{k+1}(\mc{C})$ such that 
		\begin{equation}  \label{eq:def d}
		[ d^k\omega^k,c_{k+1}]  = [ \omega^k,\partial_{k+1}c_{k+1}] 
		\end{equation}
		where  $\omega^k\in\Omega^k(\mc{C})$ and $c_{k+1}\in C_{k+1}(\mc{C})$. Moreover we set $ d\Omega^{n}(\mc C)=0 $. Definition \eqref{eq:def d}  is equivalent to
		$ d^k(\omega^k):=\omega^k\circ\partial_{k+1} $.
	\end{definition}
	From definition \ref{d:dder} and  \eqref{eq:bofb} it is straightforward the property
	\begin{equation}\label{dd=0}
	d^{k+1}\circ d^k=0.
	\end{equation}
	\begin{assumption}
		Unless otherwise specified, we are omitting if  operators are referred to the primal complex or its dual. We assume  the right one at the right moment. Moreover for discrete manifolds in definition \ref{def:dMan} the  operators of this section don't change in the dual.
	\end{assumption}
	
	From  definition \eqref{eq:def d} and natural pairing \eqref{eq:omega^k_d}  we have  a discrete  Stokes Theorem: consider a chain $\gamma_k$  and a discrete form $\omega^k$ then
	\begin{equation}\label{eq:distokes}
	\int_{\gamma_k} d^k\omega^k \equiv [ d^k\omega^k,\gamma_k] =  [\omega^k,\partial_k \gamma_k] \equiv \int_{\partial_k \gamma_k}\omega^k.
	\end{equation}
	
	\subsection{Hodge star and codifferential}
	The  counterpart of $d^k$, denoted with $ \delta^{k+1} $, mapping  a \e {(k+1)}-form into a \e{k}-form is the tool still missing to have all what we need from DEC. Given two \e k-forms $\omega^k_1$ and $\omega^k_2$, this operator is defined as the adjoint of $d$ with respect to the scalar product 
	\begin{equation}\label{eq:*scalar2}
	\langle \omega_1^k, \omega_2^k \rangle= \underset{i\in I_k}{\sum}\omega^k_{1,i} \omega^k_{2,i}
	\end{equation}
	This scalar product is the discrete version of formula \eqref{eq:smoothsp} in footnote \ref{fo:star} below.
	\begin{definition}
		The discrete codifferential operator\footnote{\label{fo:star}In the smooth case, for a manifold $ \mc{M} $ of dimension \e n,   the Hodge star is the map  $\star: \Omega_k(\mc{M}) \rightarrow \Omega_{n-k} (\mc{M})$, defined by its local metric and the local scalar product of \e k-forms $\langle\langle \omega^k_1,\omega^k_2 \rangle\rangle=(\omega^k_1)^{i_1,\dots,i_k}(\omega^k_2)_{i_1,\dots,_k}$, such that 
			\begin{equation*}\label{eq:*smooth}
			\omega^k_1\wedge\star\omega^k_2:=\langle\langle \omega^k_1,\omega^k_2\rangle\rangle vol^n,
			\end{equation*}
			where $vol^n$ is the volume form on the manifold. Denoting with $ d $ the exterior derivative for differential forms, this operator can be computed through its action on the the basis of \e k-forms $\mathbf{dx}^k=dx^{i_1}\wedge\dots \wedge dx^{i_k}$ ($i_1<\dots<i_k$), that gives back the forms $\star \mathbf{dx}^k=C\mathbf{dx}^{n-k}=C dx^{i_{k+1}}\wedge\dots \wedge dx^{i_n}$ ($i_{k+1}<\dots<i_{n}$) with $C$ is  such that $\mathbf{dx}^k\wedge C \mathbf{dx}^{n-k}=\langle\langle \mathbf{dx}^k,\mathbf{dx}^{k} \rangle\rangle vol^n=vol^n$.
			On a smooth manifold $ \mc{M} $, for details see chapter 14 in \cite{Fra12} or section 6.2 in \cite{AMR88}, the adjoint operator  $ \delta $ of $ d $ is defined respect to the scalar product 
			\begin{equation}\label{eq:smoothsp}
			\langle \omega^k_1,\omega^k_2\rangle:=\int_{\mc{M}}\omega^k_1\wedge\star\omega^k_2=\int_{\mc{M}}\langle\langle \omega^k_1,\omega^k_2\rangle\rangle vol^n.
			\end{equation}
			Thinking about,  \eqref{eq:*scalar2} is the discrete version of the most right term of the last formula \eqref{eq:smoothsp} and it can be introduced in a sophisticated way that "emulates" the continuous case of this footnote  \ref{fo:star}. Let's try to sketch  this parallelism.
			Since the duality $\ast$ maps a primal cell into an only one dual cell and vice versa,  the most spontaneous thing to set a \e{discrete Hodge star} $ \star $  from \e k-forms into  \e {(n-k)}-forms is doing it from  $ \Omega^{k}(\mc{C}) $ into $ \Omega^{n-k}(\ast\mc{C}) $, i.e. $\star:\Omega^k(\mc{C})\rightarrow\Omega^{n-k}(\ast\mc{C})$, this can be defined with the relation $ (\omega^k,c_k)=(\star\omega^k,\ast c_k) $. With this definition \eqref{eq:*scalar2} can be written as $ \langle\omega^k_1,\omega^k_2\rangle:=\sum_{i\in I_k}(\omega^k_1,c_k^i) (\star\omega^k_2,\ast c^i_k)= \underset{i\in I_k}{\sum}\omega^k_{1,i} \star\omega^k_{2,i} $ that is the equivalent of the middle term in \eqref{eq:smoothsp}. For details how to define a discrete wedge product and the counterpart of the \e n-volume form see section 12 of \cite{DHLM05} and \cite{DMTS14}} $ \delta^{k}:\Omega^k(\mc C)\to\Omega^{k-1}(\mc C) $ is defined by $ \delta^0\Omega^0(\mc C)=0 $ and the equation 
		\begin{equation}\label{eq:codef}
		\langle d^k\omega_1^{k}, \omega_2^{k+1} \rangle= \langle \omega_1^{k}, \delta^{k+1}\omega_2^{k+1} \rangle,
		\end{equation}
		where $ \omega^k_1\in\Omega^k(\mc C) $ and $ \omega_2^{k+1}\in \Omega^{k+1}(\mc C) $.
	\end{definition}                                    
	Also for this operator we have a property analogous to \eqref{dd=0}, that is 
	\begin{equation}\label{eq:deltaodelta}
	\delta^k\circ\delta^{k+1}=0.
	\end{equation}                             
	For a   matrix description of the operators of this chapter see \cite{DKT08}. The operators $ \ast,\partial,d,\star $ and $ \delta $ we introduced on $ \mc C $ can be defined also on the dual complex $ \ast\mc C $. In particular when a discrete manifold is considered and the inverse map of duality map $ \ast $ can be written as  
	\begin{equation}\label{e:ast}
	\ast^{-1}=(-1)^{k(n-k)}\ast
	\end{equation}  where $ \ast $ acts on $ \ast \mc C $ as on $ \mc C $ , i.e. when $ \ast\mc C $ is still  made of \e k-cubes, all definitions applies in the same way. In this case\footnote{\label{f:delta}In this case we have also the formula $ \delta^{k+1}\omega^{k+1}=(-1)^{nk+1}\star d^{n-(k+1)}\star\omega^{k+1} $, which is analogous to the one for $ \delta $ in smooth boundaryless maniflods. This is proved using $ \star^{-1}=(-1)^{k(n-k)}\star $, which follows from \eqref{e:ast} and the definition of $ \star  $ in footnote \ref{fo:star}, see also subsection 5.5 in \cite{DKT08}.} the same notation for the discrete operators is used both on $ \mc C $ and $ \ast \mc C $.

	\subsection{Hodge decomposition}
	The Hodge decomposition in our context is as follows.
	
	\begin{theorem}\label{th:HT}
		Let $ \mc{C} $ be a discrete manifold of dimension n and let $ \Omega^k(\mc{C}) $ be the space of   k-forms on $\mc{C}$. The following orthogonal decomposition holds for all k:
		\begin{equation}\label{eq:dhodd}
		\Omega^k(\mc{C})=d^{k-1}\Omega^{k-1}(\mc{C})\oplus\delta^{k+1}\Omega^{k+1}(\mc{C})\oplus\Omega^k_H(\mc{C}),
		\end{equation}
		where $\oplus$ means direct sum and  $ \Omega^k_H(\mc{C})= \{\omega^k|d^{k}\omega^k=\delta^k\omega^k=0\}$ is the space of harmonic forms.
	\end{theorem}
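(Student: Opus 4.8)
The plan is to treat \eqref{eq:dhodd} as a purely finite-dimensional statement. Since $\mc{C}$ is a finite complex, each $\Omega^k(\mc{C})$ is a finite-dimensional inner product space under the pairing \eqref{eq:*scalar2}, and $d^{k-1}$, $\delta^{k+1}$ are honest linear maps with $\delta$ the adjoint of $d$ by \eqref{eq:codef}. The whole argument then reduces to the spectral decomposition of the discrete Laplacian $\Delta^k := d^{k-1}\delta^k + \delta^{k+1}d^k : \Omega^k(\mc{C})\to\Omega^k(\mc{C})$; no analytic input (elliptic regularity, Fredholm theory) is needed, in sharp contrast with the smooth Hodge theorem. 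I would also record the self-adjointness of $\Delta^k$, which follows because $(d^{k-1}\delta^k)^*=d^{k-1}\delta^k$ and $(\delta^{k+1}d^k)^*=\delta^{k+1}d^k$ by \eqref{eq:codef}, together with $\delta\circ\delta=0$ from \eqref{eq:deltaodelta}.

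First I would verify pairwise orthogonality of the three summands, using only $d\circ d=0$ (equation \eqref{dd=0}) and the adjunction \eqref{eq:codef}. For $\alpha=d^{k-1}\beta$ and $\gamma=\delta^{k+1}\mu$ one has $\langle d^{k-1}\beta,\delta^{k+1}\mu\rangle=\langle d^kd^{k-1}\beta,\mu\rangle=0$; for $\alpha=d^{k-1}\beta$ and $h\in\Omega^k_H(\mc{C})$ one has $\langle d^{k-1}\beta,h\rangle=\langle\beta,\delta^k h\rangle=0$; and for $\gamma=\delta^{k+1}\mu$ and $h\in\Omega^k_H(\mc{C})$ one has $\langle\delta^{k+1}\mu,h\rangle=\langle\mu,d^kh\rangle=0$. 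Thus the right-hand side of \eqref{eq:dhodd} is an internal orthogonal direct sum, and it only remains to show that it exhausts $\Omega^k(\mc{C})$.

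The core step identifies harmonic forms with the kernel of the Laplacian, $\ker\Delta^k=\Omega^k_H(\mc{C})$. The inclusion $\supseteq$ is immediate from the definition of $\Delta^k$. For $\subseteq$, if $\Delta^k\omega=0$ then, pairing with $\omega$ and applying \eqref{eq:codef} twice,
\begin{equation*}
0=\langle\Delta^k\omega,\omega\rangle=\langle\delta^k\omega,\delta^k\omega\rangle+\langle d^k\omega,d^k\omega\rangle=\|\delta^k\omega\|^2+\|d^k\omega\|^2,
\end{equation*}
so that $d^k\omega=\delta^k\omega=0$, using that the pairing \eqref{eq:*scalar2} is positive definite. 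Since $\Delta^k$ is self-adjoint, the finite-dimensional orthogonal splitting $\Omega^k(\mc{C})=\ker\Delta^k\oplus\Ima\Delta^k$ holds with $(\Ima\Delta^k)^\perp=\ker\Delta^k=\Omega^k_H(\mc{C})$. Finally, because $\Delta^k\omega=d^{k-1}(\delta^k\omega)+\delta^{k+1}(d^k\omega)$, we get $\Ima\Delta^k\subseteq d^{k-1}\Omega^{k-1}(\mc{C})+\delta^{k+1}\Omega^{k+1}(\mc{C})$, while the orthogonality established above gives the reverse containment $d^{k-1}\Omega^{k-1}(\mc{C})\oplus\delta^{k+1}\Omega^{k+1}(\mc{C})\subseteq(\Omega^k_H)^\perp=\Ima\Delta^k$. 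Combining these identifications yields exactly \eqref{eq:dhodd}.

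Since everything lives in finite dimensions, there is no genuine analytic obstacle; I expect the self-adjoint splitting $\Omega^k(\mc{C})=\ker\Delta^k\oplus\Ima\Delta^k$ to be the conceptual crux, as it is what replaces the hard elliptic analysis of the continuous theorem. The only points requiring care are the bookkeeping at the extreme degrees, where $d^{-1}$, $\delta^0$ and $d^n$ are set to zero consistently with the conventions $d\Omega^n(\mc{C})=0$ and $\delta^0\Omega^0(\mc{C})=0$ fixed in the definitions, and confirming that \eqref{eq:*scalar2} is positive definite so that $\|\delta^k\omega\|^2+\|d^k\omega\|^2=0$ really forces both terms to vanish.
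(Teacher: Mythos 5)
Your proof is correct, but it takes a different route from the paper's. The paper proves pairwise orthogonality exactly as you do, and then finishes in one step: it computes directly that $\omega^k\in\left(d^{k-1}\Omega^{k-1}(\mc{C})\oplus\delta^{k+1}\Omega^{k+1}(\mc{C})\right)^\perp$ if and only if $\langle\delta^k\omega^k,\omega^{k-1}_1\rangle+\langle d^k\omega^k,\omega^{k+1}_2\rangle=0$ for all test forms, i.e.\ if and only if $d^k\omega^k=\delta^k\omega^k=0$, so that the orthogonal complement of the first two summands is precisely $\Omega^k_H(\mc{C})$ and the finite-dimensional identity $V=W\oplus W^\perp$ closes the argument — the Laplacian never appears. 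You instead introduce $\Delta^k=d^{k-1}\delta^k+\delta^{k+1}d^k$, identify $\Omega^k_H(\mc{C})$ with $\ker\Delta^k$ via the positivity argument $\langle\Delta^k\omega,\omega\rangle=\|d^k\omega\|^2+\|\delta^k\omega\|^2$, and use the self-adjoint splitting $\ker\Delta^k\oplus\Ima\Delta^k$. Both arguments are sound; the paper's is shorter and more elementary, while yours buys the extra fact that harmonic forms are exactly the kernel of the discrete Laplacian (the direct discrete analogue of the smooth Hodge theorem), at the cost of one more operator and its self-adjointness. Two trivial remarks: the invocation of $\delta\circ\delta=0$ in your self-adjointness claim is superfluous ($(d^{k-1}\delta^k)^*=(\delta^k)^*(d^{k-1})^*=d^{k-1}\delta^k$ already does it), and your care about the extreme degrees matches the paper's conventions $d\Omega^n(\mc{C})=0$, $\delta^0\Omega^0(\mc{C})=0$.
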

	
	\begin{proof}
		Consider  the discrete forms  $ \omega^{k-1}_1\in\Omega^{k-1}(\mc C) $, $ \omega^{k+1}_2\in\Omega^{k+1}(\mc C) $ and $ h^k\in \Omega_H^k(\mc C) $ chosen arbitrary form their respective spaces.
		Since $ \delta^{k+1}  $ is the adjoint of $ d^k $ using  \eqref{dd=0} and \eqref{eq:codef} we have $ \langle d^{k-1}\omega_1,\delta^{k+1} \omega_2 \rangle=\langle d^kd^{k-1}\omega_1, \omega_2 \rangle =0$.  By the definition of $ \Omega^k_H(\mc C) $ and \eqref{eq:codef} we have $ \langle h^k,d^{k-1} \omega^{k-1}_1 \rangle=\langle \delta^k h^k, \omega^{k-1}_1 \rangle=0$, likewise $\langle h^k, \delta^{k+1}\omega^{k+1}_2 \rangle=\langle  d^kh^k, \omega^{k+1}_2 \rangle=0 $. Therefore the spaces $ d^{k-1}\Omega^{k-1}(\mc C)$, $ \delta^{k+1}\Omega^{k+1}(\mc C)$ and $ \Omega^k_H(\mc C) $ are each other orthogonal. A general \e k-form $ \omega^k\in \Omega^k(\mc C)$  belongs to $ \left(d\Omega^{k-1}(\mc{C})\oplus\delta\Omega^{k+1}(\mc{C})\right)^\perp$ if and only if  $ \langle \omega^k,d^{k-1} \omega^{k-1}_1+ \delta^{k+1}\omega^{k+1}_2 \rangle=\langle \delta^k \omega^k, \omega^{k-1}_1 \rangle+\langle  d^k \omega^k, \omega^{k+1}_2 \rangle=0 $ for all $ d^{k-1} \omega^{k-1}_1+ \delta^{k+1}\omega^{k+1}_2\in d\Omega^{k-1}(\mc{C})\oplus\delta\Omega^{k+1}(\mc{C}) $, namely $ d^k\omega^k=\delta^k\omega=0 $. Then  we  showed  $ \left(d\Omega^{k-1}(\mc{C})\oplus\delta\Omega^{k+1}(\mc{C})\right)^\perp=\Omega^k_H$, so the decomposition is complete and generates all $ \Omega^k(\mc C) $.
	\end{proof}
	
	\section[Discrete operators  on   $ \mathbb{T}^3 $]{Discrete operators  on  the discrete torus}\label{sec:disop}
	
	In this section we show how the discrete exterior derivative and its adjoint work on a cubic complex of dimension three, namely we work with \e 0-forms, \e 1-forms, \e 2-forms and \e 3-forms, setting discrete equivalent of gradient, curl and divergence operators.
	\emph{ We consider a discrete mesh of edge 1 on the discrete torus $ \mathbb{T}^3_N=\mathbb{Z}^3/N\mathbb{Z}^3 $ of side N and we refer explicitly to other dimensions whenever appropiate}.  
	Thinking about the bases $dx_i$, $dx_idx_j $ and $ dx_1dx_2dx_3 $ respectively for \e 1-forms, \e 2-forms and \e 3-forms, the parallel of the left   table \ref{tab:forme} between  the smooth case and the discrete case is really natural. For smooth form the action of $ d $ and $ \delta $ can be summarized  as in the right table  \ref{tab:forme}.

	\begin{table}
		\centering
		\caption[]{Left table: conceiving continuous  and discrete forms. Right table: $ d $ and $ \delta $ applied to smooth forms.}
		\label{tab:forme}
		\begin{tabular}{|cc|cc|}
			\hline
			\multicolumn{2}{|c|}{smoooth case}&\multicolumn{2}{c|}{discrete case}\\ 
			\hline
			$ \omega^0 $: & scalar field &  $ \omega^0 $: & vertex field \\
			$ \omega^1 $: & vector field &  $ \omega^1 $: & edge field\\
			$ \omega^2 $: & vector  field & $ \omega^2 $: & face field\\
			$ \omega^3 $: & scalar field &  $ \omega^3 $: & cell field\\
			\hline
		\end{tabular}
		\hspace{0.08mm}
		\begin{tabular}{|c|c|c|}
			\hline
			\multicolumn{3}{|c|}{smoooth case}\\
			\hline
			form & $ d $ &  $ \delta $ \\
			\hline
			$ \omega^0 $ &  grad$ \,\omega^0 $ &  0 \\
			$ \omega^1 $ &  curl$ \,\omega^1 $ &  -div$ \,\omega^1 $ \\
			$ \omega^2 $ &  div$ \,\omega^2 $  &  curl$ \,\omega^2 $ \\
			$ \omega^3 $ &   0 & -grad$ \,\omega^3 $ \\
			\hline
		\end{tabular}{ }
	\end{table}
	
	Now we are indicating also the discrete operators of last section \ref{sec:DEC} without the index $ k $,  unless otherwise specified, it will be implicit to use the right one according to the \e k-form on which they act. We want to show how to compute $ d $ and $ \delta $. Calculations will be performed in some cases, the others follow mutatis mutandis. We recover also the usual divergence, Gauss and Stokes theorems in our discrete setting. A part proposition \ref{prop:Omega^1_H}, all what  we do in this section for the discrete torus can be done with some extra work and notation for general discrete manifold with non regular mesh (i.e. that can not be defined using the canonical basis $ \{e_1,e_2,e_3\} $).

	\subsection{Notation}\label{ss:not}Let  $\{{e_1,e_2,e_3}\} $ be a canonical right handed orthonormal basis.  We define the sets of all the \e{vertices}  
	\begin{equation*}\label{key}
	V_N:=\{ x=(x_1,x_2,x_3):x_i=0,e_i,\dots,(N-1)e_i\}.
	\end{equation*}
	all the oriented \e{edges} $ E_N=\left(\os{3}{\us{i=1}{\bigcup}}E_N^{i,+}\right)\cup \left(\os{3}{\us{i=1}{\bigcup}}E_N^{i,-}\right) $ where 
	\begin{equation*}\label{key}
	E_N^{i,\pm}=\{\text{positive/negative oriented \e 1-cubes } e=\pm(x,x+e_i),i\in{1,2,3}\},
	\end{equation*} all the oriented \e{faces} $ F_N=\left(\os{3}{\us{i=1}{\bigcup}}F_N^{i,+}\right)\cup \left(\os{3}{\us{i=1}{\bigcup}}F_N^{i,-}\right) $ where 
	\begin{equation*}\label{key}
	F_N^{1,\pm}=\{\text{positive/negative oriented  \e 2-cubes } f_1=\pm(x,x+e_2,x+e_2+e_3,x+e_3) \},
	\end{equation*} 
	\begin{equation*}\label{key}
	F_N^{2,\pm}=\left\{\text{positive/negative oriented \e 2-cubes } f_2=\pm(x,x+e_3,x+e_3+e_1,x+e_1) \right\},
	\end{equation*} 
	\begin{equation*}\label{key}
	F_N^{3,\pm}=\left\{\text{positive/negative oriented \e 2-cubes } f_3=\pm(x,x+e_1,x+e_1+e_2,x+e_2)\right\}
	\end{equation*}
	and finally all the oriented \e{cells} $C_N=\left(C_N^{+}\right)\cup \left(C_N^{-}\right)$  where 
	\begin{equation*}\label{key}
	C^{\pm}_N:=\left\{\textrm{positive/negative oriented \e 3-cubes } \pm c:\texttt{b}(c)=x+\frac{e_1}{2}+\frac{e_2}{2}+\frac{e_3}{2}\right\}.
	\end{equation*}
	
	\begin{remark}\label{r:face}
		Observe that for a face the orientation can be defined as for a \e 2-simplex, see definition \ref{d:simdef} and \ref{d:orsim}, i.e. with the ordering given to the vertices in its sequence.
	\end{remark}

	We indicate a general  oriented edge,  oriented face and oriented cell respectively with $ e=(x,y) $, $ f $ and $ c $, where $ y=x\pm e_i $ for some $ i $, while for the   non-oriented  ones we use a  calligraphic writing, that is $ \mf e=\{x,y\} $,  $ \mf f $ and $ \mf c $. We indicate in calligraphic also the non-oriented collections just defined above, that is   $ \mc E_N $, $ \mc F_N $ and $ \mc C_N $. 
	The collections of \e k-cube defining the discrete Manifolds $ \mc C $  on $ \bb T^3_N $, $ \bb T^2_N $ and $ \bb T_N $ are respectively $ \left\{V_N,E^+_N,F^{+}_N,C_N^+\right\} $, $ \left\{V_N,E^{1,+}_N,E^{2,+}_N,F^{+}_N\right\} $ where $ F^{+}_N $ is defined as $ F^{3,+}_N $ and   $ \left\{V_N,E^{+}_N\right\} $ where $ E_N^+ $ is defined as $ E_N^{+,1} $. 
	
	A general vertex, edge, face and cell field is going to be denoted respectively with $h(x)$, $j(x,y) $,  $\psi(f) $ and  $ \rho(c)$. 
	
	The dual complex $ \ast\mc C $ is constructed on a mesh with the same geometrical structure of  the original one and  obtained translating each vertex of $ (e_1/2,e_2/2,e_3/2) $. We denote with $ \bb T^{n,\ast}_N $ the dual torus  obtained translating the vertexes. So we use an analogous notation and indicate with an index $ * $ the collections and the elements of the dual complex, that is respectively $ V_N^* $, $ E_N^* $, $ F_N^* $ and $ C_N^* $  and $ x^* $, $ e^*=(x^*,y^*) $, $ f^* $ and $ c^* $. 
	
	\begin{remark}\label{r:orob}
		With a vertex $ x\in V_N $ are associated the three edges  $ \{x,x+e_1\}$,$\{x,x+e_2\} $ and $ \{x,x+e_3\} $, consequently we associate with $ x $  also the oriented edges $ \pm (x,x+e_1)$,$\pm(x,x+e_2) $ and $ \pm(x,x+e_3)$. While in two dimension with $ x\in V_N $ are associated $ \{x,x+e_1\}$ and $\{x,x+e_2\} $ and in one dimension only $ \{x,x+e_1\}$.
		
		With a vertex $ x\in V_N $ are associated the three  faces $ \mf f_1,\mf f_2 $ and $ \mf f_3 $ such that    $\mathtt b(\mf f_k)=x+\frac{e_i}{2}+\frac{e_j}{2}  $  where $ i,j,k\in\{1,2,3\} $ and  $ i\neq j \neq k $, consequently we associate with $ x $  also the oriented faces $ \pm f_1$,$\pm f_2 $ and $ \pm f_3$. In two dimensions we don't have a subscript on the faces and only one  face $ \mf f $, defined as $ \mf f_3 $, is associated with  $ x $. 
		
		Finally with a vertex $ x\in V_N $ is associated only one  cell $ \mf c $ such that $ \texttt{b}(\mf c)=x+\frac{e_1}{2}+\frac{e_2}{2}+\frac{e_3}{2} $, consequently the oriented cells $ \pm c $ are associated with $ x $.
	\end{remark}

	\subsection{\label{ss:d 0to1}$ \mathbf{d:\Omega^0\rightarrow\Omega^1} $} 
	Consider $ (x,y)\in E_N $ and compute
	$
	dh(x,y)=h\circ\partial(x,y)=h(y-x)=h(y)-h(x).
	$
	Choosing $ (x,y)=(x,x+e_i) $  we define the discrete gradient
	\begin{equation}\label{eq:dgrad}
	dh(x,x+e_i)=h(x+e_i)-h(x)=:\nabla_i h(x).
	\end{equation}
	For a \e 1-chain $ \gamma=\os{m}{\us{i=1}{\sum}}(x_k,y_k) $, where $ y_k=x_{k+1} $, setting $ \int_\gamma \nabla h\cdot dl:=\us{k=1}{\os{m}{\sum}}\nabla_{i_k} h(x_{k}) $ we get
	\begin{equation}\label{eq:dlineint}
	dh(\gamma)=\int_{\gamma}\nabla h\cdot dl=h\circ\partial(\gamma)=\int_{\partial\gamma}h = h(y_m)-h(x_1),
	\end{equation}
	where $ \p\gamma $ is the boundary of $ \gamma $. This gives us  the discrete versions of  line integral. For completeness we translates in our language the well known result relating gradient fields to  zero integrations on closed paths.
	\begin{proposition}\label{prop:prop1form}
		A 1-form $ j(x,y)\in d\Omega^0$ if and only if $ j(\gamma)=\oint_\gamma j=0 $ for all closed path(chain) $ \gamma $ on $ V_N $. Moreover, the vertex function  \begin{equation}\label{eq:h^x(y)}
		h^x(y):=\us{(w,z)\in\gamma_{x\rightarrow y}}{\sum}j(w,z) 
		\end{equation} is such that $ j(y,y')=h^x(y')-h^x(y) $ for every $ (y,y')\in E_N $ and it doesn't depend on the particular path  $ \gamma_{x\rightarrow y} $   from $ x $ to $ y $. Any two functions $ h^{x'}(\cdot) $  and  $ h^x(\cdot) $  differ for  an additive constant.
	\end{proposition}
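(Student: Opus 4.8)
The plan is to prove the two implications separately and then establish the concrete properties of the primitive $h^x$, using throughout that the torus $\bb T^n_N$ is connected (so any two vertices are joined by at least one path, i.e.\ one 1-chain), and that the antisymmetry $j(x,y)=-j(y,x)$ lets a path be traversed backwards at the cost of an overall sign.

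For the direction $j\in d\Omega^0\Rightarrow\oint_\gamma j=0$, I would argue directly from the adjointness defining $d$. A closed path $\gamma$ is precisely a 1-cycle, i.e.\ $\p_1\gamma=0$, since along $\gamma=\os{m}{\us{k=1}{\sum}}(x_k,x_{k+1})$ with $x_{m+1}=x_1$ the boundary telescopes to $x_{m+1}-x_1=0$. Writing $j=d h$ for some $h\in\Omega^0(\mc C)$ and using \eqref{eq:def d}, we then get $j(\gamma)=[d h,\gamma]=[h,\p_1\gamma]=[h,0]=0$. Equivalently, one may unwind the telescoping sum in \eqref{eq:dlineint} directly: the line integral collapses to $h(x_{m+1})-h(x_1)=0$.

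For the converse I would fix a base vertex $x$ and define $h^x$ by \eqref{eq:h^x(y)}. The main obstacle, and the only place the hypothesis is genuinely used, is showing this is well defined, i.e.\ independent of the chosen path $\gamma_{x\to y}$. Given two such paths $\gamma,\gamma'$, the concatenation of $\gamma$ with the reverse of $\gamma'$ is a closed path based at $x$; by antisymmetry the reversed piece contributes $-\us{(w,z)\in\gamma'}{\sum}j(w,z)$, so the hypothesis $\oint j=0$ on this cycle yields $\us{\gamma}{\sum}j=\us{\gamma'}{\sum}j$. Once well-definedness is in hand, the gradient property is immediate: for $(y,y')\in E_N$ append the single edge $(y,y')$ to a path reaching $y$, whence $h^x(y')=h^x(y)+j(y,y')$, that is $j(y,y')=h^x(y')-h^x(y)$, so $j=d h^x\in d\Omega^0$.

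Finally, for the dependence on the base point, I would compare $h^{x'}$ with $h^x$ by prefixing a fixed path from $x'$ to $x$ to any path from $x$ to $y$; this realizes a path from $x'$ to $y$, giving $h^{x'}(y)=c+h^x(y)$ with $c:=\us{(w,z)\in\gamma_{x'\to x}}{\sum}j(w,z)$ independent of $y$. Alternatively, $j=d h^x=d h^{x'}$ forces $d(h^x-h^{x'})=0$, and since the torus is connected a 0-form with vanishing discrete gradient is constant. I expect the verification of path-independence to be the conceptual heart of the argument; the remaining steps are routine telescoping once the closed-path hypothesis has been converted into the statement that the sum of $j$ over any cycle vanishes.
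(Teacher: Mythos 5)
Your proof is correct and follows essentially the same route as the paper's: the forward direction is the telescoping of \eqref{eq:dlineint} (your appeal to $[dh,\gamma]=[h,\p_1\gamma]=0$ is just the abstract form of it), the converse establishes path-independence by closing two paths into a loop via antisymmetry exactly as the paper does by splitting a loop into $\gamma_{x\to y}+\gamma_{y\to x}$, and the base-point comparison by prefixing a path from $x'$ to $x$ matches the paper's closed-path argument.
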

	
	\begin{proof}
		If $ j(x,y)=dh(x,y) $ from \eqref{eq:dlineint} we get $ \oint_\gamma j =0 $ since $ y_m=x_1 $. Now let's prove the opposite implication. Let $ \gamma $ be a closed path and $ x,y $ any two points on it. Since $ \gamma=\gamma_{x \rightarrow y}+\gamma_{y\rightarrow x} $, from $ \us{(w,z)\in\gamma}{\sum}j(w,z)=\us{(w,z)\in\gamma_{x\rightarrow y}}{\sum}j(w,z)+\us{(w,z)\in\gamma_{y\rightarrow x}}{\sum}j(w,z)=0 $, we have $ \us{(w,z)\in\gamma_{x\rightarrow y}}{\sum}j(w,z)=\us{(w,z)\in\gamma'_{x\rightarrow y}}{\sum}j(w,z) $, where $ \gamma'_{x\rightarrow y}=-\gamma_{y\rightarrow x} $. Hence the function \eqref{eq:h^x(y)} doesn't depend on the particular path from $ x  $ to $ y $ and for any $ (y,y')\in E_N $ its gradient is $ h^x(y')-h^x(y)=\us{(w,z)\in\gamma_{x\rightarrow y}\cup(y,y')}{\sum}j(w,z)-\us{(w,z)\in\gamma_{x\rightarrow y}}{\sum}j(w,z)=j(y,y')$. Taking a closed path $ \gamma=\gamma_{x'\rightarrow y}+\gamma_{y\rightarrow x}+\gamma_{x\rightarrow x'} $ we obtain $ h^{x'}(y)=h^x(y)+\us{(w,z)\in\gamma_{x'\rightarrow x}}{\sum}j(w,z) $  because of $ \oint_\gamma j=0 $.
	\end{proof}
	
	\subsection{$ \mathbf{d:\Omega^1\rightarrow\Omega^2} $} 
	Let   $ f_k\in F^{k,+}_N$ be the face with index $ k $ that is  associated with  $ x $ as in previous notation in subsection \ref{ss:not}, see remark \ref{r:orob}. We have
	$ dj(f_k)=j\circ\partial (f_k)= \us{(x,y)\in \p f_k}{\sum}j(x,y)$, that we write
	\begin{equation}\label{eq:dj}
	dj(f_k)=\us{(x,y)\in \p f_k}{\sum}j(x,y)=:\int_{\p f_k}j\cdot dl.
	\end{equation}
	We compute $ dj(f_3) $ as
	$
	dj(f_3)=j(x,x+e_1)+j(x+e_1,x+e_1+e_2)-j(x+e_2,x+e_1+e_2)-j(x,x+e_2),
	$
	calling $ \nabla_l j_m(x):= j(x+e_l,x+e_l+e_m)-j(x,x+e_m) $, where $ j_m(x):=j(x,x+e_m) $, the last equation becomes
	$ dj(f_3)=\nabla_1 j_2(x)-\nabla_2 j_1(x) $, so  $ dj(f_k) $ is also the curl defined as 
	\begin{equation*}\label{eq:dcurl}
	dj(f_k)=\varepsilon^{klm}\nabla_lj_m(x)=:(\mathrm{curl\,}j)(f_k),
	\end{equation*}
	where $ \veps^{klm} $ is the Levi-Civita symbol summed on repeated indexes with the Einstein convention.
	Let $ S $ be the \e 2-chain $ S=\os{m}{\us{i=1}{\sum}} f_i $,  where $ \{f_i\}_{i=1}^{m} $ can be any collection  of  oriented faces in $ F_N $. Setting $ \int_S \textrm{curl}j\cdot d\varSigma:=\us{i=1}{\os{m}{\sum}}(\textrm{curl\, }j)(f_i) $,  we have
	\begin{equation}\label{eq:dstokes3d}
	\int_S\textrm{curl\,}j\cdot d\varSigma=\int_\Gamma j\cdot dl.
	\end{equation}
	where $  \Gamma=\os{m}{\us{i=1}{\sum}}\p f_i  $. If all the faces in the collection have same orientation, i.e. $ f_i\in F^\pm_N $ for all $ i\in 1,\dots,m $, the chain $  \Gamma=\os{m}{\us{i=1}{\sum}}\p f_i  $ is the closed path  corresponding to  the inductively oriented boundary of the oriented surface $ S $, indeed   when a edge is  shared by two faces $ f_i $ and $ f_j $ it cancels and  gives two contributions with same modulo but opposite sign in $ \int_{\p f_i}j\cdot dl $ and $ \int_{\p f_j}j\cdot dl $. We obtained in this last case  a  discrete version of the usual Stokes theorem.
	\subsubsection*{\quad Two dimensions.}In two dimensions  $ dj(f)=\nabla_1 j_2(x)-\nabla_2 j_1(x) $,  where $ f $ is the face associated with $ x $ as $ f_3 $ in notation and   $ \textrm{curl}j $ is a \e 2-form defined  on the faces  lying on the plane of  $ e_1 $ and $ e_2 $. Proceeding like we did in three dimensions we have the discrete Green-Gauss formula.
	
	\subsection{$ \mathbf{d:\Omega^2\rightarrow\Omega^3} $} Consider the $ c\in C_N^+ $,  we compute and define the divergence $ d $ as follows 
	\begin{equation}\label{eq:dpsi}
	d\psi(c)=\psi\circ\partial(c)=\us{f\in F_N:f\in \p c}{\sum}\psi(f)=:\div\psi(c).
	\end{equation}
	As in notation the cell $ c $ is associated with the vertex $ x\in V_N $.  Let $ f_i $ with $ i\in \{1,2,3\} $ the three positive faces  associated with $ x $. We have then
	$\us{f:f\in \p c}{\sum}\psi(f)=\us{i=1}{\os{3}{\sum}}\psi(f_i+e_i)-\psi(f_i) $. We define the flow across $ c $ as
	\begin{equation}\label{eq:ddiv}
	\int_{\partial c}\psi\cdot d\varSigma:=\us{i=1}{\os{3}{\sum}}\Phi(f_i),
	\end{equation}
	where $ \Phi(f_i):=\psi(f_i+e_i)-\psi(f_i) $. Let $ V $ be the 3-chain $ V=\os{m}{\us{i=1}{\sum}}c_i $, where $ \{c_i\}_{i=1}^{m} $ can be any collection  of  oriented cubes in $ C_N $. Setting $ \int_V \textrm{div\,}\psi \,dx:=\us{i=1}{\os{m}{\sum}}(\textrm{div}\psi)(c_i) $, we have
	\begin{equation}\label{eq:ddivth3d}
	\int_V\textrm{div\,}\psi\, dx=\int_{S} \psi\cdot d\varSigma,
	\end{equation}
	where $ S=\os{m}{\us{i=1}{\sum}}\p c_i $. If all the cubes in the collection have same sign orientation, i.e. $ c_i\in C_N^\pm $ for all $ i\in 1,\dots,m $, the chain  $ S=\os{m}{\us{i=1}{\sum}}\p c_i $ is the closed surface corresponding to the  inductively oriented boundary of the oriented volume $ V $, indeed  when a face is  shared by two cells $ c_i $ and $ c_j $ it cancels and  gives two contributions with same modulo but opposite sign in $ \int_{\partial c_i}\psi\cdot d\varSigma $ and $ \int_{\partial c_j}\psi\cdot d\varSigma $. 
	We obtained in this last case a discrete version of the usual   divergence theorem.
	
	\bigskip
	Now we compute the codifferential operator $ \delta $, this can be done using the fact that it is the adjoint of $ d $ with respect to the scalar product \eqref{eq:*scalar2} or using the formula for $ \delta $ in footnote \ref{f:delta} since \eqref{e:ast} is true. We don't write the computation but we give just the explicit form for $ \delta $.
	
	\subsection{$ \mathbf{\delta:\Omega^3\rightarrow\Omega^2} $}
	Let   $ f_i\in F^{i,+}_N$ be the face with index $ i $ that is  associated with  $ x $ as in  notation of subsection \ref{ss:not}, see remark \ref{r:orob}, we have
	\begin{equation}\label{eq:co3-2}
	\delta\rho(f_i)=\rho(c-e_i)-\rho(c)=\us{c':f_i\in \p c'}{\sum}\rho(c')
	\end{equation}
	With the help of the dual complex we can recover line integrals analogously to subsection \ref{ss:d 0to1}. To do it we should interpret $ \rho $ as 0-form on the dual vertex $ x^*=\ast c $ of a cell $ c $  (see the operator $ \star $ defined in  footnote  \ref{f:delta})   and $ \delta\rho $ as discrete vector field of gradient type ($ \nabla_i\rho(c):=\rho(c)-\rho(c-e_i) $) on the dual edges $ (x^*,y^*)=*(f_i) $ of the faces  $ f_i\in F^+_N $.  The collection of cells  $ c $ of $ C_N $ that would define the dual chain $ \gamma^* $ equivalent to  $ \gamma $ in \eqref{eq:dlineint} should be such that   any two consecutive  cells $ c_i$,   $c_{i+1} $ in the collection  share a common face that cancels.

	\subsection{$ \mathbf{\delta:\Omega^2\rightarrow\Omega^1} $}\label{ss:delta3to2}
	On an edge $ (x,y)\in 	E_N $ the codifferential operator is 
	\begin{equation}\label{eq:co2-1}
	\delta\psi(x,y)=\sum_{f:(x,y)\in \p f}\psi(f).
	\end{equation}
	Let $\{f_j\}_{j=1}^3 $ be the faces associated with $ x $ and let's set $ \nabla_i\psi(f_j):=\psi(f_j)-\psi(f_j-e_i) $, the  codifferential  can be rewritten $ \delta\psi(x,x+e_k)=\epsilon^{klm}\nabla_l\psi(f_m)=: \textrm{curl}\,\psi(x,x+e_k)$ getting the relation
	$
	\textrm{curl}\,\psi(x,x+e_k)=\sum_{f:(x,x+e_k)\in f}\psi(f)
	$.
	The definition of curl as integral on the border of a surface as \eqref{eq:dj} and a the discrete version of the Stokes theorem like \eqref{eq:dstokes3d} can be recovered with the dual complex. We should interpret (see the operator $ \star $ defined in  footnote  \ref{f:delta})   $ \mathrm{curl}\,\psi(x,x+e_k) $ as a 2-form on the dual  face $f^*= *(x,x+e_k) $ and $ \psi(f)  $ as discrete vector fields on the dual edges $ (x^*,y^*)=*f $. With a suitable collection of edges $ \{e_i\}_{i=1}^m $ we can define any desired surface $ S^* $.
	
	\subsubsection*{\quad Two dimensions.} 
	Formula \eqref{eq:co2-1} is still true, calling $ f $ the face associated to $ x $,  the codifferential is interpreted as an  orthogonal gradient
	\begin{equation}\label{eq:dortgra}
	\left(\begin{array}{ll}
	\delta\psi(x,x+e_1)\\
	\delta\psi(x,x+e_2)
	\end{array}\right)
	=\left(\begin{array}{ll}
	\;\;\;\nabla_2\psi(f)\\
	-\nabla_1\psi(f)
	\end{array}\right)=:\nabla^\perp \psi(x).
	\end{equation}
	
	\subsection{$ \mathbf{\delta:\Omega^1\rightarrow\Omega^0} $}\label{ss:delta1to0}
	For a vertex $ x\in V_N $
	the codifferential results to be 
	\begin{equation}\label{eq:cod1-0}
	\delta j(x)=\sum_{y:(y,x)\in E_N}j(
	y,x).
	\end{equation}
	To highlight  the analogous nature of this operator to that of \eqref{eq:dpsi}, we define $ \Phi_i(x):=j(x,x+e_i)-j(x-e_i,x) $. Here $ \Phi_i  $ can be thought as the net flow passing trough the "source/sink" $ x  $.
	We define a divergence operator as $ \div j(x):=-\delta j(x) $, i.e.
	\begin{equation}\label{eq:cod1-0}
	\textrm{div\,}j(x)=\us{y:(x,y)\in E_N}{\sum}j(x,y)=\sum_i\Phi_i(x)
	\end{equation}
	To recover the usual definition of divergence like \eqref{eq:dpsi} with \eqref{eq:ddiv} and a discrete  divergence theorem like \eqref{eq:ddivth3d} we should interpret the divergence in \eqref{eq:cod1-0} with the help of the dual complex. We have to think about (see the operator $ \star $ defined in  footnote  \ref{f:delta}) $ \mathrm{div} j(x) $ as a 3-form on the cube $c^*= *(x) $ and $ j(x,y)  $ as  2-forms on the faces $ f^*=*(x,y) $. With a suitable collection of vertexes $ \{x_i\}_{i=1}^m $ we can define any desired volume $ V^* $.
	\subsubsection*{\quad Other dimensions.} In other dimensions definition \eqref{eq:cod1-0} doesn't change.
	
	\bigskip
	\bigskip
	
	We conclude determining the  space of harmonic discrete vector fields on the discrete manifold of $ \mathbb{T}^n_N $. 
	
	\begin{proposition}\label{prop:Omega^1_H}
		Consider    on $ \mathbb{T}^n_N $ the collections $ V_N $ and $ E_N^+ $  in $ \mc C $ defined in notation \ref{ss:not}, then in any dimension  d the harmonic space $  \Omega^1_H(\mc {C}) $ has dimension d and it is generated by the discrete vector fields
		\begin{equation}\label{eq:Omega^1_Hbasis}
		\vphi^{(i)}(x,x+e_j):=\delta_{ij},\qquad  i,j=1,\dots, n. 
		\end{equation}
	\end{proposition}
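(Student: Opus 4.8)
The plan is to first check that the proposed fields are harmonic and linearly independent, and then to show they exhaust $\Omega^1_H(\mc C)$ by a cohomological argument that exploits the topology of the torus. First I would verify that each $\vphi^{(i)}$ is harmonic. Since $\vphi^{(i)}(x,x+e_j)=\delta_{ij}$ does not depend on the base point $x$, every discrete derivative $\n_l\vphi^{(i)}_m(x)$ vanishes, so $d\vphi^{(i)}=\mathrm{curl}\,\vphi^{(i)}=0$; moreover, around any vertex the incoming and outgoing contributions in direction $i$ cancel, so by \eqref{eq:cod1-0} we get $\div\vphi^{(i)}=-\delta\vphi^{(i)}=0$. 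Hence $\vphi^{(i)}\in\Omega^1_H(\mc C)$. Linear independence is immediate: the holonomy of $\vphi^{(i)}$ around the straight cycle $\gamma_j$ winding once in the direction $e_j$ is $\oint_{\gamma_j}\vphi^{(i)}=\sum_{k=0}^{N-1}\delta_{ij}=N\delta_{ij}$, so the holonomy matrix of the family $\{\vphi^{(i)}\}$ against the $n$ coordinate cycles is $N$ times the identity.

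The heart of the argument is the reverse bound $\dim\Omega^1_H(\mc C)\le n$. Let $j$ be an arbitrary harmonic discrete vector field. Being closed, $dj=0$, the discrete Stokes theorem \eqref{eq:distokes} gives $\oint_\gamma j=0$ for every contractible loop $\gamma$ (namely one that bounds a $2$-chain of faces). Consequently $\oint_\gamma j$ depends only on the homology class of $\gamma$: two homologous loops differ by a boundary $\partial S$, and $\int_{\partial S}j=\int_S dj=0$. On $\mathbb T^n_N$ the class is encoded by the $n$ winding numbers, so setting $c_i:=\oint_{\gamma_i}j$ and using the holonomy computation above, the field
\[
\tilde j := j-\sum_{i=1}^n \frac{c_i}{N}\,\vphi^{(i)}
\]
is again harmonic and has vanishing holonomy $\oint_{\gamma_i}\tilde j=0$ around each generator, hence vanishing integral around every closed path. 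By Proposition \ref{prop:prop1form} this forces $\tilde j=dh$ for some function $h:V_N\to\mathbb R$.

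Finally I would use the orthogonality built into the Hodge structure to eliminate $\tilde j$. Since $\tilde j$ is harmonic it is in particular co-closed, $\delta\tilde j=\delta d h=0$, and therefore, using that $\delta$ is the adjoint of $d$ for the scalar product \eqref{eq:*scalar2},
\[
\langle dh,dh\rangle=\langle h,\delta d h\rangle=0,
\]
so $dh=0$ and $\tilde j=0$. This yields $j=\sum_i (c_i/N)\,\vphi^{(i)}$, proving that the $\vphi^{(i)}$ span $\Omega^1_H(\mc C)$; together with their independence this gives $\dim\Omega^1_H(\mc C)=n$. The main obstacle is purely topological: curl-freeness alone does not make a discrete field a gradient on the torus, precisely because the $n$ non-contractible cycles can carry nontrivial holonomy. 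The role of the $\vphi^{(i)}$ is exactly to account for this first cohomology, and the resulting count $n$ is the first Betti number of $\mathbb T^n$, which makes the dimension statement the expected discrete analogue of the smooth Hodge theorem.
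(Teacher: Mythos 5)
Your proof is correct, but it takes a genuinely different route from the paper's. The paper, after checking that the $\vphi^{(i)}$ are closed and co-closed, obtains the bound $\dim\Omega^1_H\le n$ by a pure dimension count: it writes $\dim\Omega^1_H=\dim(\Ker d^1)-\dim(\Ima d^0)$, identifies this with the dimension of the quotient $\Ker d^1/\Ima d^{0}$, and then invokes the duality between $d$ and $\p$ to identify that quotient with the first homology of the discrete torus, whose dimension is $n$. You instead argue constructively: given an arbitrary harmonic $j$, you subtract the explicit combination $\sum_i (c_i/N)\,\vphi^{(i)}$ fixed by its holonomies around the coordinate cycles, reduce the remainder to a gradient via Proposition \ref{prop:prop1form}, and then kill the gradient with the adjointness identity $\langle dh,dh\rangle=\langle h,\delta dh\rangle=0$. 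Both arguments rest on the same topological input --- that the $n$ coordinate cycles generate the first homology of $\mathbb{T}^n_N$, so that the integral of a closed $1$-form over any loop is determined by the winding numbers --- and you should flag that this step is exactly where the topology enters, since it does not follow from anything proved earlier. What your version buys is an explicit formula for the harmonic projection (the coefficients $c_i/N$, which for harmonic fields agree with the edge averages in \eqref{cco} since parallel coordinate cycles are homologous); what the paper's buys is brevity, at the price of leaning on the abstract chain of identifications between the Hodge splitting, cohomology, and homology.
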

	
	\begin{proof}
		For each $ i\in\{1,\dots,n\} $ we have both $ d^1\vphi^{(i)}(f)=0 $ for all $ f\in F_N $ and $ \delta^1 \vphi^{(i)}(x)=0 $ for all $ x\in V_N $, then $ \vphi^{(i)} \in \Omega^1_H $ and the set of discrete vector fields $ \{\vphi^{(i)}(x)\}_i $ generate a \e d-dimensional subspace of $ \Omega_H^1 $.  Since $ \textrm{Ker} \,d^1=d^0\Omega^0 \oplus \Omega^1_H $ we have $ \dim \Omega^1_H=\dim(\textrm{Ker}\, d^1)-\dim(\Ima d^0) $, this is also the dimension of the quotient space $ \Ker d^1/\Ima d^{0}  $  of closed discrete vector fiels (\e 1-forms) differing for  an exact \e 1-form. By duality (see natural pairing \eqref{eq:pairing}) between $ d $ and $ \p $ the quotient space $ \Ker d^1/\Ima d^{0}$ is isomorphic to quotient space of closed \e 1-chains that differ for the  boundary of a \e 2-chain $\Ker \p^1/\Ima \p^{2} $, which dimensionality is  given by the number of independent circle $ S^1 $ in the cartesian product $ S_1\times\dots\times S_1 $ homeomorphic to the continuous torus  $  \mathbb{T}^n $, that is $ n $. Therefore $ \{j_i(x)\}_{i=1}^n $ generates all $ \Omega_H^1 $. \end{proof}

\section{Functional Hodge decomposition}\label{sec: HD} 

The instantaneous current $ j_{\eta}(x,y) $ defined in subsections \ref{subsec:Ipc} and \ref{subsec:Ie-mc} for every fixed configuration $ \eta\in\Sigma_N $ is a discrete vector field, i.e. $ j_{\eta}(x,y)=-j_{\eta}(y,x) $. The configurations space was defined at the beginning of \ref{sec:CTMC}. For $ \eta $ fixed, the Hodge decomposition in theorem \ref{th:HT} tell us  it can be decomposed into a gradient part $ j^d_\eta $, a curl part $ j^\delta_\eta  $ and  an harmonic part $ j^H_\eta  $. For practical purpose, this theorem is resumed  immediately after. When a  discrete vector fields is \e{translational covariant}, that is for any $ z\in V_N $ 
\begin{equation}\label{eq:covcond}
j_{\eta}(x,y)=j_{\tau_z\eta}(x+z,y+z) \text{ for all } ((x,y),\eta)\in E_N\times \Sigma_N,
\end{equation}
where the translation configuration $ \tau_{z}\eta $ is going to be defined here after, we develop a generalization of the Hodge decomposition proving a functional Hodge decomposition for translational covariant discrete vector fields,  the proof is constructive providing a method
to compute explicitly these three components, which will be associated with some functions on
the configurations space. This decomposition is unique.
Since we deal translational covariant stochastic lattice gas, see definition \eqref{eq:tcov}, their instantaneous current is translational covariant too. A relevant notion in the derivation of the hydrodynamic behaviour for diffusive particle systems is the definition of gradient particle system, see \eqref{eq:grpar}.
The relevance of this notion is on the fact that the proof of the hydrodynamic limit for gradient systems is extremely simplified.
In addition, when also the invariant measure is known it is possible to obtain explicit expressions of the transport coefficients. The knowledge of the invariant measure typically happens in the reversible case, but the hydrodynamics limit and transport coefficients derived in this case are still the same when a  boundary driven out of equilibrium version of the system is considered, see \cite{ELS96}. The hydrodynamic limits are the macroscopic behaviours of the microscopic dynamics presented in chapter \ref{ch:IPSmodels} and they are going to be discussed later in chapter \ref{ch:HSL}.   So our goal is to introduce the idea that this  functional decomposition could be useful  in the study of  the hydrodynamics of nongradient systems. Moreover  in chapter \ref{ch:SNS} we will apply the results of this section to study some nonequilibrium stationary states.

\bigskip

The spaces of \e k-forms are the ones of the last section \ref{sec:disop}, where  \e k-forms $ h\in\Omega^0,j\in\Omega^1,\psi\in\Omega^2 $ and $ \rho\in\Omega^3 $  are defined on the suitable (according to the dimension , see notation in subsection \ref{ss:not}) cubic complex $ \mc C=\{V_N,E^+_N,F^+_N,C^+_N\}$ on the discrete torus $ \mathbb{T}^n_N:=\mathbb{Z}^n/N\mathbb{Z}^n $ of mesh one. The configuration space on the discrete torus is $ \Sigma_N=\Sigma^{V_N} $.
The space of 1-forms $ \Omega^1 $ is the vector space of discrete vector fields, where a discrete vector field $ \vphi $ on $ E_N $ is a map $ \vphi:E_N\to \bb R $ such that for $ (x,y)\in E_N $ we have $ \vphi(x,y)=-\vphi(y,x) $, and it is  endowed with the scalar product  \eqref{eq:*scalar2}, that we write
\begin{equation}\label{sc}
\langle \vphi,\chi\rangle:=\frac 12\sum_{(x,y)\in E_N}\vphi(x,y)\chi(x,y)\,, \qquad \vphi,\chi\in \Omega^1.
\end{equation}
In section \ref{sec:disop} we defined on $ \bb T^n_N $ the discrete exterior derivative $ n $ and its adjoint $ \delta $ respect to this scalar product. As in section \ref{sec:disop} we don't write the index on $ d $ and $ \delta $. In this subsection we will need of $ d $ and $ \delta $ respectively as in subsections \ref{ss:d 0to1} and \ref{ss:delta3to2}, anyway their definitions are reported here after in a proper form for this section. For details about the Hodge decomposition  and the related objects see sections \ref{sec:DEC} and \ref{sec:disop}. Here we resume  the definitions for the case of discrete vector fields(1-forms) defining the  elements of the  subspaces generating $ \Omega^1 $ in the Hodge decomposition.  This decomposition tell us that
\begin{equation}\label{eq:HD1}
\Omega^1=d\Omega^0\oplus\delta\Omega^2\oplus\Omega^1_H,
\end{equation}
where $ \Omega^0$ is the space of real function (0-forms) on the vertices $ V_N $ of the torus $g:V_N\to\mathbb{R} $, $ \Omega^2$ is the space of antisymmetric real function (2-forms) on $ \psi:F_N\to\mathbb{R} $  on the oriented faces $ F_N $ of the torus (see notation in subsection \ref{ss:not}), i.e. such that $ \psi(f)=-\psi(-f) $ where $ -f $ is the face $ f $ with opposite orientation, and $ \Omega^1_H $ is the space of harmonic discrete vector fields, that is the \e d-dimensional space (see proposition \ref{prop:Omega^1_H}) generated by the discrete vector fields
 \begin{equation}\label{eq:Hbasis}
 \vphi^{(i)}\left(x,x+e_{j}\right):=\delta_{i,j}\,, \qquad i,j=1,2,3\,.
 \end{equation}
We remind that with $ e_i $ we indicate the vectors of a right-handed canonical basis, i.e. $ (e_i)_j=\delta_ {ij} $, and the edges are along these directions. 
 Before to write explicitly the terms of  \eqref{eq:HD1} for a discrete vector field $ j_\eta(x,y) $ when $ \eta $ is fixed, we introduce  the group of translations
on the discrete torus, that will allow us  to write the generalized Hodge decomposition we will obtain. We denote by $\tau_x$ the translation by the element $x\in \mathbb T^d_N$. The translations act on configurations by $(\tau_x\eta)(z):=\eta(z-x)$ and on functions by $(\tau_x g)(\eta):=g(\tau_{-x}\eta)$. For notational convenience it is useful to define $\tau_{\mathfrak f}$ for an nonoriented face $\mathfrak f$ as $\tau_{\mathfrak f}:=\tau_x$, where $ x  $ is the vertex to which the   face $ \mf f $ is  associated as in notation \ref{ss:not}. Likewise for an nonoriented edge $\mathfrak e=\{x,x+e_{i}\}$  we define $\tau_{\mathfrak e}:=\tau_x$. Given an integer vector $ z=(z_1,\dots,z_n) $,  at fixed $ \eta $ for  a 2-form $ \psi(\eta,f) $ we define the following translation group on faces
\begin{equation}\label{eq:facetras}
\psi(\eta,\tau_z f)=\psi(\eta,f+z),
\end{equation}
where $ f+z $ is the face such that $ \mathtt{b}(f+z)=\mathtt{b}(f)+z $, where $ \mathtt{b}(f) $ is the centre of a face in definition \ref{def:centre}. When for any $ z\in V_N $  $ \psi(\eta,\cdot) $ satisfies  
\begin{equation}\label{eq.covpsi}
\psi(\eta,f)=\psi(\tau_z\eta,\tau_z f)\quad \text{ for all } (\eta,f)\in F_N\times\Sigma_N,
\end{equation}
we say that it is \e{translational covariant}. For pratical reasons we are using the name discrete vectors field and discrete form also when there is a dependence on $ \eta $ as in $ j_\eta(x,y) $ or $ \psi(\eta,f) $, in these cases it is understood that $ \eta $ is fixed.

We name  a discrete vector field $ j_\eta(x,y) $  of \emph{gradient type} if there exists a  function $h:V_N\times \Sigma_N\to\bb R$ such that
\begin{equation}\label{grgr}
j_{\eta}(x,y)=dh(\eta,x,y)=h(\eta,y)-h(\eta,x),\textrm{ for all }\big((x,y),\eta\big)\in E_N\times\Sigma_N,
\end{equation}
where for a fixed $ \eta $ the action of $ d $ on a vertex function  is defined by the most hand right side. We say that interacting particles systems is of \e{gradient type} if \eqref{grgr} turns out to be
\begin{equation}\label{eq:grpar}
j_{\eta}(x,y)=dh(\eta,x,y)=\tau_y h(\eta)-\tau_xh(\eta),\textrm{ for all }\big((x,y),\eta\big)\in E_N\times\Sigma_N.
\end{equation}\label{eq:grpar}
Similarly we name respectively a discrete vector $ j_\eta(x,y) $ of \emph{curl type} if there exist a  2-form $ \psi:F_N\times \Sigma_N $   such that
\begin{equation}\label{eq:curltype}
j_{\eta}(x,y)=\delta\psi(\eta,x,y)=\sum_{f:(x,y)\in \p f}\psi(\eta,f), \textrm{ for all }\big((x,y),\eta\big)\in E_N\times\Sigma_N.
\end{equation}
Where for a fixed $ \eta $ the action of $ \delta $ on 2-form is defined by the most right hand side. For an edge $ e=(x,y)\in E_N $, we define with $ f^{+}_i(e)/f^{-}_i(e)$ the unique positive  face $ f_i^+/f_i^-\in F^{i,+}_N $ such that $ e $ is on the border\footnote{ To say this in term of   definitions \ref{def:indor} and \ref{d:indorcub}  there is an inductively  oriented edge $ e(f^+_i)/e(f^-_i) $ of $ f^+_i/f^-_i $ such that $ e(f^+_i)/-e(f^-_i) =e $.}  of $f^+_i/-f^-_i$ and we write $ e\in\p f^+_i/-\p f^-_i $, see figure \ref{fig: due facce}. Consider $ e=(x,x+e_i) $, we say that an interacting particle system is  of \e{curl type} if  \eqref{eq:curltype} turns out to be
{\small \begin{equation}\label{eq:curlpar}
j_{\eta}(x,x+e_i)=\sum_{\us{j\neq i}{ j:\,e\in\pm\partial f^{\pm}_j(e),}} \left(\tau_{\mf f_j^+(e)}g_j(\eta)-\tau_{\mf f_j^-(e)}g_j(\eta) \right),  \textrm{ for all }((x,x+e_i),\eta)\in E_N\times\Sigma_N.
\end{equation}}
Finally we call both a discrete vector field $j_\eta(x,y) $  and an interacting  particle systems of \e{harmonic type} if $ j_{\eta}(x,y) $ is a linear combination of the harmonic vectors $ \vphi^{(i)} $  \eqref{eq:Hbasis}, i.e.
\begin{equation}\label{eq:hartype}
j_\eta(x,y)=\sum_{i=1}^{n}c_i\vphi^{(i)}, \textrm{ for all }\big((x,y),\eta\big)\in E_N\times\Sigma_N.
\end{equation}

Given a discrete vector field $j \in \Omega^1$ , here we don't specify the dependence on $ \eta $ but the same can be done once the configuration $ \eta $ is fixed, we write
\begin{equation}\label{eq:Hsplit}
j=j^d+j^\delta+j^H
\end{equation}
to denote the unique splitting following from \eqref{eq:HD1} in the three orthogonal components, namely $ j^d $ is of gradient type, $ j^\delta $ is of curl type and $ j^H $ of harmonic type. This decomposition can be computed as follows.  The coefficients $c_i$ of the harmonic part $ j^H $ as in \eqref{eq:hartype} are determined by
\begin{equation}\label{cco}
c_i=\frac{1}{N^n}\sum_{x\in \mathbb T^n_N}j\left(x,x+e_{i}\right)\,.
\end{equation}
We remember that the definition of divergence for a discrete vector field $ j  $ is   \begin{equation}\label{eq:divj}
\div j(x)=\us{y:(x,y)\in E_N}{\sum}j(x,y).
\end{equation} 
To determine the gradient component $j^d$ we need to determine a function $h$ for which $j^d(x,y)=dh(x,y)=h(y)-h(x)$. This is done just taking the divergence on both side of \eqref{eq:Hsplit} obtaining that $h$ solves the discrete Poisson equation $\div dh=\div j$ because  by construction \footnote{Considering $ \delta:\Omega^1\to\Omega^0 $ we write the divergence as $ \div j(x)=-\delta j(x) $ and then $  \div j^\delta=\div \delta\psi=-\delta\delta \psi=0$.} $ \div j^\delta=\div \delta\psi=0 $. The remaining component $j^\delta$ is computed just by difference $j^\delta=j-j^d-j^H$.

We remind that we indicate with upper indices $ ^* $ the elements of the dual complex $ \ast\mc{C} $ on the dual torus $ \mathbb{T}^{d,*}_N $ with mesh of size one and connecting the dual vertices $ x^* $, namely the centres of the cells $ \mf c\in \mc C_N $. Moreover, we remind that both an edge $ e $, a face $ f $ and a cell $ c $ have two orientations as in notation of subsection \ref{ss:not} and with $ -e,-f,-c $ we indicate the same geometrical object with opposite orientation.  How to construct in general context the dual elements  is explained in definition \ref{def:dualop} 
\begin{table}[htp]
\caption[]{Duality between primal complex and dual complex}
\label{tab:duale}
\begin{tabular}{|cc|cc|cc|}
\hline
\multicolumn{2}{|c}{1d}&\multicolumn{2}{|c|}{2d}&\multicolumn{2}{c|}{3d}\\
\hline
Primal cube & Dual Cell & Primal cube & Dual Cell&Primal cube & Dual Cell\\ 
\hline
vertex&edge&vertex&face&vertex&cell\\
edge&vertex&edge&edge&edge&face\\
&&face&vertex&face&edge\\
&&&&cell&vertex\\
\hline
\end{tabular}
\end{table}
Now we recap how  duality works in the present case. There is a natural duality $ \ast $ between the nonoriented elements(\e k-cubes) of $\mathcal C$ and the ones  of  $\ast \mathcal C$ which is listed in table \ref{tab:duale}. Also between oriented elements(\e k-cubes) of $ \mc C $ and $ \ast C $ there is a duality. In one dimension we associate  with a vertex $ x\in V_N $ the positive oriented edge $ e^* \in E_N^*$ of the dual  with $ x $ at its centre ($ {\tt b}(e^*)=x $) and consistently oriented with $e_1 $, i.e. along $ e_1 $, while the dual of an edge $ e\in E_N $  is the vertex $ x^*\in V_N^* $ given by the centre $ {\tt b } (e) $ of $ e $.
 In two dimensions we associate with a vertex $ x\in V_N $   the anticlockwise(positive) oriented face $ f^*\in F_N^* $ with the vertex $ x $ at its centre ($ {\tt b}(f^*)=x $), while  we associate with an oriented edge $ e\in E_N $ the oriented dual edge $ e^*\in E^*_N $  obtained simply rotating the original edge anticlockwise of $\frac \pi 2$ around its centre and finally the dual of a face $ f\in F_N $ is the vertex $ x^*\in V_N^* $ given by its centre $ {\tt b}(f) $. 
  In three dimensions we associate with a vertex $ x\in V_N $ the positive oriented cell $ c^*\in C^*_N $ with normal pointing outside and  the vertex $ x $ at its centre ($ {\tt b}(c^*)=x $). While  we associate with an oriented edge $ e\in E_N $  the oriented dual face $ f^*\in F^*_N $  crossed at its centre by the edge $ e $ and oriented in agreement with the edge according to the right hand rule, analogously we obtain the dual of an oriented face $ f\in F_N $ as the edge $ e^*\in E_N^* $ crossing the centre of  face $ f $ and oriented in agreement with the face according to the right hand rule. Finally, the dual of a cell $ c\in C_N $ is the vertex $ x^*\in V^*_N $ given by its centre ($ {\tt b}(c)=x^* $). 
The same duality $ * $ is defined also on the dual complex $ \ast \mc C $ because of the regularity of the mesh of $ \mc C $, in this case it is valid\footnote{Therefore the adjoint operator $ \delta $ can be computed via the formula in footnote $ \ref{f:delta}$.} \eqref{e:ast}. So applying in two dimensions twice this duality we obtain the original edge $ e  $ with reversed orientation, i.e. $ \ast\ast e=-e $ and applying it in three dimensions twice this duality we obtain the original  edge $ e $/face $ f $  with original orientation, i.e. $ \ast\ast e/f=e/f $.

\subsection{One dimensional case}
We consider the one dimensional torus with $ N $ sites $ \mathbb{T}_N:=\mathbb{Z}/N\mathbb{Z} $ that we represent as a ring such that we move anticlockwise going from $ x $ to $ x+1 $. The cubic complex is $ \mc{C}=\{V_N,E^+_N\} $ with $ E^+_N:=E_N^{1,+} $, see notation in subsection \ref{ss:not}. We have the following theorem.

\begin{theorem}\label{belteo}
Let $j_{\eta}$  be a translational covariant discrete vector field. Then there exists a function $h(\eta)$ and a translational invariant function $C(\eta)$ such that
\begin{equation}\label{imbr}
j_{\eta} (x,x+1)=\tau_{x+1}h(\eta)-\tau_{x}h(\eta)+C(\eta)\,.
\end{equation}
The function $C$ is uniquely identified and coincides with
\begin{equation}\label{eq:C(eta)}
C(\eta)=\frac1N\sum_{x\in V_N} j_{\eta}(x,x+1).
\end{equation}
The function $h$ is uniquely identified up to an arbitrary additive translational invariant function and coincides with 
\begin{equation}\label{eq:h(eta)}
h(\eta)=\sum_{x=1}^{N-1}\frac{x}{N}j_{\eta}(x,x+1).
\end{equation}
\end{theorem}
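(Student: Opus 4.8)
The plan is to specialize the Hodge decomposition of Theorem \ref{th:HT} to dimension one and then upgrade it to a \emph{functional} statement by an averaging procedure over base points. In one dimension there are no $2$-forms, so \eqref{eq:HD1} reduces to $\Omega^1=d\Omega^0\oplus\Omega^1_H$, and by Proposition \ref{prop:Omega^1_H} the harmonic space $\Omega^1_H$ is one dimensional, spanned by the constant field $\vphi^{(1)}(x,x+1)=1$. Thus for each fixed $\eta$ I would write $j_\eta=j^d_\eta+j^H_\eta$, where the harmonic part is the constant $j^H_\eta(x,x+1)=C(\eta)$ with $C(\eta)$ given by the mean \eqref{cco}, i.e. \eqref{eq:C(eta)}. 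The first thing to check is that $C$ is translational invariant: using the covariance hypothesis \eqref{eq:covcond} in the form $j_{\tau_{-z}\eta}(x,x+1)=j_\eta(x+z,x+z+1)$ and reindexing the sum over the torus, one gets $\tau_zC(\eta)=C(\tau_{-z}\eta)=C(\eta)$. Consequently $j^H$ is itself translational covariant, and therefore so is the gradient remainder $j^d_\eta:=j_\eta-C(\eta)$.

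The core step is to produce a single function $h$ on configuration space (not merely a vertex potential for each frozen $\eta$) realizing $j^d$ as a discrete gradient. Since $j^d_\eta$ is of gradient type, Proposition \ref{prop:prop1form} gives, for each base vertex $x$, the path potential $W_x(\eta,y):=\sum_{(w,z)\in\gamma_{x\to y}}j^d_\eta(w,z)$, which is path independent and satisfies $W_x(\eta,y')-W_x(\eta,y)=j^d_\eta(y,y')$. This potential depends on the base point $x$, and that dependence is exactly what breaks translational covariance; changing $x$ only shifts $W_x$ by a constant in $y$, so the gradient is unaffected. The remedy is to average over base points, setting $W(\eta,y):=\frac1N\sum_{x\in V_N}W_x(\eta,y)$. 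Using the covariance of $j^d$ one checks $W_x(\tau_z\eta,y+z)=W_{x-z}(\eta,y)$, whence the averaged potential obeys $W(\tau_z\eta,y+z)=W(\eta,y)$; equivalently $W(\eta,y)=\tau_y\big(W(\cdot,0)\big)(\eta)$. Defining $h(\eta):=W(\eta,0)$ then yields $j^d_\eta(x,x+1)=W(\eta,x+1)-W(\eta,x)=\tau_{x+1}h(\eta)-\tau_xh(\eta)$, which combined with the harmonic part is precisely \eqref{imbr}.

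Finally I would settle the uniqueness and the closed forms. Summing \eqref{imbr} over all $x\in V_N$, the gradient terms telescope to zero on the torus, forcing $\sum_x j_\eta(x,x+1)=NC(\eta)$ and hence \eqref{eq:C(eta)} together with uniqueness of $C$. If $h_1,h_2$ both satisfy \eqref{imbr} with the same $C$, then $\tau_{x+1}g=\tau_xg$ for $g:=h_1-h_2$ and all $x$; taking $\eta\mapsto\tau_x\eta$ gives $\tau_{-1}g=g$, so $g$ is invariant under the generator $\tau_1$ of the translation group and thus translational invariant, which is the claimed ambiguity. For the explicit expression \eqref{eq:h(eta)} I would verify it directly: writing $a_m:=j_\eta(m,m+1)$ (periodic mod $N$) and using covariance to get $\tau_xh(\eta)=\sum_{k=1}^{N-1}\frac kN a_{k+x}$, a short reindexing gives $\tau_{x+1}h-\tau_xh=\frac{N-1}{N}a_x-\frac1N\sum_{k=1}^{N-1}a_{k+x}$, and adding $C(\eta)=\frac1N\sum_{k=0}^{N-1}a_{k+x}$ collapses to $a_x=j_\eta(x,x+1)$, confirming \eqref{imbr}.

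The main obstacle is the passage from the pointwise (frozen-$\eta$) Hodge decomposition to a genuinely functional one: the naive vertex potential from Proposition \ref{prop:prop1form} is tied to a base vertex and is not translational covariant, and it is the averaging over base points that repairs this, exploiting that base-point changes only add $y$-constants. Once covariance is secured everything else—the invariance of $C$, the two uniqueness statements, and the match with the closed forms \eqref{eq:C(eta)} and \eqref{eq:h(eta)}—is routine reindexing on the torus.
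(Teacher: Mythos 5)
Your proposal is correct and follows essentially the same route as the paper's proof: identify the harmonic (constant) part $C(\eta)$ as the mean and check its translational invariance, build the base-point potential of the gradient remainder $j_\eta-C(\eta)$ via Proposition \ref{prop:prop1form}, average over base points to restore translational covariance, and set $h(\eta)$ equal to the averaged potential at the origin. The only cosmetic difference is that you phrase the setup explicitly in the language of the one-dimensional Hodge decomposition and verify the closed form \eqref{eq:h(eta)} by direct reindexing, whereas the paper folds these into the same construction.
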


\begin{proof}
A short proof can be obtained simply inserting \eqref{eq:C(eta)} and \eqref{eq:h(eta)} into \eqref{imbr}. We give a proof that
illustrates the intuition behind. First of all
summing over $x\in V_N$ both sides of \eqref{imbr} we obtain that $ C $ is uniquely determined by \eqref{eq:C(eta)}. By \eqref{eq:covcond} we have that $ C $ is invariant by translations
$$
C(\tau_z\eta)=\frac 1N\sum_{x=1}^N j_{\tau_z\eta}(x,x+1)=\frac 1N\sum_{x=1}^N j_{\eta}(x-z,x-z+1)=C(\eta)\,.
$$
We define a function $U:V_N\times \Sigma_N\times V_N\to \mathbb R^+$ that associates with the triple $(x,\eta,y)$  the value $U_x(\eta,y)$ defined setting $U_x(\eta,x):=0$ and for $y\neq x$ by
\begin{equation}\label{sete}
U_x(\eta,y):=\sum_{(w,z)\in \gamma_{x\to y}}\left[j_\eta(w,z)-C(\eta)\right]=\sum_{z\in [x,y-1]}\left[j_\eta(z,z+1)-C(\eta)\right]\,,
\end{equation}
where $ \gamma_{x\to y} $ is a path (\e 1-chain) made of edges in $ E_N^+ $  from $ x $ to $ y-1 $  as in proposition \ref{prop:prop1form}, we rewrite this path summing on    the collection of sites $[x,y-1]$ encountered going form $x$ to $y-1$ moving anticlockwise.
Since
\begin{equation}\label{egrad}
\sum_{z=1}^{N}\left[j_{\eta}(z,z+1)-C(\eta)\right]=0
\end{equation}
the function $ U $ is well defined. Condition \eqref{egrad} is equivalent to the condition that $j_{\eta}-C(\eta)$ is a gradient vector field for any configuration $\eta$ as \eqref{grgr} and $U_x$ is one of the associated potentials fixed in such a way that it is zero at $x$, see proposition  \ref{prop:prop1form}. We have indeed
\begin{equation}\label{indeed}
U_x(\eta,y+1)-U_x(\eta,y)=j_{\eta} (y,y+1)-C(\eta)\,.
\end{equation}
Consider the averaged function
\begin{equation}\label{indeedav}
\overline U(\eta,y):=\frac 1N \sum_{x=1}^N U_x(\eta,y)\,,
\end{equation}
for which still we have the equality
\begin{equation}\label{indeedproprio}
\overline U(\eta,y+1)-\overline U(\eta,y)=j_{\eta}(y,y+1)-C(\eta).
\end{equation}
We will show  there exists  a function $h:\Sigma_N\to \mathbb R$ such that $\overline U(\eta,y)=\tau_y h(\eta)$ and \eqref{indeedproprio} becomes \eqref{imbr}. First of all we prove the invariance property
\begin{equation}\label{invariance}
\overline U(\tau_z \eta, y+z)=\overline U(\eta,y)\,.
\end{equation}
We have
\begin{equation*}
\begin{split}
\overline U(\tau_z \eta, y+z)=&\frac 1N\sum_{x=1}^N U_x (\tau_z\eta,y+z)\\
=&\frac 1N \sum_{x=1}^N\sum_{v\in [x,y+z-1]}  \left[j_{\tau_z\eta}(v,v+1)-C(\tau_z\eta)\right]\\
=&\frac 1N \sum_{x=1}^{N}\sum_{v\in [x,y+z-1]}\left[j_\eta(v-z,v-z+1)-C(\eta)\right]\\
=&\frac 1N \sum_{x=1}^{N}U_{x-z}(\eta,y)=\overline U(\eta,y)\,.
\end{split}
\end{equation*}
Between the second and the third we used the invariance \eqref{eq:covcond} and the translation invariance of $ C $.
If we define $h(\eta):=\overline U(\eta,0)$, by the invariance \eqref{invariance} we have
\begin{equation*}
\overline U(\eta,y)=\overline U(\tau_{-y}\eta,0)=
h(\tau_{-y}\eta)=\tau_yh(\eta)\,.
\end{equation*}
Formula \eqref{eq:h(eta)} can be easily deduced from \eqref{indeedav} and the definition of $h$ or directly checking the validity of \eqref{imbr} using \eqref{eq:C(eta)}.
To finish the proof suppose now that there are two functions $h$ and $h'$ satisfying \eqref{imbr}. Since $C$ is univocally determined we deduce $\tau_{x+1}(h-h')=\tau_x(h-h')$ that means that $h-h'$ is translational invariant. This finishes the proof of the theorem.
\end{proof}
The basic idea of the above theorem is the usual strategy to construct the potential of a gradient discrete vector field,  see  proposition \ref{prop:prop1form}, plus a subtle use of the translational covariance of the model.
It is interesting to observe that a one dimensional system of particles is of gradient type (with a possibly not local $h$)
if and only if $C(\eta)=0$. This corresponds to say that for any fixed configuration $\eta$ then $j_\eta(x,y)$
is a gradient vector field. This was already observed in \cite{BDeSGJ-LL07,Na98}. In next subsections, we will generalize the statement of theorem \ref{belteo}  to higher dimensions.

\subsection{One dimensional interacting particles systems:  examples}
Now we are applying   our one dimensional functional decomposition to some interacting particle systems to  find the function $ h(\cdot) $. So some example are produced. A fact to underline is  the possible presence of non local $ h(\cdot) $ even though the dynamics is local.
\begin{example}[\e{The gradient case}]
If the vector field is of gradient type, i.e. $j_\eta(x,x+1)=\tau_{x+1}h(\eta)-\tau_xh(\eta)$ then we have
$$
\sum_{x=1}^{N-1}\frac{x}{N}j_\eta(x,x+1)=h(\eta)-\frac 1N\sum_{x=1}^N\tau_xh(\eta)\,,
$$
and since the second term on the right hand side is translational invariant we reobtain in formula \eqref{eq:h(eta)} the original function $h$ up to a translational invariant function.
\end{example}

\begin{example}[\e{The 2-SEP}]
The model we are considering is the 2-SEP for which on each lattice site there can be at most 2 particles (\cite{KipLan99} section 2.4). The generalization to k-SEP can be easily done likewise. More precisely we have $\Sigma=\{0,1,2\}$.
The dynamics is defined by the rates
$$
c_{x,x\pm 1}(\eta)=\chi^+(\eta(x))\chi^-(\eta(x\pm 1))
$$
where $\chi^+(\alpha)=1$ if $\alpha >0$ and zero otherwise while $\chi^-(\alpha)=1$ if $\alpha<2$ and zero otherwise.
We denote also $D^\pm_\eta(x,x+1)$ local functions associated with the presence on the bond $(x,x+1)$ of what we call respectively a positive or negative discrepancy. More precisely $D^+_\eta(x,x+1)=1$ if $\eta(x)=2$ and $\eta(x+1)=1$ and zero otherwise. We have instead $D^-_\eta(x,x+1)=1$ if $\eta(x+1)=2$ and $\eta(x)=1$ and zero otherwise. We define also $D_\eta:=D^+_\eta-D^-_\eta$.

The instantaneous current across the edge $(x,x+1)$ associated with the configuration $\eta$ is
$$
j_\eta(x,x+1):=\chi^+(\eta(x))-\chi^+(\eta(x+1))+D_\eta(x,x+1)
$$
For this specific model formulas \eqref{eq:C(eta)} and \eqref{eq:h(eta)} become
$$
C(\eta)=\frac 1N\sum_{x\in  V_N}D_\eta(x,x+1)
$$
$$
h(\eta)=-\chi^+(\eta(0))+\sum_{x=1}^{N-1}\frac{x}{N}D_\eta(x,x+1)\,.
$$
\end{example}

\begin{example}[\e{ASEP}]
The asymmetric simple exclusion process is characterized by the rates $c_{x,x+1}(\eta)=p\eta(x)(1-\eta(x+1))$ and $c_{x,x-1}(\eta)=q\eta(x)(1-\eta(x-1))$. Let  $\mathfrak {C}(\eta)$ be the collection of clusters of particles in the configuration $\eta$ defined at the end of section \ref{sec:CTMC} in definition \ref{def:cluster}.
Given a cluster $c\in \mathfrak C$ we call $\partial^lc,\partial^rc\in\{0,1,\dots ,N-1\}$ the first element of the lattice on the left of the leftmost site of the cluster and the rightmost site of the cluster respectively. The decomposition \eqref{imbr} holds with
\begin{equation}\label{caldaia}
C(\eta)=\frac{(p-q)\left|\mathfrak C(\eta)\right|}{N}\,,
\end{equation}
where $\left|\mathfrak C(\eta)\right|$ denotes the number of clusters and
\begin{equation}\label{bagnetto}
h(\eta)=\frac 1N\sum_{c\in \mathfrak C(\eta)}\left[p\partial^rc-q\partial^lc\right]\,.
\end{equation}
\end{example}

\subsection{Two dimensional case}

We consider now the two dimensional torus   $ \mathbb{T}_N^2:=\mathbb{Z}^2/N\mathbb{Z}^2 $ with $ N^2 $ sites that we represent on the plane and  separated by edges of length one which are oriented consistently with the canonical basis $ e_1,e_2 $.  Namely we consider  the two dimensional torus with the cubic complex $ \mc{C}=\left\{V_N,E^{+}_N,E^{+}_N,F^{+}_N\right\}$ where $E^+_N:=E_N^{1,+}\cup E_N^{2,+}$ and $ F^+_N=F^{3,+}_N $, see notation in subsection \ref{ss:not}. The functional Hodge decomposition in the two dimensional case is as follows. 

\begin{theorem}\label{belteo2}
Let $j_\eta$  be a covariant discrete vector field. Then there exist four functions $h,g,C^{(1)}, C^{(2)}$ on configurations of particles such that for an edge of the type $e=(x,x\pm e_i)$ we have
\begin{equation}\label{hodgefun2}
j_\eta(e)=\big[\tau_{e^+}h(\eta)-\tau_{e^-}h(\eta)\big]+\big[\tau_{\mathfrak f^+(e)}g(\eta)-\tau_{\mathfrak f^-(e)}g(\eta)\big]\pm C^{(i)}(\eta)\,.
\end{equation}
The functions $C^{(i)}$ are translational invariant and uniquely identified. The functions $h$ and $g$ are uniquely identified up to additive arbitrary translational invariant functions.
\end{theorem}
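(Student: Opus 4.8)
The plan is to mirror the constructive strategy of Theorem \ref{belteo}, now applied separately to the gradient and curl components furnished by the pointwise Hodge decomposition of Theorem \ref{th:HT}. For each fixed configuration $\eta$ I would first split $j_\eta = j^d_\eta + j^\delta_\eta + j^H_\eta$ into its gradient, curl and harmonic parts. The harmonic coefficients are read off from \eqref{cco}, so I would set
\begin{equation*}
C^{(i)}(\eta) := \frac{1}{N^2}\sum_{x\in\mathbb{T}^2_N} j_\eta(x,x+e_i), \qquad i=1,2,
\end{equation*}
and verify, exactly as in the one dimensional proof, that the covariance \eqref{eq:covcond} forces each $C^{(i)}$ to be translational invariant. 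This disposes of the harmonic contribution $\pm C^{(i)}(\eta)$ in \eqref{hodgefun2}.

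For the gradient part I would repeat the averaging argument of Theorem \ref{belteo} in two dimensions. Since $j^d_\eta$ is of gradient type for each fixed $\eta$ (Proposition \ref{prop:prop1form}), I define for every base point $x$ a potential $U_x(\eta,y) := \sum_{(w,z)\in\gamma_{x\to y}} j^d_\eta(w,z)$ by a line integral along any lattice path $\gamma_{x\to y}$; path independence follows from Proposition \ref{prop:prop1form}. Averaging over base points, $\overline U(\eta,y):=\frac{1}{N^2}\sum_x U_x(\eta,y)$, and using the covariance of $j^d_\eta$ together with the translation invariance of the $C^{(i)}$, I would establish the invariance identity $\overline U(\tau_z\eta,y+z)=\overline U(\eta,y)$ as in \eqref{invariance}. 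Setting $h(\eta):=\overline U(\eta,0)$ then yields $\overline U(\eta,y)=\tau_y h(\eta)$ and hence $j^d_\eta(x,y)=\tau_{e^+}h(\eta)-\tau_{e^-}h(\eta)$, the first bracket of \eqref{hodgefun2}.

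The genuinely new work, and the main obstacle, is the curl part, which must be handled on the dual complex. By Theorem \ref{th:HT} there is a $2$-form $\psi$ with $j^\delta_\eta=\delta\psi$, and in two dimensions $\delta$ is the orthogonal gradient \eqref{eq:dortgra}; under the duality $\ast$ the faces become dual vertices, so $\psi$ is effectively a vertex function on the dual torus $\mathbb{T}^{2,\ast}_N$ and $j^\delta_\eta$ its (rotated) gradient along dual edges. I would therefore construct $\psi$ by the same potential-plus-averaging scheme applied on the dual lattice: build $\psi_x(\eta,\cdot)$ from discrete line integrals of $j^\delta_\eta$ over dual paths started at $x$ (this is where one must check that the orthogonal-gradient structure makes the construction path independent and fixes $\psi$ up to an additive constant), average over base points to obtain $\Psi(\eta,f):=\frac{1}{N^2}\sum_x\psi_x(\eta,f)$, and prove the covariance \eqref{eq.covpsi}, namely $\Psi(\eta,f)=\Psi(\tau_z\eta,\tau_z f)$, by the computation analogous to \eqref{invariance}. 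Defining $g(\eta):=\Psi(\eta,f^0)$ for the face $f^0$ associated with the origin gives $\Psi(\eta,f)=\tau_{\mathfrak f}g(\eta)$, and feeding this back through \eqref{eq:dortgra} produces the middle bracket $\tau_{\mathfrak f^+(e)}g(\eta)-\tau_{\mathfrak f^-(e)}g(\eta)$ of \eqref{hodgefun2}.

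Finally, for uniqueness I would sum \eqref{hodgefun2} over $x$ to pin down each $C^{(i)}$ uniquely, and then argue that if two pairs $(h,g)$ and $(h',g')$ yield the same field, the orthogonality of the Hodge summands forces $\tau_{e^+}(h-h')-\tau_{e^-}(h-h')=0$ and the analogous curl identity for $g-g'$, whence $h-h'$ and $g-g'$ are translational invariant, precisely as in the closing paragraph of Theorem \ref{belteo}. The one point demanding care throughout is keeping the dual-lattice bookkeeping for $\psi$ consistent with the sign conventions of \eqref{eq:dortgra}, so that the two brackets in \eqref{hodgefun2} carry the correct orientations.
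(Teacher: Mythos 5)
Your proposal is correct and follows essentially the same route as the paper's proof: harmonic coefficients by lattice averaging, the gradient part via the base-point-averaged potential $\overline U$, the curl part via a normalized and averaged $2$-form (your dual-lattice line integrals are exactly the paper's ``paths of faces'' $S_{f^x\to f}$ read through the duality), and uniqueness from the orthogonality of the splitting. The only step you invoke without proof is that the translational covariance of $j_\eta$ passes to the components $j^d_\eta$ and $j^\delta_\eta$; the paper justifies this in two lines from the uniqueness of the Hodge decomposition and the fact that translations preserve the three subspaces, so this is a routine addition rather than a gap.
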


\begin{proof}
Since for any fixed $\eta$ we have that $j_\eta(\cdot)$ is an element of $\Omega^1$ we can perform for any fixed $\eta$ the Hodge decomposition writing $j_\eta=j_\eta^d+j_\eta^\delta+j_\eta^H$. The harmonic component $j_\eta^H$ is related to the functions
$C^{(i)}$ that are equivalent to the constants $c_i$ computed by \eqref{cco}. Since we have now discrete vector fields depending on the configuration $\eta$ we will obtain not constants but functions of the configurations. In particular we have a formula completely analogous of \eqref{cco} that is
\begin{equation}\label{defci}
C^{(i)}(\eta):=\frac{1}{N^2}\sum_{x\in V_N}j_\eta(x,x+e_{i}).
\end{equation}
The functions in \eqref{defci} are clearly translational invariant (this is done similarly to what we did in on dimension) and uniquely determined  as can be seen taking the scalar product with $ \vphi^{(i)} $ on both sides of \eqref{hodgefun2}. We have
$$
j^H_\eta=C^{(1)}(\eta)\vphi^{(1)}+C^{(2)}(\eta)\vphi^{(2)}\,.
$$
As a second step we show that the translational covariance of $j_\eta$ is inherited also by the other components $j^\delta_\eta, j^d_\eta$.  Fix arbitrarily $z\in \mathbb T_N^2$ and consider two new discrete vector fields depending on configurations
\begin{equation*}\label{invdgH}
\left\{
\begin{array}{l}
\tilde j_\eta^d(x,y) := j_\eta^d(x+z,y+z) \\
\tilde j_\eta^\delta(x,y) := j_\eta^\delta(x+z,y+z). \\
\end{array}
\right.
\end{equation*}
For any fixed $\eta$ we have $\tilde j^d_{\eta}\in d\Omega^0$ and $\tilde j_{\eta}^\delta\in \delta\Omega^2$ since translations preserve these properties. We can write
\begin{eqnarray*}
& &\tilde j_\eta^d(x,y) +\tilde j_\eta^\delta(x,y)+j^H(x+z,y+z)=j_\eta(x+z,y+z)=\\
& &j_{\tau_{-z}\eta}(x,y)=j_{\tau_{-z}\eta}^d(x,y)+j_{\tau_{-z}\eta}^\delta(x,y)+j_\eta^H(x,y)\,.
\end{eqnarray*}
We used the translational covariance of $j_\eta$ and the translational invariance of $j^H_\eta$.
Since for any fixed $ \eta $ the Hodge decomposition is unique we obtain $j_{\tau_{z}\eta}^d(x+z,y+z)=j_{\eta}^d(x,y)$ and
$j_{\tau_{z}\eta}^\delta(x+z,y+z)=j_{\eta}^\delta(x,y)$ that is the translational covariance of $ j_\eta^d $ and $ j_\eta^\delta $.

Since for any fixed $\eta$ we have that $j_\eta^d$ is translational covariant and of gradient type, we show that there exists a function $ h $ such that  
\begin{equation}\label{eq:j^d=grad}
j^d_{\eta}(x,x+e_i)=\tau_{x+e_i}h(\eta)-\tau_x h(\eta) \qquad i=1,2\,.
\end{equation}
The proof is very similar to the one dimensional case. We define a function $U_x(\eta, y)$ where $x,y\in \mathbb T_N^2$. This is defined by
\begin{equation}\label{grd2}
U_x(\eta,y):=\sum_{i=0}^{k-1}j_\eta^d(z^{(i)},z^{(i+1)})
\end{equation}
where $z^{(0)}, \dots z^{(k)}$ is any path  of vertexes on the grid going from $z^{(0)}=x$ to $z^{(k)}=y$. Since $j^d_\eta$ is of gradient type  the value is independent of path, see proposition \ref{prop:prop1form}.
As in the one dimensional case we consider the averaged function
\begin{equation}\label{indeedav2}
\overline U(\eta,y):=\frac{1}{N^2} \sum_{x\in  V_N} U_x(\eta,y)\,,
\end{equation}
for which we have the equality
\begin{equation}\label{indeedproprio2}
\overline U(\eta,y)-\overline U(\eta,x)=j^d_{\eta}(x,y)\,,
\end{equation}
for any $(x,y)\in E_N$.
To finish the proof we need to show that there exists a function $h:\Sigma_N\to \mathbb R$ such that $\overline U(\eta,y)=\tau_y h(\eta)$ and \eqref{indeedproprio2} becomes \eqref{eq:j^d=grad}. First of all we prove the invariance property
\begin{equation}\label{invariance2}
\overline U(\tau_z \eta, y+z)=\overline U(\eta,y)\,.
\end{equation}
This follows by the symmetry
\begin{equation}\label{kgl}
U_x(\eta,y)=U_{x+z}(\tau_z\eta, y+z)
\end{equation}
that is obtained directly by the definition and the translational covariance of $j^d$.
Likewise in the one dimensional case we have then
\begin{eqnarray*}
& &\overline U(\tau_z \eta, y+z)=\frac{1}{N^2}\sum_{x\in  V_N}U_x (\tau_z\eta,y+z)\\
& &=\frac{1}{N^2} \sum_{x'\in V_N}U_{x'+z} (\tau_z\eta,y+z)=\frac{1}{N^2} \sum_{x'\in  V_N}U_{x'} (\eta,y)=\overline U(\eta,y)
\end{eqnarray*}
We used the change of variables $x'=x-z$ and the invariance \eqref{kgl}.
If we define $h(\eta):=\overline U(\eta,0)$, by the invariance \eqref{invariance2} we have
\begin{equation}\label{lem4}
\overline U(\eta,y)=\overline U(\tau_{-y}\eta,0)=h(\tau_{-y}\eta)=\tau_yh(\eta)\,. \end{equation}
The uniqueness of $ h $ up to an additive translational invariant function is obtained like in the one dimensional case.

Consider now $j_\eta^\delta\in \delta\Omega^2$ for any fixed $\eta$ and translational covariant.  This means that for any fixed $\eta$ there exists a 2-form $\psi(\eta,f)$ such that $ j^\delta_\eta $ is of curl type, i.e.
\begin{equation}\label{craxi}
j_\eta^\delta(e)=\delta\psi(\eta, e)=\sum_{f\,:\, e\in \p f}\psi(\eta,f)=\psi(\eta, f^+(e))-\psi(\eta,f^-(e))\,, \qquad e\in E_N,
\end{equation}
where we recall that $f^+(e)$ is the unique anticlockwise oriented face to which $e$ belongs and $f^-(e)$ is the unique anticlockwise oriented face to which $-e$ belongs, see figure \ref{fig: due facce}. A two form satisfying \eqref{craxi} is identified up to an arbitrary constant.  Every anticlockwise oriented face $ f $ is associated with the vertex $x\in V_N$ such that the centre of $ f$ is $\mathtt{b}(f)=x+(1/2,1/2) $. For any $ x\in V_N $, we introduce $\psi_{x}(\eta,\cdot)$ that is a 2-form satisfying \eqref{craxi} and
such that $\psi_{x}(\eta,f^x)=0$, where $f^x\in F_N^{+}$ is the anticlockwise oriented face associated with $x$. This two form can be defined as follows, for a face $ f\in F^+_N $
\begin{equation}\label{eq:psi_x}
\left\{\begin{array}{ll}
\psi_{x}(\eta,f):=\psi(\eta,f)-\psi(\eta,f^x), \\
\psi_{x}(\eta,-f):=-\psi(\eta,f)+\psi(\eta,f^x) .
\end{array}\right.
\end{equation}
We define
\begin{equation}\label{betto}
\overline\psi(\eta,f):=\frac{1}{N^2}\sum_{x\in V_N}\psi_{x}(\eta,f)
\end{equation}
that satisfies
\begin{equation}\label{craxi2}
j_\eta^\delta(e)=\overline \psi(\eta, f^+(e))-\overline\psi(\eta,f^-(e))\,, \qquad e\in E_N\,.
\end{equation}
For a face $ f\in F_N^{+} $ we call $ S_{f'\to f} $ any  path of faces (2-chain) from $ f' $ to $ f $ defined as $ S_{f'\to f}=\os{m}{\us{i=1}{\sum}}f_i $, where $ \{f_1=f',f_2\dots,f_{m-1},f_m=f\} $  is  any collection of anticlockwise faces  connecting $ f' $ to $ f $ such that for each $ i\in\{1,\dots,m-1\} $ the faces $ f_{i} $ and $ -f_{i+1} $ share an oriented edge $ (x_i,y_i) $. The two-form in \eqref{eq:psi_x} can be written as 
\begin{equation*}
\psi_{x}(\eta,f)=\psi(f)+\underset{{f'\in\, \mc{S}_{f^x\to f}}}{\sum}\big(   \psi(f')+\psi(-f')\big)-\psi(f^x)=
\os{n-1}{\us{i=1}{\sum}}\big(\psi(f_{i+1})-\psi(f_i)\big),
\end{equation*}
since  
$j_\eta^\delta(x_i,y_i)=\delta\psi(\eta,x_i,y_i)=\psi(\eta,f_{i+i})-\psi(\eta,f_i) $ we get 
\begin{equation}\label{eq:psi_x int}
\psi_{x}(\eta,f)=\os{n-1}{\us{i=1}{\sum}}j^\delta(x_i,y_i),
\end{equation}  
by the translational covariance of $j^\delta$ and \eqref{eq:psi_x int} we have  that $ \psi(\eta,\cdot) $ is translational covariant too
\begin{equation}\label{eq:psi_xinv}
\psi_{x}(\eta,f)=\psi_{x+z}(\tau_z\eta, f+z),
\end{equation}
therefore, exploiting the invariance \eqref{eq:psi_xinv},
with essentially the same computation as in the gradient case we can show that
\begin{equation}\label{cius}
\overline\psi(\tau_z \eta, \tau_z f)=\overline\psi(\eta,f)\,\qquad\forall z.
\end{equation}
We define the function $g(\eta):=\overline\psi(\eta,f^0)$ where $f^0$ is the anticlockwise oriented face associated with the vertex
$(0,0)\in V_N$. From \eqref{cius}   we obtain
\begin{equation}\label{gugarios}
\overline\psi(\eta,f)=\overline\psi(\tau_{-x}\eta,f^0)=g(\tau_{-x}\eta)=\tau_{\mf f}g(\eta),
\end{equation}
where the positive face $ f $ is associated with $x\in V_N$.
Using \eqref{gugarios} the relation \eqref{craxi2} becomes
$$
\left\{
\begin{array}{l}
j_\eta^\delta(x,x+e_1)=\tau_{x}g(\eta)-\tau_{x-e_2} g(\eta)\\
j_\eta^\delta(x,x+e_2)=\tau_{x-e_1}g(\eta)-\tau_x g(\eta)
\end{array}
\right., 
$$ that  is resumed by 
\begin{equation}\label{eq:j^delta=deltag}
j_\eta^\delta=\tau_{\mf f^+(e)}g(\eta)-\tau_{\mf f^-(e)}g(\eta).
\end{equation}

To conclude the proof suppose now that there are two couple of functions $(h,g)$ and $(h',g')$ satisfying \eqref{hodgefun2}. Then we deduce $\tau_{x+e_i}(h'-h)=\tau_x(h'-h) $ and $\tau_{x-e_i}(g'-g)=\tau_x(g'-g) $ from the uniqueness of our splitting, this means that both $(h'-h)(\cdot)$ and $(g'-g)(\cdot)$ are translational invariant. This completes the proof of the proposition.
\end{proof}

\begin{remark}
We underline that the proof is constructive, indeed the configurations functions $ h $ and $ g $ in \eqref{hodgefun2} can be explicitly computed using respectively \eqref{indeedav2} in the preferential point $ y=0 $ and \eqref{betto} in  the preferential face $ f^0 $.
\end{remark}

\begin{remark}\label{re:gradpart}

In a general dimension   the gradient part of the current $ j^g $ will have always the same form $ j^g_{\eta}(x,x+e_i)=\tau_{x+e_i} h(\eta)-\tau_x h(\eta)  $ for each $ i\in\{1,\dots,d\} $ and the gradient function $ h $ will have always the same construction because $ U_x(\eta,y) $ in \eqref{grd2} is just a linear integral along a path from $ x $ to $ y $. Hence in dimension $ d $ the path $ z^{(0)}, \dots, z^{(k)} $ will be free to move in $ d $ dimension and the averaged function equivalent  to \eqref{indeedav2} is $ \overline U(\eta,y) :=\frac{1}{N^d} \us{x\in V_N}{\sum} U_x(\eta,y)$ with $ V_N $ the set of vertexes of the discrete torus $ \bb T^d_N $, so the proof leading to $ h(\eta) $ doesn't change.
\end{remark}

\subsection{Two dimensional interacting particles systems:  examples}\label{ss:vortmodels}
In this subsection we are applying   our two dimensional functional decomposition to perturb some gradient dynamics in such a way to obtain a non-gradient dynamics with a non trivial and local decomposition \eqref{hodgefun2}.

\begin{example}[\e{A non gradient lattice gas with local decomposition}]
We construct a model of particles satisfying an exclusion rule, with jumps only trough nearest neighbours sites and having a non trivial decomposition of the instantaneous current \eqref{hodgefun2} with $C^{(i)}=0$ and $h$ and $g$ local functions. The functions $h$ and $g$ have to be chosen suitably in such a way that the instantaneous current is always zero inside clusters (definition \ref{def:cluster}) of particles and empty clusters and has to be always such that $j_\eta(x,y)\geq 0$ when $\eta(x)=1$ and $\eta(y)=0$. A possible choice is the following perturbation of the SEP. We fix $h(\eta)=-\eta(0)$ and $g(\eta)$ with $D(g)=\{0, e^{(1)},e^{(2)}, e^{(1)}+e^{(2)} \}$ (we denote by $0$ the vertex $(0,0)$) defined as follows. We have $g(\eta)=\alpha$ if $\eta(0)=\eta(e^{(1)}+e^{(2)})=1$ and
$\eta(e^{(1)})=\eta(e^{(2)})=0$. We have also  $g(\eta)=\beta$ if $\eta(0)=\eta(e^{(1)}+e^{(2)})=0$ and
$\eta(e^{(1)})=\eta(e^{(2)})=1$. The real numbers $\alpha,\beta$ are such that $|\alpha|+|\beta|<1$. For all the remaining configurations we have $g(\eta)=0$.  Since $\Sigma=\{0,1\}$ the rates of jump are uniquely determined by $c_{x,y}(\eta)=\left[j_\eta(x,y)\right]_+$.
\end{example}

\begin{example}[\e{A perturbed zero range dynamics}] 
We consider a perturbation of a zero range dynamics, see \eqref{eq:Zgen},  having a local non trivial decomposition. We say that the face
$\{0, e^{(1)}, e^{(2)}, e^{(1)}+e^{(2)}\}$ is full in the configuration
$\eta\in \mathbb N^{V_N}$ if $\min\{\eta(0), \eta(e^{(1)}), \eta(e^{(2)}), \eta(e^{(1)}+e^{(2)})\}>0$.
Consider two non negative functions $w^\pm$ that are identically zero when the face $\{0, e^{(1)}, e^{(2)}, e^{(1)}+e^{(2)}\}$
is not full. Given a positive function $\tilde h:\mathbb N\to \mathbb R^+$, we define the rates of jump as
\begin{equation}\label{chiu}
c_{e^-,e^+}(\eta)=\tilde h(\eta(e^-))+\tau_{\mathfrak f^+(e)}w^++\tau_{\mathfrak f^-(e)}w^-\,.
\end{equation}
This corresponds to a perturbation of a zero range dynamics such that one particle jumps from one site with $k$ particles with a rate $\tilde h(k)$. The perturbation increases the rates of jump if the jump is on the edge of a full face. The gain depends on the orientation and the effect of different faces is additive. For such a model the instantaneous current
has a local decomposition \eqref{hodgefun2} with $h(\eta)=-\tilde h(\eta(0))$ and $g(\eta)=w^+(\eta)-w^-(\eta)$.
\end{example}

\subsection{Three dimensional case}

Here we want to  consider  the three dimensional torus   $ \mathbb{T}_N^3:=\mathbb{Z}^3/N\mathbb{Z}^3 $ with $ N^3 $ sites that we represent  on a cubic lattice  and  separated by edges of length one which are oriented consistently with the canonical basis $e_1,e_2,e_3$.  The cellular complex is $ \mc{C}=\left\{V_N,E^{+}_N,,F^{+}_N,C_N^+\right\}$ as in notation in subsection \ref{ss:not}. We are going to discuss  the general strategy of the statement we would like to prove.

\begin{theorem}\label{belteo3}
Let $j_\eta$  be a covariant discrete vector field. Then there exist six functions $h,g_1,g_2,g_3,C^{(1)}, C^{(2)}, C^{(3)}$ on configurations of particles such that for an edge of the type $e=(x,x\pm e_i)$ we have
\begin{equation}\label{hodgefun3}
j_\eta(e)=\big[\tau_{e^+}h(\eta)-\tau_{e^-}h(\eta)\big]+\us{j\neq i}{\sum_{ j:\,e\in\pm\partial f^{\pm}_j(e),}} \left[\tau_{\mf f_j^+(e)}g_j(\eta)-\tau_{\mf f_j^-(e)}g_j(\eta) \right]\pm C^{(i)}(\eta)\,.
\end{equation}
The functions $C^{(i)}$ are translational invariant and uniquely identified. The functions $h$ and $g_i$ are uniquely identified up to additive arbitrary translational invariant functions.
\end{theorem}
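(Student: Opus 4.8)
The plan is to transcribe, line by line, the strategy of the two–dimensional Theorem \ref{belteo2}, the only genuinely new difficulty being the construction of a translationally covariant primitive $2$-form for the curl part. First, for each fixed $\eta$ the discrete vector field $j_\eta(\cdot)$ lies in $\Omega^1$, so Theorem \ref{th:HT} gives the orthogonal splitting $j_\eta=j_\eta^d+j_\eta^\delta+j_\eta^H$ with $j_\eta^d\in d\Omega^0$, $j_\eta^\delta\in\delta\Omega^2$ and $j_\eta^H\in\Omega^1_H$. By Proposition \ref{prop:Omega^1_H} the harmonic space is three–dimensional with basis $\varphi^{(1)},\varphi^{(2)},\varphi^{(3)}$, so the harmonic coefficients are fixed by the analogue of \eqref{cco}, namely $C^{(i)}(\eta):=\frac{1}{N^3}\sum_{x\in V_N}j_\eta(x,x+e_i)$. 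These three functions are translational invariant by the covariance \eqref{eq:covcond} together with the change of variables $x\mapsto x-z$ in the sum, exactly as in the lower–dimensional cases.

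Next I would establish that covariance is inherited by the gradient and curl components. Fixing $z$ and setting $\tilde j_\eta^d(x,y):=j_\eta^d(x+z,y+z)$ and $\tilde j_\eta^\delta(x,y):=j_\eta^\delta(x+z,y+z)$, one notes that translations map $d\Omega^0$ into $d\Omega^0$ and $\delta\Omega^2$ into $\delta\Omega^2$; comparing the Hodge decomposition of $j_{\tau_{-z}\eta}$ with the translated one and using uniqueness of the splitting yields $j^d_{\tau_z\eta}(x+z,y+z)=j^d_\eta(x,y)$ and the same for $j^\delta$. The gradient part then follows verbatim from Remark \ref{re:gradpart}: define $U_x(\eta,y)$ by summing $j^d_\eta$ along any lattice path from $x$ to $y$ (path–independence by Proposition \ref{prop:prop1form}), average to $\overline U(\eta,y):=N^{-3}\sum_{x\in V_N}U_x(\eta,y)$, prove $\overline U(\tau_z\eta,y+z)=\overline U(\eta,y)$, and set $h(\eta):=\overline U(\eta,0)$, so that $\overline U(\eta,y)=\tau_y h(\eta)$ and $j^d_\eta(x,x+e_i)=\tau_{x+e_i}h(\eta)-\tau_x h(\eta)$.

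The substantive step is the curl part. For fixed $\eta$ there is a $2$-form $\psi(\eta,\cdot)$ with $j^\delta_\eta=\delta\psi$, where in three dimensions $\delta\psi(x,x+e_k)=\epsilon^{klm}\nabla_l\psi(\eta,f_m)$ couples the three face families; the primitive is determined only up to an element of $\Ker\delta$. Mimicking the planar construction, for each reference vertex $x$ I would write down an explicit solution $\psi_x(\eta,\cdot)$ of $\delta\psi_x=j^\delta_\eta$, normalized by $\psi_x(\eta,f^x_k)=0$ on the faces at $x$, as discrete integrals of $j^\delta_\eta$ along coordinate paths (the three–dimensional analogue of \eqref{eq:psi_x int}); covariance of $j^\delta$ then gives $\psi_x(\eta,f)=\psi_{x+z}(\tau_z\eta,f+z)$. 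Averaging $\overline\psi(\eta,f):=N^{-3}\sum_{x\in V_N}\psi_x(\eta,f)$ restores full covariance $\overline\psi(\tau_z\eta,\tau_z f)=\overline\psi(\eta,f)$ by the computation that produced \eqref{cius}. Setting $g_j(\eta):=\overline\psi(\eta,f^0_j)$, with $f^0_j$ the positive $j$-face at the origin, covariance gives $\overline\psi(\eta,f_j)=\tau_{\mathfrak f_j}g_j(\eta)$, and inserting this into $\delta\overline\psi$ reproduces the curl sum $\sum_{j\neq i}\big(\tau_{\mathfrak f^+_j(e)}g_j(\eta)-\tau_{\mathfrak f^-_j(e)}g_j(\eta)\big)$ of \eqref{hodgefun3}.

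The main obstacle is precisely this primitive: unlike the gradient case, where one integrates a $1$-form along a single path, solving $\delta\psi=j^\delta$ in three dimensions is a genuine vector–potential problem, so one must fix the integration paths and the gauge so that the three components of $\psi_x$ are mutually consistent and the explicit formula verifiably satisfies $\delta\psi_x=j^\delta_\eta$ on all three face families. Once a covariant $\overline\psi$ is obtained, the rest is routine, and uniqueness closes the argument as before: if $(h,g_1,g_2,g_3)$ and $(h',g_1',g_2',g_3')$ both satisfy \eqref{hodgefun3}, uniqueness of the Hodge splitting forces $\tau_{x+e_i}(h'-h)=\tau_x(h'-h)$ and $\tau_{\mathfrak f^+_j(e)}(g_j'-g_j)=\tau_{\mathfrak f^-_j(e)}(g_j'-g_j)$, whence $h'-h$ and each $g_j'-g_j$ are translational invariant.
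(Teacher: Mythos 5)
Your proposal follows the same route the paper takes: Hodge splitting for each fixed $\eta$, harmonic coefficients by spatial averaging, covariance of the components via uniqueness of the splitting, the gradient part verbatim from remark \ref{re:gradpart}, and the curl part via a vertex-anchored primitive $\psi_x$ averaged over $x$. Note, however, that the paper itself does not prove Theorem \ref{belteo3}; it explicitly presents only "the general strategy of the statement we would like to prove" and singles out the construction of the covariant $2$-form $\overline\psi$ as "the difficult part of the proof," left unresolved.

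Your proposal reproduces that same gap rather than closing it. You correctly identify the obstacle, but writing "I would write down an explicit solution $\psi_x(\eta,\cdot)$ \dots as discrete integrals of $j^\delta_\eta$ along coordinate paths" is not yet a construction. Concretely, two things are missing. First, in two dimensions the normalization $\psi_x(\eta,f^x)=0$ is achieved by subtracting a constant, which trivially lies in $\Ker\delta$; in three dimensions the analogous gauge fixing (e.g.\ imposing $\psi_x(\eta,f^y_3)=0$ for all $y$) must exhaust the full $N^3+2$-dimensional freedom $\delta\Omega^3\oplus\Omega^2_H$, and you need an argument that such a gauge is always attainable. Second, since $\delta\psi(x,x+e_k)=\epsilon^{klm}\nabla_l\psi(\eta,f_m)$ couples the three face families, the candidate $\psi_x$ obtained by integrating $j^\delta_\eta$ along coordinate paths must be checked to satisfy $\delta\psi_x=j^\delta_\eta$ simultaneously on edges in all three directions, and to be well defined on the torus (independence of the integration paths, which does not follow from $\div j^\delta=0$ alone as it does for the potential of a gradient field). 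Until an explicit formula for $\psi_x$ is exhibited and these two verifications are carried out, the theorem is only reduced to the statement that such a covariant primitive exists, which is precisely where the paper also stops.
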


The proof related to gradient part $ \big[\tau_{e^+}h(\eta)-\tau_{e^-}h(\eta)\big] $ is exactly like for the two dimensional case for the reasons explained in remark \ref{re:gradpart}. Again because of Hodge splitting $ j_\eta=j_\eta^d+j^\delta_\eta+j^H_\eta $ the harmonic part is given by $ j^H_\eta=C^{(1)}(\eta)\vphi^{(1)}+C^{(2)}(\eta)\vphi^{(2)}+C^{(3)}(\eta)\vphi^{(3)}$ where 
\begin{equation}\label{eq:C^3}
C^{(i)}(\eta)=\frac 1{N^3} \us{x\in V_N}{\sum}j_\eta(x,x+e_i).
\end{equation}

From $ \Omega^1=d^0\Omega^0\oplus\delta^2\Omega^2\oplus\Omega^1_H $  doing a dimensional counting we have that the dimension of $ \delta^2\Omega^2 $ is $ 2N^3-2 $.
So  we can deduce that the dimension of $ \Ker \delta^2 $ is $N^3+2  $ because of $ \Omega^2=\Ima\delta^2\oplus\Ker\delta^2 $. Since the kernel of $ \delta^2 $ is $ \delta^3\Omega^3\oplus\Omega^2_H $ and  the dimension of $ \delta^3\Omega^3 $ is $ N^3-1 $,   that one of $ \Omega^2_H $ is $ 3 $. One can show that $ \Omega^2_H $ is generated by the 2-forms $ \psi^{(i)}(f_j)=\delta_{ij} $  with $ i,j\in 1,2,3 $ and $ f_j\in F^{j,+}_N $. Consider now $j_\eta^\delta\in \delta\Omega^2$ for any fixed $\eta$ and translational covariant.  This means that for any fixed $\eta$ there exists a 2-form $\psi(\eta,f)$ such that $ j^\delta_\eta $ is of curl type, i.e.
\begin{equation}\label{eq:j=delta}
j_\eta^\delta(e)=\delta\psi(\eta,e)=\sum_{ i:\,e\in\pm\partial f^{\pm}_i(e)} \left(\psi(\eta,f^+_i(e))-\psi(\eta,f^-_i(e)\right),\quad e\in E_N 
\end{equation}
To prove the theorem \ref{belteo3} we should show there exist a 2-form $ \overline\psi(\eta,f) $ such that $ j^\delta_\eta(e)=\delta\overline{\psi}(\eta,e) $ and translational covariant in the sense \eqref{eq.covpsi}. With this we could define the configuration functions $ g_i(\eta):=\overline{\psi}(\eta,f^0_i) $ where $ f^0_i $ are the three positive faces associated with the vertex $ (0,0)$, see remark \ref{r:orob}. With the same argument of \eqref{gugarios} we would have that $ \overline{\psi}(\eta,f_i)=\tau_{\mf{f}_i}g_i(\eta) $ where $ f_i $ is a positive face in $ F^{i,+}_N $ associated to a vertex $ x\in V_N $ and consequently the statement \ref{belteo3}. So the difficult part of the proof is to construct a covariant 2-form $ \overline{\psi}(\eta,f) $ from $ \psi(\eta,f) $. Note that for a 2-form defined as $ \psi'(\eta):=\psi(\eta)+\delta\rho(\eta)+\psi^2_H(\eta) $ where $ \delta \rho(\eta)\in\delta\Omega^3 $ and $ \psi^2_H(\eta)\in\Omega^2_H $ we have that $ \delta\psi'(\eta)=\delta\psi(\eta) $. For each fixed configuration, the degrees of freedom in defining $ \psi'(\eta) $  are $ N^3+2 $, exploiting this structure   we should be able to find a 2-form $ \overline \psi(\eta) $ as desired. A constructive proof should give an explicit covariant solution $ \psi_x(\eta) $ canonically associated to a vertex $ x\in V_N $ and then $ \overline\psi(\eta) $   would be defined like in \eqref{betto}.
\bigskip

Our functional Hodge decomposition tell us that in dimensions higher than one an interacting particles systems can be seen as an overlapping of a gradient type systems \eqref{eq:grpar}, a curl type systems \eqref{eq:curltype} and a harmonic systems\eqref{eq:hartype}. The proof in three dimensions could provide a good idea how to generalize the proof to any dimension.

\chapter[Stationary Noneq. States]{Gibbsian Stationary Nonequilibrium States}\label{ch:SNS}

In this chapter we study the structure of stationary non equilibrium states for interacting particle systems from a microscopic viewpoint.
In particular we discuss two different discrete geometric constructions. We apply
both of them to determine non reversible transition rates corresponding to a fixed invariant measure.
In section \eqref{sec:Dfreefl} we discuss the first one, this uses the equivalence of this problem with the construction of divergence
free flows on the transition graph (defined in subsection \ref{ssec:Tgrap+Kol}). Since divergence free flows are characterized by cyclic decompositions we
can generate families of models from elementary cycles on the configurations space.  We consider only local cycles obtained by local  perturbations of configurations of particles but the case of non local cycles is also interesting. 
The second construction, discussed in section \ref{sec:sta&ort}, is based on the   functional discrete Hodge decomposition for
translational covariant discrete vector fields develop in section \ref{sec: HD}.  
With this second construction the stationary condition can be interpreted
as an orthogonality condition with respect to an harmonic discrete vector field and we use this decomposition to construct models having a fixed invariant measure.

We study the problem in dimension one and two in the case of interacting particles systems on stochastic lattice gases but the  two approaches can be extended to more general and/or abstract models (e.g. higher dimension and/or  more general states  space and transitions). For any model the stationary condition can be interpreted with the two geometric construction. We concentrate on finite range translational invariant measures but different situations with long range interactions and/or non translational invariant measure are also interesting. We focus only on local dynamics but analogous constructions can be done in the   non-local case too.
\medskip

First,  to present the material, we need to introduce some notation and basic tools in graph theory.
We refer to section \ref{sec:CTMC} for notation in stochastic lattice gases and to sections \ref{sec:disop} and \ref{sec: HD} for notation and definitions about discrete vector fields, edges and faces of a lattice and its dual lattice.

\section{Graphs}\label{s:graphs}
Let $(V, \mathcal E)$ be a finite un-oriented graph without loops. This means that $|V|<+\infty$ and a generic element
of $\mathcal E$, called un-oriented edge or simply an edge, is $\{x,y\}$ a subset of cardinality $2$ of $V$. To every un-oriented graph $(V, \mathcal E)$ we associate canonically an oriented graph $(V,E)$ such that the set of oriented edges $E$ contains all the ordered pairs $(x,y)$ such that $\{x,y\}\in\mathcal E$. Note that if $(x,y)\in E$ then also $(y,x)\in E$. If $e=(x,y)\in E$ we denote
$e^-:=x$ and $e^+:=y$ and we call $\mathfrak e:=\{x,y\}$ the corresponding un-oriented edge. A sequence $(z_0,z_1,\dots ,z_k)$ of elements of $V$ such that $(z_i,z_{i+1})\in E$, $i=0,\dots k-1$, is called an oriented  path, or simply a path. A path with distinct vertices except $z_0=z_k$ is called a \e{cycle}. If $C=(z_0,z_1,\dots ,z_k)$ is a cycle and there exists an $i$ such that $(x,y)=(z_i,z_{i+1})$ we write $(x,y)\in C$. Likewise if there exists an $i$ such that $x=z_i$ we write $x\in C$. Two cycles that contain the same collection of oriented edges and differ by the starting point will be identified, namely we introduced a relation and equivalence classes. We call $\mathcal C$ the collection of all the equivalence classes of finite length cycles and call $C$ a generic element.

A discrete vector field $\vphi$ on $(V,\mathcal E)$ is a map $\vphi:E\to \mathbb R$ such that $\vphi(x,y)=-\vphi(y,x)$.
A discrete vector field is of gradient type if there exists a function $g:V\to \mathbb R$ such that
$\vphi(x,y)=d g(x,y):= g(y)-g(x)$.
The divergence of a discrete vector field $\vphi$ at $x\in V$ is defined by
$$
\div\vphi(x):=\sum_{y\,:\, \{x,y\}\in \mathcal E}\vphi(x,y)\,.
$$
We call $\Omega^1$ the $|\mathcal E|$ dimensional vector space of discrete vector fields. We endow $\Omega^1$ with the scalar product
\begin{equation}\label{sc}
\langle \vphi,\psi\rangle:=\frac 12\sum_{(x,y)\in E}\vphi(x,y)\psi(x,y)\,, \qquad \vphi,\psi\in \Omega^1\,.
\end{equation}

A flow on an oriented graph is a map $Q:E\to \mathbb R^+$ that associates the amount of mass flowed $Q(x,y)$ to any edge $(x,y)\in E$. The divergence of a flow $Q$ at site $x\in V$ is defined as
\begin{equation}\label{eq:divfree}
\div Q(x):=\sum_{y:(x,y)\in E}Q(x,y)-\sum_{y:(y,x)\in E}Q(y,x)\,.
\end{equation}
Given a cycle $C$ we introduce an \e{elementary flow} associated with the cycle by
\begin{equation}\label{lavas}
Q_C(x,y):=\left\{
\begin{array}{ll}
1   & \textrm{if}\ (x,y)\in C\,,  \\
0 & \textrm{if} \ (x,y)\not\in C\,.
\end{array}
\right.
\end{equation}
A similar definition can be given also for a path. Since for each $x\in C$ the outgoing flux is equal to ingoing flux and it is equal to $1$ we have $\div Q_C=0$.

\medskip
Later we will need to use the concept of cluster, that we introduce here with next definition referring  for the graph $ (V_N,E_N) $ to the end of section \ref{sec:CTMC}.
\begin{definition}\label{def:cluster}
Given a configuration of particles $\eta\in\Sigma_N=\{0,1\}^{V_N}$, we call $\mathfrak C(\eta)$ the collection of clusters of particles
that is induced on $V_N$.
A \e{cluster} $c\in \mathfrak C(\eta)$ is a subgraph of $(V_N,\mathcal E_N)$. Two sites $x,y\in V_N$ belong to the same cluster $c$ if
$\eta(x)=\eta(y)=1$ and there exists an non-oriented path $(z_0,z_1, \dots ,z_k)$ such that $\eta(z_i)=1$ and
$(z_i,z_{i+1})\in \mathcal E_N$.
\end{definition}

\section[Divergence free flows]{Invariant measures and divergence free flows}\label{sec:Dfreefl}

In this section we repeat few notions about Markov chains already introduced in chapter \ref{ch:IPSmodels}, in particular the notions of invariant measure and transition graph. We do this for practical convenience and because we want these notions in a more abstract context as in section \ref{s:graphs} where we treated a general abstract graph. When the graph will be referred to interacting models of chapter \ref{ch:IPSmodels} a vertex $ x\in V $ is a configuration $ \eta\in\Sigma_N $ and an edge $ \{x,y\} $ is a subset $ \{\eta,\eta'\} $ of cardinality two of $ V=\Sigma_N $. Usually for interacting models with Markov rates $ c(\eta,\eta') $ we are interested in the transition graph, that is $ (\eta,\eta')\in E  $ if and only $ c(\eta,\eta')>0 $. For example for the exclusion process \eqref{eq:EPjumpgen} if $ (\eta,\eta')=(\eta,\eta^{xy}) $  the flow on this edge is $ Q(\eta,\eta^{xy}) $. 

\subsection{Markov chains: transition graph and stationarity} 
We are considering finite ($ |V|<\infty $) and irreducible continuous time Markov chains with transitions rates $c(x,y)$ for a jump from $ x\in V $ to $ y\in V $.  The \e{transition graph} of the Markov chain  is the graph $ (V,E) $
having as  edges $ E  $ all the pairs $ (x,y) $ such that
$ c(x,y ) > 0 $. There exists an unique invariant measure $\mu$ that is strictly positive and is characterized by the stationarity condition
\begin{equation}\label{stsu}
\sum_{y:(x,y)\in E}\mu(x)c(x,y)=\sum_{y:(y,x)\in E}\mu(y)c(y,x)\,.
\end{equation}
A Markov chain is reversible when it is satisfied detailed balance condition
$$\mu(x)c(x,y)=\mu(y)c(y,x)\,, \qquad \forall (x,y)\in E\,. $$
When the Markov chain is not reversible it is possible to define a time reversed Markov chain with transition rates defined by
$$
c^*(x,y):=\frac{\mu(y)c(y,x)}{\mu(x)}\,.
$$
This process is characterized by the following feature. In the stationary case it gives to a set of trajectories exactly the same probability that the original process gives to the set of time reversed trajectories. In particular the time reversed chain has the same invariant measure $\mu$ of the original process.

\subsection{Divergence free flow} 
The problem of determining the invariant measure of a given Markov chain and the problem of construct a divergence free flow on its transition graph on $ V $ are strictly related.
Suppose that we have an irreducible Markov chain with transition rates $c$ and invariant measure $\mu$. Then on the transition graph we can define the \e{flow}
\begin{equation}\label{defbas}
Q(x,y):=\mu(x)c(x,y)\,.
\end{equation}
The stationary condition \eqref{stsu} coincides with the divergence free condition for the flow $Q$, that is 
\begin{equation}\label{eq:divfree}
\div Q(x)=0 \text{ for all } x\in V.
\end{equation}
 Conversely suppose that we have a divergence free flow $Q$ and a fixed strictly positive target distribution $\mu$ . We obtain a Markov chain having invariant measure $\mu$ defining the rates as
\begin{equation}\label{qpi}
c(x,y):=\frac{Q(x,y)}{\mu(x)}\,.
\end{equation}
Indeed all the Markov chains having $\mu$ invariant are obtained in this way for a suitable divergence free flow.
Once again the proof follows inserting the rates \eqref{qpi} into \eqref{stsu} and using the divergence free condition of $Q$. The natural interpretation of the flow $Q$ is that it represents the typical flow observed in the stationary state of the chain.

In terms of flows, the connection between a Markov chain and its time reversed is even more transparent. Consider a Markov chain having transition rates $c$ and invariant measure $\mu$ and let $Q$ be defined by \eqref{defbas}. The time reversed flow $Q^*$ is defined simply reversing the original flow $Q$, i.e. we have $Q^*(x,y):=Q(y,x)$. It is clear that $Q^*$ is still a divergence free flow. The rates of the time reversed Markov chain are obtained by \eqref{qpi} but using the reversed flow and the same invariant measure $\mu$: $c^*(x.y)=\frac{Q^*(x,y)}{\mu(x)}$. In particular reversibility corresponds to the symmetry by inversion $Q(x,y)=Q(y,x)$.

This connection between the two problems is important since the geometric structure of divergence free flows is quite well understood and there are simple representations theorems \cite{BFG15,GV12,MacQ81}. We have indeed that on a finite oriented graph any divergence free flow can be written as a superposition of elementary flows associated to cycles
\begin{equation}\label{Qr}
Q=\sum_{C\in \mathcal C}\rho(C)Q_C\,,
\end{equation}
for suitable positive weights $\rho$. We call \eqref{Qr} \e{cyclic decomposition} of a flow $ Q $. This decomposition is in general not unique and under suitable assumptions on the flow (for example summability) the result can be extended also to infinite graphs \cite{BFG15}. We can summarise the concepts of this section with the following remark and proposition.
\begin{remark}
From our discussion we understand that the problem of finding dynamics with a given invariant measure is equivalent to the problem of finding a divergence free flow on the transition graph.
\end{remark}
\begin{proposition}\label{prop:decom cycl}
If $(V,E)$ is a finite graph, a flow $Q$ has a cyclic decomposition if and only if $ \div Q(x)=0$ for all $x\in V$.
\end{proposition}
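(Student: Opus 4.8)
The plan is to prove the two implications separately. The forward direction is immediate, while the converse requires a greedy cycle-extraction argument combined with an induction on the size of the support of $Q$.

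For the easy direction, suppose $Q=\sum_{C\in\mathcal C}\rho(C)Q_C$. Since the divergence is linear in the flow and each elementary flow satisfies $\div Q_C=0$ (observed right after \eqref{lavas}), linearity immediately gives $\div Q(x)=\sum_{C}\rho(C)\div Q_C(x)=0$ at every $x\in V$.

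For the converse I would assume $\div Q(x)=0$ for all $x$ and argue by induction on the cardinality of the support $\mathrm{supp}(Q):=\{(x,y)\in E:Q(x,y)>0\}$. If the support is empty then $Q\equiv 0$ and the decomposition is the empty sum. Otherwise I first extract a cycle lying inside the support. Pick any edge $(z_0,z_1)$ with $Q(z_0,z_1)>0$. The balance condition at $z_1$ reads $\sum_{y:(z_1,y)\in E}Q(z_1,y)=\sum_{y:(y,z_1)\in E}Q(y,z_1)$, and since the right-hand side is at least $Q(z_0,z_1)>0$, the left-hand side is positive, so there exists $z_2$ with $Q(z_1,z_2)>0$. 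Iterating produces a path $z_0,z_1,z_2,\dots$ all of whose edges carry positive flow; as $V$ is finite, some vertex must eventually repeat, and taking the first repetition yields a genuine cycle $C$ (distinct vertices except for the coinciding endpoints, as in the paper's definition) with $Q(x,y)>0$ for every $(x,y)\in C$.

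Next I set $\rho(C):=\min_{(x,y)\in C}Q(x,y)>0$ and define $Q':=Q-\rho(C)Q_C$. On the edges of $C$ one has $Q'(x,y)=Q(x,y)-\rho(C)\geq 0$, and off $C$ the flow is unchanged, so $Q'$ is again a nonnegative flow; moreover $\div Q'=\div Q-\rho(C)\div Q_C=0$. By the choice of $\rho(C)$ at least one edge of $C$ attains $Q'=0$, hence $|\mathrm{supp}(Q')|<|\mathrm{supp}(Q)|$, and the induction hypothesis gives a cyclic decomposition of $Q'$; adding back $\rho(C)Q_C$ completes the decomposition of $Q$. The main obstacle is precisely the extraction of a cycle from the support of a nonzero divergence-free flow, which is exactly where the finiteness of $V$ and the balance condition (inflow equals outflow at each vertex) are used to guarantee that a positive-flow path can always be extended until it closes; everything else is bookkeeping and the termination of the induction.
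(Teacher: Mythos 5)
Your proof is correct and self-contained: the forward direction by linearity of the divergence, and the converse by extracting a positive-flow cycle (using that inflow equals outflow at each vertex of the support, plus finiteness of $V$) and inducting on the size of the support. The paper does not write out a proof — it defers to Theorem 2.8 of \cite{GV12} — but your greedy min-subtraction argument is exactly the algorithm the paper illustrates in its worked example immediately after the proposition, so you are following essentially the same route.
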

\begin{proof}
For a proof of proposition \ref{prop:decom cycl} see theorem 2.8 in \cite{GV12}. From  its proof one can deduce that the decomposition is not unique.
\end{proof}

Using this simple construction it is possible to generate with \eqref{qpi} non trivial and interesting non-reversible Markov dynamics. An non-reversible dynamics is generated setting  a cyclic decomposition where there is at least one cycle of length greater than two (see example \ref{ex:symflow}) 
such that  $ Q(x,y)\neq Q(y,x)$ for some $(x,y)\in E$. Namely, when we set a family of cycle $ \mc C $, to avoid reversibility we must verify for at least one edge $(x,y)\in E$ such that  $(x,y)\in C_1,\dots,C_k$  and $(y,x)\in C'_1,\dots,C'_l$  it holds $Q(x,y)=\underset{i=1}{\overset{k}{\sum}}\rho(C_i) \neq \underset{i=1}{\overset{l}{\sum}}\rho(C'_i)=Q(y,x)$. 

To ends the section we do two very simple examples to show how the construction works.

\begin{example}\label{ex:symflow}
For a symmetric flow for which $Q(x,y)=Q(y,x)$ the decomposition \eqref{Qr} can be done using just cycles of length 2 of the form $C=(x,y,x)$.
\end{example}
\begin{example}
For practical purposes, we give a graphical example how to construct a cyclic decomposition for a graph with five vertices where it's defined a zero divergence flow, see fig  \ref{fig: grafo a div 0}. In figure \ref{fig:algoritmo dec cicl} we set a collection $ \mc C $ of cycles  for this graph. 
We  proceed in an iterative way: we individuate the edges $(x,y)$ where the flow takes the value $\underset{(x,y)\in E}{\rm{min}} Q(x,y)$ and associate the elementary cycle $C_1$ with $\rho(C_1)=\underset{(x,y)\in E}{\rm{min}} Q(x,y)$ to some of these. Then we considers a new graph $(V',E')$  erasing the edges, belonging to  $C_1$, where the flow took the minimum value at the previous step and subtracting $\rho(C_1)$ to the value of the flow on the other edges in $C_1$, later we repeat the procedure until it is possible to get the family of elementary cycles ${C_1,\dots,C_{k-1}=C_3}$.  When the initial graph is reduced to  just one cycle, selecting also  this last cycle $C_k=C_4$, with the proper weight, we complete the family $\mathcal{C}$ and the real numbers $\rho(C_i)$  give the weight function $ \rho(C) $ in \eqref{Qr} for  a cyclic decomposition.
\end{example}
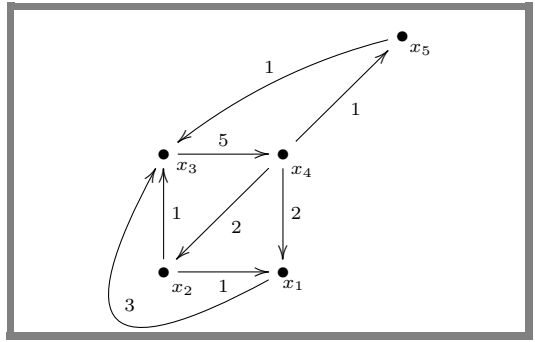
\begin{figure}[]
\begin{mdframed}[style=fig,userdefinedwidth=7cm]
\centering
\hspace{0.1cm}
\xymatrix@C=1.2cm@R=1.2cm{
 &  & \bullet\ar@/_/[lld]_{1}^(0.0){\hspace{0.3cm}x_{5}} \\
\bullet  \ar[r]_(.2){x_{3}}^{5} & \bullet \ar[ld]^{2} \ar[d] ^(0.15){x_{4}}^{2} \ar[ru]_{1}&  \\
\bullet \ar[r]_{1} \ar[u]_{1} _(-0.15){x_{2}} & \bullet\ar@/^2cm/[lu]^(0.0){x_{1}}_{3} &\\
} 
\end{mdframed}
\caption{
\small {In figure it is defined, on a graph with five vertices and eight edges, a zero-divergence flow. The values $Q(x_i,x_j)$  with $(i,j)\in\{1,2,3,4,5\}$ are the ones in the middle of the arrows.} }
\label{fig: grafo a div 0}-
\end{figure}
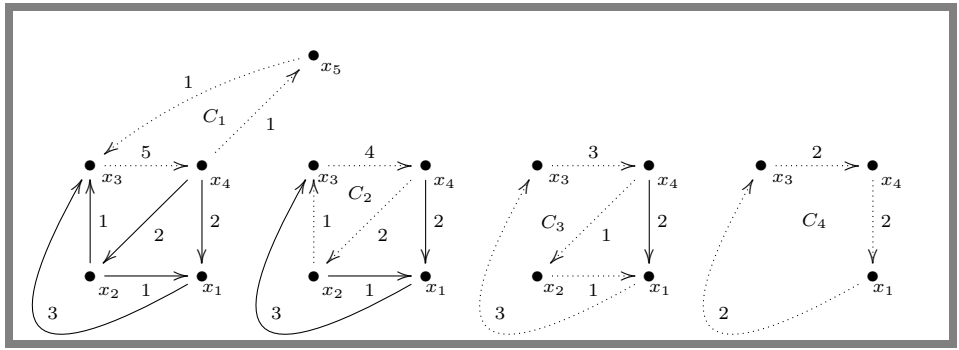
\begin{figure}[]
\begin{mdframed}[style=fig]
\centering
\hspace{0.1cm}
\xymatrix@C=1.1cm@R=1.1cm{
 &  & \bullet\ar@{.>}@/_/[lld]_{1}^(0.0){\hspace{0.3cm}x_{5}} &  &   &  &  &  &  \\
\bullet  \ar@{.>}[r]_(.2){x_{3}}^{5} & \bullet \ar[ld]^{2} \ar[d] ^(0.15){x_{4}}^{2} \ar@{.>}[ru]_{1}^(0.3){C_1}& \bullet  \ar@{.>}[r]_(.2){\hspace{0.1cm}x_{3}\hspace{0.3cm}}^{4} & \bullet \ar@{.>}[ld]^{2}_(0.4){\quad C_2} \ar[d] ^(0.15){x_{4}}^{2}& \bullet  \ar@{.>}[r]_(.2){x_{3}}^{3} & \bullet \ar@{.>}[ld]^{1} \ar[d] ^(0.15){x_{4}}^{2} & \bullet  \ar@{.>}[r]_(.2){x_{3}}^{2} & \bullet  \ar@{.>}[d] ^(0.15){x_{4}}^{2}_{C_4\hspace{0.5cm}} &  \\
\bullet \ar[r]_{1} \ar[u]_{1} _(-0.15){x_{2}} & \bullet\ar@/^2cm/[lu]^(0.0){x_{1}}_{3} & \bullet \ar[r]_{1} \ar@{.>}[u]_{1} _(-0.15){x_{2}} & \bullet\ar@/^2cm/[lu]^(0.0){x_{1}}_{3} & \bullet \ar@{.>}[r]_{1} _(0.15){x_{2}} & \bullet\ar@{.>}@/^2cm/[lu]^(0.0){x_{1}}_{3}_(0.9){\hspace{0.3cm}C_3} &  & \bullet\ar@{.>}@/^2cm/[lu]^(0.0){x_{1}}_{2} & \\
} 
\end{mdframed}
\caption{
\small {In figure the dotted cycles are the ones identified for an elementary cyclic decomposition. The sequence of graphs is the sequence obtained erasing the previous cycle $C_i$ and updating the values of $Q(\cdot,\cdot)$ along the rules of the algorithm.} }
\label{fig:algoritmo dec cicl}
\end{figure}

\section{Cyclic decomposition: applications }\label{sec:applications}

We use the method of the previous section to set  non-reversible continuous time Markov chains in particular we focus on the interacting particles systems presented in chapter \eqref{ch:IPSmodels}. Where we denoted with $ (V_N,E_N) $ the graph having as vertexes in $ V_N $ the lattice sites of $ \mathbb{T}^n_N $ and as edges $ E_N $ the set of oriented edges connecting them, see also \ref{sec:part models}. The non-oriented graph was $ (V_N,\mc E_N) $. At the end of the section we show a general structure to generate of non-reversible irreducible local dynamics in interacting models, which is suitable to generate also non local dynamics. 

\subsection{Metropolis algorithm} The first example is the classic Metropolis algorithm \cite{LB15} that corresponds to the generation
of reversible dynamics.
As we observed the reversible chains have a symmetric typical flow Q. This means that fixing the rates as in \eqref{qpi}
with a symmetric flow $Q$ we obtain a reversible chain with invariant measure $\mu$. For example choosing
$Q(x,y)=\max\{\mu(x),\mu(y)\}$, that is clearly symmetric, gives the classic rates
$$
r(x,y)=\max\left\{1,\frac{\mu(y)}{\mu(x)}\right\}\,.
$$

\subsection{Random walks} Before consider the case of several particles let us shortly recall a construction in \cite{GV12} for one single particle. We consider the bidimensional case although the arguments presented below can be generalized to any dimension. Consider
for example  two  nonnegative (i.e. with nonnegative coordinates) vector fields $q^+(x)$ and $q^-(x)$ on the continuous torus $[0,N]\times [0,N]$ with periodic boundary conditions and such that
\begin{equation}
j(x):=q^+(x)-q^-(x)\,,
\label{diffj}
\end{equation}
is a  divergence free vector field.

We define a flow $Q$ on the lattice fixing the values  $Q\left(y,y  \pm e^{(i)}\right)$  as the flux of $q^\pm_i$ across  the edges dual to $\left\{y,y\pm e^{(i)}\right\}$, more precisely
\begin{align}
Q\left(y,y+e^{(1)}\right):= &\int_{0}^{1} q^+_1\left(y+ \frac{e^{(1)}-e^{(2)}}{2} + \alpha e^{(2)}
\right)d\alpha\,,\\
Q(y,y+e^{(2)}):= &\int_{0}^{1}   q^+_2\left(y+ \frac{-e^{(1)}+e^{(2)}}{2} + \alpha e^{(1)}  \right)d\alpha\,,\\
Q(y,y-e^{(1)}):=& \int_{0}^{1}    q^-_1\left(y- \frac{e^{(1)}+e^{(2)}}{2} + \alpha e^{(2)}
\right)d\alpha\,,\\
Q(y,y-e^{(2)}):=& \int_{0}^{1} q^-_2\left(y  - \frac{e^{(1)}+e^{(2)}}{2} + \alpha e^{(1)}
\right)d\alpha\,.
\end{align}
Given $y\in \mathbb T_N^2$ we have that $\div Q (y)$ coincides with the flow (from inside to outside) of $j$ through the boundary of the box $B_y:=\{z \in \bb Z^2 \,:\, |y-z|_\infty \leq 1/2\}$
$$
\div Q(y)=\int_{\partial B_y}j\cdot \hat n\, d\Sigma=0\,.
$$
The last equality above follows by $\div j=0$ and the  Gauss-Green Theorem.

If we define the rate of transitions of a random walk by \eqref{qpi} we obtain a random walk with invariant measure $\mu$.

\subsection{Two dimensional Exclusion process }\label{basile}
We consider a system of particles satisfying an exclusion rule, i.e. the configuration on one single site is $\Sigma=\{0,1\}$, and evolving on $V_N=\mathbb T^2_N$. The construction can be naturally generalized to any planar graph. Generalizations to higher dimensions are also natural but require more notation.

Let $\mu$ be a Gibbsian probability measure on the configuration space $\Sigma^{V_N}$
given by
\begin{equation}\label{gibbs}
\mu(\eta)=\frac 1Z e^{-H(\eta)}\,,
\end{equation}
where $H$ is the Hamiltonian.
To explain notation we consider the case of an Hamiltonian with one body  and two body interactions
\begin{equation}\label{H}
H(\eta)=- \sum_{\{x,y\}\in \mathcal E_N}J_{\{x,y\}}\eta(x)\eta(y)-\sum_{x\in V_N}\lambda_x\eta(x)\,.
\end{equation}
We consider such an Hamiltonian for simplicity but more general interactions can be handled similarly. Indeed the specific form of the Hamiltonian does not play any role in the following. It is only important that it has interactions of bounded range.
In \eqref{H} the parameters $\left(J_{\{x,y\}}\right)_{\{x,y\}\in \mathcal E_N}$ and $\left(\lambda_x\right)_{x\in V_N}$ are arbitrary real numbers describing respectively the interactions associate to the bonds and the chemical potentials of the sites. In the following we will always restrict to translational invariant Hamiltonian.
Given $W \subseteq \mathbb T^2_N$ we define the energy
restricted to $W$ as
\begin{equation}\label{en-ris}
H_{W}(\eta):=-\sum_{\{x,y\}\cap W \neq \emptyset}J_{\{x,y\}}\eta(x)\eta(y)-\sum_{x\in W}\lambda_x\eta(x)\,.
\end{equation}
We define also
\begin{equation}\label{en-ris*}
H_{W}^*(\eta):=-\sum_{\{x,y\}\subset W }J_{\{x,y\}}\eta(x)\eta(y)-\sum_{x\in W}\lambda_x\eta(x)\,.
\end{equation}
For any $W$ we have $H=H_W+H^*_{W^c}$.
Given an oriented edge $e\in  E_N$ of the lattice there is only one anticlockwise oriented such that 
$e\in \p f^+(e)$. There is also an unique anticlockwise face, that we call $f^-(e)$, such that $e\in -\p f^-(e)$
(see figure \ref{fig: due facce}).

\begin{figure}[]
\begin{mdframed}[style=fig,userdefinedwidth=8.7cm]
\vspace{0.5cm}
\centering
\entrymodifiers={+<0.5ex>[o][F*:black]}
\xymatrix@C=1.5cm@R=1.5cm{
 {}\ar@{.}[rrrr]\ar@{.}[dddd] & \ar@{.}[dddd] & {}\ar@{.}[dddd] & {}\ar@{.}[dddd] & {}\ar@{.}[dddd]\\
 {}\ar@{.}[rrrr] & {} & {} & {} & {}\\
 {}\ar@{.}[rrrr]_(.53){\textit{\normalsize y}} & {} & {} & {} & {}\\
 {}\ar@{.}[rrrr]_(.53){\textit{\normalsize x}}^(0.38){\textit{\tiny{$f^{+}(e)$}}} ^(0.63){\textit{\tiny{$f^{-}(e)$}}} & {} \ar@{}[ur]|{\textit{\Huge{$\circlearrowleft$}}} & {} \ar@{->}[u]|{\textit{}} & {} \ar@{}[ul]|{\textit{\Huge{$\circlearrowleft$}}} & {}\\
  {} \ar@{->}[u]^(.5){\textit{\normalsize $e^{(2)}$}}  \ar@{->}[r]_(.5){\textit{\normalsize $e^{(1)}$}} \ar@{.}_(-0.05){\text{\normalsize }}[rrrr] & {} & {} & {} & {}
}
\end{mdframed}
\caption{\small{ The discrete two dimensional torus of side 5; the opposite sides of the square are identified. For the oriented edge $(x,y)=e$ we draw the two anticlockwise oriented faces $f^{-}(e)$ and $f^{+}(e)$. }}
\label{fig: due facce}
\end{figure}
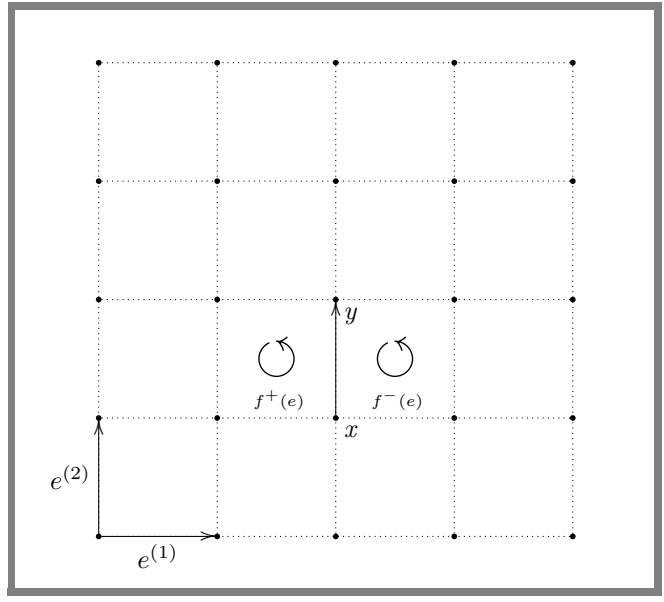
We introduce before a simplified version of the class of models we will consider.
Let $w^+,w^-$ two positive numbers.
We consider the following transition rates for the jump of one particle from $e^-$ to $e^+$
\begin{equation}\label{irates}
c_{e^-,e^+}(\eta):=\eta(e^-)(1-\eta(e^+))\left(w^+e^{H_{\mathfrak f^+(e)}(\eta)}+w^-e^{H_{\mathfrak f^-(e)}(\eta)}\right)\,.
\end{equation}
When we write $H_{\mathfrak f}$ for a suitable face $\mathfrak f$ we consider the face as a set of vertices.
We claim that the generator \eqref{eq:EPjumpgen} with the rates of jump \eqref{irates}  has \eqref{gibbs}
as invariant measure, moreover if $w^-\neq w^+$ then the dynamics is not reversible. Since the Hamiltonian \eqref{H} has only finite range interactions the dynamics induced by \eqref{irates} is local.

First we give a direct proof of this claim and then we show how a generalized version of the rates \eqref{irates} is naturally constructed by divergence free flows on the configuration space.
We need to check the validity of the stationary equations
\begin{equation}\label{staz}
\sum_{e\in E_N}\Big[\mu(\eta)c_{e^-,e^+}(\eta)-\eta(e^-)(1-\eta(e^+))\mu\left(\eta^{e^-,e^+}\right)c_{e^+,e^-}\left(\eta^{e^-,e^+}\right)\Big]=0\,,\qquad \forall \eta\,.
\end{equation}
Using \eqref{gibbs} and \eqref{irates} we will obtain
in \eqref{staz} a sum of several terms having as a factor
$e^{-H^*_{\mathfrak f^c}(\xi)}$ for some face $\mathfrak f$ and some configuration $\xi$.
Observe that if $\eta$ and $\xi$ are obtained one from the other just changing
the occupation numbers on sites belonging to  $\mathfrak f$ then we have $H^*_{\mathfrak f^c}(\eta)=H^*_{\mathfrak f^c}(\xi).$ This means that all the factors can be written as $e^{-H^*_{\mathfrak f^c}(\eta)}$ for different $\mathfrak f$.
We will use the relationship $f^+(e)=f^-(-e)$ (with $-e$ we denote the edge oriented oppositely with respect to $e$).

Let $\mathcal F_N$ be the collection of  un-oriented faces. We group together all the terms in \eqref{staz}
that have the energetic factor equal to $e^{-H^*_{\mathfrak f^c}}$  for a given $\mf f\in \mathcal F_N$.
The sum of all these terms is equal to
\begin{equation}\label{arpa}
\frac{e^{-H^*_{\mathfrak f^c}(\eta)}}{Z}\left\{\sum_{e \in \p f}\Big[\eta(e^-)(1-\eta(e^+))(w^+-w^-)+\eta(e^+)(1-\eta(e^-))(w^--w^+)\Big]\right\}\,.
\end{equation}
In \eqref{arpa} $f$ is the unique anticlockwise oriented face corresponding to $\mathfrak f$.
The sum appearing in
\eqref{arpa} is zero  since coincides with the telescopic sum
\begin{equation}\label{sz}
(w^+-w^-)\sum_{e\in \p f}\left[\eta(e^-)-\eta(e^+)\right]=0\,.
\end{equation}
Considering all the faces in $\mathcal F_N$ we can write \eqref{staz} as
\begin{equation}\label{fdz}
\sum_{\mathfrak f\in \mathcal F_N}\left\{\frac{e^{-H_{\mathfrak f^c}^*(\eta)}}{Z}(w^+-w^-)\sum_{e\in \p f}\left[\eta(e^-)-\eta(e^+)\right]\right\}=0\,.
\end{equation}
By \eqref{sz} we have that \eqref{fdz} is  satisfied.
The reversibility condition becomes
\begin{equation}
(w^+-w^-)e^{-H_{\mathfrak f^{-}(e)^c}^*(\eta)}=(w^+-w^-)e^{-H_{\mathfrak f^{+}(e)^c}^*(\eta)},
\end{equation}
that is not satisfied when $w^+\neq w^-$ apart special cases.

\smallskip

We show now how it is possible to conceive a generalized version of \eqref{irates} constructing a divergence free flow like in \eqref{Qr} on the configuration space
and using then \eqref{qpi}. We use cycles for which the configuration of particles is frozen outside a given face $\mathfrak f$.  The cycles that we use are shown in figure \ref{fig: EP 2d} and correspond to letting the particles rotate around a fixed face according to specific rules and following the two orientations. Let us consider two non-negative functions $w^\pm$ such that
$D(w^\pm)\cap \{0,e_{1},e_{2}, e_{1}+e_{2}\}=\emptyset$. If $w^\pm$ are local then the corresponding rates will be also local.

\begin{remark}\label{zeb}
More generally we can assume that $w^\pm$ depends on the configuration of particles restricted to  $\{0,e_{1},e_{2}, e_{1}+e_{2}\}$ only through the number of particles
$\eta(0)+\eta(e_{1})+\eta(e_{2})+\eta(e_{1}+e_{2})$.
\end{remark}
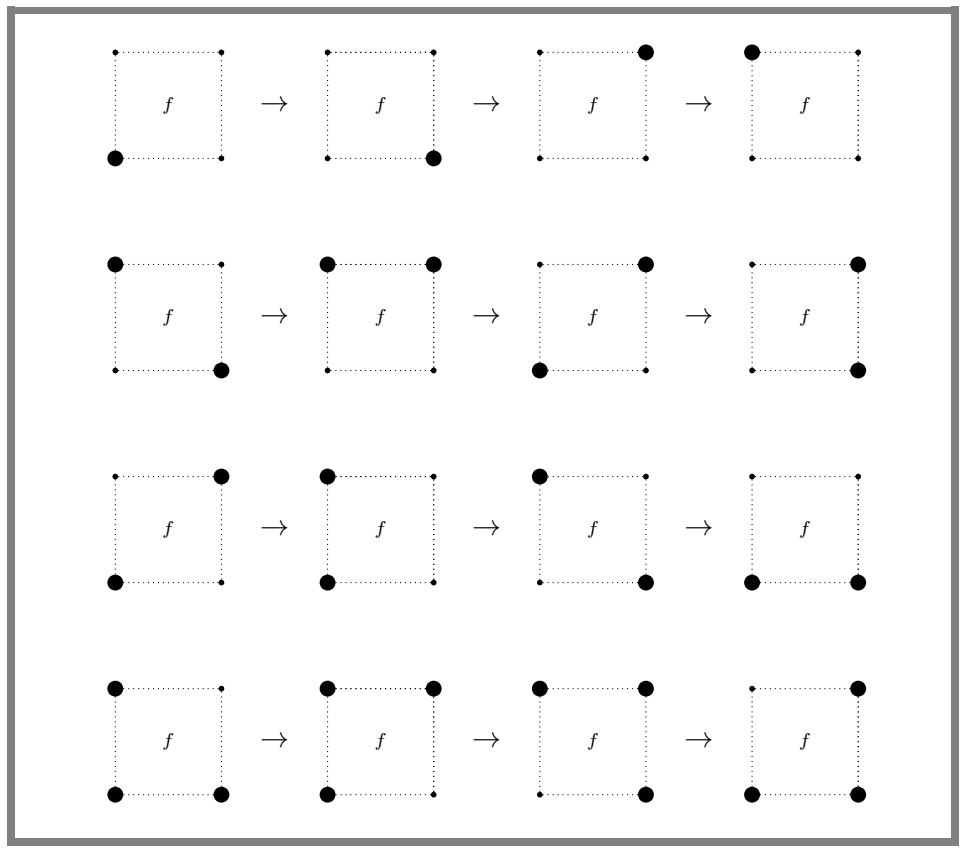
\begin{figure}[]
\begin{mdframed}[style=fig]

\[
\hspace{0.1cm}
\entrymodifiers={+<0.5ex>[o][F*:black]}
\xymatrix@C=1.2cm@R=1.2cm{
  \ar@{.}[r] \ar@{}[dr]|{f}\ar@{}[rrrd]|{\textit{\large{$\rightarrow$}}} &   \ar@{.}[d] &   \ar@{.}[d] \ar@{.}[r]\ar@{}[rrrd]|{\textit{\large{$\rightarrow$}}} \ar@{}[dr]|{f} &  \ar@{.}[l]  &  \ar@{.}[r] \ar@{}[rd]|{f}\ar@{}[rrrd]|{\textit{\large{$\rightarrow$}}} &  *+[o][F*:black]{}\ar@{.}[d] &  *+[o][F*:black]{}\ar@{.}[r] \ar@{.}[d] &  \ar@{.}[d] \ar@{}[ld]|{f}  \\
  *+[o][F*:black]{} \ar@{.}[u]  &  \ar@{.}[l]   &   \ar@{.}[r]  & *+[o][F*:black]{}  \ar@{.}[u] \ar@{.} [u] &  \ar@{.}[u] \ar@{.}[r]  &  &  \ar@{.}[r] &   \ar@{.}[u]\\
  *+[o][F*:black]{}\ar@{.}[r] \ar@{}[dr]|{f} \ar@{}[rrrd]|{\textit{\large{$\rightarrow$}}}&   \ar@{.}[d] &   *+[o][F*:black]{}\ar@{.}[d] \ar@{.}[r] \ar@{}[dr]|{f}\ar@{}[rrrd]|{\textit{\large{$\rightarrow$}}} &  *+[o][F*:black]{}\ar@{.}[l]  &  \ar@{.}[r] \ar@{}[rd]|{f}\ar@{}[rrrd]|{\textit{\large{$\rightarrow$}}} &  *+[o][F*:black]{}\ar@{.}[d] &  \ar@{.}[r] \ar@{.}[d] & *+[o][F*:black]{} \ar@{.}[d] \ar@{}[ld]|{f}  \\
   \ar@{.}[u]  &  *+[o][F*:black]{}\ar@{.}[l] &   \ar@{.}[r]  &  \ar@{.}[u] \ar@{.} [u] &  *+[o][F*:black]{}\ar@{.}[u] \ar@{.}[r]  &  &  \ar@{.}[r] &  *+[o][F*:black]{} \ar@{.}[u]\\
  \ar@{.}[r] \ar@{}[dr]|{f}\ar@{}[rrrd]|{\textit{\large{$\rightarrow$}}} &  *+[o][F*:black]{} \ar@{.}[d] &   *+[o][F*:black]{}\ar@{.}[d] \ar@{.}[r] \ar@{}[dr]|{f}\ar@{}[rrrd]|{\textit{\large{$\rightarrow$}}} &  \ar@{.}[l]  &  *+[o][F*:black]{}\ar@{.}[r] \ar@{}[rd]|{f}\ar@{}[rrrd]|{\textit{\large{$\rightarrow$}}} &  \ar@{.}[d] &  \ar@{.}[r] \ar@{.}[d] &  \ar@{.}[d] \ar@{}[ld]|{f}  \\
   *+[o][F*:black]{}\ar@{.}[u]  &  \ar@{.}[l] & *+[o][F*:black]{}  \ar@{.}[r]  &   \ar@{.}[u] \ar@{.} [u] &  \ar@{.}[u] \ar@{.}[r]  & *+[o][F*:black]{} &  *+[o][F*:black]{}\ar@{.}[r] &   *+[o][F*:black]{}\ar@{.}[u]\\
  *+[o][F*:black]{}\ar@{.}[r] \ar@{}[dr]|{f} \ar@{}[rrrd]|{\textit{\large{$\rightarrow$}}}&   \ar@{.}[d] &  *+[o][F*:black]{} \ar@{.}[d] \ar@{.}[r] \ar@{}[dr]|{f}\ar@{}[rrrd]|{\textit{\large{$\rightarrow$}}} & *+[o][F*:black]{} \ar@{.}[l]  &  *+[o][F*:black]{}\ar@{.}[r] \ar@{}[rd]|{f}\ar@{}[rrrd]|{\textit{\large{$\rightarrow$}}} & *+[o][F*:black]{}  *+[o][F*:black]{}\ar@{.}[d] &  \ar@{.}[r] \ar@{.}[d] & *+[o][F*:black]{} \ar@{.}[d] \ar@{}[ld]|{f}  \\
  *+[o][F*:black]{} \ar@{.}[u]  &  *+[o][F*:black]{}\ar@{.}[l] &  *+[o][F*:black]{} \ar@{.}[r]  &   \ar@{.}[u] \ar@{.} [u] &  \ar@{.}[u] \ar@{.}[r]  & *+[o][F*:black]{} &  *+[o][F*:black]{}\ar@{.}[r] &   *+[o][F*:black]{}\ar@{.}[u]\\
}
\vspace{0.3cm}
\]

\end{mdframed}
\caption{\small{Anticlockwise cycles in $\mathcal C_f$. Each line corresponds to a cycle and the evolution is from left to right. The configuration of particles is frozen outside the face. In the first cycle there is 1 particle rotating in the face, in the second and third 2 and in the last one 3. The clockwise cycles are obtained reading from right to left the figure. }}
\label{fig: EP 2d}
\end{figure}
We call $\mathcal C_f$ the collection of cycles in the configuration space associated to the oriented face $f$. In figure \ref{fig: EP 2d} we draw the structure of these cycles associated to an $f\in F_N^+$ (we call $F_N^+$ the collection of anticlockwise oriented faces). For different configurations of the particles outside $\mathfrak f$ we obtain different elements of $\mathcal C_f$.
The weights in \eqref{Qr} are fixed by
\begin{equation}\label{weights cycles kawa 2-d}
\rho(C):=\left\{
\begin{array}{ll}
e^{-H^*_{\mathfrak f^c}}\tau_{\mathfrak f}w^+ & \textrm{if}\  C\in \mathcal C_f\,, f\in F_N^+\,,\\
e^{-H^*_{\mathfrak f^c}}\tau_{\mathfrak f}w^- & \textrm{if}\  C\in \mathcal C_f\,, f\in F_N^-\,.
\end{array}
\right.
\end{equation}
A cycle $C\in \mathcal C_f$ is individuated by the face $f$, the type of rotation around the face among the possible ones
in figure \ref{fig: EP 2d} and the configuration $\eta_{\mathfrak f^c}$ outside the face.
Definition \eqref{weights cycles kawa 2-d} is then well posed since all the functions appearing depend only on $\eta_{\mathfrak f^c}$
and are therefore constant on the cycle $C$.

The value of $Q\left(\eta,\eta^{x,y}\right)$ associated to a jump of one particle from $x$ to $y$ ($e=(x,y)$ is an oriented
edge of the lattice) in the configuration $\eta$ is determined as follows. This value is zero unless the site $x$ is occupied and the site $y$ is empty. This gives a factor $\eta(x)(1-\eta(y))$. If this constraint is satisfied then there is a contribution corresponding to $e^{-H^*_{\mathfrak f^+(e)^c}(\eta)}\tau_{\mathfrak f^+(e)}w^+(\eta)$ from a cycle with anticlockwise rotations and a contribution corresponding to
$e^{-H^*_{\mathfrak f^-(e)^c}(\eta)}\tau_{\mathfrak f^-(e)}w^-(\eta)$ from a cycle with clockwise rotations. Applying formula \eqref{qpi}
we obtain the following generalized version of the rates \eqref{irates}
\begin{equation}\label{iratesgen}
c_{e^-,e^+}(\eta)=\eta(e^-)(1-\eta(e^+))
\left(e^{H_{\mathfrak f^+(e)}(\eta)}\tau_{\mathfrak f^+(e)}w^+(\eta)+e^{H_{\mathfrak f^-(e)}(\eta)}\tau_{\mathfrak f^-(e)}w^-(\eta)\right)\,.
\end{equation}
The exponential factors in \eqref{weights cycles kawa 2-d} have been chosen in such a way that applying \eqref{qpi}
the non locality of the measure $\mu$ is erased and we obtain local rates.
To these rates it is always possible to add some reversible rates coming from cycles of length 2 of the
form $C=(\eta,\eta^{x,y},\eta)$. Since this happens also in one dimension we discuss this issue in the next section.

If $\mu$ is a Bernoulli measure then $e^{H_{\mathfrak f^\pm(e)}}$ depends only on the number of particles in the face $\mathfrak f^\pm(e)$. This dependence can be compensated by $w^\pm$ by Remark \ref{zeb}.

\subsection{One dimensional Exclusion process}\label{ss:odex}
We consider now a conservative dynamics on a ring with $N$ sites. It will result an exclusion process as \eqref{eq:EPjumpgen}
We search for non reversible local rates having (\ref{gibbs}) as  invariant  measure.

The class of cycles that we consider contains all the cycles in which one single particle jumps across an edge and then come back. The length of these cycles is 2 and any superposition of them is generating a symmetric flow. The family of such cycles, when one single particle jumps across the edge $\mathfrak e$, is denoted by $\mathcal C_{\mathfrak e}^r$.

The most simple way to  introduce irreversibility  is to consider  cycles associated to two particles evolving.
The structure of the cycles associated to the movement of two particles that we consider is
illustrated in figure \ref{fig: due v- due p Kaw1d}. Again the configuration outside the window drawn is frozen. Moving from the top to the bottom of the figure we have a cycle in the collection $\mathcal C^+_{\mathfrak e}$. Moving instead from the bottom to the top we have a cycle in the collection $\mathcal C_{\mathfrak e}^-$. The lower index $\mathfrak e$ denotes the edge around which the evolution takes place while instead the upper index denotes the direction of movement. Observe that on cycles $\mathcal C_{\mathfrak e}^\pm$ particles are not jumping across the edge $\mathfrak e$ but
instead across $\tau_{\pm 1}\mathfrak e$.
\begin{figure}[]
\begin{mdframed}[style=fig,userdefinedwidth=8cm]
\[
\entrymodifiers={+<0.5ex>[o][F*:black]}
\xymatrix@C=1.5cm@R=1.5cm{
*+[o][F*:black]{}
\ar@{.}[rrr]^{\textit{\normalsize $\mathfrak e$}} &\ar@{}[dr]|{\textit{\large{$\downarrow$}}}&*+[o][F*:black]{}&\\
\ar@{.}[rrr]^{\textit{\normalsize $\mathfrak e$}}&*+[o][F*:black]{}\ar@{}[dr]|{\textit{\large{$\downarrow$}}}&*+[o][F*:black]{}&\\
\ar@{.}[rrr]^{\textit{\normalsize $\mathfrak e$}}&*+[o][F*:black]{}\ar@{}[dr]|{\textit{\large{$\downarrow$}}}&&*+[o][F*:black]{}\\
*+[o][F*:black]{}\ar@{.}[rrr]^{\textit{\normalsize $\mathfrak e$}}&&&*+[o][F*:black]{}\\
} \]
\vspace{0.3cm}
\end{mdframed}
\caption{\small{Following the arrows from top to bottom we obtain a cycle in the collection $\mathcal C_{\mathfrak e}^+$. Recall that the configuration is frozen outside the portion of the lattice drawn. Going in the opposite direction from bottom to top
we obtain a cycle in $\mathcal C_{\mathfrak e}^-$.}}
\label{fig: due v- due p Kaw1d}
\end{figure}
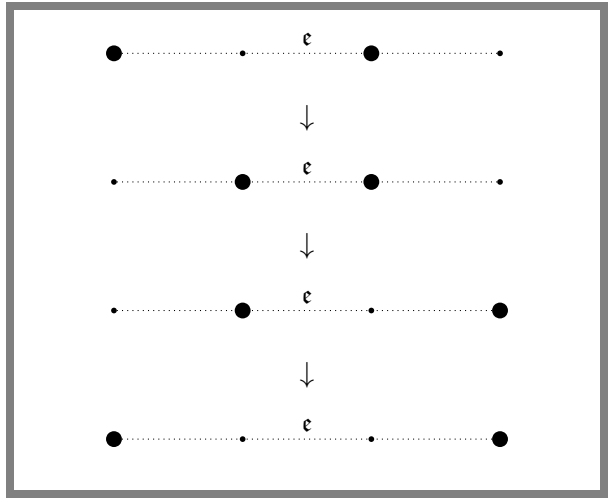
We call $W(\mathfrak e):=\tau_{-1}\mathfrak e\cup \mathfrak e\cup \tau_1\mathfrak  e$, i.e. the set of vertices belonging to one of the edges considered. We consider non negative functions $w, w_\pm$ such that $D(w)\cap \{0,1\}=\emptyset$ and $D(w_\pm)\cap W(\{0,1\})=\emptyset$. We construct a divergence free flow with a decomposition \eqref{Qr} with weights defined by
\begin{equation}\label{weights cycles kawa 1-d}
\rho(C):=\left\{
\begin{array}{ll}
e^{-H^*_{\mathfrak e^c}}\tau_{\mathfrak e} w & \textrm{if}\  C\in \mathcal C_{\mathfrak e}^r \,, \\
e^{-H^*_{W^c(\mathfrak e)}}\tau_{\mathfrak e} w^+ & \textrm{if}\ C\in \mathcal C_{\mathfrak e}^+\,,\\
e^{-H^*_{W^c(\mathfrak e)}}\tau_{\mathfrak e} w^- & \textrm{if}\ C\in \mathcal C_{\mathfrak e}^-\,.
\end{array}
\right.
\end{equation}
Definition \eqref{weights cycles kawa 1-d} is not ambiguous since the functions used are constant on the cycles and can be interpreted as functions of the cycle itself.
The flow is decomposed into a reversible part coming from the superposition of the reversible cycles and an irreversible one
$Q=Q^r+Q^i$.
The reversible part in the case of a jump of a particle from $x$ to $x+1$ is for example
\begin{equation}\label{qr}
Q^r(\eta,\eta^{x,x+1})=\tau_xw e^{-H^*_{\{x,x+1\}^c}}\eta(x)(1-\eta(x+1))\,,
\end{equation}
and similarly for a jump from $x+1$ to $x$.
Since any symmetric flow can be decomposed using these elementary cycles we obtain the most general form of the reversible rates having \eqref{gibbs} has an invariant measure
\begin{equation}\label{genrev}
c^r_{e^-,e^+}(\eta)=\tau_{\mathfrak e}w e^{H_{\mathfrak e}}\eta\left(e^-\right)\left(1-\eta\left(e^+\right)\right)\,.
\end{equation}

For the irreversible part $Q^i(\eta,\eta^{x.x+1})$  we may have a contribution from one cycle in $\mathcal C^\pm_{\{x-1.x\}}$ and one cycle in $\mathcal C^\pm_{\{x+1.x+2\}}$. Let us call $\chi_i\,, i=1,\dots , 4$ the characteristic functions associated to the local distribution of particles like in figure \ref{fig: due v- due p Kaw1d} when $\mathfrak e=\{0,1\}$ and numbering from the top of the figure toward the bottom. For example we have
$$
\chi_1(\eta)=\eta(-1)(1-\eta(0))\eta(1)(1-\eta(2))\,,
$$
for the first configuration from the top in figure \ref{fig: due v- due p Kaw1d} and similar expressions for the other cases. We have
\begin{eqnarray}\label{qi0q}
Q^i(\eta,\eta^{x,x+1})&=&\tau_{x-1}\left(w^+\chi_2e^{-H^*_{W^c(\{0,1\})}}\right)+\tau_{x+1}\left(w^+\chi_1
e^{-H^*_{W^c(\{0,1\})}}\right)
\nonumber \\
&+&\tau_{x-1}\left(w^-\chi_1e^{-H^*_{W^c(\{0,1\})}}\right)+\tau_{x+1}\left(w^-\chi_4
e^{-H^*_{W^c(\{0,1\})}}\right)\,.
\end{eqnarray}
Likewise for jumps in the opposite direction we have
\begin{eqnarray}\label{qi0-q}
Q^i(\eta,\eta^{x+1,x})&=&\tau_{x-1}\left(w^+\chi_4e^{-H^*_{W^c(\{0,1\})}}\right)+\tau_{x+1}\left(w^+\chi_3
e^{-H^*_{W^c(\{0,1\})}}\right)
\nonumber \\
&+&\tau_{x-1}\left(w^-\chi_3e^{-H^*_{W^c(\{0,1\})}}\right)+\tau_{x+1}\left(w^-\chi_2
e^{-H^*_{W^c(\{0,1\})}}\right)\,.
\end{eqnarray}
Using the general rule \eqref{qpi} we obtain the rates of transition for the irreversible part
\begin{equation}\label{qi0}
c^i_{x,x+1}(\eta)=
\tau_{x-1}\Big[\left(w^+\chi_2+w^-\chi_1\right)e^{H_{W(\{0,1\})}}\Big]+\tau_{x+1}\Big[\left(w^+\chi_1+w^-\chi_4\right)
e^{H_{W(\{0,1\})}}\Big]\,.
\end{equation}
Likewise for jumps in the opposite direction we have
\begin{equation}\label{qi0-}
c^i_{x+1,x}(\eta)=\tau_{x-1}\Big[\left(w^+\chi_4+w^-\chi_3\right)e^{H_{W(\{0,1\})}}\Big]+\tau_{x+1}\Big[\left(w^+\chi_3+w^-\chi_2\right)
e^{H_{W(\{0,1\})}}\Big]\,.
\end{equation}
All the terms in \eqref{qi0} have a factor $\eta(x)(1-\eta(x+1))$ (contained in the characteristic functions) so that this rate of jump is zero if this function is zero
and likewise in \eqref{qi0-} there is a factor $\eta(x+1)(1-\eta(x))$.

With some algebra putting all together we  obtain the following rates
\begin{align}\label{euna}
&c_{x,x+1}(\eta)=\eta(x)(1-\eta(x+1))\Big\{e^{H_{W(\{x-1,x\})}}\Big[\eta(x-1)(1-\eta(x-2))\tau_{x-1}w^+ \nonumber\\
&+\eta(x-2)(1-\eta(x-1))\tau_{x-1}w^-\Big]+e^{H_{W(\{x+1,x+2\})}}\Big[\eta(x+2)(1-\eta(x+3))\tau_{x+1}w^+\\
&+\eta(x+3)(1-\eta(x+2))\tau_{x+1}w^-\Big]+e^{H_{\{x,x+1\}}}\tau_x w\Big\}\,,\nonumber
\end{align}
and
\begin{align}\label{edue}
& c_{x+1,x}(\eta)=\eta(x+1)(1-\eta(x))\Big\{e^{H_{W(\{x-1,x\})}}\Big[\eta(x-2)(1-\eta(x-1))\tau_{x-1}w^+ \nonumber\\
& +\eta(x-1)(1-\eta(x-2))\tau_{x-1}w^-\Big]+e^{H_{W(\{x+1,x+2\})}}\Big[\eta(x+3)(1-\eta(x+2))\tau_{x+1}w^+\\
&+\eta(x+2)(1-\eta(x+3))\tau_{x+1}w^-\Big]+e^{H_{\{x,x+1\}}}\tau_x w\Big\}\,,\nonumber
\end{align}
where $w, w^\pm$ are arbitrary non-negative functions such that $D(w)\cap \{0,1\}=\emptyset$ and $D(w^\pm)\cap W(\{0,1\})=\emptyset$.

\subsection{Reversible and gradient models in one dimension}
A problem of interest is to find models that are at the same time of gradient type and reversible. This problem is discussed in Section II.2.4 of \cite{Spo91}. In dimension $d>2$ it is difficult to find models satisfying the two conditions at the same time. In $d=1$ it has been proved in \cite{Na98} that this is always possible for any invariant finite range  Gibbs measure and nearest neighbours exchange dynamics with $\Sigma=\{0,1\}$. Let us show how it is possible to prove this fact with a simple argument for a special class of interactions. The interactions that we consider are of the form $J_A\prod_{x\in A}\eta(x)$ where $A$ are intervals of the lattice. The numbers $J_A$ are translational invariant and satisfy the relation $J_A=J_{\tau_xA}$. Consider the most general form of the reversible rates \eqref{genrev} and fix the arbitrary function as $w=1$. The instantaneous current is given by $j_\eta(x,x+1)=e^{H_{\{x,x+1\}}}\left[\eta(x)-\eta(x+1)\right]$. The instantaneous current is different from zero only on the left and right boundary of each cluster of particles, see definition \ref{def:cluster}. More precisely it is positively directed on the right boundary and negatively directed on the left one. Since the interactions are associated only to intervals and are translational invariant then $e^{H_{\mathfrak e}}$ when $\mathfrak e$ is a boundary edge of a cluster depends only on the size of the cluster. This means that for each cluster the sum of the instantaneous currents on the two boundary edges is identically zero and this implies that
$\sum_{x\in V_N}j_\eta(x,x+1)=0$ for any configuration $\eta$. By Theorem \ref{belteo}, this coincides with the gradient condition. We generated in this way
a gradient reversible dynamics for each Gibbs measure of this type.

\subsection{Glauber dynamics} \label{ss:Gladyn}
We discuss in detail the one dimensional case of a Glauber dynamics \eqref{eq:Glgen}. The higher dimensional cases can be discussed very similarly.
It is convenient to write the rates as
\begin{equation}\label{GG}
c_x(\eta)=\eta(x)c^-_x(\eta)+(1-\eta(x))c^+_x(\eta)\,.
\end{equation}
Decomposition \eqref{GG} identifies uniquely $c_x^\pm$ only if we require $D(c^\pm_x)\cap \{x\}=\emptyset$.

The reversible cycles of length 2 are of the type $C=(\eta,\eta^x,\eta)$.
We consider also minimal irreversible cycles involving two neighbouring sites $x$ and $x+1$ (see figure \ref{fig: cicli Gl1-d}).
\begin{figure}[]
\begin{mdframed}[style=fig]
\[
\entrymodifiers={+<0.5ex>[o][F*:black]}
\xymatrix@C=1.5cm@R=1.cm{
*+[o][F*:black]{}\ar@{.}[r]^{\textit{\normalsize $\mathfrak e$}}& \ar@{}[r]|{\textit{\large{$\rightarrow$}}} &*+[o][F*:black]{}\ar@{.}[r]^{\textit{\normalsize $\mathfrak e$}}&*+[o][F*:black]{} \ar@{}[r]|{\textit{\large{$\rightarrow$}}} &
  \ar@{.}[r]^{\textit{\normalsize $\mathfrak e$}}&*+[o][F*:black]{} \ar@{}[r]|{\textit{\large{$\rightarrow$}}} &
\ar@{.}[r]^{\textit{\normalsize $\mathfrak e$}}&}
 \]
 \vspace{0.01cm}
 \end{mdframed}
\caption{\small{An elementary irreversible cycle associated to the bond $\mathfrak e=\{x,x+1\}$. The configuration outside the window is frozen. The sequence following the arrows from the left to the right define a cycle in $\mathcal C^+_{\mathfrak e}$ and a cycle in $\mathcal C_{\mathfrak e}^-$ is generated going in the opposite direction.}}
\label{fig: cicli Gl1-d}
\end{figure}
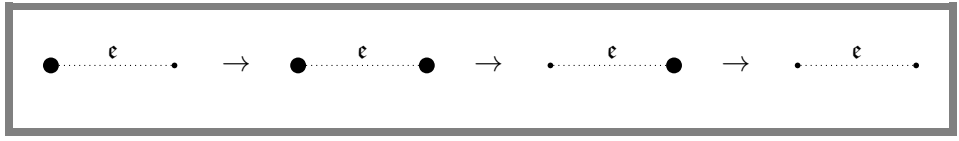
We fix the weights associated to the reversible cycles as $\rho(C):=e^{-H^*_{x^c}}\tau_xw$, where $w$ is any non-negative function
such that $D(w)\cap \{0\}=\emptyset$.
To fix the weights associated to the irreversible cycles consider $w^\pm$ non-negative functions
such that $D(w^\pm)\cap \{0,1\}=\emptyset$. We set
\begin{equation}\label{weights cycles Glauber 1-d}
\rho(C):=\left\{
\begin{array}{ll}
e^{-H^*_{\mathfrak e^c}}\tau_{\mathfrak e}w^+ & \textrm{if} \ C\in \mathcal C^+_{\mathfrak e}\,,\\
e^{-H^*_{\mathfrak e^c}}\tau_{\mathfrak e}w^- & \textrm{if} \ C\in \mathcal C^-_{\mathfrak e}\,.
\end{array}
\right.
\end{equation}
To the value of $Q(\eta,\eta^x)$ may contribute 4 irreversible cycles and a reversible one.
Two irreversible cycles are associated to the bond $\{x-1,x\}$ and two to the bond $\{x,x+1\}$.
When $\eta(x)=1$ we obtain
\begin{eqnarray}\label{g10}
Q^i(\eta,\eta^x)&=&\tau_x\left[e^{-H^*_{\{0,1\}^c}}\left(w^+\eta(1)+w^-(1-\eta(1)\right)\right]\nonumber \\
&+& \tau_{x-1}\left[e^{-H^*_{\{0,1\}^c}}\left(w^-\eta(0)+w^+(1-\eta(0)\right)\right]\,.
\end{eqnarray}
When $\eta(x)=0$ we obtain
\begin{eqnarray}\label{g01}
Q^i(\eta,\eta^x)&=&\tau_x\left[e^{-H^*_{\{0,1\}^c}}\left(w^-\eta(1)+w^+(1-\eta(1)\right)\right]\nonumber \\
&+& \tau_{x-1}\left[e^{-H^*_{\{0,1\}^c}}\left(w^+\eta(0)+w^-(1-\eta(0)\right)\right]\,.
\end{eqnarray}
For the reversible part we have in any case
\begin{equation}\label{gr}
Q^r(\eta,\eta^x)=\tau_x\left(we^{-H^*_{0^c}}\right)\,.
\end{equation}
We obtain that $c_x^\pm$ can be written as a sum of an irreversible part $c_x^{\pm,i}$ and of a reversible one $c_x^{\pm, r}$.
We have for the reversible part
\begin{equation}\label{arrit}
c_x^{r}=\tau_x\left(w e^{H_0}\right)\,.
\end{equation}
This is the form of the reversible rates
written in full generality. The Bernoulli case \cite{GJLV97} is recovered as a special case.
For the irreversible part we have to distinguish when $\eta(x)=1$ and when $\eta(x)=0$ using respectively
\eqref{g10} and \eqref{g01}. We obtain
\begin{eqnarray}\label{g01r}
c_x^{\pm,i}=&=&\tau_x\Big[e^{H_{\{0,1\}}}\left(w_\mp\eta(1)+w_\pm(1-\eta(1))\right)\Big]\nonumber \\
&+& \tau_{x-1}\Big[e^{H_{\{0,1\}}}\left(w_\pm\eta(0)+w_\mp(1-\eta(0))\right)\Big]\,.
\end{eqnarray}
A similar construction can be done also in dimension $d>1$.

\begin{remark}\label{r:MacRev}
If we fix the invariant measure as a Bernoulli measure of parameter $p$  the rates always satisfy the condition of macroscopic reversibility in \cite{GJLV97}
\begin{equation} \label{macroscopic rev cond}
\frac{\bb{E}_\lambda \left( c^-_x \right)}{\bb{E}_\lambda\left( c^+_x\right)}=\frac{1-p}{p}\,, \qquad \forall \lambda\in[0,1]\,,
\end{equation}
for the hydrodynamic scaling limit of a Glauber dynamics + exclusion process. In \eqref{macroscopic rev cond} $\mathbb E_\lambda$ denotes the expected value with respect to a Bernoulli measure of parameter $\lambda$ and $c^\pm_x$ are the one uniquely identified by \eqref{GG}
and the condition on the domain.
\end{remark}

\subsection{The general structure}
The constructions developed for some specific models can be summarized in a general form as follows.
Let $W\subseteq \mathbb T_N^2$ be a finite region of the lattice. Consider $C$ a finite cycle on the configurations space restricted
to the finite region. This means a graph having vertices $\Sigma^W$ and edges between pairs of configurations that can be obtained one from the other according to a local modification depending on the type of dynamics we are considering, for example by a jump of one particle from one site to a nearest neighbor. We have $C=(\eta^{(1)}_W,\dots,\eta^{(k)}_W)$ where we write explicitly that the configurations are restricted to $W$. We can associate to the single cycle $C$ several cycles on the full configuration space $\Sigma^{V_N}$. In particular for any configuration of particles $\xi_{W^c}$ outside of the region $W$, we have the cycle
$C[\xi]=\big(\xi_{W^c}\eta^{(1)}_W,\dots , \xi_{W^c}\eta^{(k)}_W\big)$. Since the cycle $C[\xi]$ is labeled by the configuration $\xi_{W^c}$ we can associate a weight $\rho(C)=g(\xi)$ where $D(g)\cap W=\emptyset$.
Starting from these cycles we can generate also other cycles by translations
$\tau_xC[\xi]=\big(\tau_x\left[\xi_{W^c}\eta^{(1)}_W\right],\dots , \tau_x\left[\xi_{W^c}\eta^{(k)}_W\right]\big)$ and to have translational
covariant models we give the weights in such a way that $g(\xi)=\rho(C[\xi])=\rho(\tau_xC[\xi])$. We obtain a divergence free flow as a superposition of all these cycles
\begin{equation}\label{acor}
Q=\sum_{x\in V_N}\sum_{\xi_{W^c}}g(\xi)Q_{\tau_xC[\xi]}\,.
\end{equation}
More generally we can consider a collection $C_1,\dots ,C_l$ of finite cycles on configurations restricted to finite regions $W_1,\dots ,W_l$ with weights associated to functions $g_1,\dots,g_l$ with suitable domains. Accordingly in \eqref{acor} there will be also a sum over the possible types of cycles. We consider for simplicity the case of one single cycle like in \eqref{acor}. Suppose that we need to determine the value of $Q(\eta,\eta^{x,x+e_{1}})$ corresponding to a jump of one particle from $x$ to $x+e_{1}$ in the configuration $\eta$. Let $\chi_1,\dots ,\chi_k$ be the characteristic functions associated to the configurations restricted to $W$ along the cycle, i.e.
$$
\chi_i(\eta):=\left\{
\begin{array}{ll}
1 & \textrm{if}\ \eta_W=\eta^{(i)}_W\,, \\
0 &  \textrm{otherwise}\,.
\end{array}
\right.
$$
Suppose that along the cycle $C$ there are jumps of particles along the direction $e_{1}$ at times $i_1,\dots ,i_n$ in the positions
$x_{1},\dots,x_{n}\in W$. This means that $\eta_W^{\left(i_{m}+1\right)}=\left(\eta_W^{\left(i_m\right)}\right)^{x_m,x_m+e_{1}}$, $m=1,\dots ,n$. We then have
\begin{equation}\label{prto}
Q(\eta,\eta^{x,x+e_{1}})=\sum_{m=1}^n\tau_{x-x_{m}}\left[\chi_{i_m}(\eta)g(\eta)\right]\,.
\end{equation}
Having a fixed invariant measure $\mu$, the rates can be determined using \eqref{qpi}. The functions $g$ can be chosen in such a way that the rates are local.

\subsection{Examples with global cycles}\label{ewgc}

There are simple and natural models that cannot be constructed using local cycles. This is immediately clear for example for the totally asymmetric simple exclusion process (TASEP) on a ring, since particles can move only in one direction. Consider for example the case when particles can move only anticlockwise. In this case an elementary cycle is obtained as follows. Starting from each configuration move the particles one by one anticlockwise in such a way that each particle will occupy at the end the position
of the first particle in front of it in the anticlockwise direction. Note that there are many possible cycles of this type depending on the order on which particles are moved. Across each edge of the ring there will be in any case only one particle jumping. It is not difficult to see by \eqref{qpi} that using these cycles it is possible to construct TASEP with Bernoulli invariant measures.

Another example of this type is  a continuous time version of the irreversible Glauber
dynamics in \cite{PSS17} for spins taking values $\pm 1$ on a ring. In this case elementary cycles can be naturally constructed in this way. We consider a given configuration and we flip the spins one by one starting from one single site and moving anticlockwise.
Going twice around the ring  we come back to the original configuration.

Both models can be understood also by the following symmetry argument.
Suppose that on a graph $(V,E)$ we have a flow $Q$ for which it is possible to find a bijection such that to any $(x,y)\in E$ we can associate a $(y',x)\in E$ and $Q(x,y)=Q(y',x)$. In this case the flow $Q$ is automatically divergence free since on each vertex the outgoing flow coincides with the incoming one. In the case of the TASEP from each configuration $\eta$, in the configuration space, the number of arrows exiting is equal to the number of arrows entering and coincides with the number of clusters on $\eta$. Giving the same weight to all the arrows corresponding to transitions for configurations having the same number of particles the bijection is automatically constructed. A similar more tricky construction can be done also for the model in \cite{PSS17}.

\section{Stationarity and orthogonality}\label{sec:sta&ort}

We discuss a different approach to the problem of constructing non reversible stationary non equilibrium states. This is a geometric construction based on the functional Hodge decomposition discussed in Section \ref{sec: HD}. The main idea is that it is possible to interpret the stationary equations as an orthogonality condition with respect to a suitable harmonic discrete vector field. Translational covariant discrete vector fields can be generated using the functional discrete Hodge decomposition. This interpretation can be given in general, however, to have a clearer view of the geometric construction, we discuss the simplified case
of a Bernoulli invariant measure. More general measures  (see Remark \ref{re:remarco}) and higher dimensions can be discussed as well.
We remember that $ (V_N, \mc E_N) $ and $ (V_N,E_N) $ are as at the begging of section \ref{sec:applications}, see also section \ref{sec:part models}.

\subsection{One-dimensional Exclusion process }\label{1dkb}
We start discussing possibly the simplest case, a one dimensional  Exclusion process (subsection \ref{ss:expr}) with a translational covariant rate of exchange given by $c_{\mathfrak e}(\eta)$. Of course we can write
the exchange rate as the sum of two jumps rates
\begin{equation}\label{e=s}
c_{\mathfrak e}(\eta)=c_{e^-,e^+}(\eta)+c_{e^+,e^-}(\eta)\,.
\end{equation}
We consider the case of the one dimensional ring. We can write $c_{x,x+1}(\eta)=\eta(x)(1-\eta(x+1))\tilde{c}_{x,x+1}(\eta)$ and $c_{x+1,x}(\eta)=\eta(x+1)(1-\eta(x))\tilde{c}_{x+1,x}(\eta)$ where  $D\left(\tilde{c}_{x,y}\right)\cap\{x,y\}=\emptyset$. The stationary condition can be written as
\begin{equation}\label{mauf}
\mu(\eta)\sum_{\mathfrak e\in \mathcal E_N}\left[c_{\mathfrak e}(\eta)-\frac{\mu(\eta^{\mathfrak e})}{\mu(\eta)}c_{\mathfrak e}(\eta^{\mathfrak e})\right]=0\,.
\end{equation}
Using \eqref{e=s} and the fact that $\mu$ is a Bernoulli measure so that $\frac{\mu(\eta^{\mathfrak e})}{\mu(\eta)}=1$, with some algebra we can show that \eqref{mauf} is equivalent to
\begin{equation}\label{equivuf}
\sum_{x\in V_N}\Big[\left(\eta(x+1)-\eta(x)\right)\left(\tilde{c}_{x+1,x}(\eta)-\tilde{c}_{x,x+1}(\eta)\right)\Big]=0\,.
\end{equation}
The expression inside squared parenthesis in \eqref{equivuf} is symmetric for the exchange $x\leftrightarrow x+1$. We can naturally interpret \eqref{equivuf} as the orthogonality condition
\begin{equation}\label{orti}
\langle \gamma_\eta , \mathbb I\rangle=0\,,
\end{equation}
where $\mathbb I$ is the harmonic discrete vector field defined by $\mathbb I (x,x+1)=1$ (and consequently $\mathbb I(x+1,x)=-1$) while $\gamma_\eta$ is the discrete vector field defined by
$$\gamma_\eta(x,x+1)=\Big(\eta(x+1)-\eta(x)\Big)\Big(\tilde{c}_{x+1,x}(\eta)-\tilde{c}_{x,x+1}(\eta)\Big)$$
(and consequently $\gamma_\eta(x+1,x)=-\gamma_\eta(x,x+1)$).
\begin{remark}\label{re:remarco}
In the case of a different invariant measure the stationary condition can be written again as \eqref{orti} where the discrete vector field $\gamma_\eta$ is given by
\begin{equation}\label{ggibbs}
\gamma_\eta(x,x+1):=\Big(\eta(x)- r_x(\eta)\eta(x+1)\Big)\left(\tilde c_{x,x+1}(\eta)-\frac{ \tilde c_{x+1,x}(\eta)}{r_x(\eta)}\right)\,,
\end{equation}
with $r_x(\eta):=\frac{\pi\left(\eta_{\{x,x+1\}^c}1_x0_{x+1}\right)}{\pi\left(\eta_{\{x,x+1\}^c}0_x1_{x+1}\right)}$. Also in this case $\gamma_\eta$ is translational covariant and assuming that $r_x(\eta)$ is local (as it is for a finite range Gibbs measure) we can proceed
similarly to the Bernoulli case.
\end{remark}
The vector field $\gamma_\eta$ is translational covariant and can be decomposed like \eqref{imbr}. The orthogonality condition \eqref{orti} can be satisfied if and only if the harmonic part $C$ in the decomposition \eqref{imbr} is identically zero and we get that the stationary condition is satisfied if and only if there exists a function $h$ such that
\begin{equation}\label{non}
\gamma_\eta(x,x+1)=(\eta(x+1)-\eta(x))\left(\tilde{c}_{x+1,x}(\eta)-\tilde{c}_{x,x+1}(\eta)\right)=\tau_{x+1}h(\eta)-\tau_xh(\eta)\,.
\end{equation}
To solve the above equation we have to consider a function $h$ such that the right hand side of \eqref{non} is zero when
$\eta(x)=\eta(x+1)$ since the left hand side is clearly zero in this case. Moreover when $\eta(x)\neq \eta(x+1)$, we have to impose that
\begin{equation}\label{non+}
\frac{\tau_{x+1}h(\eta)-\tau_xh(\eta)}{\eta(x+1)-\eta(x)}
\end{equation}
is a function invariant under the exchange of the values of $\eta(x)$ and $\eta(x+1)$. This is because by \eqref{non} we have that \eqref{non+}
has to coincide with $\tilde{c}_{x+1,x}(\eta)-\tilde{c}_{x,x+1}(\eta)$ that does not depend on $\eta(x),\eta(x+1)$. If we fix an $h$
that satisfies these constraints, we obtain by \eqref{non} the rates $\tilde c$.

A first possibility is to fix $h(\eta)=\eta(0)C(\eta)$ where $C(\eta)$ is a translational invariant function.
In this way we get $\tau_{x+1}h(\eta)-\tau_xh(\eta)=(\eta(x+1)-\eta(x))C(\eta)$ and we have
$$
\tilde{c}_{x+1,x}(\eta)-\tilde{c}_{x,x+1}(\eta)=C(\eta)\,.
$$
Since the left hand side does not depend on $\eta(x)$ and $\eta(x+1)$ and $C$ is translational invariant, the only possibility is that
$C$ is a constant function.
All the non negative solutions $X,Y$ of the equation $X-Y=A$ are given by
\begin{equation}\label{posol}
\left\{
\begin{array}{l}
X=[A]_++S\,,\\
Y=[-A]_++S\,,
\end{array}
\right.
\end{equation}
where $[\cdot]_+$ denotes the positive part and $S\geq 0$ is arbitrary.
We obtain that the general solution in this case is
\begin{equation}\label{elnc}
\left\{
\begin{array}{l}
\tilde c_{x+1,x}(\eta)=\Big[C\Big]_++\tau_xs(\eta)\,, \\
\tilde c_{x,x+1}(\eta)=\Big[-C\Big]_++\tau_xs(\eta)\,,
\end{array}
\right.
\end{equation}
where $s$ is an arbitrary non negative function such that $D(s)\cap\{0,1\}=\emptyset$. In formula \eqref{elnc} the additive part involving the function $s$ corresponds to the reversible part while the remaining part corresponds to the irreversible part. The asymmetric exclusion process is obtained as a special case.
The corresponding decomposition into cycles contains necessarily
global cycles as discussed in subsection \ref{ewgc}.

Another possibility is to consider a function $h$ of the form $h=\tau_1 g+g$. Then we have
\begin{equation}\label{grgh}
\tau_{x+1}h-\tau_xh=\tau_{x+2}g-\tau_xg\,.
\end{equation}
A very general class  of functions $h$ that can be used in \eqref{non} is then obtained by $h=g+\tau_1g$ where
\begin{equation}\label{trd}
g(\eta)=\big(\eta(-1)-\eta(-2)\big)\big(\eta(1)-\eta(0)\big)\tilde g(\eta)
\end{equation}
and $\tilde g$ is any function such that $D(\tilde g)\cap W(\{-1,0\})=\emptyset$, where we recall that the symbol $W(\mathfrak e)$ has been defined just above \eqref{weights cycles kawa 1-d}.

We obtain for our class of functions $h$ a general form of the rates given by
\begin{equation}\label{eln}
\left\{
\begin{array}{l}
\tilde c_{x+1,x}(\eta)=\Big[(\eta(x+3)-\eta(x+2))\tau_{x+2}\tilde g-(\eta(x-1)-\eta(x-2))\tau_x\tilde g\Big]_++\tau_xs(\eta)\,, \\
\tilde c_{x,x+1}(\eta)=\Big[(\eta(x-1)-\eta(x-2))\tau_x\tilde g-(\eta(x+3)-\eta(x+2))\tau_{x+2}\tilde g\Big]_++\tau_xs(\eta)\,,
\end{array}
\right.
\end{equation}
where $s$ is an arbitrary non negative function such that $D(s)\cap\{0,1\}=\emptyset$. In formula \eqref{eln} the additive part involving the function $s$ corresponds to the reversible part while the part involving the function $\tilde g$ corresponds to the irreversible part.

To compare the models obtained in \eqref{eln} with the models obtained in subsection \ref{ss:odex} we have first of all
to recall that here we are considering product invariant measures so that in formulas \eqref{euna} and \eqref{edue}
the value of the energy is constant and can be incorporated into the arbitrary functions. If we fix $\tau_1 \tilde g=w^+-w^-$ we obtain that the rates obtained in subsection \ref{ss:odex} are a subset of the models defined by \eqref{eln}. This is obtained observing that
in both cases we will have
\begin{equation}\label{pes}
\tilde{c}_{x+1,x}-\tilde{c}_{x,x+1}=\big(\eta(x+3)-\eta(x+2)\big)\tau_{x+2}\tilde g+
\big(\eta(x-2)-\eta(x-1)\big)\tau_{x}\tilde g\,.
\end{equation}
Formula \eqref{eln} gives the most general positive solution to \eqref{pes} while instead this is not the case for the rates of
subsection \ref{ss:odex} (note for example that selecting $s=0$ in \eqref{eln} we obtain rates such that $\min\left\{\tilde{c}_{x+1,x},\tilde{c}_{x,x+1}\right\}=0$ while this is not always possible for the rates in subsection \ref{ss:odex}).
This means that any model constructed with cycles like in ection \ref{ss:odex} can be obtained by \eqref{eln} for some $\tilde g$ and $s$;
however among the rates defined by \eqref{eln} there are some models for which the typical current cannot be decomposed
by cycles of length two and by cycles like in Figure 3.

We stress again that in this case, as well as in the followings, we obtain very general families
of models parametrized by arbitrary functions. Considering simple cylindric functions $\tilde g$ and $s$ it is
possible to obtain simple and completely explicit models.

\subsection{One-dimensional Glauber dynamics}\label{ss:Glort1}

The stationary condition for a Glauber dynamics (subsection \ref{ss:Gdyn}) on the one dimensional torus is
\begin{equation}\label{stazb}
\mu(\eta)\sum_{x\in V_N}\left[c_x(\eta)-\frac{\mu(\eta^x)}{\mu(\eta)}c_x(\eta^x)\right]=0\,.
\end{equation}
We consider translational covariant rates that can be written as
\begin{equation}\label{rate}
c_x(\eta)=\eta(x)\tau_xc^-(\eta)+(1-\eta(x))\tau_xc^+(\eta)\,,
\end{equation}
where $D\left(c^\pm\right)\cap\{0\}=\emptyset$. We consider the case of
Bernoulli invariant measures of parameter $p$. The stationary condition \eqref{stazb} is equivalent to
\begin{eqnarray}\label{ste1}
& &\sum_{x\in V_N}\left(\eta(x)\tau_xc^-(\eta)+(1-\eta(x))\tau_xc^+(\eta)\right)\nonumber \\
& &=\sum_{x\in V_N}\left(\eta(x)\frac{(1-p)}{p}\tau_xc^+(\eta)+(1-\eta(x))\frac{p}{(1-p)}\tau_xc^-(\eta)\right)=0\,,
\end{eqnarray}
that with some algebra becomes
\begin{equation}\label{step3}
\sum_{x\in V_N}\left[\left(1-\frac{\eta(x)}{p}\right)\left(\tau_xc^+(\eta)-\frac{p}{(1-p)}\tau_xc^-(\eta)\right)\right]=0\,.
\end{equation}
As before we can naturally interpret \eqref{step3} as the orthogonality condition $\langle\vphi_\eta,\mathbb I\rangle=0$ where the translational covariant vector field $\vphi_\eta$ is defined by
\begin{equation}\label{deffi}
\vphi_\eta(x,x+1):=\left(1-\frac{\eta(x)}{p}\right)\left(\tau_xc^+(\eta)-\frac{p}{(1-p)}\tau_xc^-(\eta)\right)\,,
\end{equation}
setting $\vphi_\eta(x+1,x)=-\vphi_\eta(x,x+1)$ by antisymmetry. Since we are in 1 dimension we have that \eqref{step3} holds
if and only if there exists a function $h$ such that
$$
\vphi_\eta(x,x+1)=\tau_{x+1}h(\eta)-\tau_x h(\eta)\,.
$$
By translational covariance this relation is equivalent to
\begin{equation}\label{unasola}
c^+(\eta)-\frac{p}{(1-p)}c^-(\eta)=\frac{\tau_{1}h(\eta)- h(\eta)}{\left(1-\frac{\eta(0)}{p}\right)}\,.
\end{equation}
An important fact to observe is that the left hand side of \eqref{unasola} does not depend on $\eta(0)$ and then this must be true also for the right hand side. We have a general family of functions satisfying this constraint that is given by
\begin{equation}\label{genht}
h(\eta)= \left(1-\frac{\eta(-1)}{p}\right)\left(1-\frac{\eta(0)}{p}\right)\tilde{h}(\eta)\,,
\end{equation}
where $\tilde h$ is an arbitrary function such that $D(\tilde h)\cap \{-1,0\}=\emptyset$.
The other constraint that has to be satisfied is that the rates are nonnegative functions and this is obtained considering positive solutions of \eqref{unasola} by using \eqref{posol}. We obtain
\begin{equation}\label{formratesg1}
\left\{
\begin{array}{l}
c^+(\eta)=\left[\left(1-\frac{\eta(1)}{p}\right)\tau_1\tilde h-\left(1-\frac{\eta(-1)}{p}\right)\tilde h\right]_++s(\eta)\,,\\
c^-(\eta)=\frac{1-p}{p}\left(\left[\left(1-\frac{\eta(-1)}{p}\right)\tilde h-\left(1-\frac{\eta(1)}{p}\right)\tau_1\tilde h\right]_++s(\eta)\right)\,,
\end{array}
\right.
\end{equation}
where $s$ is an arbitrary non negative function such that $D(s)\cap \{0\}=\emptyset$.
Again the part with $s$ is the reversible contribution.

\subsection{Two dimensional exclusion process }

In two dimensions the stationary condition under the hypothesis of Bernoulli invariant measure is analogous to \eqref{mauf}. We have indeed
\begin{equation}\label{dentino}
\sum_{x\in V_N}\sum_{i=1,2}\left[\left(\eta\left(x+e_{i}\right)-\eta(x)\right)\left(\tilde{c}_{x+e_{i},x}(\eta)-\tilde{c}_{x,x+e_{i}}
(\eta)\right)\right]=0\,,
\end{equation}
that can be interpreted as $\langle\gamma_\eta,\mathbb I\rangle=0$ where the translational covariant vector field $\gamma_\eta$ is given by \begin{equation}\label{kkk}
\gamma_\eta\left(x,x+e_{i}\right)=\left(\eta\left(x+e_{i}\right)-\eta(x)\right)\left(\tilde{c}_{x+e_{i},x}(\eta)
-\tilde{c}_{x,x+e_{i}}(\eta)\right)\,, \qquad i=1,2\,.
\end{equation}
The values of $\gamma_\eta\left(x+e_{i},x\right)$ are fixed by antisymmetry. The orthogonality condition is satisfied if and only if
for any fixed $\eta$ the vector $\gamma_\eta$ is obtained as a linear combination of a vector field in $d \Omega^0$,
a vector in $\delta\Omega^2$ and the vector $\vphi^{(1)}-\vphi^{(2)}$. Since the vector filed $\gamma_\eta$ is translational covariant
there exist functions $h,g$ and a real number $\omega$ such that
\begin{equation}\label{hodgefun22}
\gamma_\eta(e)=\big[\tau_{e^+}h(\eta)-\tau_{e^-}h(\eta)\big]+\big[\tau_{\mathfrak f^+(e)}g(\eta)-\tau_{\mathfrak f^-(e)}g(\eta)\big]\pm \omega\,,
\end{equation}
where the sign $\pm$ has to be fixed as $+$ if $e=(x,x+e_{1})$  or $e=(x,x-e_{2})$ for some $x$ and has to be fixed as $-$
in the remaining cases.
We discuss two cases. In both cases the orthogonality with the vector $\mathbb I$ is verified but the splitting \eqref{hodgefun22} is not trivial.

\smallskip

The first case is as follows. Let $k(\eta)$ and $v(\eta)$ be two function having a structure like in Section \ref{1dkb} but respectively along the two directions of the plane. More precisely let $\tilde k$ be a function
such that $D\left(\tilde k\right)\subseteq \{je_{1}\,,\,j=1,\dots ,N\}$ and $D\left(\tilde k\right)\cap \{-2e_{1}, -e_{1}, 0, e_{1}\}=\emptyset$. Define then
$$
k(\eta):=\left[\eta\left(-e_{1}\right)-\eta\left(-2e_{1}\right)\right]
\left[\eta\left(e_{1}\right)-\eta\left(0\right)\right]\tilde k(\eta)\,.
$$
Likewise let $\tilde v$ be a function
such that $D\left(\tilde v\right)\subseteq \{je_{2}\,,\,j=1,\dots ,N\}$ and $D\left(\tilde v\right)\cap \{-2e_{2}, e_{2}, 0, e_{2}\}=\emptyset$. Define then
$$
v(\eta):=\left[\eta\left(-e_{2}\right)-\eta\left(-2e_{2}\right)\right]
\left[\eta\left(e_{2}\right)-\eta\left(0\right)\right]\tilde v(\eta)\,.
$$
We define $h^1(\eta):=k(\eta)+\tau_{e_{1}}k(\eta)$ and  $h^2(\eta):=v(\eta)+\tau_{e_{2}}v(\eta)$.
Finally we define
\begin{equation}\label{terr}
\left\{
\begin{array}{l}
\gamma_\eta(x,x+e_{1})=\tau_{x+e_{1}}h^1-\tau_x h^1\,,\\
\gamma_\eta(x,x+e_{2})=\tau_{x+e_{2}}h^2-\tau_x h^2\,.
\end{array}
\right.
\end{equation}
The orthogonality $\langle\gamma_\eta,\mathbb I\rangle=0$ follows by
$$
\langle\gamma_\eta,\vphi^{(1)}\rangle=\langle\gamma_\eta,\vphi^{(2)}\rangle=0\,,
$$
obtained like in the one dimensional case. Inserting \eqref{terr} in the left hand side of \eqref{kkk} we obtain
the rates $\tilde c$ like in Section \ref{1dkb}.

\smallskip

The second case that it is possible to describe is the following. Let $b$ be a function such that $D(b)\cap \left\{0,e_{1},e_{2},e_{1}+e_{2}\right\}=\emptyset$. Consider also $c^{(i)}$, $i=1,2$ two arbitrary constants. We define
\begin{equation}\label{clml}
\tilde{c}_{x,y}(\eta)
-\tilde{c}_{y,x}(\eta)=\left[\tau_{\mathfrak f^+(x,y)}b(\eta)-\tau_{\mathfrak f^-(x,y)}b(\eta)\right]\pm c^{(i)}\,,
\end{equation}
when $y=x\pm e_{i}$.
By the property of $D(b)$ it is possible to obtain by \eqref{clml} the rates $\tilde c$ using again \eqref{posol}.
According to \eqref{clml} if we define the vector field $\psi_\eta(x,y):=\tilde{c}_{x,y}(\eta)
-\tilde{c}_{y,x}(\eta)$, we have $\psi_\eta\in\delta\Omega^2\oplus\Omega^1_H$ for any $\eta$. We define also the vector field
$\vphi_\eta(x,y):=\eta(y)-\eta(x)$ and for any $\eta$ we have $\vphi_\eta\in d \Omega^0$. Since the stationary conditions
\eqref{dentino} can also naturally be interpreted as $\langle\vphi_\eta,\psi_\eta\rangle=0$ and the two discrete vector fields belong to orthogonal subspaces the stationarity conditions are automatically satisfied. 

Another possible construction similar to \eqref{clml} is the following.  Let $b$ be a function such that $D(b)\cap \left\{0,e_{1},e_{2},e_{1}+e_{2}\right\}=\emptyset$. We define
\begin{equation}\label{clml2}
\tilde{c}_{x,y}(\eta)
-\tilde{c}_{y,x}(\eta)=\tau_{\mathfrak f^+(x,y)}b(\eta)-\tau_{\mathfrak f^-(x,y)}b(\eta)\,.
\end{equation}
By the property of $D(b)$ it is possible to obtain by \eqref{clml2} the rates $\tilde c$ using again \eqref{posol}.
It remains to show that $\gamma_\eta$ in \eqref{kkk} is orthogonal to $\mathbb I$. Once again $\gamma$ is not purely a gradient or a curl and the splitting \eqref{hodgefun2} is not trivial. The proof of the orthogonality is obtained by the following argument. Given any configuration of particles $\eta$ the 2 dimensional torus is subdivided into the disjoint clusters $\mathfrak C(\eta)$ of particles (see definition \ref{def:cluster}). To any configuration of clusters it is associated in the dual graph a collection of contours \cite{Ga99}. We consider a contour as a closed oriented cycle. The oriented contours are defined as follows. If $\eta(x)=1$ and $\eta(y)=0$ we add to the contours the dual oriented edge obtained rotating $(x,y)$ anticlockwise of $\frac \pi 2$ around its middle point.  Denoting by $\mathcal C^*_\eta$ the oriented contours associated to the configuration $\eta$, we have
\begin{equation}\label{4geko}
\langle\gamma_\eta,\mathbb I\rangle=\sum_{C^*\in \mathcal C^*_\eta}\sum_{ e^*\in C^*}\left(\tau_{e_*^+}b-\tau_{e_*^-}b\right)=0\,,
\end{equation}
where $\tau_{x^*}=\tau_{\mathfrak f}$ if $\mathfrak f$ is the face associated to the dual vertex $x^*$.
The last equality in \eqref{4geko} follows by the fact that the sum on every single oriented contour of the telescopic sum is identically zero.

\subsection{Two dimensional Glauber dynamics}\label{ss:Glort2}
We consider now the two dimensional Glauber dynamics but the same construction can be done also in higher dimensions. The stationary condition for a Bernoulli invariant measure of parameter $p$ can be written again like \eqref{step3}
\begin{equation}\label{step32}
\sum_{x\in V_N}\left[\left(1-\frac{\eta(x)}{p}\right)\left(\tau_xc^+(\eta)-\frac{p}{(1-p)}\tau_xc^-(\eta)\right)\right]=0\,.
\end{equation}
This expression is of the form $\sum_{x\in V_N}\tau_x g(\eta)=0$ where the function $g$ is given by
\begin{equation}\label{carsoli}
g(\eta)=\left(1-\frac{\eta(0)}{p}\right)\left(c^+(\eta)-\frac{p}{(1-p)}c^-(\eta)\right)\,.
\end{equation}
Given a $\tilde g\in \Omega^0$ such that $\sum_{x\in V_N}\tilde g(x)=0$ then there exists a $\vphi\in \Omega^1$ such that
$\tilde g(x)=\div \vphi(x)$. This fact can be proved in several ways, for example considering a $\vphi$ of gradient type
$\vphi(x,y)=h(y)-h(x)$ we have that $h$ has to satisfy the discrete Poisson equation $\Delta h:=\div dh=\tilde g$ that has always a solution.
In our case we have that $\tilde g(x)=\tilde g(\eta,x)$ depends on $\eta$ so that
there exists an $\eta$ dependent discrete vector field $\vphi_\eta$ such that $\div \vphi_\eta(\cdot)=\tilde g(\eta,\cdot)$. Since $\tilde g(\eta, x)=\tau_xg(\eta)$ is translational covariant, with the same arguments as in Theorem \ref{belteo2} we can prove that
the $\eta$ dependent discrete vector field $\vphi_\eta$ can be chosen translational covariant also. The most general translational covariant discrete vector field has the following structure. Let $u(\eta), v(\eta)$ be two function then we define
\begin{equation}\label{dvfc}
\left\{
\begin{array}{l}
\vphi_\eta(x,x+e_{1})=\tau_x u(\eta)\,,  \\
\vphi_\eta(x,x+e_{2})=\tau_x v(\eta)\,,
\end{array}
\right.
\end{equation}
and this is the most general translational covariant $\eta$ dependent discrete vector field.
We obtain that the stationary condition is satisfied if and only if there exists two functions $u,v$ such that
\begin{equation}\label{quest}
g(\eta)= \div \vphi_\eta=u(\eta)-\tau_{-e_{1}}u(\eta)+v(\eta)-\tau_{-e_{2}}v(\eta)\,,
\end{equation}
where $g$ is given by \eqref{carsoli}.
As in the one dimensional case to solve the equation \eqref{quest} we have to fix the functions $u,v$ in such a way that
$$
\frac{u(\eta)-\tau_{-e_{1}}u(\eta)+v(\eta)-\tau_{-e_{2}}v(\eta)}{\left(1-\frac{\eta(0)}{p}\right)}
$$
has a domain $D$ such that $D\cap \{0\}=\emptyset$. A general family that satisfies this constraint is given by
$$
\left\{
\begin{array}{l}
u(\eta)=\left(1-\frac{\eta(0)}{p}\right)\left(1-\frac{\eta\left(e_{1}\right)}{p}\right)\tilde u(\eta)\,,\\
v(\eta)=\left(1-\frac{\eta(0)}{p}\right)\left(1-\frac{\eta\left(e_{2}\right)}{p}\right)\tilde v(\eta)\,,
\end{array}
\right.
$$
where $D\left(\tilde u\right)\cap \{0,e_{1}\}=\emptyset$ and $D\left(\tilde v\right)\cap \{0,e_{2}\}=\emptyset$.
Under these assumptions we can find the positive solutions $c^\pm$ using again \eqref{posol}. Writing only the non reversible contribution we have that $c^+(\eta)$ is equal to
\begin{align}\label{uuu}
&\left[\left(1-\frac{\eta\left(e_{1}\right)}{p}\right)\tilde u-\left(1-\frac{\eta\left(-e_{1}\right)}{p}\right)\tau_{-e_{1}}\tilde u\right.\nonumber \\
&\left.+\left(1-\frac{\eta\left(e_{2}\right)}{p}\right)\tilde v-\left(1-\frac{\eta\left(-e_{2}\right)}{p}\right)\tau_{-e_{2}}\tilde v\right]_+
\end{align}
while $c^-(\eta)$ is given by $c^-(\eta)=\frac{1-p}{p}\left[-i(\eta)\right]_+$, where $i(\eta)$ is the function appearing inside the positive part in \eqref{uuu}.

\bigskip

In appendix \ref{a:mic} the construction of invariant Glauber dynamics is done using a functional Hodge decomposition for vertex functions.

\facciatabianca

\cleardoublepage
\pagestyle{fancy}
\fancyhf{}
\rhead{}
\lhead{}
\cfoot{ \footnotesize \thepage}
\renewcommand{\headrulewidth}{0pt}

{\sffamily\part{Macroscopic theory}\label{p:mact}}

\section*{Why a macroscopic theory?}

At the very begging of Part I we anticipated that the stochastic models of chapter \ref{ch:IPSmodels} manifest on large scale deterministic behaviours when considered at a macroscopic space scale, that is when the separation lattice scale $ 1/N $ goes to zero. Remember that, at the beginning of section \ref{sec:CTMC}, we distinguished for the discrete domain $ V_N $  a microscopic space scale $ 1 $ and a macroscopic space scale $ 1/N $ . These deterministic patterns are called \e{scaling limits} and are the main motivation of interest for the probabilistic models we presented. These patterns are for example some autonomous equations (PDEs) governing the dynamics of a  macroscopic observable (e.g. mass density, energy density, electric charge, currents) or thermodynamics functionals (e.g. functionals describing fluctuations).  In this second part we will explain how in these probabilistic models from a random motion on the small scale a deterministic motion on the large scale emerges and we will study their stationary states from a large deviations point of view. A remarkable property is the fact that on macroscopic level the microscopic interactions appear only indirectly through the thermodynamics (e.g. equation state, mathematical property of the thermodynamic functionals) and the transport coefficients of the  PDE obtained in the scaling limit.

\vspace{0.5cm}

A macroscopic theory of nonequilibrium systems has of course as goal that of establishing a nonequilibrium thermodynamics, namely a description of the laws governing the evolution of observable quantities and understanding which are the thermodynamic functionals (read also potentials) describing the  nonequilibrium thermodynamic transformations.
In physical systems out of equilibrium there is neither a  rigorous derivation of the Fourier law (and then of the heat equation) \cite{BLR00}
nor a mathematical description of the thermodynamic potentials. This is due to a lack of a theory for the ensembles in nonequilibrium: because of the presence of non-conservative terms (i.e. viscous terms) in the microscopic Newton's equations of the matter it is no longer possible to make the same hypothesis leading to the ensembles of (equilibrium) statistical mechanics, as discussed in the introduction of part \ref{p:mt}.

Even though stochastic interacting systems  like the ones of chapter \ref{ch:IPSmodels} are very simple compared to real physical systems, their  interactions are enough interesting  to lead to rigorous results for complex phenomena in both these directions, namely we can prove  hydrodynamic equations describing physical phenomena (for example diffusive equations) and computing explicitly their large deviations functionals.  So the scale at which we are interested is a macroscopic scale  where the local randomness can still manifest in fluctuations from the  macroscopic behaviour of the hydrodynamics equation but  microscopic details remain hidden. At this scale an important property to explain is the \e{local equilibrium} property. Let's suppose for example to have a gas ( order of $ 10^{23} $ degree of freedom)  confined in a volume $ \Lambda $ with an initial configuration far away from a stationary state. If we consider a microscopically large but macroscopically small volume\footnote{ For each point $ x $ in $ V_N $, we denote by $ B_x $ a small neighbourhood of $ x $ such that to be small if
compared to the total volume $ \Lambda$, but large enough if compared to the intermolecular
distance. Hence we can  assume that each neighbourhood contains an infinite number
of particles. } $ B_x $ around the point $ x\in V_N $,
due to the "strong" interaction among the molecules, it is natural to
believe that the system reaches very rapidly a local equilibrium state, i.e. in
each  $ B_x $ the system is close to an equilibrium state  characterized  by a small number of macroscopic quantities
$ \boldsymbol{\rho}(x) =(\rho_1(x),\dots,\rho_m(x)) $, called thermodynamics characteristics. This is a \e{local equilibrium}. The parameters $ \rho_i  $ can be the temperature,
the density, the pressure, etc. . For our purpose we consider only one index $ i $, so $ \boldsymbol{\rho}(x)=\rho(x) $. At a later time $ t $, we expect   to observe  in a
small neighbourhood $ B_x $ a state close to a new local equilibrium state characterized by
 a space-time parameter $  \rho(t, x) $. This  parameter $   \rho(t, x) $ is expected to evolve  in time according to a partial
differential equation. For our stochastic interacting models this  corresponds to the \e{hydrodynamic equation}, namely a deterministic behaviour of the empirical measure \eqref{eq:empmisxi} in the sense \eqref{eq:wconKMP}. From this discussion about the local equilibrium, it seems quite natural  to define (as it was for space)  a macroscopic time scale $ t $ at which  matter/energy is transported ( i.e. the parameter $  \rho(t,x) $ evolves) and a microscopic scale $ \theta_N t $ (with $ \theta_N>0 $ diverging with $ N $) at which particles interact many times before to give an observable effect, namely when at the scale time $ t $ the local equilibrium reaches another state for $ \rho(x,t) $.
So we could call this phenomenological macroscopic scale  also the \e{local equilibrium scale}. Its mathematical meaning is that  also out of equilibrium we can compute integrals of local functions as integrals of expectations   with respect to a stationary measure parametrized by a local equilibrium density in $ B_x $. Usually  the stationary measure characterizing the local equilibrium is  the  (product)  reversible measure of the systems at equilibrium, so before investigating the evolution of the systems out of equilibrium the state  at equilibrium has to be known. 

For the interacting models of chapter \ref{ch:IPSmodels} with  a diffuse hydrodynamics equation, at  this macroscopic scale of the local equilibrium is possible to formulate\cite{BDeSGJ-LL15} a fluctuation theory called \e{macroscopic fluctuation theory} (MFT) around the deterministic large scale motion as we will explain in section \ref{sec:Stldt}. In general the computation of the invariant measure is a difficult problem, but  this theory describes indirectly the invariant measure  of our microscopic models with a variational method to compute the large deviations functional for the probability that   the empirical measure \eqref{eq:empmisxi} (having the meaning of the macroscopic quantity  $ \rho$) stays close to an atypical profile, namely a profile different from the typical one given by the stationary solution of the hydrodynamics equation for $ \rho$.  The macroscopic fluctuation theory generalise to nonequilibrium the near equilibrium Onsager's fluctuation  theory  in the context of diffusive systems in contact with boundary reservoirs and possibly in presence of an external field. The boundary reservoirs represent the environment and are characterized   by their chemical potentials, in addition they are assumed to be much larger than the systems so that  their state is constant in time and they fix  the observable $ \rho $ at the boundaries of $ V_N $ in such a way that no fluctuations can be observed at the boundary. This kind of boundaries are idealized with fast dynamics like in  \eqref{eq:bKMPgen} with a chemical potential $ \lambda(x) $ for $ x\in \p V_N $ that fixes the value of $ \rho $ at this point as $ \rho(x,t)=|\lambda(x)| $. With respect to the Onsager's, theory MFT is  a far away  equilibrium theory: it allows to consider nonlinear hydrodynamics equations with fixed arbitrary different boundary conditions  and the presence of an external driving force (like potential difference) that doesn't have to be assumed small. Fluctuations described by the MFT in general are not Gaussian and the main source of new phenomena is the non-linearity of the underlying evolution equations. 

Nonequilibrium is characterised by the presence of currents flowing trough the systems in addition to the usual thermodynamics variables.  The MFT is based on the following formula for the probability of joint space-time fluctuations $ \rho $  and \rt{ $j $ } (in dimension \e n)
\begin{equation}\label{eq:MFTformula}
\bb{P}(\pi_N(\xi_t)\sim\rho)\simeq\exp\left\{-N^n\frac{1}{\k T_e}\frac 14\int dt\int dx\,(j-J(\rho))\cdot (\sigma(\rho))^{-1}(j-J(\rho))\right\},
\end{equation}  
where $ T_e $ is the environment temperature, $ \rho $  the thermodynamic variable, $ \sigma(\rho) $ the mobility, $ j $ the actual value of the current such that $ \p_t\rho+\div  j=0 $ and $ J(\rho) $ the hydrodynamic current for the given value $ \rho $, that is $ J(\rho) $ is the current describing the hydrodynamic limit $ \p_t\rho+ \div J(\rho)=0$ of the stochastic microscopic model. The probability $ \bb P  $  represents the ensemble over the space-time trajectories, for our stochastic interacting systems is over the trajectories of the empirical measure. Similarly for the autonomous equations describing the macroscopic evolution law of observables, this formula can not be derived from the real interactions between molecules and their Newton's equations because of the limits  of nowadays mathematical theory. The exponent in \eqref{eq:MFTformula} is interpreted as a functional proportional to the energy dissipated by the extra current $ j-J(\rho) $ as  follows. Given a fluctuation $ \rho $ we perturb the original  system by adding an external field $ F $ (like $ \bb F $ in subsection \ref{subsec:WAm} for our probabilistic microscopical models) for which the prescribed fluctuation becomes the trajectory followed by the system, namely we modify the hydrodynamics in a suitable way. For a fluctuation $ (\rho,j) $ it is  $ F= (\sigma(\rho))^{-1} (j-J(\rho)) $ and the exponent becomes proportional to $ \int dt\int dx F\cdot \sigma(\rho)F $. For example in the case of an electric circuit  $ \sigma^{-1} $ is the resistance, $ F $  the eletric field and the integral in \eqref{eq:MFTformula} is the energy dissipated by the extra current $ j-J(\rho) $ according to the Ohm's law. Moreover observing only the fluctuations of the density $ \rho $, the optimal perturbation (in the sense of probabilistic cost of the fluctuation) will be of the form $ F=-\nabla H $ for a smooth field $ H $, see section IV in \cite{BDeSGJ-LL15}. From \eqref{eq:MFTformula} is derived the  variational method already mentioned in the last paragraph, which  express the probability $ \bb P_{\mu_N} $ of the fluctuations of the stationary ensemble, that is when the distribution of the lattice variables are characterized by their invariant measure $  \mu_N$ and the fluctuation $ \rho $ is not the stationary solution $ \bar \rho $ of the hydrodynamics equation.  This  method is a dynamical variational principle that leads to the definition of a  functional $ V(\rho) $ estimating the cost of a spatial  fluctuation $ \rho(x) $ for the empirical measure and called \e{quasi-potential}, i.e. 
\begin{equation}\label{eq:qpformula}
\bb P_{\mu_N}(\pi_N(\xi)\sim\rho)\simeq \exp\{-N^n V(\rho) \}
\end{equation}  
where $ V(\rho) $ is defined in \eqref{eq:QP-MFT}. This formula \eqref{eq:qpformula} generalise (see section I of \cite{BDeSGJ-LL15}) the Boltzmann-Einstein formula and $ V(\rho) $ is an extension to non-equilibrium of the availability of classic thermodynamics which coincides with Gibbs free energy when the external pressure $ p_e $ and temperature $ T_e $ are in equilibrium with the ones of the system in contact with the environment, see  chapter 7 in \cite{Pi66}. 
In non-equilibrium the quasi-potential $ V(\rho) $ shows properties of diffusive systems that are not possible in equilibrium. For example, in models with external fields and inhomogeneous boundary conditions, there are singularities with the meaning of  nonequilibrium phase transitions,  see \cite{BDeSGJ-LL10} and later section \ref{sec:opt}. Furthermore \eqref{eq:qpformula} predicts in general non-local rate functional, namely long range correlations in stationary nonequilibrium states.
An overview  of the MFT is given here in sections \ref{sec:dLD} and \ref{sec:Stldt}.

\section*{Macroscopic results: overview}
In the case of diffusive models of stochastic lattice gases, the continuum limit of this class of models, under a diffusive space-time scaling, is given by a nonlinear diffusion equation of the form
\begin{equation}\label{idr}
\partial_t\rho=\nabla\cdot\left(D(\rho)\nabla \rho\right)\,,
\end{equation}
where the symmetric positive definite matrix $D$ is the diffusion matrix.
When the stochastic models satisfy a suitable constraint, called \emph{gradient condition} (see \ref{eq:gradj}), it is possible to derive an explicit form for the diffusion matrix.
If this condition is violated\footnote{THIS SECTION WAS NOT IN THE VERSION SUBMITTED TO THE REFEREE BUT IT WAS ALREADY KNOWN BEFORE THE END OF THE PhD AND DISCUSSED DURING THE DEFENSE OF THE THESIS (SLIDES OF THAT DAY ARE AVAILABLE).} then, in general, the diffusion coefficient has just a variational characterization \cite{Spo91}.

For reversible and gradient models
\cite{BDeSGJ-LL06} the typical current is given by
\begin{equation}\label{tygr}
J(\rho)=-D(\rho)\nabla \rho\,,
\end{equation}
which  is indeed the classic form of the Fick's law.

In chapter \ref{ch:HSL}, we show that for a class of generalized gradient models the relation \eqref{tygr} is typically violated and conjecture that has to be replaced by
\begin{equation}\label{dstorto}
J(\rho)=-\mathcal D(\rho)\nabla\rho\,,
\end{equation}
where the diffusion matrix $\mathcal D$ is positive definite but, not necessarily, symmetric. We consider, for simplicity, the two dimensional setting,  but a similar result can be obtained in any dimension. 

\vspace{0.5cm}

In the case of out of equilibrium diffusive systems, the first computation of large deviations rate functionals was obtained for the exclusion model starting from an exact representation of the invariant measure \cite{Der07,DLS01,DLS02a}. The same result was then obtained with the dynamic variational approach of the macroscopic fluctuation theory \cite{BDeSGJ-LL01,BDeSGJ-LL02,BDeSGJ-LL15}. This macroscopic approach was then generalized to a wider class of models characterized by a constant diffusion matrix and a quadratic mobility \cite{BDeSGJ-LL15,BGL05}.

For the weakly asymmetric  exclusion process (see subsection \ref{ss:wapmod}) the computation of the rate functional \eqref{eq:qpformula} has been done in \cite{ED04} starting from an exact representation of the invariant measure and then in \cite{BGL09,BDeSGJ-LL10,BDeSGJ-LL11} using the macroscopic fluctuation theory.

For asymmetric models we have the following.
The large deviations rate functional for the invariant measure of the boundary driven asymmetric  exclusion process has been computed using an exact representation \cite{DLS02b,DLS03}. The corresponding rate functional is not local. The dynamic large deviations for asymmetric models is less understood. The rate functional for the exclusion process with periodic boundary conditions is discussed in \cite{Var04}.  The case with boundary sources has been discussed in \cite{BD06}. Once the dynamic rate functional \eqref{eq:MFTformula} is identified then it is possible to define and in some cases to compute the corresponding quasi-potential \cite{Bah10,BDeSGJ-LL15}. In particular in \cite{Bah10} the functionals in \cite{DLS02b,DLS03} and the quasi-potentials for other conservation laws with convex mobilities have been computed.

Other examples of exact computations of large deviations rate functionals for stationary non equilibrium states are for example \cite{Ber08} for a two component diffusive system and  \cite{Gab08} for the totally asymmetric exclusion process with particles of different classes on a ring.

A different way of obtaining the functionals in \cite{DLS02b,DLS03} for the asymmetric exclusion process is to compute the functional for the weakly asymmetric case and then to consider the limit for large values of the external field $ E $. This has been done in \cite{BDeSGJ-LL10,BDeSGJ-LL11,BGL09}. As reference for weakly asymmetric models see  subsections \ref{ss:wapmod} and \ref{subsec:WAm}.
With this approach the existence of Lagrangian phase transitions for finite but large external fields \cite{BDeSGJ-LL10,BDeSGJ-LL11} has also been proved. In chapters \ref{ch:LDT} and \ref{ch:TALD} we are using exactly this approach generalizing it to the case of models having constant diffusion matrix and convex quadratic mobility $ \sigma(\rho)=c_2\rho^2+ c_1\rho+c_0 $, where $ \rho  $ is the macroscopic thermodynamic parameter of the model.

In sections \ref{sec:QPweas} and \ref{sec:opt}  we will compute the large deviations rate functional for the empirical measure when the particles are distributed according to the invariant measure of weakly asymmetric models of heat conduction. Instead of discussing the general case we concentrate on three prototypical models corresponding respectively to quadratic and convex mobilities having two coinciding roots, two distinct roots and no roots. The first case corresponds microscopically to the KMP model with $ \sigma(\rho)=\rho^2 $. The second case corresponds microscopically to the dual of the KMP model (KMPd) and  with $ \sigma(\rho)= \rho(\rho+1)  $. We study the third case only macroscopically, indeed  it is difficult to build up a corresponding explicit microscopic model (see the comments after \eqref{compiti} for the technical reasons). We call KMPx this unspecified model with $ \sigma(\rho)=\rho^2+1 $. In this case the density can assume also negative values. For all these models, in the one dimensional case, we will compute the quasi-potential using the dynamic variational approach of the macroscopic fluctuation theory \cite{BDeSGJ-LL15}. The corresponding rate functionals are generically not local and depends on the value of the constant external field $E$ generating the asymmetry.

In chapter \ref{ch:TALD} we are studying  the asymptotic behavior when $E\to\pm\infty$ of the quasi-potentials. In the case of the exclusion model these limits allow to recover the large deviations rate functionals of the totally asymmetric exclusion process \cite{BGL09,BDeSGJ-LL10,BDeSGJ-LL11}. In the case of KMP and KMPd we have that the functionals become local in the limit. Moreover the dependence on one of the boundary sources disappears.  More precisely the limit functionals correspond to the large deviations rate functionals for empirical measures when the variables are distributed according to product measures: in the case of KMP the single marginal is exponential and in the KMPd case is geometric. We recover in this way the large deviations rate functionals for the  version  of the totally asymmetric KMP dynamics in subsection \ref{subsec:tkmp1dis}. The model KMPx has instead a very different behaviour. The limiting functional is again non local and has a structure very similar to the one of the simple exclusion. In this case the possibility of Lagrangian phase transitions appears for large values of the field acting in the same direction with respect to the external reservoirs.

As we already discussed in the overview of the microscopic results of part \ref{p:mt} and section \ref{sec:Amodels}, several totally asymmetric dynamics having the same weakly asymmetric diffusive scaling limit can be defined. The results we obtained in this part \ref{p:mact} jointly with that ones of section \ref{sec:Amodels} in part \ref{p:mt} suggest the natural conjecture that a large class of totally asymmetric models of heat conduction should have a local rate functional for the invariant measure.

\subsection*{Structure of the part \ref{p:mact}}
The structure of part \ref{p:mact} is as follows.
In chapter \ref{ch:HSL} we give a short overview of the scaling limit for interacting systems introduced in chapter \ref{ch:IPSmodels}, that is the main bridge between the microscopic and macroscopic descriptions. We discuss the transport coefficients and introduce macroscopically the three prototypical models we are going to study. Moreover, in section of \ref{sec:slv} of chapter \ref{ch:HSL}, we show that for a class of generalized gradient models the typical Fick's law  \eqref{tygr} in diffusive models is  violated\footnote{THIS SECTION WAS NOT IN THE VERSION SUBMITTED TO THE REFEREE BUT IT WAS ALREADY KNOWN BEFORE THE END OF THE PhD AND DISCUSSED DURING THE DEFENSE OF THE THESIS (SLIDES OF THAT DAY ARE AVAILABLE).} and conjecture that has to be replaced by
a more general relation.

In this part \ref{p:mact}, we mainly treat the energies-masses models of section \ref{sec:en models}, for which there are not totally rigorous results apart \cite{KMP82}, neither for scaling limits nor for large deviations.

Chapter \ref{ch:LDT} in the first two sections is dedicated to a general exposition of the problem of large deviations with the aim to explain what they are and how to compute them. Next section \ref{sec:dLD} enters into the problem of large deviations for interacting particles/energies-masses systems showing how to compute the dynamical large deviation rate functional for these systems. While in section \ref{sec:Stldt} is exposed the macroscopic fluctuation theory, i.e. the main theory we are using to get the results we described above in the abstract. The presentation of the MFT is  tought with a parallel with the Freidlin-Wentzell theory.
Particular emphasis is on how  the dynamic large deviations for our class of models relates   the quasi-potential and the large deviations for the invariant measure and on the Hamiltonian structure underlying the MFT.

In chapter \ref{ch:TALD}  we will compute as solution of an infinite dimensional Hamilton-Jacobi equation  the quasi-potential for any diffusive, one dimensional, weakly asymmetric, boundary driven model having a constant diffusion coefficient and a quadratic mobility. This is done in  sections \ref{sec:QPweas} and \ref{sec:opt}, where we will see that the quasi-potential in general is    a non-local functional. 
In section \ref{sec:adjhy} it is obtained  a generalized Onsager-Machlup principle  trough the problem of the adjoint dynamics for weakly asymmetric models. 

The last part of chapter \ref{ch:TALD} is dedicated to the study  of the macroscopic behaviour of  totally asymmetric models of heat conduction, i.e. when for the weak asymmetric models $E\to \pm \infty$. 
At a later time we  will compute in sections \ref{sec:TAL-} and \ref{sec:TAL+}    the limit of the quasi-potentials for large values of the external field. We are considering only the cases of convex mobilities.
To do this is we will need to use the results of  section \ref{sec:stasol}, where we study the stationary solutions of the hydrodynamic equations and the corresponding associated currents. In particular we discuss the asymptotic behaviours for large fields $E$.

The totally asymmetric limit is a local functional except in the case of single site variables that can assume all the possible real values (no roots case for the mobility). In this case the limiting quasi-potential has a non local structure similar to that of the exclusion process.

Some technical questions related to chapter  \ref{ch:TALD} are presented in sections  \ref{sec:exsist} and \ref{sec:KMPsta} of appendix \ref{a:appA}. 
In the final perspectives chapter \eqref{ch:pers} we will propose some possible developments of our results   and in addiction we will show a perturbative application of the mathematical structure of section \ref{sec:Dfreefl}.

\bigskip

\noindent \textbf{\textit{Keywords:}} Hydrodynamics scaling limits, large deviations, Freidlin-Wentzell theory, macroscopic fluctuation theory, diffusive models.


\clearpage
\pagestyle{fancy}
\fancyhf{}
\fancyhead[RE]{\sffamily \fontsize{10}{12} \selectfont \nouppercase{\leftmark}}
\fancyhead[RO]{\sffamily \fontsize{10}{12} \selectfont \nouppercase{\rightmark}}
\fancyhead[LE]{\sffamily \fontsize{10}{12} \selectfont Part \thepart: Macroscopic theory}
\fancyhead[LO]{\sffamily \fontsize{10}{12} \selectfont Part \thepart: Macroscopic theory}
\cfoot{\footnotesize \thepage}
\renewcommand{\headrulewidth}{0.01cm}

\chapter{Hydrodynamic scaling limits}\label{ch:HSL}

For deterministic model described by Newton's equation the rigorous derivation of PDEs describing the evolution of thermodynamics quantities is often a too optimistic programme, mainly  because of the lack of good ergodic property of the system. To overcome the problem two assumptions are traditionally made: or modelling the problem with a  stochastic microscopic evolution or assuming a low density of particles.  In the present framework we are interested in the first assumption and we are not having a complete rigorous point of view. For a rigorous and didactic treatment the main reference is \cite{KipLan99}. The microscopic dynamics consist of random walks of particles on a lattice $ V_N $ (see definition in section \ref{sec:CTMC}), that are  constrained to some rule expressing the local interaction, these are the so called interacting particle systems introduced by Spitzer \cite{Spit70}.

Examples of these interacting particles systems are the models presented in chapter \ref{ch:IPSmodels}, where   we introduced also the energies-masses models of section \ref{sec:en models}. 
Both for particles models and energies-masses models of chapter \ref{ch:IPSmodels}, the hydrodynamics equation is derived in a weak formulation studying the evolution of  a quantity called empirical measure (see next section \ref{sec:SL})  using the generator form of the respective process and discrete operations like in chapter \ref{ch:DEC}.
When the mesh of $ V_N $ goes to zero  for large $ N $, the proof  of an hydrodynamics scaling limit consists in proving that  the  distribution $ \bb P_N $ induced by a quantity called empirical measure (see next section \ref{sec:SL}) converges weakly to a distribution concentrated on the unique weak solution $ \rho(x,t) $ of the derived hydrodynamics equation. We write this with
$
\bb P_N\os{d}{\us{N}{\longrightarrow}}\delta_\rho.
$

The main difficulty in the proof of scaling limits is existence and conservation of local equilibrium (see the main introduction to part \ref{p:mact}).  In section \ref{sec:TC} we discuss briefly a mathematical formulation of it.
For our microscopical models, the time scale $ \theta_N $ to find  a non trivial behaviour in the derivation of the hydrodynamics PDEs  is $ \theta_N=N^2  $. 
Calculations in the proof of an hydrodynamics scaling limit are easier when  the stochastic system is gradient, see subsections \ref{subsec:Ipc} and \ref{subsec:Ie-mc}.

\section{Scaling limits}\label{sec:SL}

We can keep  in mind the KMP model, this is gradient (formula \eqref{eq:KMPEic}) and the hydrodynamic behaviour is relatively well understood, in  \cite{BGL05} they give a non-rigours derivation of its scaling limit assuming the local equilibrium.  We point out that KMP model is just a prototype for the discussion here, but the same scheme applies both to others particles and energies-masses models where the  diffusion coefficient and the mobility in \eqref{eq:drc} and \eqref{eq:drcE} are different. We consider the one dimensional case with $\Lambda=(0,1]$ and $ V_N:=\Lambda\cap\frac{1}{N}\mathbb{Z}^n $. In the symmetric case \eqref{eq:KMPgeneralized} the model with rates $ \Gamma^\xi_{x,y}(dj) $ is diffusive and the natural scaling of the system is obtained considering a lattice of mesh $\frac 1N$ and rescaling time by a factor $N^2$. This is done simply multiplying by $N^2$ the rates of jump (for notational convenience we will multiply by a factor $2N^2$). Often the accelerated process is denoted with $ \xi_{\,N^2t} $, here we don't adopt this notation but we use $ \xi_{\,t} $ remembering we are speaking about the accelerated process. The observable that describe macroscopically the evolution of the mass of the system is the so called empirical measure. This is a positive measure on $\Lambda$ associated to any fixed microscopic configuration $\xi$. It is defined as a convex combination of delta measures as
\begin{equation}\label{eq:empmisxi}
\pi_N(\xi):=\frac 1N\sum_{x\in V_N}\xi(x)\delta_x\,.
\end{equation}
Physically this represents a mass density or an energy density along the interpretation of the model. Integrating a continuous function $f:\Lambda\to \mathbb R$ with respect to $\pi_N(\xi)$ we get
\begin{equation*}\label{eq:defpif}
\int_\Lambda f\,d\pi_N(\xi)=\frac 1N\sum_{x\in V_N}f(x)\xi(x)\,.
\end{equation*}
In the hydrodynamic scaling limit the empirical measure, that for any finite $N$ is atomic and random, becomes deterministic and absolutely continuous for suitable initial conditions $\xi_0$ that are associated to a given density profile $\gamma(x)dx$ in the sense that in probability
\begin{equation}\label{eq:asc}
\lim_{N\to +\infty}\int_\Lambda f\, d\pi_N(\xi_0)=\int_\Lambda f(x)\gamma(x)dx.
\end{equation}
Let $ D([0,T];\Sigma_N) $ be  the space of right continuous with left limits paths from $  [0,T] $ to $ \Sigma_N $  with the Skorokhod topology and $  P_N^{\gamma} $ the distribution of the Markov chain of the the considered energies-masses/particles interacting models  with initial condition associated to $ \gamma $ as in \ref{eq:asc}, this is a measure on $ D([0,T];\Sigma_N) $ . While let $ D([0,T];\mc M^1(\Lambda)) $ be  the space of trajectories from $ [0,T] $ to the space of positive measure $ \mc M^1(\Lambda) $ with the Skorokhod topology and $ \bb P_N^\gamma:= P^{\gamma}_N\circ\pi_N^{-1} $ the measure induced on  $ D([0,T];\mc M^1(\Lambda)) $ by the empirical measure.
We have that $\pi_N(\xi_t)$ is associated to the density profile $\rho(x,t)dx$ where $\rho$ is the weak solution to the heat equation
with initial condition $\gamma$, i.e.
\begin{equation}\label{eq:wconKMP}
\bb P^\gamma_N\os{d}{\us{N}{\longrightarrow}}\delta_\rho,
\end{equation} 
where $ \bb P^\gamma_N $ is the distribution of the empirical measure and  $ \delta_{\rho} $ is the distribution concentrated on the unique weak solution of the heat equation with initial condition $ \gamma $. The boundary conditions are fixed by the interactions with the external sources. We consider the case when the boundary dynamics is also accelerated by $N^2$ and the final effect of this fast interaction is that the values of the densities at the boundaries are fixed by the external sources \cite{ELS96}.  We have then that $\rho$ is the unique weak solution associated to  a Cauchy problem with Dirichelet boundary condition like
\begin{equation}\label{eq:drc}
\left\{
\begin{array}{l}
\partial_t\rho=\Delta \rho\\
\rho(x,0)=\gamma(x)\\
\rho(0,t)=\rho_-\\
\rho(1,t)=\rho_+\,.
\end{array}
\right.
\end{equation}
Without loss of generality we will always consider the case $\rho_-\leq \rho_+$.
This is a space time law of large numbers and the corresponding fluctuations can be described by a large deviations principle \cite{BGL05}. Given a space and time dependent density profile $\rho(x,t)dx$, the probability that the empirical measure will be in a suitable neighborhood of it, it is exponential unlikely with a corresponding rate functional like \eqref{eq:tilde(rho)} that we call
dynamic large deviations rate functional. This is the main ingredient for the dynamic variational study of stationary non equilibrium states of the macroscopic fluctuation theory \cite{BDeSGJ-LL15}.
This problem is going to be discussed later in chapters \ref{ch:LDT} and in \ref{ch:TALD}, before we need to discuss the hydrodynamic behaviour of weakly asymmetric models as the ones introduced in subsection \ref{subsec:WAm}. So the bulk part of the generator is obtained as a sum of possibly time dependent contributions like \eqref{eq:bulkxyE} multiplied by $N^2$. In particular we consider a model with rates  $ \Gamma^{\xi,\bb F}_{x,y}(dj) $ determined by \eqref{eq:KMPpertrate} or more generally
having an expansion like \eqref{eq:KMPEic} where the external field $E$ is however substituted by a space and time dependent vector field $\mathbb F$  obtained by a discretization of a smooth vector field on $\Lambda$ like \eqref{eq:disF}.
The hydrodynamic behavior of this model is similar to the symmetric one and the external field appears macroscopically with a new term. The hydrodynamic equation is
\begin{equation}\label{eq:drcE}
\left\{
\begin{array}{l}
\partial_t\rho(x,t)=\Delta \rho(x,t)-\div \left(\rho^2(x,t) F(x,t)\right)\,,\\
\rho(x,0)=\gamma(x)\,,\\
\rho(0,t)=\rho_-\,,\\
\rho(1,t)=\rho_+\,
\end{array}
\right.
\end{equation}
and mutatis mutandis the probability distribution of the empirical measure $ \bb P^\gamma_N $ will have an equivalent formulation of \eqref{eq:wconKMP} for \eqref{eq:drcE}. Of course for others energies-masses model the diffusion and transport coefficients will be in general different, for example for \eqref{eq:KMPdrate} the mobility is  $ \rho^2+\rho $ instead of $ \rho^2 $. More generally, in chapter \ref{ch:TALD} we are going to study the problem of large deviations for all diffusive models with quadratic mobility, namely we consider \eqref{eq:drcE} where the first equation becomes 
\begin{equation}\label{eq:mob2^}
\partial_t\rho(x,t)=\Delta \rho(x,t)-\div \left(\sigma(\rho(x,t)) F(x,t)\right),
\end{equation}
with $ \sigma(\rho)=c_2\rho^2+c_1\rho+c_0 $  general real polynomial of second order. The possible physical situations along the sign of the coefficient $ c_0,c_1,c_2 $
can be resumed into four prototypes containing all the phenomenology, that is
\begin{equation}\label{eq:4prot}
\sigma(\rho)=\left\{
\begin{array}{ll}
\rho(1-\rho) & \textrm{SEP}\\
\rho^2 & \textrm{KMP}\\
\rho(\rho+1) & \textrm{KMPd}\\
\rho^2+1 & \textrm{KMPx}\,,
\end{array}
\right.
\end{equation}
with  KMPx we denote  an unknown hypothetical microscopical model with this hydrodynamic equation, see the discussion after \eqref{compiti} about the technical difficulty to find such model. Since for physical reason the mobility $ \chi(\rho)>0 $, in this model  space profiles that change sign are possible solutions because in $ \rho=0 $ the mobility is  strictly positive. The physical is interpretation is that we have positive and negative charges that are moving.   We can forget about the simple exclusion process in what follows since it is widely treated in  \cite{BDeSGJ-LL03,BGL09}. 

\section{Transport coefficients}\label{sec:TC}

In this section we give a derivation scheme of the hydrodynamic equation \eqref{eq:mob2^} assuming local equilibrium.
The general form of the hydrodynamic equation associated to weakly asymmetric diffusive stochastic particle systems is
\begin{equation}\label{hydrog}
\partial_t\rho=\div\left(D(\rho)\nabla \rho-\sigma(\rho)F\right)\,.
\end{equation}
The symmetric and positive definite matrix $D$ is called the \e{diffusion matrix} while the symmetric and positive definite matrix $\sigma$ is called the \e{mobility}.
For all the models that we are discussing the diffusion matrix coincides with the identity matrix while the mobility
is a multiple of the identity matrix $\sigma(\rho)\mathbb I$ (we are calling $\sigma$ both the matrix and the scalar value on the diagonal). The transport coefficients can be explicitly computed for gradient models for which at  equilibrium the invariant measure is known.

It is convenient to write the hydrodynamic equation \eqref{hydrog} as a conservation law $\partial_t\rho+\nabla\cdot J_F(\rho)=0$ where
\begin{equation}\label{jeff}
J_F(\rho):=-\nabla \rho+\sigma(\rho)F
\end{equation}
is the typical current observed in correspondence to the density profile $\rho$.

We briefly illustrate the general structure of the proof of the hydrodynamic limit for gradient reversible models that
allows to identify and compute exactly the transport coefficients \cite{KipLan99,Spo91}. This argument is the bridge between the microscopic and the macroscopic description of a system and allows to identify the hydrodynamic equations associated to the microscopic models.

The starting point for the hydrodynamic description of the system is the discrete continuity equation that is
\begin{equation}\label{disc-cont}
\xi_t(x)-\xi_0(x)=-\div \mathcal J_t(x)\,,
\end{equation}
where $\mathcal J_t$ has been defined in subsection \ref{subsec:Ie-mc} and $\div$ denotes the discrete divergence for a discrete vector field $ \phi $ defined in \eqref{eq:cod1-0}, that we write  as $ \div\phi(x):=\us{y\sim x}{\sum}\phi(x,y) $   summing  on the nearest neighbours $ y\sim x $ of $ x $.
Using \eqref{eq:massmart} we can rewrite \eqref{disc-cont} as
\begin{equation}\label{rev}
\xi_t(x)-\xi_0(x)= -N^2\int_0^t\div j_{\xi_s}(x)\,ds+ M_t(x)\,,
\end{equation}
where $M_t(x)$ is a martingale term obtained by summing some martingales of the type \eqref{eq:massmart}, as we did for particles systems in subsection \ref{subsec:Ipc}.

Since we are interested only on the transport coefficients we consider the model defined on the 1-dimensional continuous torus $\Lambda=(0,1]$ with periodic boundary conditions.
Multiplying equalities \eqref{rev} by a test function $\psi$, dividing by $N$ and summing over $x$  we obtain
\begin{equation}\label{gate}
\int_{[0,1]} \psi \,d\pi_N(\xi_t)-\int_{[0,1]} \psi\, d\pi_N(\xi_0)=-N\int_0^t\sum_{x\in V_N}\div j_{\xi_s}(x)\,\psi\left(x\right)\,ds+o(1)\,.
\end{equation}
The infinitesimal term comes from the martingale terms and can be shown
to be negligible in the limit of large $N$ \cite{KipLan99, Spo91}, the intuitive reason rely on the probabilistic nature of the martingale giving an analogous of \eqref{eq:asJ=asj}. Now we prefer to rewrite  the gradient condition \eqref{eq:massgra} respect to $ h(\xi)=\frac {\xi(0)} 2 $, that is \begin{equation}\label{eq:newgraKMP}
j_{\xi}(x,y)=\tau_x h(\xi)-\tau_y h(\xi),
\end{equation}
using this gradient condition \eqref{eq:newgraKMP} and  performing a double discrete integration by part, up to the infinitesimal term one has that the right hand side of \eqref{gate} is
\begin{equation}\label{aibp}
\frac{1}{N}\sum_{x\in V_N}\int_0^t \tau_xh(\xi_{{s}})\left[N^2\left(\psi\left(x+\frac{1}{N}\right)+\psi\left(x-\frac{1}{N}\right)-
2\psi\Big(x\Big)\right)\right]\,ds\,.\nonumber \\
\end{equation}
Considering a $C^2$ test function $\psi$ the term inside squared parenthesis in the last term of \eqref{aibp}
coincides with $\Delta\psi\left(x\right)$ up to an uniformly infinitesimal term.

At this point the main issue in proving hydrodynamic behavior is to prove the validity of a local equilibrium property. Let us define
\begin{equation}\label{local}
A(\rho)=\mathbb E_{\mu_{N}^{\lambda[\rho]}}\left(h (\xi)\right)\,,
\end{equation}
where  $\mu_N^\lambda$ is the product of measures \eqref{eq:mu_lambda} associated to the chemical potential $\lambda$ and we recall that
$\lambda[\rho]$ is the chemical potential associated to the density $\rho$ (see the discussion after \eqref{eq:mu_lambda}).
The local equilibrium property is explicitly stated through a replacement lemma that states that (in probability)
\begin{equation}\label{eq:repla}
\frac{1}{N}\sum_{x\in V_N}\int_0^t \tau_xh(\xi_{{s}})\Delta\psi\left(x\right)\,ds\simeq
\frac{1}{N}\sum_{x\in V_N}\int_0^t A\left(\frac{\int_{B_{x}}d\pi_N(\xi_s)}{|B_{x}|}\right)\Delta\psi\left(x\right)\,ds
\end{equation}
where $B_{x}$ is a microscopically large but macroscopically small volume around
the point $x\in V_N$. For a precise formulation of \eqref{eq:repla} see lemma 1.10 and corollary 1.3 respectively  in chapter 5 and  in chapter 6 of \cite{KipLan99}.
This allows to write, up to infinitesimal corrections, equation \eqref{gate} in terms only of the empirical measure.
Substituting the r.h.s. of \eqref{eq:repla} in the place of the r.h.s. of \eqref{gate}, we obtain that in the limit of large $N$ the empirical measure $\pi_N(\eta_t)$ converges in weak sense to $\rho(x,t)dx$ satisfying for any $ C^2(\Lambda) $ test function $\psi$
\begin{equation}\label{widro}
\int_{0}^1 \psi(x)\rho(x,t)\,dx-\int_{0}^1\psi(x)\rho(x,0)\,dx=\int_0^t ds\int_{0}^1 A(\rho(x,s))\Delta\psi(x)\,dx\,.
\end{equation}
Equation \eqref{widro} is a weak form of the hydrodynamic equation \eqref{hydrog} with $F=0$
and having a diagonal diffusion matrix with each term in the diagonal equal to
\begin{equation}\label{diff}
D(\rho)= \frac{ d A(\rho)}{d \rho}\,.
\end{equation}
For a mathematical discussion of this issue see \cite{KipLan99} chapter 5. For all the models that we are discussing we have that $h(\xi)=\frac{\xi(0)}{2}$ so that $A(\rho)=\frac{\rho}{2}$. For notational convenience in order to have an unitary diffusion matrix we multiply all the rate of transition by a factor of 2 and correspondingly the diffusion matrix is the identity matrix.

\smallskip

We show now a computation that allows to determine the mobility of the models from the microscopic dynamics.  Let us consider weakly asymmetric models subject to an external field obtained by the discretization \eqref{eq:disF} of a smooth vector field $F$.
Consider the time window $[0,t]$ and we still speed up the process by a $N^2$ factor. The scalar product \eqref{sc} of the flow of mass in this time window with  the discretization $ \bb H $ of a smooth vector field $ H$ is given by
\begin{equation}\label{total-work}
\frac {1}{N^d}\sum_{x\sim y}\mathcal J_t(x,y)\mathbb H (x,y)\,.
\end{equation}
In formula \eqref{total-work} the sum is over unordered nearest neighbor sites. By the antisymmetry of the two vector fields there is no ambiguity in this definition. The factor $N^{-d}$ is due to the fact that the scaling limit normalizes the mass by this factor.
Using \eqref{eq:massmart} and \eqref{eq:KMPEic} we can write \eqref{total-work} up to a neglecting martingale term for large $ N $ as
\begin{equation}\label{work-up}
N^{2-d}\int_0^t\sum_{x\sim y}j^{\mathbb F}_{\xi_s}(x,y)\mathbb H(x,y)\,ds\,.
\end{equation}
For simplicity we consider the one dimensional KMP model with a macroscopic constant external field $F$. Microscopically this corresponds to consider $E=\frac{F}{N}$ in \eqref{eq:KMPEic}. The value $\frac FN$ is indeed the discretized value for a lattice of size $1/N$ corresponding to a constant macroscopic external field $F$.
We introduce the function
\begin{equation}\label{func-g}
g(\xi)=\frac 16\big(\xi(0)^2+\xi(1/N)^2-\xi(0)\xi(1/N)\big)
\end{equation}
that, suitably shifted, multiplies $E$ in the right hand side of \eqref{eq:KMPEic}. In the case of KMPd we have instead to use formula \eqref{eq:KMPdEic}.
We can write \eqref{work-up} in $d=1$ as
\begin{equation}
N\int_0^t\sum_{x\in V_N}j_{\xi_s}\left(x,x+\frac 1N\right)\mathbb H\left(x,x+\frac 1N\right)ds+ F\int_0^t\sum_{x\in V_N}\tau_xg(\xi_s)\mathbb H\left(x,x+\frac 1N\right)ds\,.
\label{primo-passo}
\end{equation}
Since the current $j_{\xi_t}(x,y)$ is gradient, recalling \eqref{eq:newgraKMP}, formula \eqref{primo-passo} becomes after a discrete integration by parts
\begin{equation}\label{secondo-passo}
N\int_0^t\sum_{x\in V_N}\tau_x h(\xi_s)\div \mathbb H(x)\,ds+ F \int_0^t\sum_{x\in V_N}\tau_xg(\xi_s)\mathbb H\left(x,x+\frac 1N \right)\,ds\,.
\end{equation}
From definition \eqref{eq:disF} for $ H $ smooth vector field we find $ \bb H(x,x+\frac 1N)\simeq \frac 1N H(x)+\frac {1}{N^2}\p_x H(x)+o(\frac 1{N^2}) $ and $ \bb H(x,x+\frac 1N)+\bb H(x, x-\frac 1N)\simeq \frac 1{N^2}\,\div H(x) $, therefore
 $N^2\div \mathbb H(x)=\div H(x)$ up to uniform  infinitesimal terms. Applying 
the replacement lemma \eqref{eq:repla} when $ N $ is diverging, with high probability  \eqref{secondo-passo} converges to
\begin{equation}\label{cenone}
\int_0^t ds\int_{0}^1dx\,\left[A(\rho(x,s))\div H(x)+F\sigma(\rho(x,s))H(x)\right]\,,
\end{equation}
where
\begin{equation}\label{finalmente}
\sigma(\rho)=\mathbb E_{\mu^{\lambda[\rho]}_N}\left[g(\eta)\right]\,.
\end{equation}
Formula \eqref{cenone} is a weak form of $\int_0^tds\int_0^1J_F(\rho)\cdot H\,dx$ with
\begin{equation}\label{tipF}
J_F(\rho)=-D(\rho)\nabla\rho +\sigma(\rho)F\,.
\end{equation}
This is the typical current associated to a density profile $\rho$ in presence of an external field $F$.
Recall that for notational convenience we are multiplying the rates by a factor of 2 so that formula \eqref{finalmente}
for ours prototype models gives \eqref{eq:4prot}.

\smallskip

A general identity holding for diffusive systems is the Einstein relation between the transport coefficients and the density of free energy \cite{BDeSGJ-LL15,Spo91}
\begin{equation}\label{Einzwei}
D(\rho)=\sigma(\rho)f''(\rho)\,.
\end{equation}
In \eqref{Einzwei} $f$ is the density of free energy that will be introduced and discussed in subsection \ref{trc}. We have that $f'(\rho)=\lambda[\rho]$ where we recall $\lambda[\cdot]$ is the chemical potential as a function of the density introduced in subsection \ref{subsec:sta}. The Einstein relation can be then written equivalently as
\begin{equation}\label{llaa}
D(\rho)=\sigma(\rho)\lambda'[\rho]\,.
\end{equation}

\bigskip

Here the hydrodynamics was derived with periodic boundary conditions but we remind that it is   still the same when a  boundary driven  version of the system is considered, see \cite{ELS96}. 

\section{Scaling limit of an exclusion process with vorticity}\label{sec:slv}

\bigskip

THIS SECTION WAS NOT IN THE VERSION SUBMITTED TO THE REFEREE BUT IT WAS ALREADY KNOWN BEFORE THE END OF THE PhD AND DISCUSSED DURING THE DEFENSE OF THE THESIS (SLIDES OF THAT DAY ARE AVAILABLE).

\bigskip

Let's consider as lattice the discrete bidimensional torus with vertexes $ V_N=\mathbb{T}^2_N $ and with mesh of size $ \epsilon=1/N $, see definition in section \ref{sec:CTMC}. In this section we discuss heuristically the hydrodynamics limit for exclusion models where the instantaneous current \eqref{eq:istcur} has a decomposition \eqref{hodgefun2} in the form
\begin{equation}\label{eq:localHod2}
j_\eta(x,y)=[\tau_y h(\eta)-\tau_x h(\eta)]+[\tau_{f^+(x,y)}g(\eta)-\tau_{f^-(x,y)}g(\eta)]=j^h_\eta(x,y)+j^g_\eta(x,y),
\end{equation}
with $ h,g $ local function. We are defining $ j^h_\eta(x,y):=\tau_y h(\eta)-\tau_x h(\eta) $ and $ j^g_\eta(x,y):=\tau_{f^+(x,y)}g(\eta)-\tau_{f^-(x,y)}g(\eta) $. Two examples of this class are in subsection \ref{ss:vortmodels}. For example taking an exclusion process with rates  given by
\begin{equation}\label{eq:vortrate}
c_{x,y}(\eta)=\eta(x)(1-\eta(y))+\eta(x)[\tau_{f^+(x,y)}g(\eta)-\tau_{f^-(x,y)}g(\eta)],
\end{equation}
we have  $ j_\eta(x,y)=c_{x,y}(\eta)-c_{y,x}(\eta) $ as in \eqref{eq:localHod2} with $ h(\eta)=-\eta(0) $, note that the first example of subsection \ref{ss:vortmodels} is of this form.

Models with $ j_\eta(x,y) $ as in \eqref{eq:localHod2} can be thought as a generalization of the gradient case $ j_\eta(x,y)=[\tau_y h(\eta)-\tau_x h(\eta)] $, indeed the current is decomposed in an usual gradient part plus an orthogonal gradient part (discrete bidimensional curl).  Because of the presence of this discrete bidimensional curl we use the terminology of "exclusion process with vorticity". The physical phenomena is the one we are describing in next lines. 

When the rates satisfies \eqref{eq:localHod2}, remarkably  the hydrodynamics for the empirical measure works exactly as in section \ref{sec:TC} because 
\begin{equation}\label{eq:divg}
\div j_\eta^g(x)=0,\,\,\,\forall\,\,x\in V_N,
\end{equation}
that is there is a part of the dynamics that doesn't give any macroscopic effect to the hydrodynamics. To observe macroscopically the effect of the discrete curl we have to consider the scaling limits of the current flow  $ J_t(x,y) $ of formula \eqref{eq:currver}, that is what we did in section \ref{sec:TC} after formula \eqref{total-work} to derive the macroscopic current $ J(\rho) $ that will appear in the hydrodynamics
\begin{equation}\label{eq:hydrowJ}
\partial_t\rho = \div (-J(\rho)).
\end{equation}
Proceeding analogously  for the present case, in place of \eqref{secondo-passo} we get  the following
\begin{equation}\label{darimpiaz}
N^{2-d}\int_0^t\left[\sum_{x\in \Lambda_N}\tau_x h(\eta)(x)\div \mathbb H(x) +\sum_{\mathfrak f\in \mathcal F_N}\tau_{\mathfrak f}g(\eta_s)\sum_{(x,y)\in f^\circlearrowleft}\mathbb{H}(x,y)\right]\,,
\end{equation}
where the infinitesimal terms are uniform.
By Taylor expansion, for a $C^1$ vector field $H$ we have
$$
\left\{
\begin{array}{l}
N^2\div \mathbb{H}(x)=\div H(x)+o(1/N)\,,\\
N^2\sum_{(x,y)\in f^\circlearrowleft}\mathbb{H}(x,y)=\nabla^{\perp}\cdot H(z)+o(1/N)\,.
\end{array}
\right.
$$
In the above formula $z$ is any point belonging to the face
while given a $C^1$ vector field $H=(H_1,H_2)$ we used the notation
$\nabla^\perp\cdot H(z):=-\partial_yH_1(z)+\partial_xH_2(z)$.
When $ N $ is diverging, with high probability  \eqref{darimpiaz} converges to
\begin{equation}\label{wHDJ}
\int_0^t ds\int_\Lambda dx\,\left[d(\rho(x,s))\div H(x)+d^\perp(\rho(x,s))\nabla^\perp\cdot H(x)\right]\,.
\end{equation}
Namely we applied the 
the replacement lemmas $ d(\rho)=\mathbb E_{\mu^{ss}_\rho}[h(\eta)] $ and $ d^\perp(\rho)=\mathbb E_{\mu^{ss}_\rho}[g(\eta)]  $, where $ \mu^{ss}_\rho $ is the stationary measure with  profile $ \rho $.
Formula \eqref{wHDJ} is a weak form of $\int_0^tds\int_\Lambda J(\rho)\cdot H\,dx$ with
\begin{equation}\label{tipJ}
J(\rho)=-\nabla d(\rho)-\nabla^\perp d^\perp(\rho)=-D(\rho)\nabla\rho - D^\perp(\rho)\nabla^\perp\rho,
\end{equation}
where $ \nabla^\perp $ is the orthogonal gradient defined as $ \nabla^\perp f=(-\partial_y f,\partial_x f) $.  As we expected from the microscopic argument \eqref{eq:divg} we have that  \eqref{eq:hydrowJ} gives
\begin{equation*}\label{key}
\partial_t\rho=\div (-D(\rho)\nabla\rho - D^\perp(\rho)\nabla^\perp\rho)=\div(-D(\rho)\nabla\rho),
\end{equation*}
that it the hydrodynamics is left unchanged respect to the usual gradient case.
Hence the macroscopic current \eqref{tipJ} can be rewritten 
as
\begin{equation}\label{tipJ2}
J(\rho)=-\left[D(\rho)\begin{pmatrix} 1 & 0\\ 0 & 1 \end{pmatrix}-A(\rho)\begin{pmatrix} 0 & -1\\ 1 & 0 \end{pmatrix}\right]\nabla\rho.
\end{equation}

This formula allows us to underline in the best way the difference respect to the the typical picture  for the macroscopic current $ J(\rho) $ in diffusive interacting particles models, which is  given by the Fick's law
\begin{equation}\label{eq:typpic}
J(\rho)=-D(\rho)\nabla\rho,
\end{equation}
where $ D(\rho) $ is a  positive symmetric diffusion matrix. From the present discussion for the generalized gradient models \eqref{eq:localHod2} this picture is replaced by the following general formulation
\begin{equation}\label{eq:JasymD}
J(\rho)=-\mathcal D(\rho)\nabla\rho, 
\end{equation}
where the diffusion matrix $ \mathcal D (\rho)$ is  positive but not necessarily symmetric. Its symmetric part is $ D(\rho) $ and the asymmetric one is $ D^\perp (\rho) $.

\chapter[Large Deviations in IPS]{Large deviations theory in IPS}\label{ch:LDT}

Large deviation theory is the part of probability  describing
"rare" events when, for examples, a sum of random variables deviates from the values prescribed 
by limit theorems. 
Large deviations are studied in different contexts of probability as Markov chain,
hydrodynamics limit, diffusion, polymers, statistics, information theory, finance, etc. 
Classic reference for large deviations are \cite{deH00,DZ98,Var84}. First we give an idea
of what is a large deviations problem, then we set
rigorously  what is a large deviation principle and  we sketch a typical scheme 
to prove a large deviation principle for some models. Later on we discuss in generality the large deviations problem for the interacting energies-masses/particles systems of chapter \ref{ch:IPSmodels} introducing the macroscopic fluctuation theory. Especially we will give an overview of the stationary large deviations focusing on the Hamiltonian structure on which the macroscopic fluctuation theory is founded by means of an infinite dimensional Hamilton-Jacobi equation.

Some material in the first two introductory sections, in particular the upper and lower bounds   of the mentioned scheme comes from \cite{Ter15}.

\vspace{1cm}

The weak law of large numbers states that for a sequence of i.i.d. random variables we have convergence in probability, i.e.

\begin{theorem}[Weak law of large numbers]
Let $ X_1,X_2, \dots $ be i.i.d random variables such that $ \mathbb{E}|X_1|<+\infty $. Then, for any $ \varepsilon>0 $, holds
\begin{equation}\label{eq:weaklaw}
\mathbb{P}\left[\,\,\left| \frac{X_1+\dots+X_N}{N}-\mathbb{E}(X_1) \right|>\epsilon\,\,\right]\us{N\to\infty}{\longrightarrow}0.
\end{equation}
\end{theorem}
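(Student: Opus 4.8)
The plan is to establish convergence in probability under the sole hypothesis that the first absolute moment is finite. The naive route through Chebyshev's inequality is not directly available, since it would require a finite variance, so the argument must be refined. Writing $\mu:=\mathbb E(X_1)$ and $S_N:=X_1+\dots+X_N$, the strategy I would follow is the classical truncation method: split each variable at a level growing with $N$, control the discarded tail part by integrability, and control the bulk part by a second-moment estimate.

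Concretely, first I would fix $N$ and set $Y_k:=X_k\mathbf 1_{\{|X_k|\le N\}}$, so that $X_k=Y_k+(X_k-Y_k)$, and let $T_N:=Y_1+\dots+Y_N$. For the discarded part, a union bound together with Markov's inequality gives $\mathbb P(S_N\neq T_N)\le N\,\mathbb P(|X_1|>N)\le \mathbb E\big(|X_1|\mathbf 1_{\{|X_1|>N\}}\big)$, which tends to $0$ by dominated convergence since $\mathbb E|X_1|<\infty$. Hence it suffices to prove $T_N/N\to\mu$ in probability.

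For the truncated sum, the $Y_k$ are i.i.d.\ and bounded, hence have finite variance, and $\mathbb E(Y_1)=\mathbb E\big(X_1\mathbf 1_{\{|X_1|\le N\}}\big)\to\mu$. Applying Chebyshev to $T_N/N$ around its mean reduces the problem to showing $\mathrm{Var}(T_N/N)=\mathrm{Var}(Y_1)/N\to 0$. This is the delicate point and the main obstacle: the crude bound $\mathrm{Var}(Y_1)\le \mathbb E\big(X_1^2\mathbf 1_{\{|X_1|\le N\}}\big)\le N\,\mathbb E|X_1|$ only yields a constant, so instead I would prove the sharper estimate $\tfrac1N\mathbb E\big(X_1^2\mathbf 1_{\{|X_1|\le N\}}\big)\to 0$. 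This follows by splitting the expectation at an auxiliary level $K$: the contribution of $\{|X_1|\le K\}$ is at most $\tfrac{K}{N}\mathbb E|X_1|\to 0$, while the contribution of $\{K<|X_1|\le N\}$ is bounded by $\mathbb E\big(|X_1|\mathbf 1_{\{|X_1|>K\}}\big)$ uniformly in $N$; letting first $N\to\infty$ and then $K\to\infty$ gives the claim.

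Combining the two parts, $T_N/N-\mathbb E(Y_1)\to 0$ in probability and $\mathbb E(Y_1)\to\mu$ give $T_N/N\to\mu$, and since $\mathbb P(S_N\neq T_N)\to 0$ one concludes $S_N/N\to\mu$ in probability, which is exactly \eqref{eq:weaklaw}. As an alternative I note that truncation can be bypassed entirely via characteristic functions: since $\mathbb E|X_1|<\infty$, $\varphi(t):=\mathbb E(e^{itX_1})$ is differentiable at the origin with $\varphi'(0)=i\mu$, so the characteristic function of $S_N/N$ equals $\varphi(t/N)^N=\big(1+i\mu t/N+o(1/N)\big)^N\to e^{i\mu t}$; L\'evy's continuity theorem then yields convergence in distribution to the constant $\mu$, which for a degenerate limit is equivalent to convergence in probability.
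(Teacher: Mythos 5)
The paper does not actually prove this statement: the weak law of large numbers is quoted as classical background at the opening of the large deviations chapter, with no argument supplied, so there is no "paper proof" to compare against. Your proof is correct and complete on its own terms. The truncation at level $N$, the bound $N\,\mathbb P(|X_1|>N)\le \mathbb E\big(|X_1|\mathbf 1_{\{|X_1|>N\}}\big)\to 0$, and in particular the two-level splitting that upgrades the crude estimate $\tfrac1N\mathbb E\big(X_1^2\mathbf 1_{\{|X_1|\le N\}}\big)\le \mathbb E|X_1|$ to an actual convergence to zero (first $N\to\infty$, then $K\to\infty$) are all handled correctly; this is exactly the delicate point where a finite-variance argument would otherwise be smuggled in, and you address it explicitly. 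The concluding combination via $\mathbb P(|S_N/N-\mu|>\epsilon)\le \mathbb P(S_N\neq T_N)+\mathbb P(|T_N/N-\mu|>\epsilon)$ together with $\mathbb E(Y_1)\to\mu$ is sound. The alternative route through the characteristic function and L\'evy's continuity theorem is also valid, and is arguably the shortest path given only $\mathbb E|X_1|<\infty$; either argument would serve as a legitimate proof of the statement as it appears in the text.
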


This theorem guarantees convergence but it doesn't tell anything about the speed of convergence when $ N $ is diverging, that is in other terms  how much unlikely is this unlikely event. A large deviation programme is addressing these questions. 
The issue can be reformulated as follows. Denote $ S_N=X_1+\dots+X_N $, where $ X_i $ are the i.i.d. random variables as above. Let $ \bb{P}_N  $  be the measure  induced on $ \bb{R} $ by the random variable $ \frac{S_N}{N} $, from the convergence in probability in  \eqref{eq:weaklaw} the random variable $ \frac{S_N}{N} $ converges weakly to $ \bb E(X_1) $, then from Portmanteau theorem for any measurable $ A\subset \bb R $ such that $ \bb E (X_1)\notin \p A $   we have
\begin{equation*}
\bb{P}_N(A)\us{N\to\infty}{\longrightarrow}
\left\{
\begin{array}{ll}
 1,\quad {\rm if } \quad \bb E(X_1)\in A \\
 0,\quad {\rm if } \quad \bb E(X_1)\notin A \\
\end{array} .
\right.
\end{equation*}
We expect that under suitable assumptions this happens in a form that can be written as
\begin{equation}\label{eq:V(A)}
\bb{P}_N(A)\simeq e^{-\theta_N V(A)},
\end{equation} 
where $ \theta_N $ is the speed of the large deviation and $ V(A) $ is defined by a positive functional $V(\rho)$ such that $V(A)=\us{\rho\in A}{\inf} V(\rho)$, $ V(E(X_1))=0 $ and $ V(\rho\neq E(X_1))> 0 $. In this way $ V(A)=0 $ if $ \bb E(X_1)\in A $ and $ \bb P_N(A)\simeq 1 $, otherwise in general $ V(A)>0 $ under suitable conditions\footnote{If  $ \bb E(X_1)\in\p A $ it can happen that $ V(A)=0 $ also when $ \bb E(X_1)\notin A $.} . Moreover if $ \bb P(A)>0 $ the "rare event is the least unlikely of all unlikely events". In next section we will try to give more motivations to this infimum in the definition of  $ V(A) $. 

\section{A Large Deviation Principle}

Let $ (\Omega,d) $ be a complete separable metric space and $ \bb P_N $ be a sequence of probability measures, all of them defined on the Borel sigma algebra of $ (\Omega,d) $.
\begin{definition}\label{def:LDprin}
We say that $ \{P_N\} $ satisfies a \e{Large Deviation Principle} (LDP) with
\e{large deviation rate functional} $V:\Omega \to [0, \infty]$ and speed $ \theta_N \in \bb R^+ $ such that $ \theta_N\to +\infty $ if
\begin{itemize}
\item[a)]  $ 0 \leq V(\rho) \leq \infty $, for all $ \rho\in\Omega $;
\item[b)] $ V $ is lower semi-continuous;
\item [c)] for any $ c<\infty $, the set $ \{\rho\in\Omega:\, V(\rho)\leq c\} $ is compact in $ (\Omega, d) $;
\item [d)] for any closed set $ C\subset \Omega $
\begin{equation*}
\us{N\to\infty}{\lim\sup}\frac{1}{\theta_N}\log \bb P_N(C)\leq -\us{\rho\in C}{\inf}\,V(\rho)\,\,;
\end{equation*}
\item [e)] for any open set $ A\subset \Omega $.
\begin{equation*}
\us{N\to\infty}{\lim\inf}\frac{1}{\theta_N}\log \bb P_N(A)\geq -\us{\rho\in A}{\inf}\,V(\rho)\,\,.
\end{equation*}
\end{itemize}
\end{definition}

Commonly $ \bb P_N \os{d}{\to} \delta_{\bar\rho} $, $ V(\bar\rho) = 0 $ and $ V(\rho) > 0 $ if $ \rho\neq \bar\rho $, implying
the strong law of large numbers. The speed $ \theta_N $ can be for example $ N $ or $ N^2 $.
Condition
c) implies b), when c) is true case many authors say that $ V $ is a \e{good} rate function but conditions a), b), d) and e) are enough to say to have proved a large deviation principle with rate functional $ V $.
\begin{remark}
Given a set $ A\subset\Omega $, we denote here and in the rest $V(A):=\us{\rho\in A}{\inf}\,V(\rho)$. We indicate with $  \mathring{A}$ and $\bar{A}$  respectively the interior of $ A $ and the closure of $ A $. Observe that if a set $ A $ is such that
\begin{equation}\label{eq:Vcon}
\us{\rho\in \mathring{A}}{\inf}\,V(\rho)= \us{\rho\in A}{\inf}\,V(\rho)=\us{\rho\in \upbar{A}}{\inf}\,V(\rho)
\end{equation}
then
\begin{equation*}\label{key}
\us{N\to\infty}{\lim}\frac{1}{\theta_N}\log \bb P_N(A)= -V(A).
\end{equation*}
\end{remark}
Now let's try to justify the infimum in the rate functional. Consider two disjoint sets $ A_1 $ and $ A_2 $ satisfying \eqref{eq:Vcon}, then \footnote{In the second equality of \eqref{eq:throwaway} we used the fact that  for $ a_N\to +\infty $ and $ b_N,c_N $ two sequence of positive numbers is valid the equality $ \us{N\to\infty}{\limsup}\frac 1 {a_N} \log (b_N+c_N)$$=\max \left\{\us{N\to\infty}{\limsup}\frac 1 {a_N} \log b_N, \us{N\to\infty}{\limsup}\frac 1 {a_N} \log c_N \right\} $.} 

\begin{equation}\label{eq:throwaway}
\begin{array}{ll}
&V(A_1\cup A_2)=-\us{N\to \infty}{\lim}\frac{1}{\theta_N}\log \bb P_N(A_1\cup A_2)\\
&=-\max \left\{\us{N\to \infty}{\lim} \frac{1}{\theta_N}\log\bb P_N(A_1), \us{N\to \infty}{\lim} \frac{1}{\theta_N}\log\bb P_N(A_2) \right\}\\
&=\min\{V(A_1),V(A_2)\}.
\end{array}
\end{equation}
Imaging to iterate this inside the subsets of $ A $ and  we guess $ V(A) $ will be of the form 
\begin{equation*}\label{key}
V(A)=\us{\rho\in A}{\inf}\,V(\rho),
\end{equation*}
so we have the statement (already anticipated)  that "any large deviation is done in the least unlikely of all unlikely ways". Indeed because of the minus in front of the infimum we have that  where $ V(\rho) $ is smaller it means that $ \rho $ is less improbable. 

\section[Large devitions scheme]{Steps in a large deviation proof: a scheme}\label{sec:LD scheme}

Here we describe a general simplified scheme that shows the steps to prove  large deviations for some generic model.

 Let $ \bb P_N $  be a sequence of measures on $ (\Omega,d) $ as before and such that 
\begin{equation*}\label{key}
\bb P_N\os{d}{\to}\delta_{\bar\rho},
\end{equation*}
Given a $ \rho\neq\bar\rho $ we want to find  a rate functional estimating the probability of a small ball $ B_\veps(\rho)  $ of radius $ \varepsilon>0 $ around $ \rho $ like
\begin{equation}\label{eq:tilde(rho)}
\bb P_N(B_\veps(\rho) ) \simeq e^{-\theta_N V(B_\veps(\rho) )}\simeq e^{-\theta_N V(\rho)}.
\end{equation}
First we want to perturb  $ \bb P_N $ in a way to "make typical" a fixed deviation $ \rho$. This is done introducing  some perturbation $ H $ such that for a fixed $ \rho  $  the perturbed measure  $ \bb P_N^H $ satisfies
\begin{equation}\label{eq: P^H}
\bb P^H_N\os{d}{\to}\delta_\rho 
\end{equation}
 and characterizing  the Radon-Nykodym derivative in the following way
\begin{equation}\label{eq:RN}
e^{\theta_N J_H(\rho')}=\frac{d\bb P^H_N}{d\bb P_N}(\rho').
\end{equation}
The factor $ \theta_N $ is the speed of the large deviation.
The perturbation $ H $  "making typical"  $ \rho $  is chosen as $ H=H[\rho] $, hence the functional $ J_H(\rho) $ in \eqref{eq:RN} becomes depending on the only $ \rho $.
For a perturbation $ \hat H $ of the probability $ \bb P_N $ rewriting $ d\bb P_N $ as
\begin{equation*}\label{key}
d\bb P_N= \frac {d\bb P_N} {d\bb P^{\hat H}_N}d\bb P^{\hat H}_N=e^{-\theta_N J_{\hat H}(\rho')}d\bb P^{\hat H}_N ,
\end{equation*}  we will relate the estimate \eqref{eq:tilde(rho)} for $ \bb P_N $  to $ e^{-\theta_N J_{\hat H}(\rho')} $. We can already guess this because for  $ H=H[\rho] $, with the "atypical "$ \rho $ fixed, we have 
\begin{equation}\label{eq:RNes}
\bb P_N(B_\veps(\rho) )=\bb E^H\left(\frac {d\bb P_N} {d\bb P^{ H}_N}\chi_{_{B_\veps(\rho) }}\right)=\bb E^H\left(e^{-\theta_N J_{H[\rho]}(\rho')}\chi_{_{B_\veps(\rho) }}(\rho')\right)\simeq e^{-\theta_N J_{H[\rho]}(B_\veps(\rho))},
\end{equation}
where $\chi_{_{B_\veps(\rho) }}  $ is the characteristic function of a ball $ B_\veps(\rho) $ that we assume with the property \eqref{eq:Vcon}, $ \bb E^H $ is the expectation with respect to $ \bb P^{H}_N $ and  with  $ \varepsilon  $ small enough \eqref{eq:RNes} is  approximated  with
\begin{equation*}\label{key}
\bb P_N(B_\veps(\rho) )\simeq e^{-\theta_N J_{H[\rho]}(\rho)}
\end{equation*} 
for a fixed $ N $. 
From this observation,   we  see that the functional $ V(\rho) $ we are looking for will be defined in terms of a perturbation prescription of $ \bb P_N $ in such a way that  $  J_{\hat H}(\rho) $   have to be optimized over the perturbations $ {\hat H} $ to give as optimizer the best (less expensive in terms of probability) perturbation $ H=H[\rho] $.  The optimization over $ H $ is given by the supremum and $ V(\rho) $ is defined as
\begin{equation}\label{eq:optpre}
V(\rho):=\us {\hat H} {\sup}\, J_{\hat H}(\rho)=J_{H[\rho]}(\rho).
\end{equation}
At this stage we will have found a functional that is a good candidate to satisfy a large deviations principle. In the sense that fixing a $ \rho $ first, what we did in \eqref{eq:RNes}  can be repeated similarly for any measurable set $ A\subset\Omega $ that satisfies \eqref{eq:Vcon} and containing $ \rho $ we have
\begin{equation}\label{eq:P_N(A)}
\mathbb{P}_N(A)\simeq e^{-\theta_N V(A)}.
\end{equation}
Supposing $J_{\hat H}( \rho)  $ lower semi-continuous in $ \rho $ for the $ \hat H $'s in the prescription, the rate functional 
\begin{equation}\label{eq:V=opH}
 V(\rho):= \us{\hat H}{\sup}\,J_{\hat H}(\rho)
\end{equation}
is lower semi-continuous  as desired.  To have a large deviation principle  as in definition \ref{def:LDprin} we need a prescription giving  $ V(\rho) $  such that it is lower semi-continuous  and satisfies d)  and  e).  The scheme for the two bounds comes after in next two subsections.
We highlight few of the troubles one has to deal with in proving rigorous large deviation principles:
\begin{itemize}
\item The set of prescribed perturbations does not produce all the points in the metric space $ (\Omega,d) $. If this is the case one has to show a \e V-density or that the set of not producible points has probability super-exponentially small and use \eqref{eq:throwaway} to throw away this events.
\item The best optimizer in \eqref{eq:optpre} is not $ H[\rho] $, then one has to redefine a new perturbative prescribtion leading to the best optimizers $ H[\rho] $.
\item It could be that, while the measures $ \bb P_N $ is induced by some random element, the functional $ J $ is not a function of that random element. 
\item The set of possible perturbations is very large and instead of looking for some perturbative prescription of $ \bb P_N $ it could be better to face up the problem with direct combinatoric techniques.
\end{itemize}

\subsection{Upper bound d)}
Let be $ C $ a set of $ \Omega $ that we don't specify for the moment if open or closed. Denoting with $ \chi_C$ the characteristic function of a set $ C $ and $ \hat H $ a perturbation of $ \bb P_N $, from inequality
\begin{equation}\label{eq:P_N(C)}
\bb P_N (C)= \bb E_N \left(e^{-\theta_N J_{\hat H}( \rho')}e^{\theta_N J_{\hat H}(\rho')}\chi_C(\rho')\right)\leq\us{\rho'\in C}{\sup}\left\{e^{-\theta_N J_{\hat H}(\rho')}\right\}
\end{equation}
taking the logarithm and limsup it comes out
\begin{equation*}\label{key}
\us{N\to\infty}{\lim\sup}\frac{1}{\theta_N}\log \bb P_N(C)\leq-\inf_{\rho'\in C} J_{\hat H}(\rho').
\end{equation*}
We can still optimize on perturbations, that is taking the inf on $ \hat H $ we find
\begin{equation}\label{eq:supinf}
\us{N\to\infty}{\lim\sup}\frac{1}{\theta_N}\log \bb P_N(C)\leq-\sup_{\hat H}\inf_{\rho'\in C} J_{\hat H}(\rho').
\end{equation}
At this point the validity of d) in \ref{def:LDprin} for any $ C $ closed pass trough the proof of the inversion of sup and inf in \eqref{eq:supinf}, usually this is done using before the minimax lemma (see appendix A.5 in \cite{SR15})  to get d) for  any compact $ K $, i.e. $ \us{N\to\infty}{\lim\sup}\frac{1}{\theta_N}\log \bb P_N(K)\leq-\us{\rho'\in K}{\inf}\us{\hat H}{\sup} J_{\hat H}(\rho') $ and then exponential tightness (see appendix A.5 in \cite{SR15} section 2.3) for $ \bb P_N $ to conclude d)
\begin{equation}\label{eq:upbound}
\us{N\to\infty}{\lim\sup}\frac{1}{\theta_N}\log \bb P_N(C)\leq-\inf_{\rho'\in C}\sup_{\hat H} J_{\hat H}(\rho').
\end{equation}

\subsection{Lower Bound e)}

To obtain the lower bound we start considering an open  set $ A\subset\Omega $ containing the "atypical" $ \rho $ and $ H=H[\rho] $ as in \eqref{eq: P^H}, hence for $ N\to \infty $
\begin{equation}\label{eq: P_N(A)to1}
\bb P^H_N(A)\us{N}{\to} 1.
\end{equation}
Rewriting $ \bb P_N(A) $ analogously to  $ \bb P_N(C) $ in the equality in \eqref{eq:P_N(C)} of the upper bound we have
\begin{equation}\label{eq:logP(A)}
\begin{split}
\frac 1{\theta_N}\log\bb P_N (A)=& \frac 1{\theta_N}\log \bb E_N \left(e^{-\theta_N J_H(\rho')}e^{\theta_N J_H( \rho')}\chi_A(\rho')\right)\\
= &\frac 1{\theta_N}\log\bb E_N\left(e^{-\theta_N J_H(\rho')}\frac {e^{\theta_N J_H(\rho')}\chi_A(\rho')}{\bb P_N^H(A)} \,\,\bb P_N^H(A) \right)\\
= &\frac 1{\theta_N}\log\bb E_N\left(e^{-\theta_N J_H(\rho')}\frac {e^{\theta_N J_H(\rho')}\chi_A(\rho')}{\bb P_N^H(A)} \,\, \right)+\frac 1{\theta_N}\log \bb P_N^H(A),
\end{split}
\end{equation}
inside the round brackets of the last term in \eqref{eq:logP(A)}  we have  the Radon-Nikodym derivative $ \frac {d{\bb Q}^H_N}{d \bb P_N}=\frac{e^{\theta_NJ(\rho',H)}\chi_A}{\bb P^H_N(A)} $ (it is a non-negative function and its mean with respect to $ \bb P_N $ is one) of a measure $ {\bb Q}_N^H $ respect to $ \bb P_N $.
So applying Jensen respect to $ {\bb P}_N^H  $ and using \eqref{eq:RN} we find the inequality
\begin{equation}\label{eq:logP(A)>..}
\frac 1{\theta_N}\log\bb P_N (A)\geq\frac 1{\bb P_N^H(A)}\bb E_N^H\left(-J_H(\rho')\chi_A(\rho') \right)+\frac 1{\theta_N}\log\bb P_N^H(A),
\end{equation}
from \eqref{eq: P_N(A)to1} and \eqref{eq: P^H}  it follows that asymptotically
\begin{equation}\label{eq:logP(A)>J}
\us{N\to\infty}{\lim\inf}\frac 1{\theta_N}\log\bb P_N (A)\geq -J_{H[\rho]}(\rho).
\end{equation}
Here  $ \rho $ and  $ A\ni \rho $ where general, so the same  can be repeated for any $ \rho $ and any open $ A\ni \rho $. Assume that for any $ \rho \in \Omega $ the perturbation prescription is  optimized with the sup, i.e. $ J_{H[\rho]}(\rho)=\us{\hat H}{\sup}J_{\hat H}(\rho)  $, and that  $ J_{H[\rho]}(\rho) $ is lower semi-continuous. Then from \eqref{eq:logP(A)>J} we can prove that
\begin{equation}\label{eq:lowbound}
\us{N\to\infty}{\liminf}\frac{1}{\theta_N}\log \bb P_N(A)\geq-\inf_{\rho'\in A}\sup_{\hat H} J_{\hat H}(\rho').
\end{equation}
Naming $ V(\rho):=\us{\hat H}{\sup} J_{\hat H}(\rho)$, from the two bounds \eqref{eq:upbound} and \eqref{eq:lowbound} we have  the two inequalities  $ \us{N\to\infty}{\lim\sup}\frac{1}{\theta_N}\log \bb P_N(C)\leq -V(C) $ and $ \us{N\to\infty}{\lim\sup}\frac{1}{\theta_N}\log \bb P_N(A)\geq -V(A) $ to conclude  a large deviation principle.

Here it ends this scheme for a "virtual" proof of large deviation, "virtual" because we proved a large deviation principle without a specific model using ad hoc hypothesis to find some common inequality where tools  to procede in the proof are standard.
In next sections we are treating large deviations rate functionals  for interacting particle systems satisfying a scaling limit and considering trajectories $ \rho  $ which are not the solution $ \bar \rho $ of the corresponding hydrodynamic equation.  Respect to the programme of this scheme we will find just  a functional giving \eqref{eq:tilde(rho)} and we are not doing a complete proof of a large deviations principle.

\section{Dynamical large deviations}\label{sec:dLD}

Let $ D([0,T];\Sigma_N) $ be  the space of right continuous with left limits paths from $  [0,T] $ to $ \Sigma_N $ with the Skorohood topology and $ P^\gamma_N $ the distribution of the Markov chain of the considered weakly asymmetric energies-masses/particles interacting model, that is a measure on $ D([0,T];\Sigma_N) $. While let $ D([0,T];\mc M^1(\Lambda)) $ be  the space of trajectories from $ [0,T] $ to the space of positive measure $ \mc M^1(\Lambda) $ with the Skorohood topology and $ \bb P_N^\gamma:= P^\gamma_N\circ\pi_N^{-1} $ the measure induced on  $ D([0,T];\mc M^1(\Lambda)) $ by the empirical measure.   Assuming local equilibrium,  for models like \eqref{eq:fullKMPgen} we can obtain along the lines of sections \ref{sec:SL} and \ref{sec:TC} a scaling limit such that $ \bb P_N $ converges in distribution to a measure concentrated on the unique (weak) solution $ \bar \rho(x,t) $ of 

\begin{equation}\label{eq:drcE2}
\left\{
\begin{array}{ll}
\partial_t\rho(x,t)=\Delta \rho(x,t)-\div \left(\sigma(\rho(x,t)) F(x,t)\right)\,,\\
\rho(x,0)=\gamma(x)\,,\\
\rho(0,t)=\rho_-\,,\\
\rho(1,t)=\rho_+\,,
\end{array}
\right.
\end{equation}
namely  
\begin{equation*}\label{eq:dconKMPE}
\bb P^{\gamma}_N\os{d}{\us{N}{\longrightarrow}}\delta_{\bar\rho} \,\, .
\end{equation*} 
We remember we are taking into consideration the one dimensional case where $ \Lambda_{N}:=\Lambda\cap\frac{1}{N}\mathbb{Z}$ where $\Lambda=(0,1)$ and   the boundary conditions at $ \p\Lambda $ are fixed by  fast  dynamics $ \mc L_{b} $ like that  of \eqref{eq:bKMPgen}. The initial profile $ \gamma $ is  associated to an initial product measure $ \nu_N^{\gamma} $ (e.g. with exponential marginal of parameter $ \gamma(x/N) $), see section 3 in \cite{BGL05} for  details and a discussion of why we choose a random initial configuration instead of a deterministic one. Now we want to know the rate functional $ V(\rho) $  estimating the probability \eqref{eq:tilde(rho)} to observe a trajectory  $ \rho(x,t) $ different from the "typical" one $ \bar\rho(x,t) $ in a bounded time interval $ [0,T] $.  The theory of this problem is  rather well established \cite{BDeSGJ-LL15,KipLan99,BGL05}. To find this large deviation  we  perturb  the dynamics \eqref{eq:bulkxyE} in the following way
\begin{equation}\label{eq:bulkxyE+H}
\mc{L}_{\{x,y\}}^{\hat {\bb H}}f(\xi):=\int\Gamma^{\xi,\mathbb F+\hat{\bb H}}_{x,y}(dj)\big[f(\xi-j\left(\varepsilon^x-\varepsilon^y\right))-f(\xi)\big]\,,
\end{equation}
where $ \hat {\bb H}(x,y) $ is defined respect to the gradient of a smooth scalar field $ \hat H(x,t) $, that is  $ \hat {\bb H}(x,y)=\int^y_x \nabla \hat H(s)ds=\hat H(y)-\hat H(x) $, to be precise it is enough that $ \hat H\in C^2(\Lambda\times [0,T]) $. Moreover $ \hat H $ is such that $ \hat H(0,t)=\hat H(1,t)=0 $ for every $ t\in [0,T] $.  The time variable is going to be omitted in the rest unless necessary.  Let's take $ (x,y)=(x,x+\frac 1N) $ and  observe that for large $ N $ the discrete vector field $ \hat {\bb H}(x,y) $ is still of order $ O(\frac 1N) $ because $ \hat {\bb H}(x,x+\frac 1N)\sim \nabla \hat H(x)\frac 1N $. We denote the distribution of the empirical measure for the related perturbed process \eqref{eq:empmisxi} with $ \bb P^{\gamma,\hat H}_N:= P^{\gamma,\hat H}_N\circ\pi_N^{-1} $, where $  P_N^{\gamma,\hat H} $ is the distribution of the perturbed weakly asymmetric model \eqref{eq:bulkxyE+H}. The process \eqref{eq:bulkxyE+H} is a weakly asymmetric model of the kind \eqref{eq:bulkxyE} where the distribution current for the weakly asymmetric model is $ \Gamma^{\xi,\bb F+\hat{\bb H}}_{x,y}(dj) $, hence along the lines of chapter \ref{ch:HSL} the hydrodynamics scaling limits is
\begin{equation}\label{eq:dconH+F}
\bb P^{\gamma,\hat{ H}}_N\os{d}{\us{N}{\longrightarrow}}\delta_{\bar\rho^{\hat H}}\,\,,
\end{equation}
where $ \bar\rho^{\hat H}$ is the unique (weak) solution of partial differential equation

\begin{equation}\label{eq:drcE+H}
\left\{
\begin{array}{ll}
\partial_t\rho(x,t)=\Delta \rho(x,t)-\div \left(\sigma(\rho(x,t)) \left(F(x,t)+\nabla \hat H(x,t)\right)\right)\,,\\
\rho(x,0)=\gamma(x)\,,\\
\rho(0,t)=\rho_-\,,\\
\rho(1,t)=\rho_+\, .
\end{array}
\right.
\end{equation}
Therefore  the correct perturbation\footnote{Note that this kind of perturbation doesn't produce all the possible atypical profile, but the general class in which we are interest in.} to observe a fixed  profile $ \rho(x,t) $ , that is $ \pi_N(\xi_t)\sim \rho (x,t) $, is the solution $ H(x,t) $ for all $ t\in[0,T] $ of the Poisson equation 

\begin{equation}\label{eq:PoiH}
\left\{
\begin{array}{ll}
\div\big(\sigma(\rho(x,t))\nabla H(x,t)\big)=-\partial_t \rho(x,t)+\Delta \rho(x,t)-\div\big(\sigma(\rho(x,t))F(x,t)\big)\\
H(0,t)=H(1,t)=0.
\end{array}
\right.
\end{equation}
With this choice of $ H $ formula \eqref{eq:dconH+F} becomes $ \bb P^{\gamma,H}_N\os{d}{\us{N}{\longrightarrow}}\delta_{\rho^{ H}} $ and to derive a large deviation principle \eqref{eq:optpre} estimating the probability \eqref{eq:P_N(A)} that $ \pi_N(\xi_t) $ stay close to $ \rho (x,t)$, i.e. 
\begin{equation}\label{eq:A}
A=\{\pi_N(\xi_t)\sim\rho(x,t),\,t\in[0,T] \},
\end{equation}
we have to compute the Radon-Nykodim derivative  $ \frac{d\bb P^{\gamma}_N}{d\bb P_N^{\gamma,\hat H} }(\xi)$.  
First we  need to compute the Radon-Nykodim derivative $ \frac{d\bb P^{\xi_0}_N}{d\bb P_N^{\xi_0,\hat H} }(\xi)$ for an initial deterministic configuration $ \xi_0 $ by a standard computation in the theory of jump Markov processes like in chapter 10 and section 7 of appendix 1 in \cite{KipLan99}, for a computation in the KMP case see also \cite{BGL05}. Along the lines of subsection 3.2 in \cite{BGL05} it comes out that for an initial deterministic configuration it is
\begin{equation*}\label{eq:dynldt}
 \bb P^{\xi_0}_N(A)\simeq e^{-NJ_{[0,T]}(\rho)} 
\end{equation*}
where  $J_{[0,T]}(\rho)$ is
\begin{equation}\label{eq:JdLD}
J_{[0,T]}(\rho)=\frac 14\int^T_0dt\int_\Lambda dx\,\sigma(\rho)(\nabla  H(x,t))^2
\end{equation}
with $ H $ solving \eqref{eq:PoiH}.
Since we are considering a random initial configuration we should  consider also the contribution to the dynamic large deviation coming from the initial condition adding to \eqref{eq:JdLD} an extra functional which would take  into account this fact, for this argument see subsection 3.2 of \cite{BGL05}. Here we don't report it for question of space and because for our purposes is not relevant. The time interval $ [0,T] $ is just for notational choice, we could chose any finite interval $ [T_0,T_1] $. 

\section[Stationary large deviations]{Stationary large deviations theory}\label{sec:Stldt}

In the 70's  Freidlin and Wentzell \cite{FW12} studied the problem of large fluctuations in dynamical systems subject to weak noise, the logarithm of the stationary distribution becomes proportional to a  rate functional called the quasi-potential, which is characterized
through a pure variational problem.

In interacting particle systems it has been developed an analogous framework to study stationary  large deviations of macroscopic quantities like particle density or current. Stationary means that we consider    large deviations when the mass is distributed according to  the invariant measure. This theory is known as  
Macroscopic Fluctuation Theory \cite{BDeSGJ-LL15} that can be understood intuitively as a generalization of the
Freidlin-Wentzell theory to particles systems where  a weak noise proportional to $ 1/ \sqrt{N^d} $ (with $ N^d  $  size of the lattice and $ d $ the dimension)  is added to the hydrodynamics.
In this context the quasi-potential 
appears as a natural  extension of the availability to non-equilibrium particle systems. Both from a physical point of view and a technical one, a very important   issue is to actually compute the quasi-potential from its variational characterization for non-equilibrium systems for which detailed balance does not hold while $ N $ goes to infinity.

A formal bridge between the two theories can be  produced  looking at  their respective fluctuation problems
with  path integral approach for their Lagrangian reformulation and applying  a saddle-point approximation. This allows to have a common physical point of view of them.
Even if this approach is only formal because a mathematical meaning of non-linear fluctuating
hydrodynamics is still lacking, the results obtained are in complete agreement with the ones obtained through
rigorous probabilistic methods in all cases where a comparison is possible.
\vspace{0.6cm}

In this section for interacting particles systems with scaling limit \eqref{eq:drcE2} and  $ F(x,t)=E $, we denote with $ \mu_{E,N}$ the invariant measure of the dynamics\footnote{We are dealing with irreducible dynamics in such a way to have a unique invariant measure.} and with $ \bb P_{\mu_{E,N}}:=\mu_{N,E}\circ\pi^{-1}_N $ the stationary distribution of the empirical measure. Moreover we consider $ C(\Lambda) $ and its dual of positive continuous linear functionals $ C^*_+(\Lambda) $ with the $ \ast $-weak topology, namely $ \rho_n\in C^*_+(\Lambda)  $ is such that $ \rho_n\us{n}{\to}\rho $ iff for each test function $\phi\in C(\Lambda)  $ we have $ \rho_n(\phi)\us{n}{\to}\rho(\phi) $. Because of $ \Lambda $ is a compact Hausdorff space,  Riesz-Markov theorem tell us that for every $ \rho $ there is a positive regular Borel measure  $ \mu_\rho $ such that $ \rho(\phi)=\int_\Lambda\phi d\mu_\rho $ and viceversa, so for each $ \rho\in L^1(\Lambda) $ it is $ \rho(\phi)=\int_\Lambda \phi d\mu_\rho=\int_\Lambda dx\phi\rho  $  and in this case we define  the spatial integration
\begin{equation}\label{eq:scalarL2}
\langle \rho,\phi\rangle:=\int_{\Lambda}^{}dx\rho(x)\phi(x) 
\end{equation}
 which is a scalar product in $ L^2(\Lambda) $.

\subsection{Freidlin-Wentzell theory}

Consider  a finite dimensional dynamical system $ \dot x_t=B(x_t) $ with $ x\in \bb R^n $ and the random dynamical system defined by the Ito stochastic differential equation
 \begin{equation}\label{eq:noise}
\dot X_t = B(X_t)+\sqrt{2\veps}k\dot w_t,
\end{equation}
where $ k$ is a  $ n\times n $ non-singular constant matrix and $ \dot w_t\in\bb R^n $ is a vector of white  Gaussian noises $ w^i_t $ with zero mean and covariance $ \bb E(w_t^i w_{t'}^j)=\delta^{ij}\delta(t-t') $. Moreover we denote $ K=k\cdot k^T $. Assume $ B $ a Lipschitz vector fields and  with a unique globally attractive equilibrium ($ B(\bar x)=0$) point $ \bar x $.
The unique invariant measure $ \mu_\veps(dx) $ solves the stationary Fokker-Planck equation corresponding  to \eqref{eq:noise}. We want to consider a  non-reversible\footnote{Under suitable assumptions of the matrix $ k $ (for example $ k\, \bb I $ with $ k $ a real constant and $ \bb I $ the identity matrix), if $ B(x)=-\nabla U(x) $ the process \eqref{eq:noise} is reversible. In this case the invariant measure is $ \mu_\veps(dx)= \frac 1{Z_\veps}e^{-\veps^{-1}U(x)}dx$, where $ Z_\veps $ is a normalization factor.} process, in this case computations are difficult and the invariant measure in general is not known. 

In the limit of weak noise $ \veps\to 0 $,
Freidlin-Wentzell proved \cite{FW12} that  the probability to observe in a finite time interval $ [T_0,T_1] $ a trajectory $ x_\veps(t) $  of   \eqref{eq:noise} close to  a trajectory $ x(t) $ different from the deterministic solution $ \bar x(t) $ of $ \dot x=B(x) $  with  initial deterministic condition $ x(T_0)=x_0 $ is given by 
\begin{equation}\label{eq:dynWF}
\bb P^{x_0}  (X(t)\sim x(t), t\in[T_0,T_1])\simeq e^{\veps^{-1} J_{[T_0,T_1]}(x(t)) }
\end{equation}
where the  rate functional $ J_{[T_0,T_1]}(x(t)) $ is 
\begin{equation}\label{eq:FW ac}
J_{[T_0,T_1]}(x)=\frac 14\int^{T_1}_{T_0} dt[\dot x-B(x)]\cdot K^{-1}[\dot x-B(x)].
\end{equation}
The rate functional \eqref{eq:FW ac} comes from a dynamical large deviation principle \cite{FW12}, in the case of interacting particles systems this kind of problem was discussed in the previous section \ref{sec:dLD} with the help of the theory of Markov process and soon we will see the resemblance  between  \eqref{eq:JdLD} and \eqref{eq:FW ac}. Now we are interested in the stationary large deviations, namely the problem of "rare" events after the systems relaxed into the stationary state $ \mu_\veps $, therefore we want the rate functional $ V(x) $ estimating
\begin{equation}\label{eq:stdevFW}
\bb P_{\mu_\veps}(X \sim x)\simeq e^{-\veps^{-1}V(x)},
\end{equation}
where $ x\neq\bar x $ and $ \bb P_{\mu_\veps} $ is the probability distribution in the stationary states $ \mu_\veps $. Freidlin and Wentzell proved that the quasi-potential defined as
\begin{equation}\label{eq:q-p FW}
W(x):=\us{T>0}{\inf}\,\,\, \us{\hat x\in\mc{T}_{x,T}}{\rm inf}J_{[-T,0]}(\hat x),
\end{equation}
where $ J_{[-T,0]} $ is \eqref{eq:FW ac} and $ \mc{T}_{x,T} $ is the set of trajectories connecting the equilibrium point $ \bar x $ to $ x $, i.e.
\begin{equation*}\label{eq:pathset1}
\mc{T}_{x,T}:=\{\hat x(t): \hat x(-T)=\bar x,\hat x(0)=x \},
\end{equation*}
gives $ V(x)=W(x) $ for the described scenario where there is a unique globally attractive equilibrium point $ \bar x $ . For  a short review with a physical flavour see \cite{BGN16}. In \eqref{eq:noise} we chose a constant matrix but everything is generalizable to a non constant matrix $ k(x) $.

\subsection{A mechanical point of view}
We want to give a physical motivation to  \eqref{eq:q-p FW}. First we observe  that the functional \eqref{eq:FW ac} is positive and  vanishes on the solutions of $ \dot x=B(x) $, in other words it is optimized. Therefore fixing the extremals of a   trajectory from $ x(T_0)=x_0 $ to $ x(T_1)=x_{1} $ the functional  $ J_{[T_0,T_1]}(x) $ of \eqref{eq:FW ac} can be thought as the  action functional of the Lagrangian
\begin{equation}\label{eq:actionFW}
L(x,\dot x)=[\dot x-B(x)]\cdot K^{-1}(x)[\dot x-B(x)],
\end{equation}
which in general will have other optimizers (possibly other minimizers). With a  path-integral approach going back to
Onsager and Machlup \cite{MO53} the transition probability $ \bb P(x(T_0)=x_0,x(T_1)=x_1) $  from a state $ x_0 $ to a state $ x_1 $  is written as the path integral denoted with $ \mc D(\cdot) $
{\small\begin{equation}\label{eq:pathint}
\bb P(x(T_0)=x_0,x(T_1)=x_1)\simeq\frac 1{Z_\veps}\int \mc D(x(t))\delta(x(T_0)=x_1,x(T_1)=x_{1})e^{-\veps^{-1}J_{[T_0,T_1]}(x)}
\end{equation}  }
where $ Z_\veps $ is a normalization function and the integration  is on the space of trajectories $ x(t) $ with the constraint  that  $ x(T_0)=x_0$ and $x(T_1)=x_{1}  $. Since $ k $ is non-singular $ K $ is a positive matrix and $ L >0 $  if $ \dot x\neq B(x) $, then contributions to the integral
are greater  as much as  closer they are to a minimizing stationary points of $ L(x,\dot x) $  (remember we are in the limit of weak noise). The exponential factor plays then  the
role of the probability density in the space of paths, where the estimate for  $\bb P(X(t)\sim x(t), t\in[T_0,T_1]) $ contributes in \eqref{eq:pathint} with 
\begin{equation}\label{eq:actionest}\bb P(X(t) \sim x(t),t\in[T_0,T_1])\simeq e^{-\veps^{-1}J_{[T_0,T_1]}(x)}.
\end{equation}
In \eqref{eq:pathint} we  take $ [T_0,T_1]=[-T,0] $ and $ x(-T)=\bar x, \,\, x(0)=x$, then considering the limit for $ T\to +\infty $ we have \begin{equation}\label{eq:Qp fluc}
\bb P(x(-T)=\bar x,\,x(0)=x)\us{T\to+\infty}{\rightarrow}\bb P_{\mu_\veps}(X \sim x).
\end{equation}
So in the weak limit noise $ \veps \to 0 $ \eqref{eq:stdevFW} is obtained with a saddle point approximation ($ \veps $ is very small) to the right hand side of \eqref{eq:pathint} where dominant contribution in the path integral is the one from the path minimizing $ J_{[-\infty,0]}(x) $.

\subsection{Macroscopic fluctuation theory}\label{subsec:MFT} This theory is a variational characterization of stationary fluctuations in the framework of interacting particles systems we presented so far. Here, we are developing it for weakly asymmetric interacting systems with scaling limits like \eqref{eq:drcE2} where $ F(x,t)=E $.
We can think about it as a field theory version of the Freidlin-Wentzell theory. In the last one randomness came from a noise added to an ordinary differential equations, while in our case randomness is intrinsic in the process defining the empirical measure and its diffusive equation \eqref{gate} describing the macroscopic behaviour.
To relate the two theories, we can imagine to consider the problem of stationary large deviations for a stochastic pde's as \eqref{eq:drcE2} where the first equation with $ F(x,t)=E $ is formally substituted by 
\begin{equation*}\label{eq:pdenoise}
\partial_t\rho(x,t)=\Delta \rho(x,t)-\div \sigma(\rho(x,t))  E+"\,\,noise\propto\frac{1}{\sqrt N}".
\end{equation*}
But the real similarity between the two theories is given by their respective dynamical large deviations. Indeed defining the Lagrangian 
\begin{equation}\label{eq:Lagrangia}
\mc L(\rho,\partial_t\rho):= \left\langle\, \nabla H,\sigma(\rho)\nabla H\,\right\rangle
\end{equation}
where $ \langle \cdot,\cdot\rangle $ is the spatial integration defined in \eqref{eq:scalarL2} and $ H $ solves \eqref{eq:PoiH},
the dynamical rate functional $ J_{[0,T]}(\rho) $ in \eqref{eq:JdLD}   can be introduced  as
\begin{equation}\label{eq:MFTact}
J_{[0,T]}(\rho)=\frac 14\int^T_0dt\,  \mc L(\rho,\partial_t\rho)  .
\end{equation}
Using \eqref{eq:PoiH} to rewrite the gradient field $ \nabla H $ as
\begin{equation}\label{eq:nablaH}
\nabla H=\frac 1{\sigma(\rho)}\div^{-1}\left(\p_t\rho-\Delta\rho+\div (\sigma E)\right)
\end{equation}
it appears clear that  also in this case we can  interpret the dynamical large deviations rate functional as the action of a Lagrangian. We can then proceed with the same  heuristic argument we did in the Freidlin-Wentzell theory, in the present case  $ (\rho,\p_t\rho) $ play the role of $ (x,\dot x)$, to evaluate the stationary large deviations
\begin{equation}\label{eq:flusta}
\bb P_{\mu_{N,E}}(\pi_N\sim\rho)\simeq e^{-NV_E(\rho)},
\end{equation}
so the large deviation rate functional $ V_E(\rho) $ is computed as the quasi-potential
\begin{equation}\label{eq:QP-MFT}
W_E(\rho):=\us{T>0}{\inf}\,\,\, \us{\hat \rho\in\mc{T}_{\rho,T}}{\rm inf}J_{[-T,0]}(\hat \rho),
\end{equation}
where $ J_{[-T,0]} (\hat\rho)$ is \eqref{eq:MFTact} and $ \mc{T}_{\rho,T} $ is the set of paths density connecting the stationary space dependent profile $ \bar \rho  $ to the space fluctuation $ \rho $, i.e.
\begin{equation}\label{eq:pathset2}
\mc{T}_{\rho,T}:=\{\hat \rho(x,t): \hat \rho(x,-T)=\bar \rho(x),\hat \rho(x,0)=\rho(x),\hat\rho(\p_\pm \Lambda,t)=\rho_\pm \text{ for all } t \},
\end{equation}
where $ \partial_\pm $ is the boundary operator giving the left boundary (-) and the right boundary (+) of $ \Lambda=[0,1] $.
Note that in \eqref{eq:pathset2} we underlined the space-time dependences of the path profiles. The paths in $ \mc T_{\rho,T} $ must also satisfy the boundary condition $ \rho(t,\p_\pm\Lambda)=\rho_\pm $, differently for  our boundary dynamics it can be shown \cite{BDeSGJ-LL03} that $ J_{[-T,0]}(\rho)=+\infty $ if the path $ \rho $ does not respect it. Under general conditions for boundary driven exclusion precesses in \cite{BG04,Fa09} they proved that 
\begin{equation}\label{eq:QP=V_E}
W_E(\rho)=V_E(\rho).
\end{equation}
That is what we derived with the path integral approach and results by the general arguments in \cite{BDeSGJ-LL02}.
\begin{remark}
We can  compute the asymptotic large deviations of the invariant  measure solving the dynamical variational problem  \eqref{eq:QP-MFT} without entering into the details of the invariant measure (generally  difficult to compute and not known).
\end{remark}
\begin{remark}
Inside \eqref{eq:QP-MFT} and \eqref{eq:pathset2} one can guess the Onsager-Machlup principle, that is  the macroscopic reversibility  for equilibrium states formulated as 
\begin{quotation}
\e{" the most probable trajectory creating the fluctuation is the time reversal of the most probable relaxation trajectory"}.
\end{quotation}
Since we are dealing with general boundary condition with $ \rho_+>\rho_- $ and we added an external field $ E $, unless $ E $ is a special value (see section \ref{sec:stasol}) such that the stationary current  $ J_E $ defined in \eqref{jeff} is zero , here we are in non-equilibrium stationary states   and this principle will be extended with an extra-condition later. In addition, we remark that it is not necessary having microscopic reversibility (detailed balance) for the microscopic dynamics to have the Onsager-Malchup principle that we call \e{macroscopic reversibility}, in subsection \ref{ss:Gladyn} we already gave an example of this phenomena due to fact that going from the microscopic to the
macroscopic scale a lot of information is lost and irreversibilities at a small scale
may be erased when taking macroscopic average see remark \ref{r:MacRev} and \cite{GJLV97}. 
\end{remark}

\subsection{Infinite dimensional Hamilton-Jacobi equation } Associated to the Lagrangian \eqref{eq:Lagrangia} the systems of coordinates $ (\rho,\partial_t\rho) $ inherited a Hamiltonian structure that allow us to derive the Hamilton-Jacobi equation correspondent to the minimization of the action  \eqref{eq:QP-MFT}.  The Hamiltonian to compute is 
\begin{equation}\label{eq:Hamiltonian}
\mc{H(\rho,\omega)}:=\underset{\zeta}{\rm sup} \left\{\langle \omega,\zeta\rangle-\mc{L}(\rho,\zeta)  \right\}.
\end{equation}
Let $ \delta $ be the variation of a functional $ \mc F $ and $ \frac{\delta\mc F}{\delta\rho} $ its functional derivative. Using $\delta\zeta=-(\sigma(\rho)\delta H_x)_x$ the conjugated momentum  $\omega=\frac{\delta \mc{L}}{\delta\zeta}$ turns out to be exactly $H$, therefore, expressing $\zeta(\rho,H)$ via the perturbed hydrodynamic  $\p_t\rho=\Delta\rho-\nabla\left(\sigma(\rho)\left(E+\nabla H\right) \right) $, the explicit form of the  $\mc{H(\rho,H)}=\left\{\langle H,\zeta\rangle-\mc{L}(\rho,\zeta)  \right\} _{\zeta=\zeta(\theta,H)}$ is 
\begin{equation}\label{eq:Hamiltonian2}
\mc{H}(\rho,H)=\left\langle \nabla H,  \sigma(\rho) \nabla H\right\rangle + \left\langle  H, \Delta\rho -E\,\nabla\sigma(\rho) \right\rangle,
\end{equation}
We normalize the quasi-potential in such a way  that $ V_E(\rho)\geq 0 $ and $ V_E(\bar \rho)=0 $.
  By a classical result\footnote{For the trajectories $ x(t) $ of a Lagrangian $ L(x,\dot x) $ connecting $ x(-T)=\bar x $ to $ x(0)=x $  the function $ V(x)=\us{T>0}{\inf}\,\,\, \us{\hat x(t)\in\mc{T}_{x,T}}{\rm inf}J_{[-T,0]}(\hat x(t))=\us{T>0}{\inf}\,\,\, \us{\hat x(t)\in\mc{T}_{x,T}}{\rm inf}\int_{-T}^0dt\,L(\hat x,	\dot {\hat x}) $ solves the Hamilton-Jacobi equation $  H(x,\nabla V(x))=c $, where $ H  $  is the hamiltonian and the constant $ c $ can be fixed as $ c=H(\bar x,\nabla V(\bar x)) $, see \cite{Arn89}. For example in the Freidlin-Wentzell case \eqref{eq:actionFW} for $ \bar x $ such that $ B(\bar x)=0 $ we have $ V(\bar x)=0 $ and then $ \nabla V(\bar x)=0 $. So the constant $ c $ can be fixed as $ c=H(\bar x,0)=0$. } in analytic mechanics and by conservation of energy  $ V_E(\rho) $ solves by the infinite Hamilton-Jacobi equation  $ \mc H\left(\rho,\frac {\delta V_E}{\delta \rho}\right)=\mc H(\bar \rho,0) =0 $, i.e.
\begin{equation}\label{eq:HJeq}
\left\langle \nabla\frac{\delta V_E}{\delta \rho},  \sigma(\rho) \nabla \frac{\delta V_E}{\delta \rho}\right\rangle + \left\langle  \frac{\delta V_E}{\delta \rho}, \Delta\rho -E\, \nabla\sigma(\rho) \right\rangle=0 
\end{equation}
with the boundary conditions $ \left. \frac{\delta V_E}{\delta\rho}(\hat \rho)\right|_{(\p_{\pm}\Lambda)}  = 0 $ with $ \hat\rho(\p_\pm \Lambda)=\rho_\pm  $. This equation is the keystone of the problem, indeed solving it we will identify the quasi-potential.

In next chapter we will look for the solution of the Hamilton-Jacobi equation \eqref{eq:HJeq} that identifies the quasi-potential $ V_E $.

\chapter[Totally asymmetric limit]{Totally asymmetric limit for models of heat condutcion}\label{ch:TALD}

In this chapter we characterize the quasi-potential \eqref{eq:QP-MFT} as solution of the Hamilton-Jacobi equation \eqref{eq:HJeq}. This is done in various steps. Before we compute the quasi-potential at equilibrium, then we will find a functional depending on an auxiliary variable and that solves the Hamilton-Jacobi equation. This functional will be identified with the quasi-potential after an optimization  over the auxiliary variable. The optimization will be according to the values of the external field $ E $. Later on, the adjoint hydrodynamics will characterize the  trajectories of the quasi-potential \eqref{eq:QP-MFT} with a generalized Onsager-Machlup principle.

After this, we study the asymptotic limit of the quasi-potential when the external field is large. This is done, like in \cite{BDeSGJ-LL10,BDeSGJ-LL11,BGL09}, studying first the limit of $\mathcal G_E$ and then solving a corresponding variational problem.
In this chapter we indicate the stationary state $ \bar\rho $ with a subscript $ \bar\rho_E $ because we underline  its dependence on $ E $   when the totally asymmetric limit $ E\to\infty $ is considered.

To study  large deviations in the totally asymmetric limit we need to understand the behaviour of the solution of the hydrodynamics  at the stationary state, that is when $ \p_t\rho=0 $. For our purpose it is enough to know the behaviour of   $ \frac{J_E(\bar\rho_E)}{E} $  in the totally asymmetric limit, this is done in section \eqref{sec:stasol}. The explicit stationary state $ \bar\rho_E $ for the KMP model is derived with a direct computation in section \ref{sec:KMPsta} of the appendix.

\section{Stationary solutions and currents}\label{sec:stasol}

Before to enter into the computation of the quasi-potential $ V_E $ is useful to study the stationary solutions. This will characterize a critical value of the external field $ E $ denoted $ E^* $ such that the systems is at equilibrium where the typical current $ J_E $ is zero. Moreover it will allow us to split
our study in three cases, namely when  in \eqref{jeff} the current $ J_E>0 $ for $ E>E^* $, when it is $ J_E=0 $ and when $ J_E<0 $ for $ E<E^* $.
In one dimension with boundary
conditions $\rho_\pm$, the stationary solution $\bar\rho_E$ of the hydrodynamic equation \eqref{hydrog} with a constant external field $F(x,t)=E$ is obtained as the solution of
\begin{equation}\label{steq}
\left\{
\begin{array}{l}
\Delta\rho-E\nabla\sigma\left(\rho\right)=0\,, \\
\rho(0)=\rho_-\ , \rho(1)=\rho_+\,.
\end{array}
\right.
\end{equation}
Recalling the typical current \eqref{jeff}, equation \eqref{steq} can be written as $\nabla\cdot J_E(\rho)=0$. In one dimension this implies that the typical current in the stationary state is spatially constant.
We are interested in the asymptotic behaviour of the stationary solution of the hydrodynamic equation in the limit of a large external field. The asymptotic behaviour of the solution can be obtained either by a direct computation or using the general theory (\cite{Se00} chapter 15). According to this the limiting value is the stationary solution of the conservation law obtained removing the second order derivative term and with Bardos Leroux N\'{e}d\'{e}lec boundary conditions. This should be also the stationary solution of the hydrodynamic equation for the asymmetric models discussed in section \ref{sec:Amodels} \cite{Bah12,PS99}. For our aims it is enough a weaker result. We use the fact that the unique solution of \eqref{steq} is monotone and the asymptotic behaviour of the current for large fields.

Equation \eqref{steq} can be integrated obtaining
\begin{equation}\label{ei}
\nabla\rho-E\sigma(\rho)=-J_E
\end{equation}
where $J_E$ is the integration constant that coincides with $J_E(\bar\rho_E)$ the typical current in the stationary state.

The monotonicity of $\bar\rho_E$ follows by the fact that if there is a non constant solution $\tilde\rho$ of the equation in \eqref{steq} such that $\nabla\tilde\rho(y)=0$ for some $y\in[0,1]$ then we have two different solutions to the Cauchy problem determined by the conditions $\nabla\rho(y)=0$ and $\rho(y)=\tilde\rho(y)$. One is $\tilde\rho$ itself and the other one is the constant one.

Since the solution $\rho$ in \eqref{ei} is monotone the integration constant $J_E $ is determined imposing the validity of the boundary conditions by
\begin{equation}\label{condju}
\int_{\rho_-}^{\rho_+}\frac{d\rho}{E\sigma(\rho)-J_E}=1\,.
\end{equation}
The left hand side of \eqref{condju} is monotone on $J_E$ that can be uniquely fixed for any choice of $\rho_\pm$ and $E$. Once $J_E$ has been fixed $\bar\rho_E$ is uniquely obtained by a direct integration of \eqref{ei}.
We distinguish the stationary states according to the sign of the stationary current. For any choice of $\rho_\pm$ there exists an external field $E^*$ for which the typical value of the current in the stationary state vanishes. This field is obtained selecting $J_{E^*}=J_{E^*}(\bar\rho_{E^*})=0$ in \eqref{condju} and using \eqref{llaa}
\begin{equation}\label{estar}
E^*=\lambda[\rho_+]-\lambda[\rho_-]\,.
\end{equation}
As the intuition suggests if we have a field $E>E^*$ then $J_E(\bar\rho_E)>0$ while $J_E(\bar\rho_E)<0$ for a field $E<E^*$.

\smallskip

To study the asymptotic behaviour of the current for large fields it is convenient to introduce the variable $\alpha =\frac 1E$ and the function $\mathcal E(\alpha):=\alpha J_{\frac 1\alpha}$. Condition \eqref{condju} becomes
\begin{equation}\label{condjue}
\alpha\int_{\rho_-}^{\rho_+}\frac{d\rho}{\sigma(\rho)-\mathcal E(\alpha)}=1\,.
\end{equation}
For any  $\alpha\neq 0$ the value $\mathcal E(\alpha)$ cannot belong to the interval $\{\sigma(\rho)\,, \rho\in[\rho_-,\rho_+]\}$ because otherwise the integral on the left hand side of \eqref{condjue} is divergent. Moreover when $\alpha <0$
then we need to have $\mathcal E(\alpha)\geq \sigma(\rho)$ for any $\rho$ while if $\alpha>0$ we get $\mathcal E(\alpha)\leq \sigma(\rho)$ for any $\rho$. This follows by the fact that otherwise the sign of the integral in \eqref{condjue} is not positive. When $|\alpha|\to 0$ the value of $\mathcal E(\alpha)$ cannot stay far from the interval $\{\sigma(\rho)\,, \rho\in[\rho_-,\rho_+]\}$ since otherwise the equality \eqref{condjue} cannot be satisfied. Since depending on the sign of $\alpha$ we have that $\mathcal E(\alpha)$ is always above or below the interval we deduce that
\begin{equation}\label{bgn}
\left\{
\begin{array}{l}
\lim_{\alpha\uparrow 0} \mathcal E(\alpha)=\lim_{E\to -\infty}J_E/E=\max_{\rho\in[\rho_-,\rho_+]}\sigma(\rho)\,, \\
\lim_{\alpha\downarrow 0} \mathcal E(\alpha)=\lim_{E\to +\infty}J_E/E=\min_{\rho\in[\rho_-,\rho_+]}\sigma(\rho)\,.
\end{array}
\right.
\end{equation}

\section[Quasi-potential for asymmetric models]{Quasi-potential for weakly asymmetric models}\label{sec:QPweas}

The problem we are interested in is the computation of the large deviation rate functional for the empirical measure  when the mass is distributed according to the invariant measure $\mu_{N,E}$  and the hydrodynamics is \eqref{hydrog} with a constant external field $ F=E $. In general the computation of the invariant measure in the non reversible case is a difficult problem. We give a description of the invariant measure at a large deviations scale. This asymptotic behaviour is described by the associate rate functional by \eqref{eq:flusta}. This computation is following along the lines of previous subsection \ref{subsec:MFT}.

In \cite{BGL05} it was shown that for all the boundary driven one dimensional
symmetric models \eqref{eq:KMPbulkgen} having constant diffusion and quadratic mobility it is possible to compute the corresponding non local quasi-potential. We show that this is possible for the same class of models also in the case of a weak constant asymmetry \eqref{eq:bulkxyE}. The corresponding quasi-potential is still non local and has a structure similar to the one of weakly asymmetric exclusion \cite{BGL09,BDeSGJ-LL11,ED04}.
 
We consider only the cases with $c_2\neq 0$ in $ \sigma(\rho)=c_2\rho^2+c_1\rho+c_0 $. The cases $c_2=0$ correspond to special models (zero range, Ginzburg-Landau) that have a local quasi-potential and can be studied directly.

\subsection{The reversible case}\label{trc}
In the case of reversible models (see subsection \ref{subsec:sta}) the computation of the quasi-potential is direct and we do
not need to solve the variational problem \eqref{eq:QP-MFT}. This is due to the fact that the invariant measure is product and the corresponding large deviations rate functional can be computed directly as the Legendre transform of a scaled cumulant generating function \cite{SR15,Tou09}. Let us show this computation in a more general framework.  Consider a family of probability measures $\mu^\lambda$
on $\mathbb R$ depending on the real parameter $\lambda$ of the form \eqref{eq:mu_lambda}. We call
\begin{equation}\label{prr}
P_\lambda(\phi)=\log \int_{\mathbb R}\mu^\lambda(d x)e^{\phi x}
\end{equation}
its cumulant  generating function. Using the expression \eqref{eq:mu_lambda} we have that
\begin{equation}\label{ca}
P_\lambda(\phi)=P(\phi+\lambda)-P(\lambda)
\end{equation}
where $P(\cdot)=\log Z(\cdot)$. We call
\begin{equation}\label{llp}
f_\lambda(\alpha):=\sup_{\phi}\left\{\alpha\phi-P_\lambda(\phi)\right\}
\end{equation}
the Legendre transform of $P_\lambda$. Using \eqref{ca} we obtain
\begin{equation}\label{serio}
f_\lambda(\rho)=f(\rho)+P(\lambda)-\lambda\rho\,,
\end{equation}
where $f(\cdot)$ is the Legendre transform of $P$ and, in the case of models with equilibrium product measures, it is called the density of free energy. Recall that $\rho[\lambda]$ and $\lambda[\rho]$ are the monotone functions determining respectively the density as a function of the chemical potential and the chemical potential as a function of the density. We have that $\lambda[\rho]=f'(\rho)$, $P'(\lambda)=\rho[\lambda]$. By the Legendre duality
we obtain
\begin{equation}\label{freng}
f_\lambda(\rho)=f(\rho)-f\big(\rho[\lambda]\big)-f'\big(\rho[\lambda]\big)\big(\rho-\rho[\lambda]\big)\,.
\end{equation}
The density of free energy $f$ satisfies the Einstein relation \eqref{Einzwei} and we obtain
\begin{equation}\label{lef}
f(\rho)=\left\{
\begin{array}{ll}
-\log\rho & \textrm{KMP}\,,\\
\rho\log\rho-(1+\rho)\log(1+\rho) & \textrm{KMPd}\,,\\
\rho\arctan\rho-\frac 12\log(1+\rho^2) & \textrm{KMPx}\,.
\end{array}
\right.
\end{equation}
Consider a slowly varying product measure
$\mu_N^{\lambda(\cdot)}=\prod_{x\in\Lambda_N}\mu^{\lambda(x)}(d\xi(x))$ where $\lambda(\cdot)$ is a continuous function on $\Lambda$. Let $g:\Lambda\to \mathbb R$
be a continuous test function. We can compute
\begin{eqnarray}\label{ep}
&& \mathcal P(g)=\lim_{N\to +\infty}\frac{1}{N^d}\log \int_{\mathbb R^{\Lambda_N}}\prod_{x\in\Lambda_N}\mu^{\lambda(x)}(d\xi(x))
e^{N^d\int_{\Lambda} g d\pi_N(\xi)}\nonumber \\
& &=\lim_{N\to +\infty}\frac{1}{N^d}\sum_{x\in \Lambda_N}P_{\lambda(x)}(g(x))=\int_\Lambda P_{\lambda(x)}(g(x))dx\,.
\end{eqnarray}
A general theorem \cite{SR15,Tou09} implies that the large deviations rate functional $\mathcal V_{\lambda}$ for $\pi_N(\xi)$ when the configuration is distributed according to the slowly varying product measure with marginals $\mu^{\lambda(x)}$ is
obtained as the Legendre transform of \eqref{ep}, i.e.
\begin{eqnarray}\label{compiti}
& &\mathcal V_{\lambda}(\rho)=\sup_{g}\left\{\int_\Lambda \rho(x)g(x)dx-\mathcal P(g)\right\}=\int_\Lambda f_{\lambda(x)}(\rho(x))dx\nonumber\\
& &=\int_\Lambda \Big[f(\rho(x))-f\big(\rho[\lambda(x)]\big)-f'\big(\rho[\lambda(x)]\big)\big(\rho(x)-\rho[\lambda(x)]\big)\Big]\,dx\,.
\end{eqnarray}
In the case of $\lambda(\cdot)=\lambda$ constant we obtain the rate functionals for the empirical measure associated to product of independent random variables having distribution $\mu^\lambda$. This means that the densities of free energy in \eqref{lef} encode the distribution of the microscopic system at equilibrium. The distribution $\mu^\lambda$ can be obtained by $f(\rho)$  with a Legendre transform and a double sided inverse Laplace transform. For the first two expressions in \eqref{lef} we obtain respectively exponentials and geometric distributions. For the third expression in \eqref{lef} we could not find a closed analytic expression after these two operations and this is the main reason why we have not an explicit microscopic model for the KMPx dynamics.

Consider a model having rates of transition \eqref{eq:pertEnrate} where $\mathbb F$ is the discretization of $F=\nabla \psi$ with $\psi$ is a smooth function on the domain $\Lambda$ such that $\psi|_{\partial \Lambda_N}=\lambda$.
By the general result on non homogeneous reversible models in Section \ref{subsec:sta} we have that the invariant measure is of product type slowly varying and indeed coincides with $\mu_N^{\psi(\cdot)}$. The quasi-potential can be computed using the approach above described and we have $V_{\nabla \psi}(\rho)=\mathcal V_\psi(\rho)$.

So in this subsection we got a general expression for the stationary rate functional in the case of equilibrium states, namely  when   $ E=E^* $ in \eqref{estar}. Therefore in a stationary state  the current $ J_E^* $ in \eqref{ei} is zero, see section \ref{sec:stasol}.  

\subsection {Solution of the Hamilton-Jacobi equation}

In subsection \eqref{subsec:MFT} we saw that associated to the variational problem \eqref{eq:QP-MFT} there is the 
infinite dimensional infinite dimensional Hamilton-Jacobi equation \eqref{eq:HJeq}, that we rewrite as
\begin{equation}\label{HJ}
\int_{\Lambda}\left[\nabla \frac{\delta V_E(\rho)}{\delta \rho}\cdot \sigma(\rho)\nabla\frac{\delta V_E(\rho)}{\delta \rho}+
\frac{\delta V_E(\rho)}{\delta \rho}\div\Big(\nabla \rho-E \sigma(\rho)\Big)\right]dx=0\,.
\end{equation}   
We remember that the rate functional $V_E$ that coincides with the quasi-potential $W_E$ is a solution to \eqref{HJ} from the arguments in subsection \eqref{subsec:MFT}.
We show that it is possible to find the relevant solutions of this equation for all the weakly asymmetric one dimensional models discussed here. The cases with zero external field unitary diffusion matrix and quadratic mobility were discussed in \cite{BGL05}. The symmetric and weakly asymmetric cases with unitary diffusion matrix and quadratic concave mobility ($ c_2<0 $) correspond essentially to exclusion models and have been already discussed in \cite{BDeSGJ-LL11, BGL09, ED04}. Here we complete the class of solvable models discussing the cases of unitary diffusion, quadratic and convex mobilities and in presence of a constant external field $E$. A similar computation for a model of oscillators having the same dynamic rate functional as the KMP model has been done in \cite{Bernpc}. In \cite{Bertpc} it is considered the KMP case and its totally asymmetric limit.

We show in this section how to find solutions of the Hamilton-Jacobi equation \eqref{HJ}. The existence and possibly uniqueness of $ \phi^\rho $ with boundary conditions $ \phi_\pm $  will be postponed to appendix \ref{sec:exsist} and proved only for the case of mobility $ \sigma(\rho) $ with real roots.                                                                                                                                                                                                                                                                                                                                                                                            
Later on we will discuss
more precisely the relevant variational problems  corresponding to the different values of the external field, this   issue is not discussed in full detail since a complete analysis  requires a long discussion. The variational problems are however very similar to the corresponding ones for the exclusion process and we refer to \cite{BDeSGJ-LL10,BDeSGJ-LL11,BGL09} for the details.
Following the approach of \cite{BDeSGJ-LL02,BDeSGJ-LL01,BDeSGJ-LL15, BGL05} we search for a solution of \eqref{HJ}  of the form
\begin{equation}\label{theform}
\frac{\delta \mc G_E(\rho(x),\phi(x))}{\delta \rho}=f'(\rho(x))-f'(\phi(x))\,,
\end{equation}
where $\phi$ has to be determined by the equation \eqref{HJ}. By the general theory \cite{BDeSGJ-LL15}  the space profile $\phi $ has to satisfies the boundary conditions \begin{equation}\label{eq:bcEL}
\left\{\begin{array}{ll}
\phi_-=\phi(0)=\rho_-\\
\phi_+=\phi(1)=\rho_+
\end{array}\right..
\end{equation}
We write the generic quadratic mobility as $\sigma(\rho)=c_2\rho^2+c_1\rho+c_0$
for suitable constants $c_i$. We rememeber we are considering only the cases with $c_2\neq 0$.
We insert \eqref{theform} into \eqref{HJ} and use the quadratic expression of the mobility with  some manipulations like in \cite{BDeSGJ-LL01,BDeSGJ-LL02,BDeSGJ-LL15,BGL05}. Since $\rho$ and $\phi$ satisfy the same boundary conditions, after one integration by parts with boundary terms disappearing  we obtain
\begin{equation}\label{primoc}
\int_\Lambda\left[\nabla\big(f'(\phi)-f'(\rho)\big)\sigma(\rho)\nabla f'(\phi)\right]\,dx +E\int_\Lambda\left(f'(\phi)-f'(\rho)\right)\nabla \sigma(\rho)\,dx=0\,.
\end{equation}
The first term in \eqref{primoc} can be developed as follows.
First we compute the derivatives and add and subtract suitable terms getting
\begin{equation}\label{secondoc}
\int_\Lambda\left[\nabla\left(\phi-\rho\right)\frac{\nabla\phi}{\sigma(\phi)}+
\left(\sigma(\rho)-\sigma(\phi)\right)\left(\frac{\nabla\phi}{\sigma(\phi)}\right)^2\right]\,dx\,.
\end{equation}
Then we integrate by parts the first term in \eqref{secondoc} and use the identity
\begin{equation}\label{formula}
\sigma(\rho)-\sigma(\phi)=(\rho-\phi)(c_2(\rho+\phi)+c_1)
\end{equation}
obtaining
\begin{equation}\label{terzoc}
\int_\Lambda\left[\frac{\rho-\phi}{\sigma(\phi)}\Delta\phi+c_2\left(\frac{\rho-\phi}{\sigma(\phi)}\right)^2\left(\nabla \phi\right)^2\right]\, dx\,.
\end{equation}
For the second term in \eqref{primoc} we integrate by parts and use again \eqref{formula} obtaining
\begin{equation}\label{quartoc}
E\int_\Lambda\left[\frac{(\phi-\rho)\nabla\phi}{\sigma(\phi)}\big(c_2(\rho+\phi)+c_1\big)\right]\,dx\,.
\end{equation}
Putting together \eqref{terzoc} and \eqref{quartoc} we obtain that the Hamilton Jacobi equation can be written as, 
\begin{equation}\label{onestep}
\int_{\Lambda}\frac{(\rho-\phi)}{\sigma^2(\phi)}\Big[\Delta\phi\sigma(\phi)+c_2(\nabla\phi)^2(\rho-\phi)-
E\sigma(\phi)\nabla\phi\left(c_2(\rho+\phi)+c_1\right)\Big]dx=0\, ,
\end{equation}
a possible way of solving the above equation is to impose that the term inside squared parenthesis is zero, that is  
\begin{equation}\label{eq:[]=0}
\Delta\phi\sigma(\phi)+c_2(\nabla\phi)^2(\rho-\phi)-
E\sigma(\phi)\nabla\phi\left(c_2(\rho+\phi)+c_1\right)=0.
\end{equation}
Its solutions are denoted with $ \phi^\rho $.
Explicitly we define the functional $ \mc G_E(\rho,\phi) $ as
\begin{equation}\label{effeciu}
\mathcal G_E(\rho,\phi):=\int_\Lambda\Big[f(\rho)-f(\phi)- f'(\phi)(\rho-\phi)\Big]dx + \mathcal R(\phi)\,,
\end{equation}
where we want the functional $ \mc R (\phi) $ such that  its stationary solutions $ \phi^\rho $, i.e. satisfying
\begin{equation}\label{eq:deltaG/deltarho}
\frac{\delta \mathcal G_E(\rho,\phi^\rho)}{\delta \phi}=0,
\end{equation}
are solutions also of \eqref{eq:[]=0}. Introducing the functional 
\begin{equation}\label{arrr}
\mathcal R(\phi)=\int_\Lambda \frac{1}{c_2E\sigma(\phi)}\Big[(\nabla\phi-E\sigma(\phi))\log|\nabla\phi-E\sigma(\phi)|-\nabla\phi\log|\nabla\phi|\Big]\,,
\end{equation}
with a long but straightforward computation we have 
\begin{equation}\label{perfi}
\frac{\delta \mc G_E}{\delta \phi}=\frac{\phi\nabla\phi}{\sigma(\phi)(\nabla\phi-E\sigma(\phi))}-\frac{\rho}{\sigma(\phi)}-\frac{\Delta\phi}{c_2\nabla\phi(\nabla\phi-E\sigma(\phi))}
+\frac{E(c_2\phi+c_1)}{c_2(\nabla\phi-E\sigma(\phi))}.
\end{equation}
If  the right hand side of $ \eqref{perfi} $ is zero then  \eqref{eq:[]=0} is satisfied. Hence a critical point 
$ \phi^\rho $  of $ \mc G_E $, i.e. satisfying \eqref{eq:deltaG/deltarho},    solves also \eqref{eq:[]=0}.

In this way we obtain that the functional of $\rho$ defined by $\mathcal G_E(\rho, \phi^\rho)$ solves the Hamilton Jacobi equation \eqref{HJ}. The case $E=0$ in \cite{BGL05} can be recovered as a limit.
\begin{remark}
Observe that the first term in \eqref{effeciu} corresponds to an equilibrium rate functional with typical density profile $\phi$ (see \eqref{compiti}). Namely  the ansatz in \eqref{theform} was 
\begin{equation}\label{eq.ansatz}
\mc G_E(\rho,\phi)=\mc V_{eq}(\rho,\phi)+\mc R(\phi)
\end{equation}
where $ \mc  V_{eq}(\rho,\phi):=\mc V_\lambda(\rho) $ with $ \phi=\rho[\lambda] $ and  $ \mc R(\phi) $ to be determined as we have seen.
So $\frac{\delta \mathcal G_E}{\delta \rho}=f'(\rho)-f'(\phi)$ and the functional derivative of  new term $ \mc{R}(\phi) $  is the right hand side of  \eqref{perfi}. 
\end{remark}

\section{Optimization}\label{sec:opt}
Since in general the critical points $ \phi^\rho $ are not unique we discuss more in detail the specific identification of the relevant $\phi^\rho$ that gives the quasi-potential for our models, namely in this section we investigate the right optimization  over $ \phi $ that gives $\mathcal G_E(\rho, \phi^\rho)$, in addition we calculate the normalization constant $ K_E $  such that $ V_E(\bar\rho_E)=0 $. This constant can be computed either as follows or imposing $ \mc G_E(\bar\rho_E,\phi^{\bar\rho_E})=0 $ once we note that $ \phi^{\bar\rho_E}=\bar\rho_E $ from \eqref{eq:[]=0}. We distinguish the 3 cases $E=E^*$, $E<E^*$ and $E>E^*$. For the two nonequilibrium cases ($ E\neq E^* $) we will determined the sign of the moduli in \eqref{arrr} and with this choice \eqref{eq:[]=0} and \eqref{perfi} will be equivalent.

\subsection{Case $E=E^*$}
This is a special case that corresponds to a model that has no current in the stationary state $J_{E^*}(\bar\rho_E)=0$. This is the condition of macroscopic reversibility (see end of subsection \ref{subsec:MFT}) that corresponds microscopically to the inhomogeneous reversible product measure discussed in subsection \ref{subsec:sta}. In this case the quasi-potential $V_{E^*}$ is local and can be computed both microscopically like in section \ref{trc} that macroscopically using \eqref{HJ}. We obtain that $V_{E^*}(\rho)=\mathcal V_{\lambda\left[\bar\rho_E\right]}(\rho)$ and $\lambda\big[\bar\rho_E(x)\big]$  linearly interpolates $\lambda[\rho_-]$ and $\lambda[\rho_+]$ when $x\in[0,1]$.

\subsection{Case $E<E^*$}\label{subsec:E<E^*}

For some computations it is more convenient to use the variable $\psi=\lambda[\phi]$, that is a one-to-one change of variable. We remember that   instead of discuss the general case we consider  the three prototype models.
Recall that in the case of the KMP model this change of variables corresponds to $\psi=-\frac 1\phi$.
First we discuss the KMP case, showing later how to modify the computations to cover also the other cases.
We consider the functional \eqref{effeciu} with a fixed determination of the signs of the moduli in \eqref{arrr}. This is enough to identify the correct solution. We write  the functional in terms of the variable $\psi$. The choice of the sign of the two logarithmic terms has to be $+$ since otherwise the function $\psi$ cannot satisfy the boundary conditions. Consequently we have to restrict the domain of definition of $\mathcal G_E$. The constant $ K_E $  fixes the normalization. In terms of $\psi$ the functional becomes
\begin{eqnarray}\label{eg}
\mathcal G_E(\rho,\psi)&=&\int_\Lambda\left[\left(\frac{\nabla\psi}{E}-1\right)\log\left(\nabla\psi-E\right)
-\frac{\nabla\psi}{E}\log \left(\nabla\psi\right)\right]dx\nonumber \\
& +&\int_\Lambda\left[-\rho\psi +\log \left(-\frac{\psi}{\rho}\right)-1\right]dx+K_E
\end{eqnarray}
where the constant $K_E$ is
\begin{equation}\label{Kekmp}
K_E=\log\left(-J_E\right)+\frac 1E\int_{\rho_-}^{\rho_+}\frac{d \rho}{\sigma(\rho)}\log\left(1-\frac{\sigma(\rho)E}{J_E}\right)\,.
\end{equation}
The functions $\psi$ that we are considering belong to
\begin{equation}\label{domps}
\mathcal F_E:=\left\{\psi\in C^1(\Lambda)\,:\,\nabla\psi\geq \max\{E,0\}\,,\, \psi(0)=-\frac{1}{\rho_-}\,, \psi(1)=-\frac{1}{\rho_+}\right\}\,.
\end{equation}
For a $\psi \in \mathcal F_E$ the functional \eqref{eg} is well defined. Formula \eqref{eg} is not well defined in the special case $E=0$. This case corresponds to the symmetric KMP process and has been already discussed in \cite{BGL05}. It is possible to obtain the corresponding functional for this special case as a limit of \eqref{eg} when $E\to 0$.
The constant $K_E$ has been fixed in such a way that $\inf_{\rho,\psi\in \mathcal F_E}\mathcal G_E(\rho,\psi)=0$. We discuss later in the general framework this point.

For any fixed $\rho$ the functional $\mathcal G(\rho,\cdot)$ is neither concave nor convex. Its critical points are determined by the Euler-Lagrange equation
\begin{equation}\label{elkmmm}
\frac{\Delta \psi}{\nabla\psi(E-\nabla\psi)}+\frac{1}{\psi}=\rho\,.
\end{equation}
We define the functional
\begin{equation}\label{esed1}
S_E(\rho)=\inf_{\psi\in \mathcal F_E} \mathcal G_E(\rho, \psi)\,.
\end{equation}
We can identify the quasi-potential $W_E=V_E$ with the infimum \eqref{esed1}, i.e. $V_E=W_E=S_E$. This is based on the interpretation of $\mathcal G_E$ as the pre-potential in the Hamiltonian framework obtained in subsection \ref{subsec:MFT} interpreting \eqref{eq:JdLD} as a Lagrangian action \cite{BDeSGJ-LL10,BDeSGJ-LL11,BDeSGJ-LL15}.
The pre-potential is defined on the unstable manifold for the Hamiltonian flow relative to a suitable equilibrium point  associated to the stationary solution $\bar\rho_E$. The value $\mathcal G_E(\rho,\psi)$ coincides with the value of the pre-potential when the pair $(\rho,\psi)$ belongs to the unstable manifold. Since the unstable manifold can be characterized by the stationary condition \eqref{elkmmm} we can then consider simply the infimum in \eqref{esed1} since all the critical points are belonging to the unstable manifold. For these values of the field $E$ it is not possible to show that there is uniqueness in the minimizer in \eqref{eq:QP-MFT} and equivalently in \eqref{esed1}. In correspondence the unstable manifold is not a graph and there is the possibility to have Lagrangian phase transitions. We are not going to discuss the details of these arguments and we refer to \cite{BDeSGJ-LL10,BDeSGJ-LL11} for the analogous computation in the case of the simple exclusion process.

For the KMPd model we use again the change of variables $\psi=\lambda[\phi]=\log\frac{\phi}{\phi+1}$ and the corresponding functional $\mathcal G_E$ has a form very similar to \eqref{eg}. In particular we give a general form that works for all the three models that we are considering and that depends on the density of free energy and the transport coefficients. In particular  \eqref{eg} can be obtained as a special case. The general form is constituted by three terms. The first one depends only on $\nabla \psi$ and coincides with the first term in \eqref{eg}. The second\footnote{ Adding $ \frac 1c_2 $ in front $ \log \sigma(\rho[\phi]) $ in \eqref{egdd} the expression is valid for general $ (c_2,c_1,c_0) $.} one can be written in general as
\begin{equation}\label{egdd}
\int_\Lambda\left[f(\rho)-f\left(\rho[\psi]\right)-\psi\big(\rho-\rho[\psi]\big)-\log\sigma\big(\rho[\psi]\big)\right] \,dx\,,
\end{equation}
$ \rho[\psi] $ is the inverse of $ \psi=\lambda[\phi] $. The third term is the constant \eqref{Kekmp}.
When $\rho$ is fixed the general form of the Euler Lagrange equation for $\mathcal G_E(\rho,\cdot)$ can be written in the general form
\begin{equation}\label{eld}
\frac{\Delta\psi}{\nabla\psi(E-\nabla\psi)}+\rho[\psi]-\sigma'\big(\rho[\psi]\big)=\rho\,.
\end{equation}
These general expressions work also for the KMPx model. The change of variable is again $\psi=\lambda[\phi]=\arctan \phi$.

The functional space for the $\psi$ is like \eqref{domps} with just a difference in the boundary values. In particular for KMPd we have $\psi(0)=\log\frac{\rho_-}{1+\rho_-}$ and $\psi(1)=\log\frac{\rho_+}{1+\rho_+}$ while for KMPx we have $\psi(0)=\arctan\rho_-$ and $\psi(1)=\arctan\rho_+$. For the same reason just explained in the KMP case, the optimization for $ \mc G_E(\rho,\psi) $ over $ \psi $ is done taking the infimum.

\medskip

To compute the global infimum of $\mathcal G_E$, that is relevant for the determination of the normalizing constant, it is convenient to minimize first in $\rho$ keeping fixed $\psi$. The stationary condition that corresponds to a minimum is $\lambda[\rho]=\psi$ and in correspondence the term \eqref{egdd} reduces to
$-\int_\Lambda\log\sigma\big(\rho[\psi]\big) \,dx$.
We minimize now over $\psi$ and obtain the stationary condition
\begin{equation}\label{psdc}
\frac{\Delta\psi}{\nabla\psi(E-\nabla\psi)}=\sigma'\big(\rho[\psi]\big)\,.
\end{equation}
Using the change of variable $\rho=\rho[\psi]$ equation \eqref{psdc} becomes the stationary equation 
\begin{equation*}\label{key}
\left\{\begin{array}{ll}
\nabla \rho -E\nabla \sigma(\rho)=0\\
\rho(0)=\rho_-,\rho(1)=\rho_+
\end{array},
\right.
\end{equation*}
that has an unique solution. We obtain then $\inf_{\rho,\psi\in\mathcal F_E}\mathcal G_E(\rho,\psi)=\mathcal G_E\left(\bar\rho_E,\lambda\big[\bar\rho_E\big]\right)$ and imposing that this value is zero we get the general formula \eqref{Kekmp}.

\subsection{Case $E>E^*$}\label{subsec:E>E^*}
Also in this case we use the variable $\psi=\lambda[\phi]$. We consider the functional \eqref{effeciu} in terms of this new variable. In this case the sign of the modulus in the second logarithmic term in \eqref{arrr} is still $+$ while the first one has to be fixed as $-$. This is because the values of the field are different and this is the choice that allows to satisfy the boundary conditions for $\psi$. Consequently we have to restrict the functions considered. In the case of the KMP model we have
\begin{eqnarray}\label{zanna}
\mathcal G_E(\rho,\psi)&=&\int_\Lambda\left[\left(\frac{\nabla\psi}{E}-1\right)\log(E-\nabla\psi)
-\frac{\nabla\psi}{E}\log \nabla\psi\right]dx\nonumber \\
& +&\int_\Lambda\left[-\rho\psi +\log \left(-\frac{\psi}{\rho}\right)-1\right]dx+K_E\,,
\end{eqnarray}
where the constant $K_E$ is
\begin{equation}\label{Kekmp2}
K_E=\log(J_E)+\frac 1E\int_{\rho_-}^{\rho_+}\frac{d \rho}{\sigma(\rho)}\log\left(\frac{\sigma(\rho)E}{J_E}-1\right)\,.
\end{equation}
The function $\psi$ is a function belonging to the set
\begin{equation}\label{doveleo}
\mathcal F_E:=\left\{\psi\in C^1(\Lambda)\,:\, 0\leq\nabla\psi\leq E\,,\, \psi(0)=-\frac{1}{\rho_-}\,,\psi(1)=-\frac{1}{\rho_+}\right\}\,.
\end{equation}
We have that the function $\left(\frac{\alpha}{E}-1\right)\log(E-\alpha)
-\frac{\alpha}{E}\log \alpha$ is concave when $\alpha \in [0,E]$ and also the function $\log (-\alpha)$
defined on the negative real line. The concavity of the functions implies the concavity of the functional $\mathcal G_E(\rho,\cdot)$ for any fixed $\rho$. Like in \cite{BGL09} there exists in $\mathcal F_E$ an unique critical point of $\mathcal G_E(\rho,\cdot)$ that is then a maximum. This is obtained as the unique solution in $\mathcal F_E$ to the Euler-Lagrange equation
\begin{equation}\label{ELK}
\frac{\Delta\psi}{\nabla\psi(\nabla\psi-E)}+\frac{1}{\psi}=\rho\,.
\end{equation}
We define the functional
\begin{equation}\label{esed}
S_E(\rho)=\sup_{\psi\in \mathcal F_E} \mathcal G_E(\rho, \psi)=\mathcal G_E(\rho,\psi^\rho)
\end{equation}
where $\psi^\rho$ is the maximizer solving \eqref{ELK}. Again we have $S_E=V_E=W_E$.
In this case the uniqueness of the solution of \eqref{ELK} is related to the fact that there is an unique critical point
for the variational problem related to the quasi-potential \eqref{eq:QP-MFT}.
Also in this case $\mathcal G_E$ can be interpreted as the pre-potential in an Hamiltonian framework but in this case for any $\rho$ there is an unique $(\rho,\psi)$ belonging to the unstable manifold. We have then not to minimize over the different points of the unstable manifold. It turns out that we have instead to maximize because the unstable manifold is exactly characterized by the stationary condition \eqref{ELK} and the critical point corresponds to a maximum by concavity \cite{BGL09}.

Also in this case we can obtain similar expressions of $\mathcal G_E$ for the models KMPd and KMPx and write a general expression for it.
Like in the previous case we have that $\mathcal G_E$ is composed by the sum of 3 terms. The first one depending only on $\nabla \psi$ coincides with the first term in \eqref{zanna}. The second term has a general form that coincides with \eqref{egdd}. The additive constant has the form \eqref{Kekmp2}. Also for these models the functional $\mathcal G_E(\rho,\cdot)$ is concave, so we optimize it taking the supremum.

\section{The adjoint hydrodynamics}\label{sec:adjhy}

The original problem consisted in the minimization \eqref{eq:QP-MFT}. To identify $ W_E$ with $S_E$ we still have to show that an optimizer $ \phi^\rho $  constrains a correspondent minimizer $ \rho^*(x,t) $ of \eqref{eq:QP-MFT}  to follow a trajectory satisfying $ \us{t\to-\infty}{\lim}\rho^*(x,t)=\bar\rho_E(x) $. With this we complete the identification of  $ W_E $
with $ S_E $. We use the original variable $ \phi $.
For a solution 
\begin{equation}\label{eq:solV_E=G_e}
 V_E(\rho)= \mc G_E(\rho,\phi^\rho)
\end{equation} 
of the Hamilton-Jacobi equation \eqref{eq:HJeq}, hence satisfying \eqref{theform} and	\eqref{eq:deltaG/deltarho},
we want to show that a minimizer $\rho^*\in \mc T_{\rho,T}$ in \eqref{eq:QP-MFT} is a fluctuation that connect the stationary state $\bar\rho_E(x)$ to the large deviation profile $\rho(x)$. Denoting with $ B(\rho,\phi) $ the left hand side of \eqref{eq:[]=0},  we want to minimize $J_{[-T,0]}(\hat \rho)$ asking 
\begin{equation}\label{eq:minJ}
\left\{\begin{array}{ll}
     \frac{\delta V_E}{\delta \rho}(\hat\rho)=f'(\hat\rho)-f'(\phi), & \qquad (x,t)\in\Lambda\times(0,+\infty),\\
    \hat\rho(x,0)=\rho(x), & \\
	 B(\hat\rho,\phi)=0, & \qquad (x,t)\in\Lambda\times(0,+\infty),\\
	\hat\rho(\p_\pm\Lambda,t)=\phi(\p_\pm\Lambda,t) =\rho_\pm. &
\end{array}
  \right.
\end{equation}
to verify that $\underset{T\rightarrow+\infty}{\lim}\rho^*(-T)=\bar\rho_E$. From \eqref{eq:nablaH} and  \eqref{eq:MFTact} we write  $J_{[-T,0]}(\hat\rho)$ as 
{\small\[ 
\begin{split}
J_{[-T,0]}(\hat\rho)=&\frac{1}{4}\int_{-T}^0 dt \left\langle \div^{-1}\left( \p_t\hrho-\Delta\hat\rho+E\nabla\sigma(\hat\rho)+\nabla\left(\sigma\nabla\frac{\delta V_E}{\delta\rho}\right)(\hat\rho) -\nabla\left(\sigma\nabla\frac{\delta V_E}{\delta\rho}\right)(\hrho)  \right)\right.,\\
&\left. \frac{1}{\sigma(\rho)}\div^{-1}\left( \p_t\hrho-\Delta\hrho+E\nabla\sigma(\hrho)+\nabla\left(\sigma\nabla\frac{\delta V_E}{\delta\rho}\right)(\hrho) -\nabla\left(\sigma\nabla\frac{\delta V_E}{\delta\theta}\right)(\hrho)  \right] \right\rangle,
\end{split} 
 \] }
using  $\left\langle\nabla\frac{\delta V_E}{\delta\rho}(\hrho),\sigma(\hrho)\nabla\frac{\delta V_E}{\delta\rho}(\hrho)\right\rangle=-\left\langle\nabla\frac{\delta V_E}{\delta\rho}(\hrho),\Delta\hrho-E\nabla\sigma(\hrho)\right\rangle $ from the Hamilton-Jacobi equation \eqref{eq:HJeq}  and $\frac{dV_E}{dt}(\hrho)=\left\langle \p_t\hrho,\frac{\delta V_E}{\delta\rho}(\hrho)\right\rangle$  we arrive at
{\small \begin{equation}\label{eq:Jasyn}
\begin{split}
J_{[-T,0]}(\hrho)=& V_E(\hrho(x))-V_E(\hrho(-T))+\frac 14\int_{-T}^0 dt \left\langle \div^{-1}\left( \p_t\hrho-\Delta\hrho+E\nabla\sigma(\hrho)+\nabla\left(\sigma\nabla\frac{\delta V_E}{\delta\rho}\right)(\hrho)  \right) \right.,\\
&\left. \frac{1}{\sigma(\rho)}\div^{-1}\left( \p_t\hrho-\Delta\hrho+E\nabla\sigma(\hrho)+\nabla\left(\sigma\nabla\frac{\delta V_E}{\delta\rho}\right)(\hrho)  \right) \right\rangle,
\end{split} 
\end{equation} }
since $ \sigma(\rho)$ is  strictly positive  a minimizer $\rho^*$ must be such that $\p_t\rho^*=\D\rho^*-E\nabla\sigma(\rho^*)-\nabla\left(\sigma \n\frac{\delta V_E}{\delta\rho}\right)(\rho^*)=-\D\rho^*-E\n\sigma(\rho^*)+\n\left(\frac{\sigma(\rho^*)}{\sigma(\phi)}\n\phi \right)$.  Denoting with $\rho_r^*(x,t):=\rho^*(x,-t)$ the time reversed of the optimal path $ \rho^* $, along the lines of \cite{BDeSGJ-LL02}, with a long computation one shows that the following two systems of PDE's are equivalent
\begin{equation}\label{eq:adjdyn}
\left\{
  \begin{split}
     &\p_t\rho^*=-\D\rho^*-\div\left(\sigma(\rho^*)\left(E-\frac{\n\phi}{\sigma(\phi)} \right)\right)\\
    &\rho^*(x,0)=\rho(x) \\
	 &B(\rho^*,\phi)=0\\
	&\phi(\p_\pm\Lambda,t)=\rho^*(\p_\pm\Lambda,t) =\rho_{\pm}
  \end{split}
  \right.
\Longleftrightarrow\,\,\,\,
\left\{
  \begin{split}
    &\p_t\phi = \D\phi -E\n\sigma(\phi)\\
    &\rho_r^*(x,0)=\rho(x)\\
    &B(\rho_r^*,\phi)=0\\
   &\phi(\p_\pm\Lambda,t)=\rho^*_r(\p_\pm\Lambda,t)=\rho_\pm\\
  \end{split}
   \right.,
\end{equation}
where in the right hand side equations system $ \phi $ solves an autonomous equation $ \phi_t=\mc D(\phi) $ that coincide with the original hydrodynamics, moreover from the initial condition  we have $\phi(x,0)=\phi^\rho$. From the third equation of the right hand side system we have 
\begin{equation}\label{eq:rho^*_r=..}
\rho_r^*=\frac{c_2\phi\n\phi(\n\phi+E\sigma(\phi))}{c_2\n\phi(\n\phi-E\sigma(\phi))}+
\frac{\sigma(\phi)(c_1E\n\phi -\D\phi)}{c_2\n\phi(\n\phi-E\sigma(\phi))}.
\end{equation}
When the time $ t  $ goes to infinity   $\underset{t\rightarrow+\infty}{\lim}\phi_t=0$, so the autonomous equation for $ \phi $ becomes $\D\phi=E\nabla(\sigma(\phi))$. So from \eqref{eq:rho^*_r=..}, using this limit identity we have that  $\underset{t\rightarrow+\infty}{\lim}\rho^*(-t)=\underset{t\rightarrow+\infty}{\lim}\rho_r^*(t)=\underset{t\rightarrow+\infty}{\lim}{\phi}(t)=\bar\rho_E$. 
Then a minimizer  $\rho^*$ individuated by \eqref{eq:adjdyn} is a trajectory from the stationary state  $\bar\rho_E$ to the space fluctuation $\rho$ and   a trajectory minimizing the cost of a fluctuation is the time reversal of the relaxation trajectory $ \rho^*_r $ to equilibrium according to the adjoint hydrodynamics, that is the dynamics obtained substituting $ \rho^* $ with $ \rho^*_r $ in the first equation of the left hand side of \eqref{eq:adjdyn}. Therefore the Onsager-Machlup principle is generelized to stationary non-equilibrium  states as
\medskip
\begin{quotation}
\e{"In a stationary non-equilibrium state the spontaneous emergence of a macroscopic fluctuation takes place most likely following a trajectory which is the time reversal of the relaxation path according to the adjoint hydrodynamics"}.
\end{quotation}

\bigskip

It follows that  $\underset{
T\rightarrow+\infty}{\lim}{V_E(\rho^*(-T))}=V_E(\bar\rho_E)=0$, hence from \eqref{eq:Jasyn} we get 
\begin{equation}\label{eq:Jlim}
 \underset{T\rightarrow+\infty}{\lim}J_{[-T,0]}(\rho^*)=V_E(\rho).
\end{equation}
If the stationary point $ \phi^\rho $ is unique  we can define just one solution \eqref{eq:solV_E=G_e} and the identification $ S_E=W_E $ it is straightforward, indeed there is an unique critical point for the variational problem related to the quasi-potential \eqref{eq:QP-MFT}. This was the situation for  $ E>E^* $ in subsection \ref{subsec:E>E^*}. If  there is more than one stationary point $ \phi^\rho $ as in subsection \ref{subsec:E<E^*}, the optimization of $ \mc G_E(\rho,\phi) $ over $ \phi $ will select the less expensive solution $ V_E(\rho)=\mc G_E(\rho,\phi^\rho) $ of the Hamilton-Jacobi equation, this will correspond to the best minimizers $ \rho^* $ of \eqref{eq:QP-MFT} because of the underlying Hamiltonian structure of the problem as explained in  subsection \ref{subsec:E<E^*}.

\bigskip


At this point we have all what we need to compute the totally asymmetric limit of the quasi-potential $ W_E=S_E $  derived in  section \ref{sec:opt}.  We consider the KMP, KMPd and KMPx models.

\section[The case $E\to -\infty$]{Totally asymmetric limit: $E\to -\infty$}\label{sec:TAL-}

We study first the limit $E\to -\infty$ of the auxiliary functional $\mathcal G_E$. Since we are interested in the limiting value we can assume that $E<0$ (instead of $ E<E^* $) and it is convenient to add and subtract the term $\log (-E)$ in \eqref{eg}. We add this factor to the first term of \eqref{eg} that becomes $\int_\Lambda s\left(-\frac{\nabla\psi}{E}\right)dx$ with
\begin{equation}\label{vitam}
s(\alpha):=\alpha\log\alpha-(1+\alpha)\log(1+\alpha)\,.
\end{equation}
Since $\us{\alpha\downarrow 0}{\lim}s(\alpha)=0$ this term is converging to zero in the limit of large and negative field. The factor that we subtract is inserted in the additive constant $K_E$ that becomes
\begin{equation}\label{riu1}
\log\frac{J_E}{E}+\frac 1E\int_{\rho_-}^{\rho_+}\frac{d \rho}{\sigma(\rho)}\log\left(1-\frac{\sigma(\rho)E}{J_E}\right)\,.
\end{equation}
The asymptotic behaviour of \eqref{riu1} can be easily understood since several terms depend just on the ratio $\frac{J_E}{E}$ whose behavior in the limit for large fields is given by \eqref{bgn}. In particular in this case the second term in \eqref{riu1} converges to zero while the first one converges to $2\log \rho_+$.

The second term in \eqref{eg} does not depend on $E$ and remains identical in the limit. We obtained then that
$\mathcal G_{-}:=\lim_{E\to -\infty}\mathcal G_E$ is given by
\begin{equation}\label{defgmeno1}
\mathcal G_-(\rho,\psi)=\int_0^1\left[-\rho\psi +\log \left(-\frac{\psi}{\rho}\right)-1\right]dx+2\log\rho_+\,.
\end{equation}
The function $\psi$ in \eqref{defgmeno1} has to belong to the set
\begin{equation}\label{stpiz}
\mathcal F_-:=\left\{\psi\in C^1(\Lambda)\,:\, \nabla\psi\geq 0\,, \psi(0)=-\frac{1}{\rho_-},\psi(1)=-\frac{1}{\rho_+},\right\}\,.
\end{equation}
We can obtain (see \cite{BDeSGJ-LL11,BGL09}) $V_-(\rho):=\lim_{E\to-\infty}V_E(\rho)$ as
\begin{equation}
V_-(\rho)=S_-(\rho):=\inf_{\psi\in \mathcal F_-}\mathcal G_-(\rho,\psi)\,.
\end{equation}
Since the function
$\psi\to-\rho\psi +\log \left(-\frac{\psi}{\rho}\right)$ is decreasing when
$\psi \in\left[\frac{-1}{\rho_-},\frac{-1}{\rho_+}\right]$, the minimum  of the integrand in \eqref{defgmeno1} over the $\psi$ for a fixed $\rho$ is obtained for  $\psi=-\frac{1}{\rho_+}$. This means that we can construct a minimizing sequence in $\mathcal F_-$ approximating a function that takes the value $-\frac{1}{\rho_-}$ at $0$ and then immediately jumps to the value $-\frac{1}{\rho_+}$. We then obtain
\begin{equation}\label{maleo}
V_-(\rho)=\inf_{\psi\in \mathcal F_-} \mathcal G(\rho,\psi)=\int_\Lambda \left[\frac{\rho}{\rho_+}-\log\frac{\rho}{\rho_+}-1\right]dx
\end{equation}
that is the large deviation rate function for masses distributed according to a product of exponentials distributions of parameter $\frac{1}{\rho_+}$. This is  the large deviations rate functional for the invariant measure of the version 1 of the totally asymmetric KMP dynamics discussed in section \ref{sec:Amodels} if we invert the direction of the asymmetry there.

\smallskip

For the KMPd model the asymptotic behavior is quite similar. The limiting value of $\frac{J_E}{E}$ when $E\to -\infty$ coincides always with $\rho_+(1+\rho_+)$.
 This implies that the limiting value of the constant term \eqref{riu1} is equal to $\log(\rho_+(1+\rho_+))$. The term depending only on $\nabla \psi$ is the same as in the KMP case and it is converging to zero. The term \eqref{egdd} does not depend on $E$ and is preserved in the limit. We obtain in the limit
\begin{equation}\label{limgd-}
\mathcal G_-(\rho,\psi)=\int_0^1\left[f(\rho)+\log(1-e^\psi)-\psi(\rho+1)\right]dx +\log(\rho_+(1+\rho_+))\,,
\end{equation}
and $\psi$ belongs to
\begin{equation}\label{stpizd}
\mathcal F_-:=\left\{\psi\in C^1(\Lambda)\,:\, \nabla\psi\geq 0\,, \psi(0)=\log\frac{\rho_-}{1+\rho_-},\psi(1)=\log\frac{\rho_+}{1+\rho_+},\right\}\,.
\end{equation}
As before in the interval $\left[\log\frac{\rho_-}{1+\rho_-},\log\frac{\rho_+}{1+\rho_+}\right]$ the function $\psi\to\log(1-e^\psi)-\psi(\rho+1)$ is decreasing  and then a minimizing sequence in $\mathcal F_-$ is converging
to a function that is equal to $\log\frac{\rho_-}{1+\rho_-}$ at $0$ and then is constantly equal to $\log\frac{\rho_+}{1+\rho_+}$. We obtain \cite{BDeSGJ-LL11,BGL09} that $V_-=\lim_{E\to -\infty}V_E$ is obtained by
\begin{equation}\label{evaid}
V_-(\rho)=S_-(\rho):=\inf_{\psi\in \mathcal F_-} \mathcal G_-(\rho,\psi)=\int_\Lambda\left[\rho\log\frac{\rho}{\rho_+}-(1+\rho)\log\frac{1+\rho}{1+\rho_+}\right]dx
\end{equation}
that is the large deviation rate functional for a product measure with marginals given by geometric distributions of
parameter $\frac{1}{1+\rho_+}$.

\smallskip

The limiting behavior of the KMPx model is instead quite different and exhibits a behavior similar to the exclusion process.
Recalling \eqref{bgn}, for the KMPx model the limiting value of $\frac{J_E}{E}$ when $E\to -\infty$  is given by $\max\{1+\rho_-^2,1+\rho_+^2\}$. Let us call
\begin{equation}\label{m}
\bar\rho:=\left\{
\begin{array}{ll}
\rho_- & \textrm{if}\ |\rho_-|\geq |\rho_+|\,, \\
\rho_+ & \textrm{if}\ |\rho_-|< |\rho_+|\,.
\end{array}
\right.
\end{equation}
The limiting functional $\mathcal G_-$ in this case becomes
\begin{equation}\label{arrd}
\mathcal G_-(\rho,\psi)=\int_0^1\left[\rho\arctan\rho-\frac{\log(1+\rho^2)}{2}-\rho\psi-\frac{\log(1+(\tan\psi)^2)}{2}\right]dx
+\log(1+\bar\rho^2)\,,
\end{equation}
and the function $\psi$ has to belong to the set
\begin{equation}\label{stpizx}
\mathcal F_-:=\left\{\psi\in C^1(\Lambda)\,:\, \nabla\psi\geq 0\,, \psi(0)=\arctan(\rho_-),\psi(1)=\arctan(\rho_+)\right\}\,.
\end{equation}
Since the function $\psi\to-\log(1+(\tan\psi)^2)$ is concave
the functional $\mathcal G_-(\rho,\cdot)$ is also concave.
To compute $V_-(\rho)=S_-(\rho)$ we need to minimize  $\mathcal G_-(\rho,\cdot)$ over the convex set \eqref{stpizx}. The infimum is
then realized on the extremal points of a suitable closure of that convex set. The extremal elements are the function of the form
\begin{equation}\label{step}
\psi^y(x)=\arctan(\rho_-)\chi_{[0,y)}(x)+\arctan(\rho_+)\chi_{[y,1]}(x)\,,\qquad y\in(0,1]\,.
\end{equation}
This means piecewise constant functions jumping from $\arctan(\rho_-)$ to $\arctan(\rho_+)$ at the single point $y$. Let us define the functional
\begin{eqnarray}\label{dinotte}
& &\tilde{\mathcal G}_-(\rho,y):=\mathcal G_-(\rho,\psi^y)=\int_0^1\left[\rho\arctan\rho
-\frac 12\log(1+\rho^2)\right]dx\nonumber \\
& &-\arctan(\rho_-)\int_0^y\rho(x)\, dx-\arctan(\rho_+)\int_y^1\rho(x)\, dx\nonumber \\
& &-\frac{y}{2}\log(1+\rho_-^2)-\frac{1-y}{2}\log(1+\rho_+^2)+\log(1+\bar\rho^2)\,.
\end{eqnarray}
This is the functional $\mathcal G_-$ computed in correspondence of the extremal elements.
The infimum of $\mathcal G_-(\rho,\cdot)$ coincides with the infimum of the function of one real variable
$\tilde{\mathcal G}_-(\rho,\cdot)$ on the interval $[0,1]$. Consequently we have
$$
V_-(\rho)=S_-(\rho)=\inf_{y\in(0,1]}\tilde{\mathcal G}_-(\rho,y)\,.
$$
The stationary condition for $\tilde{\mathcal G}_-(\rho,\cdot)$ is
\begin{equation}\label{stgt}
\rho(y)\left[\arctan\rho_+-\arctan\rho_-\right]=\frac 12 \log\frac{1+\rho_-^2}{1+\rho_+^2}\,.
\end{equation}
The second derivative establishing if a critical point is a local minimum or a local maximum
is given by $\nabla\rho(y)\left[\arctan\rho_+-\arctan\rho_-\right]$. To find the global minimum we have to consider also the values of the functions on the two extrema of the interval. Like for the exclusion process it is possible to have that the minimum
is obtained in more than one single point $y$ and this phenomenon is suggesting the possibility of the existence of dynamic phase transitions \cite{BDeSGJ-LL10} for finite and large enough negative fields.

\section[The case $E\to +\infty$]{Totally asymmetric limit:  $E\to +\infty$}\label{sec:TAL+}

We study first the limit when $E\to +\infty$ of the auxiliary functionals $\mathcal G_E$ \cite{BDeSGJ-LL11,BGL09}. We are in the $ E>E^* $ regime. We start with the KMP model. As before it is convenient to add and subtract the term $\log E$. We add this factor to the first term of \eqref{zanna} that becomes $\int_\Lambda \tilde s\left(\frac{\psi_x}{E}\right)dx$ with
\begin{equation}
\tilde s(\alpha)=-\alpha\log\alpha-(1-\alpha)\log(1-\alpha)
\end{equation}
Since $\lim_{\alpha\downarrow 0}\tilde s(\alpha)=0$ this term is converging to zero in the limit of large and positive field. The factor that we subtract is inserted in the additive constant $K_E$ and we obtain
\begin{equation}\label{riu}
\log\frac{J_E}{E}+\frac 1E\int_{\rho_-}^{\rho_+}\frac{d \rho}{\sigma(\rho)}\log\left(\frac{\sigma(\rho)E}{J_E}-1\right)\,.
\end{equation}
The asymptotic behaviour of \eqref{riu} can be easily understood since several terms depend just on the ratio $\frac{J_E}{E}$ whose behaviour in the limit for large fields is given by \eqref{bgn}. In particular in the KMP case the second term in \eqref{riu} converges to zero while the first one converges to $2\log \rho_-$.

The second term in \eqref{zanna} does not depend on $E$ and remains identically in the limit. We deduce that
$\lim_{E\to +\infty}\mathcal G_E=\mathcal G_{+}$ defined as
\begin{equation}\label{defgmeno}
\mathcal G_+(\rho,\psi)=\int_0^1\left[-\rho\psi +\log \left(-\frac{\psi}{\rho}\right)-1\right]dx+2\log\rho_-\,.
\end{equation}
Since we are considering the limit of a large positive field the bound from above on the derivative in \eqref{doveleo} disappears
and the function $\psi$ in \eqref{defgmeno} belongs to the set
\begin{equation}\label{dovelea3}
\mathcal F_+:=\left\{\psi\in C^1(\Lambda)\,:\, \nabla\psi\geq 0\,, \psi(0)=-\frac{1}{\rho_-}\,,\psi(1)=-\frac{1}{\rho_+}\right\}\,.
\end{equation}
Since the function
$\psi\to-\rho\psi +\log \left(-\frac{\psi}{\rho}\right)$ is decreasing when
$\psi \in\left[-\frac{1}{\rho_-},-\frac{1}{\rho_+}\right]$ we can compute the supremum  of \eqref{defgmeno} over the $\psi$ for a fixed $\rho$ obtaining a minimizing sequence converging to a function that assumes the value $-\frac{1}{\rho_-}$ on all the interval except the point $1$ where it assumes the value $-\frac{1}{\rho_+}$. We then obtain
\begin{equation}\label{maleo+}
V_+(\rho)=S_+(\rho):=\sup_{\psi\in \mathcal F_+} \mathcal G_+(\rho,\psi)=\int_0^1 \left[\frac{\rho}{\rho_-}-\log\frac{\rho}{\rho_-}-1\right]dx
\end{equation}
that is the large deviation rate function for masses distributed according to a product of exponentials of parameter $\frac{1}{\rho_-}$. This is exactly the large deviations rate functional for the invariant measure of the version 1 of the totally asymmetric KMP dynamics discussed in subsection \ref{subsec:tkmp1dis}.
For version 2 in section \ref{subsec:tkmp2dis},  conjecture that the large deviations rate functional for the empirical measure
when particles are distributed according to the invariant measure of the original model is the
same of the corresponding one associated to a product of exponentials. A direct microscopic
computation of this rate functional would be very interesting.

\smallskip

For the KMPd model the limiting value of $\frac{J_E}{E}$ when $E\to +\infty$ coincides always with $\rho_-(1+\rho_-)$.
The limiting behavior for this model is very similar to the KMP. In particular the limiting value of the constant term \eqref{riu} is equal to $\log(\rho_-(1+\rho_-))$. The term depending only on $\nabla \psi$ is the same as the KMP case and it is converging to zero. The term \eqref{egdd} does not depend on $E$ and is preserved in the limit. Writing explicitly the terms we obtain in the limit
\begin{equation}\label{limgd}
\mathcal G_+(\rho,\psi)=\int_0^1\left[f(\rho)+\log(1-e^\psi)-\psi(\rho+1)\right]dx +\log(\rho_-(1+\rho_-))
\end{equation}
The function $\psi$ in \eqref{limgd} belongs to the set
\begin{equation}\label{dovelea2}
\mathcal F_+:=\left\{\psi\in C^1(\Lambda)\,:\, \nabla\psi\geq 0\,, \psi(0)=\log\frac{\rho_-}{1+\rho_-}\,,\psi(1)=\log\frac{\rho_+}{1+\rho_+}\right\}\,.
\end{equation}
Since in the interval $\left[\log\frac{\rho_-}{1+\rho_-},\log\frac{\rho_+}{1+\rho_+}\right]$ the function $\psi\to\log(1-e^\psi)-\psi(\rho+1)$ is decreasing,
the supremum to compute $V_+=S_+$ is realized on a function $\psi$ that is constantly equal to $\log\frac{\rho_-}{1+\rho_-}$. We obtain
\begin{equation}\label{evaid-}
V_+(\rho)=S_+(\rho):=\sup_{\psi\in \mathcal F_+} \mathcal G_+(\rho,\psi)=\int_0^1\left[\rho\log\frac{\rho}{\rho_-}-(1+\rho)\log\frac{1+\rho}{1+\rho_-}\right]dx
\end{equation}
that is the large deviation rate functional for a product measure with marginals given by geometric distributions of
parameter $\frac{1}{1+\rho_-}$.

\smallskip

For the KMPx model the limiting value of $\frac{J_E}{E}$ when $E\to +\infty$ has 3 different possible values.
When $0\leq\rho_-\leq\rho_+$ then the limiting value is $1+\rho_-^2$. When $\rho_-\leq0\leq\rho_+$ then the limiting value is $1$. When $\rho_-\leq \rho_+\leq 0$ then the limiting value is $1+\rho_+^2$. Let us define
\begin{equation}\label{eq:barho+}
\bar\rho:=\left\{
\begin{array}{ll}
\rho_- & \textrm{if}\ 0\leq\rho_-\leq\rho_+\,, \\
0 & \textrm{if}\  \rho_-\leq0\leq\rho_+\,,\\
\rho_+ & \textrm{if}\ \rho_-\leq \rho_+\leq 0\,.
\end{array}
\right.
\end{equation}
With this definition we have  $\frac{J_E}{E}\to\sigma\left(\bar\rho\right)$.
The limiting functional $\mathcal G_+$ in this case becomes
\begin{equation}\label{arr}
\mathcal G_+(\rho,\psi)=\int_0^1\left[\rho\arctan\rho-\frac{\log(1+\rho^2)}{2}-\rho\psi-
\frac{\log(1+(\tan\psi)^2)}{2}\right]dx+\log(1+(\bar\rho)^2)\,.
\end{equation}
and the function $\psi$ belongs to
\begin{equation}\label{corri}
\mathcal F_+:=\left\{\psi\in C^1(\Lambda)\,:\, \nabla\psi\geq 0\,, \psi(0)=\arctan\rho_-\,,\psi(1)=\arctan\rho_+\right\}\,.
\end{equation}
We have $V_+(\rho)=S_+(\rho)=\sup_{\psi\in \mathcal F_+} \mathcal G_+(\rho,\psi)$ and the maximizer $\psi_M^\rho$ such that $V_+(\rho)=\mathcal G_+\left(\rho,\psi_M^\rho\right)$ can be described as follows. Let $H(x)=-\int_0^x\rho(y)\,dy$ and $G=co(H)$ its convex envelope. We define
\begin{equation}\label{lafi}
\phi_M^\rho(x)=\left\{
\begin{array}{ll}
\rho_- & \textrm{if}\ \nabla G(x)\leq\rho_- \\
\rho_+ & \textrm{if}\ \nabla G(x)\geq\rho_+ \\
\nabla G(x) & \textrm{otherwise}\,.
\end{array}
\right.
\end{equation}
We have that the maximizer is $\psi_M^\rho=\arctan\phi_M^\rho$. To prove this statement it is convenient to write back \eqref{arr} in terms of the variable $\phi$ related to $\psi$ by $\phi=\tan\psi$. We need to prove that $\phi_M^\rho$
is the minimizer of
\begin{equation}\label{irafunesta}
\inf_{\phi\in \tilde{\mathcal F}_+} \int_0^1 \left[\rho \arctan\phi+\frac 12\log \left(1+\phi^2\right)\right]dx=
\inf_{\phi\in \tilde{\mathcal F}_+} \mathcal B(\rho,\phi)\,,
\end{equation}
where the last equality defines the functional $\mathcal B$ and
\begin{equation}\label{nes}
\tilde{\mathcal F}_+:=\left\{\phi\in C^1(\Lambda)\,:\, \nabla\phi\geq 0\,, \phi(0)=\rho_-\,,\phi(1)=\rho_+\right\}\,.
\end{equation}
To identify the minimizer $\phi_M^\rho$ in \eqref{irafunesta} we can adapt the argument in \cite{DLS03}. Since $H(0)=G(0)$ and $H(1)=G(1)$ with an integration by parts we obtain
\begin{equation}\label{partii}
\int_0^1\left(\rho+\nabla G\right)\arctan(\phi)\, dx=\int_0^1\left(H-G\right)\nabla \arctan(\phi)\, dx\geq 0\,.
\end{equation}
The last inequality follows by the fact that $\nabla \phi\geq 0 $, the function $\arctan$ is increasing and by definition $G\leq H$. We have that the second term of \eqref{partii} is the integral of the product of two non negative terms and consequently it is non negative. Inequality \eqref{partii} can be written as $\mathcal B(\rho,\phi)\geq \mathcal B(-\nabla G,\phi)$.

The derivative with respect to $\phi$ of the function $\alpha \arctan\phi+\frac 12\log \left(1+\phi^2\right)$ is given by $\frac{\alpha+\phi}{1+\phi^2}$ and is negative for $\phi<-\alpha$ and positive for $\phi>-\alpha$. Since the function $\phi(x)$ takes values only on the interval $[\rho_-,\rho_+]$ we have that
$$
\inf_{\phi\in [\rho_-,\rho_+]}\left\{\alpha \arctan\phi+\frac 12\log \left(1+\phi^2\right)\right\}
$$
is obtained at $\rho_-$ when $\alpha\geq -\rho_-$, it is obtained at $\rho_+$ when $\alpha\leq -\rho_+$ and it is obtained at $-\alpha$ when $\alpha\in [-\rho_+,-\rho_-]$. This fact implies that we have the inequality $\mathcal B(-\nabla G,\phi)\geq \mathcal B(-\nabla G,\phi_M^\rho)$. The last fact that remains to prove is that it holds the equality $\mathcal B(-\nabla G,\phi_M^\rho)= \mathcal B(\rho,\phi_M^\rho)$. This follows by the following argument. We split the interval $[0,1]$ on intervals where either $-\nabla G=\rho$ or $\nabla G$ is constant. On an interval where $-\nabla G=\rho$ we have obviously the same contribution. On an interval $[a,b]$ where $\nabla G$ is constant correspondingly also $\phi_M^\rho$ is constant. Moreover the constant value of $\nabla G$ coincides with $\frac{\int_a^b\rho(y)\,dy}{a-b}$. Since $\phi_M^\rho$ is constant on the interval the contribution coming from intervals of this type still coincide and we get the equality.
Summarizing we have the following chain
\begin{equation}\label{chaina}
\mathcal B(\rho,\phi)\geq \mathcal B(-\nabla G,\phi)\geq \mathcal B(-\nabla G,\phi_M^\rho)= \mathcal B(\rho,\phi_M^\rho)
\end{equation}
that implies that $\phi_M^\rho$ is the minimizer in \eqref{irafunesta}. Since $\psi=\arctan \phi$ we obtain that the maximizer to compute $V_+(\rho)=S_+(\rho)$ is given by $\psi_M^\rho=\arctan\phi_M^\rho$.

\bigskip

The space profile $ \bar\rho $ in \eqref{m} and \eqref{eq:barho+} might be proved to be the stationary profile for the KMPx case in the strong asymmetric limit.

\clearpage
\pagestyle{fancy}
\fancyhf{}
\fancyhead[RE]{\sffamily \fontsize{10}{12} \selectfont \nouppercase{\rightmark}}
\fancyhead[RO]{\sffamily \fontsize{10}{12} \selectfont \nouppercase{\rightmark}}
\fancyhead[LE]{\sffamily \fontsize{10}{12} \selectfont \nouppercase{\leftmark}}
\fancyhead[LO]{\sffamily \fontsize{10}{12} \selectfont \nouppercase{\leftmark}}
\cfoot{\footnotesize \thepage}
\renewcommand{\headrulewidth}{0.01cm}

\chapter{Perspectives}\label{ch:pers}

We will discuss open problems from previous chapters and  possible future investigations from the theory we developed. Additionally we present other few applications. The discussion is going to be split in a perspectives section relative to part \ref{p:mt} and another one relative to part \ref{p:mact}.

\section{Perspectives: part 1}

Our functional Hodge decomposition in section \ref{sec: HD}   offers a splitting of the instantaneous current \eqref{eq:istcur} and \eqref{eq:istce-m} with a simple explicit form in terms of translated functions depending only on the configurations. Other splitting in general are not explicit \cite{VY97}. This decomposition is interesting in the context of   non-gradient interacting systems because in this case 
the theory develops  from an orthogonal decomposition of the current \cite{VY97} and the
transport coefficients (section \ref{sec:TC}) have a variational representation in terms of the Green-Kubo formulas
\cite{Spo91,KipLan99}. Recently a finite dimensional
approximation of the transport coefficients   has been analysed in \cite{AKM17}. It would be interesting to understand if our decomposition is helpful to compute the transport coefficient in relation to these results. 

In chapter \ref{ch:SNS} we generated in dimension one and two local dynamics for discrete particles models when  a short range Gibbs translational invariant measure  is given.  Similar constructions could be done for other state spaces, for example for the case of energies-masses models. In particular it could be interesting consider long range interaction and non translational invariant measure. Additionally also constructing local dynamics and defining Markov rates from non local cycles, i.e. moving the particles on the entire torus $ \mathbb{T}^n_N $, are computations that merit attention \cite{PSS17}.

The construction with cycles of section \ref{sec:Dfreefl} was used to generated dynamics for a given target measure.  But we could  try to understand if   a representation in terms of cycles can be useful for the problem of the invariant measures  when the dynamics is assigned.
\vspace{0.8cm}

Here we propose a different application that can be developed from the cyclic decomposition construction of sections \ref{sec:Dfreefl} and \ref{sec:applications}.

\subsection{Invariant stationary states under perturbations far away from equilibrium} We consider an exclusion process  \eqref{eq:EPjump} on $ {\bb T}^d_N $ and perturb the rates as in \ref{ss:wapmod}, that is we have $ c^{\bb F}_{x,y}(\eta)=c_{x,y}(\eta)e^{\bb F(x,y)} $ and $ c^{\bb F}_{y,x}(\eta)=c_{y,x}(\eta)e^{-\bb F(x,y)} $ for $ (x,y)\in E_N $. The flow \eqref{defbas} of the perturbed process  on the transition graph $ (\Sigma_N,E) $ is
\begin{equation}\label{eq:perflow}
Q^{\bb F}(\eta,\eta^{x,y}):=Q(\eta,\eta^{x,y})e^{\bb F(x,y)} \text{ for all $ (x,y)\in E_N $},
\end{equation}
where $ Q(\eta,\eta^{x,y}) $ is the flow of the unperturbed process $ c_{x,y}(\eta) $. Let $ \tau_x $ be  the translation defined in section \ref{sec:CTMC}. We show a perturbative application of the theory of section \ref{sec:Dfreefl}.

\begin{proposition}\label{p:pertu}
Let $ \mu $ be a translational invariant reversible measure for the dynamics $ c_{x,y}(\eta) $. Moreover  the instantaneous current $ j_\eta(x,y) $ is gradient, i.e. $ j_\eta(x,y)=\tau_y h(\eta)-\tau_x h(\eta) $ for all $ (x,y)\in E_N $ with $ h:\Sigma_N\to\bb R $. If the perturbation $ \bb F (x,y)$ is such that the discrete vector field $ \phi(x,y):=e^{\bb F (x,y)}-e^{-\bb F (x,y)}$ is a divergence free field, i.e. $ \mathrm{div } \phi(x)=0 $ for all $ x\in\bb T^d_N $, then $ \mu $ is invariant with respect to the dynamics $ c^{\bb F}_{x,y}(\eta) $.
\end{proposition}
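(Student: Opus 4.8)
The plan is to verify that the perturbed flow $Q^{\mathbb F}$ is divergence free on the transition graph $(\Sigma_N,E)$; by the equivalence established in Section \ref{sec:Dfreefl} (see \eqref{defbas}--\eqref{qpi}) this is precisely the statement that $\mu$ is invariant for the rates $c^{\mathbb F}_{x,y}$. Since $\mu$ is reversible for the unperturbed dynamics, the unperturbed flow $Q(\eta,\eta^{x,y})=\mu(\eta)c_{x,y}(\eta)$ is symmetric, and the perturbation simply multiplies it by the edge factor, so $Q^{\mathbb F}(\eta,\eta^{x,y})=\mu(\eta)c_{x,y}(\eta)e^{\mathbb F(x,y)}$ as in \eqref{eq:perflow}. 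First I would write the configuration-space divergence at a fixed $\eta$ as a sum over oriented edges $(x,y)\in E_N$ of the net flow $Q^{\mathbb F}(\eta,\eta^{x,y})-Q^{\mathbb F}(\eta^{x,y},\eta)$; here the exclusion factors $\eta(x)(1-\eta(y))$ contained in $c_{x,y}$ make the terms corresponding to forbidden moves vanish automatically, so no separate bookkeeping of admissible transitions is needed.

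Next I would use the detailed balance relation \eqref{eq:dbc1} for the unperturbed chain, namely $\mu(\eta)c_{x,y}(\eta)=\mu(\eta^{x,y})c_{y,x}(\eta^{x,y})$, to collapse each pair of opposite contributions. Since the backward rate carries the factor $e^{-\mathbb F(x,y)}$, the $(x,y)$ contribution to the divergence becomes $\mu(\eta)c_{x,y}(\eta)\bigl(e^{\mathbb F(x,y)}-e^{-\mathbb F(x,y)}\bigr)=\mu(\eta)c_{x,y}(\eta)\phi(x,y)$. Summing over $E_N$ and pairing each unoriented edge with its two orientations, the antisymmetry of $\phi$ turns $c_{x,y}$ into the instantaneous current $j_\eta(x,y)=c_{x,y}(\eta)-c_{y,x}(\eta)$ of \eqref{eq:istcur}, so that the configuration-space divergence of $Q^{\mathbb F}$ at $\eta$ equals $\mu(\eta)\langle j_\eta,\phi\rangle$, where $\langle\cdot,\cdot\rangle$ is the scalar product \eqref{sc} on discrete vector fields over the physical torus $\bb T^d_N$.

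It then remains to show $\langle j_\eta,\phi\rangle=0$ for every fixed $\eta$, which is the heart of the argument and the point where the two hypotheses enter jointly. For fixed $\eta$ the gradient condition gives $j_\eta(x,y)=dg(x,y)$ with $g(x):=\tau_x h(\eta)$, so $j_\eta\in d\Omega^0$, while the divergence-free assumption gives $\phi\in\mathrm{Ker}\,\div=\delta\Omega^2\oplus\Omega^1_H$. By the orthogonality of the three summands of the Hodge decomposition (Theorem \ref{th:HT}) these subspaces are orthogonal, whence $\langle j_\eta,\phi\rangle=0$; equivalently one checks this directly by summation by parts, $\langle dg,\phi\rangle=-\sum_x g(x)\,\div\phi(x)=0$. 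Consequently $\div Q^{\mathbb F}(\eta)=0$ for all $\eta$, and $\mu$ is invariant for $c^{\mathbb F}_{x,y}$.

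The main obstacle is conceptual rather than computational: one must keep clearly separate the two notions of divergence at play, the divergence on the configuration transition graph $(\Sigma_N,E)$, which encodes stationarity via \eqref{eq:divfree}, and the divergence on the physical lattice $\bb T^d_N$, which is the hypothesis imposed on $\phi$. The crucial bridge is the identity expressing the former as the physical-lattice pairing $\mu(\eta)\langle j_\eta,\phi\rangle$; once this reduction is in place the conclusion is immediate from Hodge orthogonality. I would also double-check that $\phi$ is a genuine antisymmetric discrete vector field, which follows from $\mathbb F(y,x)=-\mathbb F(x,y)$, so that the pairing and the summation-by-parts identity are legitimate.
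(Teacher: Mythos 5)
Your proposal is correct and follows essentially the same route as the paper: reduce stationarity to divergence-freeness of the perturbed flow $Q^{\bb F}$ on the transition graph, use reversibility to extract the factor $\phi(x,y)$ against the instantaneous current, and conclude by summation by parts (equivalently, Hodge orthogonality of $d\Omega^0$ with $\mathrm{Ker}\,\div$) using the gradient hypothesis on $j_\eta$. The only cosmetic difference is that you factor out $\mu(\eta)$ and pair $\phi$ directly with $j_\eta$, whereas the paper absorbs $\mu$ into $\tilde h=\mu h$ via the translational invariance hypothesis before integrating by parts.
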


\begin{proof}
From the translational invariance of $ \mu $ and the gradient condition on $ j_\eta (x,y) $ we have
that on the graph $ (\Sigma_N,E) $ the flow-current defined as 
 \begin{equation}\label{eq:flowcurr}
 J^Q_\eta(x,y):=Q(\eta,\eta^{x,y})-Q(\eta^{y,x},\eta)\end{equation}
 is gradient respect to the function $ \tilde h(\eta):=\mu(\eta)h(\eta) $ in the sense that $ J^Q_\eta(x,y)=\tau_y \tilde h(\eta)-\tau_x\tilde h(\eta) $.
The invariant condition for $ \mu $ with respect to $ c^{\bb F}_{x,y}(\eta) $ is also
\begin{equation}\label{eq:invQ^F}
\us{(x,y)\in E_N}{\sum}[Q^{\bb F}(\eta,\eta^{x,y})-Q^{\bb F}(\eta^{y,x},\eta)]=0,
\end{equation}
using the reversibility $ Q(\eta,\eta^{x,y})=Q(\eta^{y,x},\eta) $ and the antisymmetry of the field $ \bb F $ equation \eqref{eq:invQ^F} becomes
\begin{equation}\label{eq:invQ^F2}
\us{(x,y)\in E^+_N}{\sum}J^Q_\eta(x,y)\phi(x,y)=0,
\end{equation}
where $ E_N^+ $ is set of positive edges of $ \mathbb{T}^d_N $, see notation in section \ref{sec:disop}. From the gradient condition for $J^Q_\eta(x,y)  $ respect to $ \tilde h $ and doing an integration by part we obtain the equivalent condition 
\begin{equation}\label{eq:invQ^F3}
\us{x\in\bb T^d_N}{\sum}\tau_x\tilde h \div \phi (x)=0.
\end{equation}
The proof is concluded since equation \eqref{eq:invQ^F3} is true if and only if $ \div \phi(x)=0 $ for all $ x\in \bb T^d_N $.
\end{proof}

\begin{remark}
Of course the conditions of proposition \ref{p:pertu} are strong, but it is interesting because   we are not asking the perturbation $ \bb F $ to be small, that is we are  perturbing  an equilibrium state (namely when we have reversibility)  to obtain a stationary state arbitrary far away from equilibrium. The most simple case but also  meaningful from the physical point of view is   a constant field $ \bb F(x,y)=F $ for each $ (x,y)\in E_N^+ $.  This result was obtained also in \cite{BDeSGJ-LL07}.
\end{remark}

\section{Perspectives: part 2}\label{s:per2}

In section \ref{sec:QPweas} we derived a large deviation functional for  weakly asymmetric interacting systems with a diffusive scaling with constant diffusion coefficient and quadratic mobility. For the reasons explained in subsection \ref{trc} in the paragraph just after formula \eqref{compiti} we don't know a KMPx microscopic dynamics having an hydrodynamics \eqref{eq:drcE} with mobility $ \sigma(\rho)=\rho^2+1 $. Finding a model with these transport coefficients would be useful  to complete the picture and  because of the macroscopic phenomenology of this hypothetical model is different from the one of the other models as explained at end of section \ref{sec:SL}.
\vspace{0.8cm}

The macroscopic fluctuation theory (MFT)\cite{BDeSGJ-LL15} we presented in chapter \ref{ch:LDT}  considers  a strong interaction
with the sources (reservoirs) that fix the values of the density at the boundaries. Namely it  is a fluctuation theory for the bulk dynamics, as explained in section \ref{sec:Stldt}, because we have considered fast boundary dynamics like \eqref{eq:bKMPgen} that thermalise the parameter $ \rho $ at the boundaries simulating thermostats much larger than the system as we observed in the introduction to part \ref{p:mact}.
But other situations are possible. We would like to apply systematically the macroscopic fluctuation theory for a more general class of models. In particular we plan to study systems with
different interactions with the boundary sources, for example weak interactions with the boundaries can be modelled  as slow boundaries  that are  obtained through a different rescaling of the boundary dynamics, see  \cite{FGN16}. In this case the
density is no more fixed and there will be a contribution to the large deviations cost coming from
the boundary. Accordingly the Hamilton Jacobi equation has to be modified and in particular it
will contain an additional term. All the structure of the Macroscopic Fluctuation Theory has to be
modified accordingly.
More generally it is possible to consider models having different boundary sources \cite{DPT11} and inhomogeneous systems having slow/fast bonds/sites \cite{BDL10}.  For some of these models a dynamic large deviations principle is already available \cite{FN16}.

In \cite{BDeSGJ-LL06} the macroscopic fluctuation theory is extended to the joint fluctuation current-density. This theory is developed for diffusive particles models where it is true the Fick's law \eqref{eq:typpic}. In section \ref{sec:slv} we have seen that we have the more general scenario \ref{eq:JasymD}, therefore as future work the theory has to be investigated for this more general case.

In this thesis we  had a purely variational approach to the problem of large deviations. Another perspective is that   to study large deviations for stationary nonequilibrium states of interacting particle systems through some combinatorial representations of
the invariant measures, like for example in \cite{DS05} for the totally asymmetric exclusion process and in
\cite{Kin69} for the simple exclusion process. In particular the representation of \cite{Kin69} seems to be useful for
a direct computation of the quasi-potential considering different scaling of the interaction with
the boundaries.

\facciatabianca
\cleardoublepage
\pagestyle{fancy}
\fancyhf{}
\rhead{\sffamily  \selectfont \nouppercase{\rightmark}}
\lhead{\sffamily  \nouppercase{\leftmark} }
\cfoot{\footnotesize \thepage}
\renewcommand{\headrulewidth}{0.01cm}

\appendix
\addappheadtotoc
\addtocontents{toc}{\protect\setcounter{tocdepth}{0}}

\chapter{Complements to part 1}\label{a:mic}

Here we show   a functional Hodge decomposition for  vertex functions and a consequent  construction to generate invariant dynamics, that is equivalent to the construction in subsection \ref{ss:Glort2}.

\section[Hodge decomposition for 0-forms]{Functional Hodge decomposition for discrete 0-forms}\label{ss:Hod0}

From the results of  section \ref{sec: HD} we can derive a functional Hodge decomposition for vertex functions (discrete 0-zero forms) on $ \mathbb T^n_N $ in general dimension $ n $, on the torus we need to consider only the set of vertexes $ V_N $ and the set of edges $ E_N $.
If  in \eqref{eq:dhodd} of theorem \ref{th:HT} we take $ k=0 $ we have the Hodge decomposition
\begin{equation}\label{e:HD0}
\Omega^0=\delta\Omega^1\oplus\Omega^0_H,
\end{equation}
where $\Omega^0$ and  $\Omega^1 $ are defined as at the beginning of  section \ref{sec: HD}. The scalar product in $ \Omega^0 $ is defined as 
\begin{equation}\label{e:sp0}
\langle h,g\rangle=\us{x\in V_N}{\sum}h(x)g(x)\quad\forall\,h,g\in\Omega^0.
\end{equation}
Observe  that in one dimension to decompose a discrete vector fields or a vertex function is the same because of  the duality between edges and vertexes, indeed the decomposition is $ \Omega^1=d\Omega^0\oplus\Omega^1_H$. But in higher dimensions the decomposition for vertexes function is always \eqref{e:HD0}, namely $ \delta \Omega^1 $ is the $ N^n-1 $ dimensional space  of the vertex function $ h:V_N\to\bb R $ such that  
\begin{equation}\label{e:delta}
h(x)=\delta\vphi(x)=\us{y:(y,x)\in E_N}{\sum} \vphi(y,x),
\end{equation}
where $ \vphi:E_N\to\bb R $ is a discrete vector field in $ \Omega^1 $. While $ \Omega^0_H $ is the one dimensional space  of constant vertex function generated by
\begin{equation}\label{e:I0}
\bb I(x)=1 \text{ for all } x\in V_N.
\end{equation}
As we already did in  section \eqref{sec: HD}, when there is a dependence on $ \eta\in\Sigma_N $ we use the terms discrete forms and discrete vector fields considering a fixed configuration $ \eta $.
We denote with $ g(\eta,x) $  a general function $ g:V_N\times\Sigma_N\to \bb R$ and we say that it is \e{translational covariant} when for any $ z\in V_N $  it is
\begin{equation}\label{e:covfun}
g(\eta,x)=g(\tau_z\eta,x+z) \text{ for all } (x,\eta)\in  V_N\times\Sigma_N.
\end{equation}
In subsection \ref{ss:delta1to0}
we defined the discrete divergence $\div=-\delta  $, so for any $ f(\eta,x)\in \delta\Omega^0 $ we can write $ f(\eta,x)=\div \vphi_\eta (x)=\us{y:(x,y)\in E_N}{\sum} \vphi_\eta(x,y) $ for some $ \vphi_\eta\in \Omega^1 $. 
The decomposition for vertex functions is as follow.
\begin{theorem}\label{t:vertex}
Let $g(\eta,\cdot)$  be a translational covariant vertex function. Then there exists a function $h(\eta)$ and a translational invariant function $c(\eta)$ such that
\begin{equation}\label{eq:g=D+c}
g(\eta,x)=\Delta\tau_x h(\eta)+c(\eta)\,,
\end{equation}
where $ \Delta  f(\eta,x):=f(\eta,x+e_1)+\dots+f(\eta,x+e_n)+f(\eta,x-e_1)+\dots+f(\eta,x-e_n)-n\,f(\eta,x) $ for $ f(\eta,\cdot)\in\Omega^0 $.
The function $c$ is uniquely identified and coincides with
\begin{equation}\label{eq:c(eta)}
c(\eta)=\frac1N\sum_{x\in V_N} g(\eta,x).
\end{equation}
The function $h$ is uniquely identified up to an arbitrary additive translational invariant function and solves the Poisson equation
\begin{equation}\label{eq:h(eta)0}
\Delta \tau_x h(\eta)=g(\eta,x)-c(\eta).
\end{equation}
\end{theorem}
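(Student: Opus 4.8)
The plan is to mirror the one-dimensional argument of Theorem \ref{belteo}, since the statement for vertex functions is the exact analogue of that result with the gradient operator replaced by the discrete Laplacian $\Delta=-\delta d$. First I would observe that the claimed decomposition \eqref{eq:g=D+c} is nothing but the orthogonal Hodge splitting \eqref{e:HD0} made functional: the harmonic part $\Omega^0_H$ is generated by the constant field $\mathbb I$ of \eqref{e:I0}, and the $\delta\Omega^1$ part is exactly the range of the divergence $\div=-\delta$, which coincides with the range of $\Delta$ on the torus. So I would begin by summing both sides of \eqref{eq:g=D+c} over $x\in V_N$. Since the Laplacian is a discrete divergence it has zero total mass, $\sum_x \Delta\tau_x h(\eta)=0$, and hence $c(\eta)$ is forced to equal the average \eqref{eq:c(eta)}, which establishes both existence and uniqueness of the constant term.

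Next I would verify that $c(\eta)$ is translational invariant, using the covariance hypothesis \eqref{e:covfun} precisely as in Theorem \ref{belteo}: writing $c(\tau_z\eta)=\frac1N\sum_x g(\tau_z\eta,x)=\frac1N\sum_x g(\eta,x-z)=c(\eta)$ after the change of variable $x\mapsto x-z$. Once $c$ is known to be a translational invariant constant, the function $g(\eta,\cdot)-c(\eta)$ has zero spatial average for every fixed $\eta$, so it lies in $\delta\Omega^1=\mathrm{Im}\,\Delta=(\Omega^0_H)^\perp$. Therefore for each fixed $\eta$ the Poisson equation \eqref{eq:h(eta)0} is solvable, and its solution $x\mapsto H(\eta,x)$ is unique up to an additive constant (the kernel of $\Delta$ on the connected torus is exactly the constants). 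This is the step where I would invoke standard discrete potential theory, since the discrete Laplacian on $\mathbb{T}^n_N$ is symmetric with one-dimensional kernel.

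The crux of the argument — and the main obstacle, just as in the one-dimensional case — is to upgrade the fiberwise potential $H(\eta,x)$ into a genuine function of the single configuration variable, i.e. to produce $h(\eta)$ with $H(\eta,x)=\tau_x h(\eta)$. The plan here is the averaging trick from Theorem \ref{belteo}: I would first fix a reference solution $H_x(\eta,\cdot)$ of the Poisson equation normalized, say, to vanish at $x$, then form the average $\overline H(\eta,y):=\frac{1}{N^n}\sum_{x\in V_N}H_x(\eta,y)$, which still solves \eqref{eq:h(eta)0}. The delicate point is proving the invariance property $\overline H(\tau_z\eta,y+z)=\overline H(\eta,y)$; this follows exactly as in the displayed computation after \eqref{invariance2}, combining the covariance \eqref{e:covfun} of $g$ (hence of $g-c$) with a reindexing of the sum over the reference points $x$, and using that $\Delta$ commutes with translations so that a shifted solution is again a solution.

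Finally, defining $h(\eta):=\overline H(\eta,0)$, the invariance just proved gives $\overline H(\eta,y)=\overline H(\tau_{-y}\eta,0)=h(\tau_{-y}\eta)=\tau_y h(\eta)$, which converts the Poisson equation into the desired form \eqref{eq:g=D+c}. Uniqueness of $h$ up to a translational invariant function then follows exactly as at the end of Theorem \ref{belteo}: if $h,h'$ both work, subtracting gives $\Delta\tau_x(h-h')=0$ for all $x$, so $\tau_x(h-h')$ is constant in $x$, which is precisely the statement that $h-h'$ is translational invariant. I expect the only genuinely nontrivial piece to be the invariance identity for $\overline H$, and I would lift that verbatim from the two-dimensional proof since the argument depends only on translation-covariance and the translation-invariance of the averaging, not on the dimension.
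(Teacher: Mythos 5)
Your proof is correct, but it follows a different route from the paper's. The paper's argument first invokes the $0$-form Hodge splitting \eqref{e:HD0} to write $g(\eta,x)=\div\vphi_\eta(x)+c(\eta)$, then argues (by the same device as in Theorem \ref{belteo2}) that the discrete vector field $\vphi_\eta$ can itself be chosen translational covariant, applies the functional Hodge decomposition of $1$-forms together with Remark \ref{re:gradpart} to extract the gradient part in the form $\tau_yh(\eta)-\tau_xh(\eta)$, and finally takes the divergence to land on the Poisson equation \eqref{eq:h(eta)0}. You instead bypass the intermediate vector field entirely: you go straight to the scalar Poisson equation $\Delta H(\eta,\cdot)=g(\eta,\cdot)-c(\eta)$ (solvable because the right-hand side has zero mean, with solution unique up to a constant since $\ker\Delta$ on the connected torus is one-dimensional), normalize at a reference vertex, and run the averaging trick of Theorem \ref{belteo} directly on the scalar potentials $H_x$. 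The key identity $H_{x+z}(\tau_z\eta,y+z)=H_x(\eta,y)$ indeed follows from covariance of $g$, invariance of $c$, commutation of $\Delta$ with translations, and uniqueness of the normalized solution, so your invariance of $\overline H$ and the conclusion $\overline H(\eta,y)=\tau_yh(\eta)$ go through; the uniqueness statement via harmonicity of $\tau_x(h-h')$ is also fine. What your route buys is self-containment — it needs only elementary discrete potential theory for the Laplacian and avoids having to justify that the covariant $\vphi_\eta$ in the paper's first step can be chosen covariantly; what the paper's route buys is economy, since it recycles the already-established machinery of Theorem \ref{belteo2} and Remark \ref{re:gradpart} rather than reproving the averaging argument in a new setting.
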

\begin{proof}
From \eqref{e:HD0} we  know that \begin{equation}\label{e:div+c}
 g(\eta,x)=\div\vphi_\eta(x) + c(\eta)
\end{equation}
 for some $ \vphi_\eta\in\Omega^1 $ and $ c(\eta)\in\Omega^0_H $. Summing over all $ x\in V_N $ on both sides we obtain that $ c(\eta) $ is uniquely determined by \eqref{eq:c(eta)} and  translational invariant. Since $ g(\eta,x) $ is translational covariant we can prove with the same argument of theorem \ref{belteo2}  that the $ \eta $ dependent discrete vector field $ \vphi_\eta $ can be chosen translational covariant too. From the Hodge decomposition for discrete vector fields \eqref{eq:HD1} we have that $ \vphi_\eta=d h(\eta)+\delta\psi(\eta)+\vphi^1_{H}(\eta) $ for some $ h(\eta)\in \Omega^0 $, $ \psi(\eta)\in \Omega^2 $ and $ \vphi^1_H(\eta)\in\Omega^1_H $, taking the divergence on both sides  we have that
\begin{equation}\label{eq:divdiv}
 \div\vphi_\eta(x)=\div dh(\eta,x)
\end{equation}  
for any $ x\in V_N $ because of the property $ \delta\circ\delta=0 $ (see \eqref{eq:deltaodelta}) and the definition of harmonic form (see theorem \ref{th:HT}). Since $ \vphi_\eta $ is covariant we can apply the remark \ref{re:gradpart} to have that $ dh(\eta,x,y)$ is of the form $dh(\eta,x,y)=\tau_{y} h(\eta)-\tau_x h(\eta) $ for any $ (x,y)\in E_N $, where $ y=x\pm e_i $ for some $ i\in 1,\dots, n $. Plugging \eqref{e:div+c} in \eqref{eq:divdiv}
and from last observation we find the  Poisson equation 
\begin{equation}\label{e:Poi}
\Delta \tau_x h(\eta)=g(\eta,x)-c(\eta)
\end{equation}
that for a fixed $ \eta $ has always a solution.
From \eqref{re:gradpart} it follows that $ h $ is identified up to an arbitrary additive translational invariant function.
\end{proof}  

We will use this result  to generate invariant Glauber dynamics respect to  a given measure in  general dimension $ n $.

\section[Glauber dynamics]{General n-dimensional Glauber dynamics}
So far we constructed Glauber dynamics passing trough discrete vector fields, as in subsection \ref{ss:Glort1} and \ref{ss:Glort2}. But using the functional Hodge decomposition for 0-discrete forms (namely vertex functions) of theorem \ref{t:vertex} we have a  natural construction in any dimension. 
Indeed  we can interpret  the invariant condition \eqref{step32} in terms of the scalar product \eqref{e:sp0} for vertex functions and it becomes
\begin{equation}\label{eq:ort0}
\us{x\in V_N}{\sum} g(\eta,x)=\langle g,\bb I\rangle=0,
\end{equation}
where $ g(\eta,x):=\tau_x g(\eta) $ with $ g(\eta) $ as in \eqref{carsoli} and $ \bb I $  defined as in \eqref{e:I0}.
This is an orthogonality condition between $ g(\eta)\in \Omega^0 $ and $ \bb I\in \Omega^0_H $. Therefore we have \eqref{eq:ort0} if and only if $ g(\eta)\in \delta\Omega^1 $. Since $ g(\eta) $ is translational covariant, from theorem \ref{t:vertex} we obtain that the stationary condition is satisfied if and only if there exist a function $ h:\Sigma_N\to\bb R $ such that solves the Poisson equation $  \Delta\tau_x h(\eta)=g(\eta,x)$ for all $ x\in V_N $
where $ \Delta\tau_x h(\eta)=\tau_{x+e_1}h(\eta)+\dots+\tau_{x+e_n}h(\eta)+\tau_{x-e_1}h(\eta)+\dots+\tau_{x-e_n}h(\eta)-n\,\tau_{x}h(\eta) $. Therefore it has to be $ g(\eta)=\Delta h(\eta) $.
To generate an invariant Glauber dynamics we have to fix the function $ h $ in such a way that
$$
\frac{\Delta h(\eta)}{\left(1-\frac{\eta(0)}{p}\right)}
$$
has a domain $D$ such that $D\cap \{0\}=\emptyset$. A general family of functions that satisfies this constraint is given by
\begin{equation}\label{eq:hGla}
h(\eta)=(1-\eta(-e_1))\dots(1-\eta(-e_n))(1-\eta(0))(1-\eta(e_1))\dots(1-\eta(e_n))\tilde h(\eta),
\end{equation}
where $ D(\tilde h)\cap\{-e_n,\dots,-e_1,0,e_1,\dots,e_n\} =\emptyset$. Under these assumptions we can find the positive solutions $c^\pm$ using again \eqref{posol}. Writing only the non reversible contribution we have that $c^+(\eta)$ is equal to
$\left[
\frac{\Delta h(\eta)}{\left(1-\frac{\eta(0)}{p}\right)}
\right]_+$ 
while $c^-(\eta)$ is given by $c^-(\eta)=\frac{1-p}{p}\left[
-\frac{\Delta h(\eta)}{\left(1-\frac{\eta(0)}{p}\right)}
\right]_+$.

\chapter{Complements to part 2}\label{a:appA}

In this appendix   we treat some technical questions  related to chapter \ref{ch:TALD}.   In section \ref{sec:exsist} we  discuss the existence of a solution $ \phi^\rho $ to the Euler-Lagrange equation \eqref{eq:[]=0} with boundary condition $ \phi(0)=\rho_- $, $ \phi(1)=\rho_+ $ when models in the classes of KMP, KMPd and SEP are considered, i.e.  $ \sigma(\rho) $ with real roots. In section \ref{sec:KMPsta}  we derive the explicit stationary solution $ \bar\rho_E $ to \eqref{steq} for the KMP model with a direct computation. We consider $ \Lambda=(0,1] $ and $ \rho(1)=\rho_+>\rho_- =\rho(0)$ and define $ \p_\pm\Lambda=\{0,1\} $.

\section[Existence of the solution]{Existence of the solution to the Euler-Lagrange equation}\label{sec:exsist} Let $C^1(\Lambda)$ be  the Banach space of continuously differentiable functions $f:\Lambda\rightarrow \mathbb{R}$ endowed with the norm $||f||_{C^1}:=\us{x\in\Lambda}{\sup}\{|f(x)|+|f'(x)|\}$. Let's consider a bounded fluctuation $\rho$, namely $||\rho||_{\infty}<+\infty$. The existence of a solution $ \phi^\rho $ to the Euler-Lagrange 
\begin{equation}\label{eq:exEL}
\left\{
\begin{array}{ll}
\Delta\phi\sigma(\phi)+c_2(\nabla\phi)^2(\rho-\phi)-
E\sigma(\phi)\nabla\phi\left(c_2(\rho+\phi)+c_1\right)=0,\\
\phi(0)=\phi_-=\rho_-,\,\phi(1)=\phi_+=\rho_+
\end{array}\right.
\end{equation}
needs  different dissertations for different  models. We remind that \eqref{eq:exEL} characterizes the stationary points of the functional $ \mc G_E(\rho,\phi) $ in \eqref{effeciu}.  

In section \ref{sec:opt} we studied the solution of the Hamilton-Jacobi equation $ \mc G(\rho,\phi^\rho) $ distinguishing  the cases $ E>E^* $ and $ E<E^* $ to discuss the quasi-potential in the nonequilibrium context. We did it using the change of variable $ \psi=\lambda[\phi] $. Since it is invertible, the optimization of $ \mc G_E(\rho,\cdot) $ doesn't change going back to the variable $ \phi $. So $ S_E(\rho)=\us{\phi\in\mc D_E}{\sup\,}\mc G(\rho,\phi)$ for $ E>E^*$  and  $S_E(\rho)=\us{\phi\in\mc D_E}{\inf}\mc G(\rho,\phi)$ for $ E<E^* $, where $ \mc D_E:=\left\{\phi\in C^1(\Lambda):\phi=\rho[\psi],\,\psi\in\mc F_E\right\}$ and $ \mc{F_E} $ was defined in \eqref{doveleo}. for $ E>E^*  $ and \eqref{domps} for $ E<E^* $. Specifically    we have respectively
\begin{equation}\label{eq:D_E>}
\mc D_E:=\{f\in C^1(\Lambda):f(\p_\pm\Lambda)=\phi_\pm, \nabla f (x)>0, \nabla f (x)<E\sigma(f) \} \text{ if } E>E^* 
\end{equation}  
and 
 \begin{equation}\label{eq:D_E<}
 \mc D_E:=\{f\in C^1(\Lambda):f(\p_\pm\Lambda)=\phi_\pm, \nabla f (x)>0, \nabla f(x)>E\sigma(f) \} \text{ if } E<E^* .
 \end{equation} 
Note that both for $ \psi\in\mc F_E $ and $ \phi\in\mc D_E $ we don't ask to be $ C^2(\Lambda) $ because to prove existence of the solution of the Euler-Lagrange equation in an integral-differential form to be $ C^1(\Lambda) $ is enough. Under the conditions on a function $ f\in \mc D_E $, the  functional derivative \eqref{perfi} of $ \mc G_E(\rho,\phi) $ and the first equation of \eqref{eq:exEL} (i.e. the original Euler-Lagrange equation)  are equivalent.

For convenience of visual compactness in next formulas instead of $ \nabla $ and $ \Delta $  we are using  respectively a subscript $ x $ and a subscript $ xx $ to indicate the spatial derivatives. We remember that the mobility of the model we consider has to be positive $ \sigma(\rho)>0 $ for physical reasons.

\subsection{KMP and KMP-dual "like" models}
Following the scheme introduced in \cite{DLS01} and developed in \cite{BDeSGJ-LL03}, \cite{BGL05} and \cite{BGL09} we prove the existence of the solution to the Euler-Lagrange equation \eqref{eq:exEL}  for $\sigma(\rho)=c_0+c_1\rho+c_2\rho^2$ where $c_2>0$, $\sigma(\rho)$ has two real roots and $\rho>\alpha_+$ where $\alpha_+$ where is the most right root of $ \sigma(\rho) $.  We rewrite equation \eqref{eq:exEL} as an  integral-differential equation, that is
\begin{equation}\label{K eq}
\left\{\begin{array}{ll} \phi_{xx}=\phi_x\mathcal{H_\rho(\phi)} \\
\phi(0)=\phi_-,\,\phi(1)=\phi_+
 \end {array}\right.
\end{equation}
with $\mc{H_{\rho}(\phi)}:=c_2\phi_x\frac{\phi-\rho}{\sigma(\phi)}+E(c_2(\rho+\phi)+c_1)$. If $\phi\in C^2(\Lambda)$ is such that $\phi_x$ is a solution of (\ref{K eq}), then it solves also the integral-differential equation 
\begin{equation}\label{integro EL}
\phi(x)=\phi_{-}+(\phi_+-\phi_-)\frac{\int^x_{0}dy \exp \int^y_{0}\mc{H_{\rho}(\phi)}(z) dz }{\int^1_{0}dy \exp \int^y_{0}\mc{H_{\rho}(\phi)}(z) dz }.
\end{equation}
On the other hand if $\phi$ is solution of (\ref{integro EL}) then $\phi\in C^1(\Lambda)$, $\phi_x>0$, $\phi_{xx}$ exists in almost all $x$ and it follows that (\ref{K eq}) holds almost  everywhere. Therefore we define the integral-differential operator $\mc{K_{\rho}}:\mc{D}_{E}\rightarrow C^1(\Lambda)$ given by 
\begin{equation}
\mc{K}_{\rho}(f)(u):=\phi_{-}+(\phi_+-\phi_-)\frac{\int^x_{0}dy \exp \int^y_{0}\mc{H_{\rho}(f)}(z) dz }{\int^1_{0}dy \exp \int^y_{0}\mc{H_{\rho}(f)}(z) dz },
\end{equation}
with $\mc{H_{\rho}(f)}:=c_2 f_x\frac{f-\rho}{\sigma(f)}+E(c_2(\rho+f)+c_1)$. For the moment let's consider $\sigma(x)=c_1x+c_2x^2$ with $c_1\geq0,c_2>0$ and later we are extending the results to all the models where $\sigma(x)$ has two real roots and $c_2>0$. 

In place of  considering  the problem (\ref{K eq}) we want to prove the existence of a fixed point $f_0$ for $\mc{K}_{\rho}(\cdot)$, that is $f_0$ such that $\mc{K}_{\rho}(f_0)=f_0$. To this purpose we are using the Schauder-Tychonov fixed point theorem.
\begin{theorem}(Schauder-Tychonov)
Let $X$ a locally convex space, $\mc{B}\subset X$ convex and compact. Each continuous function $f:\mc{B}\rightarrow \mc{B}$ has a fixed point in $\mc{B}$, namely it exists  $ x_0\in\mc{B}$ such that $f(x_0)=x_0$.
\end{theorem}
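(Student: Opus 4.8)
The plan is to reduce this infinite-dimensional statement to Brouwer's fixed point theorem for compact convex subsets of $\bb R^m$, which I take as known, by approximating the self-map $f$ with maps of finite-dimensional range. Throughout I read $X$ as a (Hausdorff) locally convex topological vector space and assume $\mc B$ nonempty. The two ingredients that make the reduction work are local convexity, which furnishes enough well-behaved neighbourhoods of the origin to build the approximating projections, and compactness of $\mc B$, which lets me extract a genuine fixed point in the limit.

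First I would construct the Schauder projection attached to a neighbourhood of the origin. Since $X$ is locally convex it has a base of convex balanced (hence symmetric) neighbourhoods of $0$; fix one such $U$. By compactness of $\mc B$ there are points $x_1,\dots,x_n\in\mc B$ with $\mc B\subseteq\bigcup_{i=1}^n(x_i+U)$. Choosing continuous functions $\lambda_i\geq 0$ that are strictly positive on $x_i+U$ and vanish off it (for example built from the Minkowski gauge of $U$), I would set
\[
P_U(x):=\frac{\sum_{i=1}^{n}\lambda_i(x)\,x_i}{\sum_{i=1}^{n}\lambda_i(x)}\,,\qquad x\in\mc B\,.
\]
This is continuous and maps $\mc B$ into $K:=\mathrm{conv}\{x_1,\dots,x_n\}$, a compact convex subset of the finite-dimensional subspace spanned by the $x_i$; as $\mc B$ is convex and contains each $x_i$, also $K\subseteq\mc B$. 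The decisive estimate is $P_U(x)-x\in U$ for every $x$: if $\lambda_i(x)>0$ then $x\in x_i+U$, so $x_i-x\in U$ by symmetry, and $P_U(x)-x$ is a convex combination of such vectors, whence it lies in $U$ by convexity of $U$.

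Second, I would invoke Brouwer. The composite $P_U\circ f\colon K\to K$ is a continuous self-map of a compact convex subset of a finite-dimensional space, so it has a fixed point $y_U=P_U(f(y_U))\in K$, and the projection estimate gives $y_U-f(y_U)=P_U(f(y_U))-f(y_U)\in U$. Performing this for every $U$ in the base produces a net $(y_U)$, directed by reverse inclusion of the $U$'s, of approximate fixed points.

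Finally I would pass to the limit, where non-metrizability of $X$ forces an argument with nets rather than sequences. By compactness of $\mc B$ the net $(y_U)$ has a subnet converging to some $y_0\in\mc B$; continuity of $f$ yields $f(y_U)\to f(y_0)$ along this subnet, while $y_U-f(y_U)\in U$, together with the fact that every neighbourhood of $0$ eventually contains the index $U$, forces $y_U-f(y_U)\to 0$. Combining the two convergences gives $y_0-f(y_0)=0$, the sought fixed point. I expect the main obstacle to be the clean construction of $P_U$ and the verification of the estimate $P_U(x)-x\in U$, together with the disciplined use of nets in place of sequences; once these are in hand, Brouwer's theorem and compactness close the argument routinely.
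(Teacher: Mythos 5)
Your proof is correct. The paper offers no proof of its own for this theorem --- it simply cites Dunford--Schwartz, Chapter V --- and your argument (the Schauder projection $P_U$ built from a convex balanced neighbourhood via a partition of unity, with the estimate $P_U(x)-x\in U$, reduction to Brouwer on the compact convex hull $K\subseteq\mc B$, and then a net of approximate fixed points closed off by compactness and continuity) is precisely the classical proof given in that reference, so it matches the intended route, including your correctly flagged standing assumptions that $X$ is Hausdorff and $\mc B$ nonempty.
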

\begin{proof}
see Dunford-Schwartz \cite{DS58} chapter V.
\end{proof}

In next proposition we will apply the Schauder-Tychonov theorem to the convex compact set $\mc{B}_E$ defined as 
\begin{equation}\label{inv set}
\mc{B}_E:=\{f\in C^1(\Lambda):b_E\leq f_x(x)\leq B_E\}\subset\mc{D}_E,
\end{equation} 
where $ b_E:=\us{f\in \mc D_E}{\min}b_E(f) $,  $ B_E:=\us{f\in \mc D_E}{\max}B_E(f) $,
\[b_E(f):=\left\{\begin{array}{ll}
\left(\phi_+-\phi_-\right)\left(\frac{\sigma(\phi_-)}{\sigma(\phi_+)}\right)^{\frac{\sigma(\phi_+)}{m_{f_x}}} \exp({-c_2||\rho||_\infty c^1_E E})  \text{ if } E>E^*\\
\\
\left(\phi_+-\phi_-\right)\left(\frac{\sigma(\phi_-)}{\sigma(\phi_+)}\right)\exp(-c_2||\rho||_\infty c^2_E|E|) \text{ if } E<E^*
\end{array} 
\right. \] 
and
\[B_E(f):=\left\{\begin{array}{ll}
\left(\phi_+-\phi_-\right)\left(\frac{\sigma(\phi_+)}{\sigma(\phi_-)}\right)^{\frac{\sigma(\phi_+)}{m_{f_x}}} \exp(c_2||\rho||_\infty c^1_E E) \text{ if } E>E^*\\
\\
\left(\phi_+-\phi_-\right)\left(\frac{\sigma(\phi_+)}{\sigma(\phi_-)}\right)\exp(c_2||\rho||_\infty c^2_E |E|)  \text{ if } E<E^*
\end{array} 
\right.,\bigskip \] 
with $c^1_E:=\left(1-\frac{m_{f_x}}{E\sigma(\phi_+)}\right)$, $c^2_E:=\left(1+\frac{M_{f_x}}{|E|\sigma(\phi_-)}\right)$, $m_{f_x}:=\us{y\in\Lambda}{\min\,}f_x(y) $ and $M_{f_x}:=\us{y\in\Lambda}{\max\,}f_x(y)$.

\begin{proposition}\label{existence KMP}
If $\sigma(x)=c_1x+c_2x^2$ with  $c_1\geq0$ and $c_2>0$, for each positive $\rho\in L^{\infty}(\Lambda)$, $\mc{K}_{\rho}$ is a continuous map on $\mc{D_E}$ and $\mc{K}_{\rho}(\mc{D_E})\subset\mc{B}_E$. The set $\mc{B}_E$ is a convex compact subset of $C^1(\Lambda)$, hence exists a fixed point $f_0=\mc{K}_{\rho}(f_0)\in \mc{B}_E$. 
\end{proposition}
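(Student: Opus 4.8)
The plan is to apply the Schauder--Tychonov theorem to the map $\mc{K}_\rho$ on a compact convex set, exactly along the lines of \cite{DLS01,BDeSGJ-LL03,BGL05,BGL09}. The starting observation is that here $\sigma(x)=c_1x+c_2x^2=x(c_1+c_2x)$ with $c_1\geq 0$ and $c_2>0$, so the rightmost root of $\sigma$ is $\alpha_+=0$; restricting to positive profiles $\rho>0$ and to $f$ taking values in $[\phi_-,\phi_+]$ with $\phi_->0$, the mobility is bounded away from zero, $\sigma(f)\geq \sigma(\phi_-)>0$, because $\sigma'(x)=c_1+2c_2x>0$ for $x>0$. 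This positivity is what makes $\mc{H}_\rho(f)$ a bounded measurable function of $z$ for every $f\in\mc D_E$, so that $I(x):=\int_0^x\mc{H}_\rho(f)(z)\,dz$ is well defined and $\mc{K}_\rho(f)\in C^1(\Lambda)$, with
\begin{equation*}
\big(\mc{K}_\rho(f)\big)_x(x)=(\phi_+-\phi_-)\,\frac{\exp I(x)}{\int_0^1\exp I(y)\,dy}>0
\end{equation*}
and the prescribed boundary values, so that $\mc{K}_\rho(f)$ is automatically increasing.

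The main step, and the principal obstacle, is the two-sided a priori bound $b_E\leq \big(\mc{K}_\rho(f)\big)_x\leq B_E$ uniform over $f\in\mc D_E$, which is exactly the content of $\mc{K}_\rho(\mc D_E)\subset\mc B_E$. By the mean value theorem for integrals I would write $\int_0^1\exp I=\exp I(\xi)$ for some $\xi\in(0,1)$, so that $\big(\mc{K}_\rho(f)\big)_x(x)=(\phi_+-\phi_-)\exp\big(\int_\xi^x\mc{H}_\rho(f)\,dz\big)$ and it suffices to bound $\int_a^b\mc{H}_\rho(f)\,dz$ for $0\leq a\leq b\leq 1$. I would split $\mc{H}_\rho(f)$ into three pieces. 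The piece containing $f_x$ is integrated exactly through the identity $c_2f_x\frac{f}{\sigma(f)}=\frac{c_2f_x}{c_1+c_2f}=\frac{d}{dz}\log(c_1+c_2f)$, valid precisely because $c_0=0$; it contributes $\log\frac{c_1+c_2f(b)}{c_1+c_2f(a)}$, bounded in terms of $\sigma(\phi_\pm)$. The second piece $-\int_a^b c_2f_x\frac{\rho}{\sigma(f)}\,dz$ is controlled by $c_2\|\rho\|_\infty\int_{f(a)}^{f(b)}\frac{du}{\sigma(u)}$, finite because $\sigma>0$ on $[\phi_-,\phi_+]$, while the drift term $E\int_a^b(c_2(\rho+f)+c_1)\,dz$ is bounded by $|E|(c_2(\|\rho\|_\infty+\phi_+)+c_1)$. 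Collecting these estimates yields the explicit constants $b_E(f),B_E(f)$ and hence the uniform bounds $b_E,B_E$.

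For the fixed-point argument I would then verify the remaining hypotheses. Convexity of $\mc B_E$ is immediate, being cut out by the linear constraints $b_E\leq f_x\leq B_E$ together with the fixed boundary values. Continuity of $\mc{K}_\rho$ in the $C^1$ norm follows by dominated convergence: if $f_n\to f$ in $C^1(\Lambda)$ then, using $\sigma(f_n)\geq\sigma(\phi_-)>0$, one gets $\mc{H}_\rho(f_n)\to\mc{H}_\rho(f)$ in $L^\infty$, hence $I_n\to I$ uniformly, hence $\mc{K}_\rho(f_n)\to\mc{K}_\rho(f)$ in $C^1$. The genuinely delicate point is compactness: the set $\{b_E\leq f_x\leq B_E\}$ is closed, convex and bounded but not compact in $C^1(\Lambda)$, so I would exploit the extra regularity of the image. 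Every $g=\mc{K}_\rho(f)$ satisfies $g_{xx}=g_x\,\mc{H}_\rho(f)$ almost everywhere, with $g_x\leq B_E$ and $\mc{H}_\rho(f)$ uniformly bounded on $\mc D_E$; thus the derivatives $g_x$ are equi-Lipschitz, and Arzel\`a--Ascoli applied simultaneously to the functions and to their derivatives shows that $\mc{K}_\rho(\mc D_E)$ is relatively compact in $C^1(\Lambda)$. Its closed convex hull $\widehat{\mc B}$ is then a compact convex subset of $\mc B_E\subset\mc D_E$ with $\mc{K}_\rho(\widehat{\mc B})\subset\mc{K}_\rho(\mc D_E)\subset\widehat{\mc B}$.

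Applying Schauder--Tychonov to $\mc{K}_\rho$ on $\widehat{\mc B}$ produces a fixed point $f_0=\mc{K}_\rho(f_0)$. By construction $f_0\in C^1(\Lambda)$ with $f_{0,x}>0$, so $f_{0,xx}$ exists almost everywhere and, reversing the equivalence between \eqref{integro EL} and \eqref{K eq}, $f_0$ solves the Euler--Lagrange equation \eqref{eq:exEL} almost everywhere with the prescribed boundary data; this $f_0$ is the sought solution $\phi^\rho$.
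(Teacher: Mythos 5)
Your proposal is correct and follows the paper's overall strategy --- recast \eqref{K eq} as the integral equation \eqref{integro EL}, establish uniform two-sided bounds on $(\mc{K}_{\rho}(f))_x$, and invoke Schauder--Tychonov --- but it deviates in two substantive ways. First, your derivation of the a priori bounds is different and arguably cleaner: you integrate the term $c_2 f_x f/\sigma(f)=\frac{d}{dz}\log(c_1+c_2f)$ exactly (this is precisely where $c_0=0$ enters) and control $c_2f_x\rho/\sigma(f)$ by the substitution $u=f(z)$, so your constants depend only on $\phi_\pm$, $\|\rho\|_\infty$ and $E$; the paper instead bounds $\mc{H}_\rho(f)$ by multiples of $\log_x\sigma(f)$ with prefactors involving $m_{f_x}$ and $M_{f_x}$, and then takes a $\min/\max$ over $f\in\mc D_E$ to define $b_E,B_E$ in \eqref{inv set} --- a step that is more delicate, since $m_{f_x}$ is not bounded away from zero on $\mc D_E$. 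Your explicit constants therefore differ from those displayed above \eqref{inv set}, but this is immaterial: any uniform two-sided bound on the derivative of the image suffices. Second, and more importantly, you are right that the set $\{f: b_E\leq f_x\leq B_E\}$ is \emph{not} compact in $C^1(\Lambda)$: bounded derivatives give equicontinuity of the functions themselves (the paper's appeal to the Lagrange theorem), hence relative compactness only in $C^0$, while nothing constrains the modulus of continuity of $f_x$. Your repair --- observing that every $g=\mc K_\rho(f)$ satisfies $g_{xx}=g_x\,\mc H_\rho(f)$ with a right-hand side uniformly bounded over $\mc D_E$, so that the image has equi-Lipschitz derivatives and is relatively compact in $C^1$, and then applying the fixed-point theorem on the closed convex hull of the image --- supplies exactly the missing equicontinuity of the derivatives and closes a genuine gap in the compactness step as written in the proof above.
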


\begin{proof}
The map $\mc K_{\rho}$ is continuous on $\mc{D_E}$  because   it is composition of  continuous functions. Computing $\mc K_{\rho}$ in $\p_\pm\Lambda=\{0,1\}$ it is immediate $\mc K_{\rho}(f)(\p_\pm\Lambda)=\phi_{\pm}$. We are considering $\rho>0$, using the condition of $f\in \mc{D}_E$, the equalities $2c_2ff_x+c_1f=\sigma_x(f)$, $\us{x\in\Lambda}{\inf}\,\sigma(f)=\sigma(\phi_-)$, $\us{x\in\Lambda}{\sup}\,\sigma(f)=\sigma(\phi_+)$ and the fact that $M_{\phi_x}$ and $m_{\phi_x}$ are finite we have 
\[ \begin{array}{ll}
\log_x(\sigma(f))\leq \mc{H}_\rho(f)\leq\frac{\sigma(\phi_+)}{m_{f_x}}\log_x(\sigma(f))+c_2||\rho||_\infty c^1_E E\text{ for } E>E^*,\\
\frac{\sigma(\phi_-)}{M_{f_x}}\log_x(\sigma(f))-c_2||\rho||_\infty c^2_E |E|\leq \mc{H}_\rho(f)\leq\log_x(\sigma(f)) \text{ for } E<E^*.
\end{array} \] 
This inequalities implies that 
\begin{equation}\label{ineq K}
b_E\leq (\mc{K}_{\rho}(f))_x\leq B_E 
\end{equation}
for all $x\in\Lambda$,
then  $\mc{K}_{\rho}(\mc{D_E})\subset\mc{B}_E$.

Taking a convex combination of two functions it is straightforward verifying that $\mc{B}_E$ is convex.  Compactness comes from Ascoli-Arzel\`{a} theorem, namely $\mc{B}_E$ has to be closed, limited and equicontinuous\footnote{A subset $\mc{B}$ of a  metric space $(X,d)$ is equicontinuous if $\forall \epsilon>0$ exists $\delta_{\epsilon}$ such that $d(x,y)<\delta_{\epsilon}$ implies $|f(y)-f(x)|<\epsilon$ for all $f\in\mc{B}$.}. $\mc{B}_E$ is closed by definition, it is also limited because if $f\in B_E$ is a continuous function on the compact set $\Lambda$  and $b_E\leq f_x(x)\leq B_E$ then $\us{f\in  \mc{B}_E}{\sup}||f||_{C^1}<C$ for some constant $C>0$, finally using Lagrange theorem it is easy to get equicontinuity. At this point we can apply Schauder-Tychonov because $\mc{K}_{\rho}(\mc{B_E})\subset\mc{K}_{\rho}(\mc{D_E})$ and we already proved $\mc{K}_{\rho}(\mc{D_E})\subset\mc{B}_E$. This concludes the proof.
\end{proof}

From proposition \ref{existence KMP} it is possible to conclude the existence of a solution of (\ref{eq:exEL}) for all mobilities with two real roots and $c_2>0$.
 
\begin{corollary}\label{traslazione}
If  the mobility is a second order polynomial $\sigma(x)=c_0+c_1x+c_2x^2$ with two real roots and $c_2>0$, then (\ref{K eq}) has at least almost everywhere  a solution in $\mc{D}_E$  for each  $\rho\in L^{\infty}(\Lambda)$ such that $\rho>\alpha_+$, where $\alpha_+$ is the most right root of the polynomial $ \sigma(\rho) $.
\end{corollary}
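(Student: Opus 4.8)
The plan is to reduce the general two-real-root case to the special case $\sigma(x)=c_1x+c_2x^2$ (with $c_1\geq 0$, $c_2>0$) already settled in Proposition \ref{existence KMP}, by means of a translation of the density variable; this is the content hinted at by the name of the corollary. Note that in the special case the roots are $0$ and $-c_1/c_2\leq 0$, so the rightmost root is $\alpha_+=0$ and the hypothesis $\rho>\alpha_+$ there reads $\rho>0$, exactly as assumed.

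First I would factor $\sigma(x)=c_2(x-\alpha_-)(x-\alpha_+)$ with real roots $\alpha_-\leq\alpha_+$ and $c_2>0$, and introduce the shifted quantities $\tilde\phi:=\phi-\alpha_+$, $\tilde\rho:=\rho-\alpha_+$, together with the shifted boundary data $\tilde\phi_\pm:=\rho_\pm-\alpha_+$. The translated mobility is $\tilde\sigma(y):=\sigma(y+\alpha_+)=c_2y^2+\tilde c_1 y$ with $\tilde c_1:=c_2(\alpha_+-\alpha_-)\geq 0$ and vanishing constant term, so $\tilde\sigma$ is precisely of the form covered by Proposition \ref{existence KMP}. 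Moreover $\tilde\rho\in L^\infty(\Lambda)$ whenever $\rho\in L^\infty(\Lambda)$, and the hypothesis $\rho>\alpha_+$ becomes $\tilde\rho>0$, which is the positivity assumption required there.

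The crucial step is to check that the Euler-Lagrange equation \eqref{eq:exEL} is covariant under this translation, i.e.\ that $\phi$ solves \eqref{eq:exEL} for $(\sigma,\rho)$ with data $\rho_\pm$ if and only if $\tilde\phi$ solves the same equation for $(\tilde\sigma,\tilde\rho)$ with data $\tilde\phi_\pm$. Since $\nabla$ and $\Delta$ commute with constant shifts one has $\nabla\tilde\phi=\nabla\phi$, $\Delta\tilde\phi=\Delta\phi$, $\sigma(\phi)=\tilde\sigma(\tilde\phi)$ and $\rho-\phi=\tilde\rho-\tilde\phi$, so every term transforms trivially except the factor $c_2(\rho+\phi)+c_1$. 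Writing $\rho+\phi=\tilde\rho+\tilde\phi+2\alpha_+$ and using the Vieta relation $c_1=-c_2(\alpha_++\alpha_-)$ gives $c_2(\rho+\phi)+c_1=c_2(\tilde\rho+\tilde\phi)+c_2(\alpha_+-\alpha_-)=c_2(\tilde\rho+\tilde\phi)+\tilde c_1$, which is exactly the corresponding factor for the translated mobility; hence the two equations coincide after the shift. I would also record that the admissible set is preserved: by the explicit descriptions \eqref{eq:D_E>} and \eqref{eq:D_E<} the constraints $\nabla f>0$ and $\nabla f\lessgtr E\sigma(f)$ involve only $\nabla f$ and $\sigma(f)$, both invariant under the shift, while the critical field is unchanged since the Einstein relation $f''=1/\sigma$ forces $\tilde\lambda[\tilde\rho]=\lambda[\rho]+\mathrm{const}$ and therefore $\tilde E^*=E^*$; thus $\phi\in\mathcal D_E$ iff $\tilde\phi$ lies in the corresponding set for the translated problem with the same $E$.

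Granting these identifications, the conclusion follows by applying Proposition \ref{existence KMP} to the translated data $(\tilde\sigma,\tilde\rho)$: it yields a fixed point $\tilde\phi_0\in\mathcal B_E$ solving the integral-differential form associated with \eqref{K eq} and hence solving \eqref{K eq} almost everywhere. Undoing the shift, $\phi_0:=\tilde\phi_0+\alpha_+$ is then an almost-everywhere solution of \eqref{eq:exEL} lying in $\mathcal D_E$. The only genuine work is the algebraic covariance verification of the term $c_2(\rho+\phi)+c_1$, together with the bookkeeping confirming that the constraint set $\mathcal D_E$ and the threshold $E^*$ are carried correctly through the translation; everything else is inherited verbatim from Proposition \ref{existence KMP}.
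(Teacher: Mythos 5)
Your proposal is correct and follows essentially the same route as the paper: the paper's proof of Corollary \ref{traslazione} is exactly a translation of the density variable by (minus) the rightmost root $\alpha_+$, reducing to the case $\sigma(x)=c_1x+c_2x^2$ of Proposition \ref{existence KMP} and noting that $\nabla$, $\Delta$, $\sigma(\phi)$ and $\rho-\phi$ are shift-invariant so that the Euler--Lagrange equation and the set $\mathcal D_E$ are preserved. Your explicit Vieta computation for the term $c_2(\rho+\phi)+c_1$ just spells out the ``few computations'' the paper leaves implicit.
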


\begin{proof}
All second order polynomials with $ c_2>0 $ and two real roots can be obtained from the ones of proposition \ref{existence KMP} by an appropriate translation. Namely, given $\tilde{\sigma}(x)=\tilde{c}_2x^2+\tilde{c}_1x+\tilde{c}_0$ this can be obtained as $\tilde{\sigma}(x)=\sigma(x+k)$ for some translation constant $k$ such that $\tilde{c}_2=c_2$, $\tilde{c}_1=2c_2k+c_1$ and $\tilde{c}_0=c_2k+c_1k$. We want to show that for an arbitrary profile $\tilde{\rho}\in L^{\infty}(\Lambda)$ such that $\tilde{\rho}>\alpha_+$,  equation (\ref{K eq}) with mobility $\tilde{\sigma}(x)$   has a solution almost everywhere. Considering  $\phi$    solution of (\ref{K eq}) with $ \sigma(\rho) $ as in proposition \ref{existence KMP} and $ \rho $ positive profile in $L^{\infty}(\Lambda)$, we  define $\tilde{\phi}:=\phi-k$ and $\tilde{\rho}:=\rho-k$ and  with few computations we have $\tilde{\phi}_{xx}-\tilde{\phi}_x\mc{H}_{\tilde{\rho}}(\tilde{\phi})=\phi_{xx}-\phi_x\mc{H}_{\rho}(\phi)=0$. Since we choosed $\rho$ arbitrary, $\tilde{\rho}$ it is in turn. In addition $\tilde{\rho}>-k$ with  $-k=\alpha_+$  because $\tilde{\sigma}(x)=\sigma(x+k)$. Lastly $\tilde{\phi}\in \mc{D}_E$ comes from $\frac{d^n\tilde{\phi}}{dx^n}=\frac{d^n\phi}{dx^n}$ and $\sigma(\tilde{\phi})=\sigma(\phi)$.
\end{proof}

\begin{corollary} 
If $\rho\in C^1(\Lambda)$ then $\phi=\phi^\rho\in C^2(\Lambda)\cap\mc{D_E}$ and (\ref{K eq}) holds everywhere.
\end{corollary}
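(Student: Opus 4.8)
The plan is to upgrade the regularity of the fixed point $\phi=\phi^\rho$ produced by Proposition \ref{existence KMP} (and its translate in Corollary \ref{traslazione}) through a direct bootstrap on the integral representation \eqref{integro EL}. Recall that $\phi$ satisfies $\phi=\mc K_\rho(\phi)$; writing $F(y):=\int_0^y\mc H_\rho(\phi)(z)\,dz$ and $Z:=\int_0^1 e^{F(y)}\,dy>0$, differentiating \eqref{integro EL} gives the explicit formula
\begin{equation*}
\phi_x(x)=(\phi_+-\phi_-)\frac{e^{F(x)}}{Z}\,.
\end{equation*}
Since the fixed point lies in $\mc B_E\subset\mc D_E\subset C^1(\Lambda)$, the inclusion $\phi\in\mc D_E$ is already in hand and both $\phi$ and $\phi_x$ are continuous on $\Lambda$; the only new content is that the stronger hypothesis $\rho\in C^1(\Lambda)$ forces $\phi_x$ to be continuously differentiable.

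First I would check that the integrand $\mc H_\rho(\phi)$ is continuous on $\Lambda$. By monotonicity $\phi_x>0$ together with the boundary conditions $\phi(0)=\rho_-$, $\phi(1)=\rho_+$, the function $\phi$ takes values in the compact interval $[\rho_-,\rho_+]$, which lies strictly to the right of the largest root $\alpha_+$ of $\sigma$; hence $\sigma(\phi)\geq\min_{s\in[\rho_-,\rho_+]}\sigma(s)>0$ is bounded away from zero, so $1/\sigma(\phi)$ is continuous. As $\phi,\phi_x\in C^0(\Lambda)$ and now also $\rho\in C^0(\Lambda)$, the expression $\mc H_\rho(\phi)=c_2\phi_x\frac{\phi-\rho}{\sigma(\phi)}+E\left(c_2(\rho+\phi)+c_1\right)$ is a continuous function of $x$. (Had $\rho$ been merely in $L^\infty$, this integrand would be only bounded, which is exactly why the earlier statements could assert \eqref{K eq} only almost everywhere.)

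With $\mc H_\rho(\phi)\in C^0(\Lambda)$, the fundamental theorem of calculus yields $F\in C^1(\Lambda)$ with $F'(x)=\mc H_\rho(\phi)(x)$; therefore $e^{F}\in C^1(\Lambda)$ and the displayed formula gives $\phi_x=(\phi_+-\phi_-)e^{F}/Z\in C^1(\Lambda)$, i.e. $\phi\in C^2(\Lambda)$. Differentiating that formula once more then produces, for every $x\in\Lambda$,
\begin{equation*}
\phi_{xx}(x)=\phi_x(x)\,F'(x)=\phi_x(x)\,\mc H_\rho(\phi)(x)\,,
\end{equation*}
which is precisely \eqref{K eq} holding pointwise rather than merely almost everywhere.

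The argument is essentially routine, so the only step needing care is the continuity of $\mc H_\rho(\phi)$, which hinges on $\sigma(\phi)$ staying away from its roots; this is guaranteed by the location of the boundary data $\rho_\pm>\alpha_+$ together with the monotonicity already built into $\mc D_E$, so I do not anticipate a genuine obstacle. The same bootstrap applies verbatim to the translated mobilities treated in Corollary \ref{traslazione} and to the concave exclusion case, since it uses only the continuity of $\mc H_\rho(\phi)$ and the strict positivity of $\sigma(\phi)$ on the range of $\phi$.
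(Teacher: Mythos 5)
Your proof is correct. Where the paper disposes of this corollary by rewriting \eqref{K eq} as a first-order system $\Phi_x=F_\rho(\phi,\Phi)$, $\phi_x=\Phi$ and appealing to the Cauchy--Lipschitz theorem to conclude that the solution of \eqref{integro EL} is $C^2$, you instead bootstrap directly on the fixed-point identity $\phi=\mc K_\rho(\phi)$: the explicit formula $\phi_x=(\phi_+-\phi_-)e^{F}/Z$ with $F(y)=\int_0^y\mc H_\rho(\phi)(z)\,dz$ turns the regularity question into the continuity of $\mc H_\rho(\phi)$, which you verify from $\phi,\phi_x\in C^0(\Lambda)$ and $\sigma(\phi)\geq\sigma(\rho_-)>0$ (the range of $\phi$ staying strictly to the right of $\alpha_+$). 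The two routes are genuinely different. Yours is more self-contained: it never leaves the integral representation already used to build the fixed point, it makes explicit why the "almost everywhere" of Corollary \ref{traslazione} upgrades to "everywhere" (the integrand of $F$ becomes continuous rather than merely bounded), and it shows in passing that $\rho\in C^0(\Lambda)$ already suffices, since $\mc H_\rho$ involves $\rho$ but not $\rho_x$. The paper's route is shorter on the page and additionally packages uniqueness of the solution to the associated initial-value problem, but it leaves implicit the identification of the integral-equation solution with the ODE solution — precisely the step your bootstrap carries out by hand.
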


\begin{proof}
we consider the ODE $\left\{\begin{array}{ll}  {\Phi}_x=F_{\rho}(\phi,\Phi)\\
{\phi}_x=\Phi\\
\phi(-1)=\phi_-  \end{array}\right. .$, where  $\Phi:={\phi}_x$, $\Phi_x:=F_{\rho}(\phi,\Phi)$ and $F_{\rho}(\phi,\Phi)$ is a Lipschitz continuous field. By   Cauchy-Lipschitz theorem  solution (\ref{integro EL}) is   $C^2(\Lambda) $ and solves \eqref{K eq}  everywhere.
\end{proof}

\subsection{SEP "like" models} The existence for the asymmetric SEP, namely $\sigma(\rho)=\rho-\rho^2$, it is already proved in \cite{BGL09}, so we refer to this work for the proof. If $\sigma(\rho)=c_1\rho+c_2\rho^2$ with $c_2<0$, $c_1\geq0$ and $\rho>0$ this can be treated mutatis mutandis.  The existence for the most general case $\sigma(\rho)=c_0+c_1\rho+c_2\rho^2$, where $c_2<0$ and $\sigma(\rho)$ with two real roots,  can be deduced by translation as in corollary. \ref{traslazione}.

\subsection{KMPx "like" models} For this model also the existence of a solution to the Euler-Lagrange is a difficult issue and we didn't prove it. The difficulties arise from the fact that  $ \sigma(\rho) $ does not have real roots and we can't consider profiles $ \rho $ with a defined sign. So it does not hold the previous argument leading to suitable inequalities to apply Schauder-Tychonov. For the moment we are not proving the existence of $ \phi^\rho $.

\section{KMP stationary solution}\label{sec:KMPsta}
In this section we solve explicitly the equation \eqref{steq} when $\p_t{\rho}=0$ for the KMP model. We define the costant $\j:=\rho_x-E\rho^2$, it depends on, i.e.  $\j=\j(E)$, and it is the current $J_E$ in \eqref{ei} with opposite sign, i.e. $ j(E)=-J_E $. Integrating $\j=(\rho_+-\rho_-)-E\int_{\Lambda}dx\:\rho^2$, therefore $ \sign (\j)=\left\{ \begin{array}{ll} 1 \text{ if } E<E*\\ -1 \text{ if } E>E^*\\ \end{array}\right.$ where\footnote{If $E=E^*$ then  $J_E=0$, since the particle flow due to the boundary sources and the one due to the external field balance  one has a reversible  dynamics, see section \ref{trc} and \ref{sec:stasol}} $E^*$ can be rewritten  as $E^*=\frac{\rho_+-\rho_-}{\int_{\Lambda} dx \:\rho^2}$. 

First we want to study the limit $E\rightarrow +\infty$, the opposite case $E\rightarrow -\infty$ will be solved by symmetry as explained later.  We can assume $E>E^*$, integrating  $\rho_x-E\rho^2=\j$  by  separation of variables \[\frac{1}{2\sqrt{e}}\log\frac{|1+\sqrt{e}\rho|}{|1-\sqrt{e}\rho|}=\j x+\k, \text{ where } e:=\frac{E}{|\j|} \text{ and } \k=\k(E). \]
We know $\rho$ is a  profile from  $\rho_-$ to $\rho_+$ and to obtain the explicit form we have to invert the previous equality, a priori we can't do  assumptions on $\rho_x$ so we need to consider both cases $\sign(1+\sqrt{e}\rho)=\sign(1-\sqrt{e}\rho)$ and $\sign(1+\sqrt{e}\rho)\neq\sign(1-\sqrt{e}\rho)$. In the first case
$\rho=\frac{1}{\sqrt{e}}\tanh (\j \sqrt{e} x+\k\sqrt{e})$ which is a strictly decreasing profile because $\j<0$, therefore this solution is impossible since $\rho_+>\rho_-$. While in the second case we have a  strictly increasing profile, namely  
\begin{equation}\label{st profile}
\rho(x)=\frac{1}{\sqrt{e}}\coth(\sqrt{e}(\j x+\k)),
\end{equation}
that is compatible with our boundary conditions. Since $\k=-jx+\frac{1}{2\sqrt{e}}\log\frac{\sqrt{e}\rho+1}{\sqrt{e}\rho-1}\quad $ for all $x$ and $ \rho>0 $, taking $x=1$ we have $\k>-\j$ and  there is a   vertical asymptote in $x=-\frac{\k}{\j}$ greater than one, i.e. $ -\frac{\k}{\j}>1 $. Let $I(E):=\int_{\Lambda}dx\,\rho^2$, asymptotically  $-\j,\k\us{E}{\sim} I(E)\cdot E$ and then $\us{E\rightarrow+\infty}{\lim}-\frac{\k}{\j}=1^+$.

To deduce  the proof of  the strong asymmetric limit of the stationary solution $ \bar\rho_E $ for $ E\to-\infty $ from that of $ E\to+\infty $ we will use next lemma.

\begin {lemma}\label{le:sym}
If $\rho(x)=\rho(\rho_-,\rho_+,E,J_E,x)$ is solution of (1):   $\left\{\begin{array} {l}
\rho_x-E\rho^2=-J_E\\ 
\rho(\p_\pm\Lambda)=\rho_{\pm}
\end{array}\right.$, 
then $\gamma(x)=\frac{1}{\rho(\rho_-,\rho_+,E,J_E,x)}$ is solution of (2):
$\left\{\begin{array} {l}
\gamma_x-J_E\gamma^2=-E\\ 
\gamma(\p_\pm\Lambda)=\frac{1}{\rho_{\pm}}
\end{array}\right.$. In particular    $\rho\left(\frac{1}{\rho_-},\frac{1}{\rho_+},J_E,E,x\right)=\frac{1}{\rho(\rho_-,\rho_+,E,J_E,x)}$.
\end{lemma}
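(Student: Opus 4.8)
The plan is to prove the lemma by the explicit change of variables $\gamma:=1/\rho$, handling the differential identity first, then the boundary conditions, and finally the parameter-exchange identification. First I would check that the substitution is legitimate: since $\rho$ solves the first equation of system (1), namely \eqref{ei} with $\sigma(\rho)=\rho^2$, and carries boundary data $\rho_\pm>0$, the monotonicity established in Section \ref{sec:stasol} forces $\rho$ to stay strictly positive on $\Lambda$. Hence $\gamma=1/\rho$ is well defined and of class $C^1$.

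For the differential equation, I would rewrite the bulk equation of system (1) as $\rho_x=E\rho^2-J_E$ and differentiate $\gamma=1/\rho$, obtaining $\gamma_x=-\rho_x/\rho^2$. Substituting the expression for $\rho_x$ gives
\begin{equation*}
\gamma_x=-\frac{E\rho^2-J_E}{\rho^2}=-E+J_E\,\gamma^2\,,
\end{equation*}
that is $\gamma_x-J_E\gamma^2=-E$, which is exactly the bulk equation of system (2). The boundary conditions are immediate, since $\gamma(\p_\pm\Lambda)=1/\rho(\p_\pm\Lambda)=1/\rho_\pm$. This already establishes that $\gamma$ solves system (2).

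For the ``in particular'' statement I would observe that system (2) is nothing but system (1) after the parameter exchange $(\rho_-,\rho_+,E,J_E)\mapsto(1/\rho_-,1/\rho_+,J_E,E)$; thus $\gamma$ and $\rho\!\left(\tfrac{1}{\rho_-},\tfrac{1}{\rho_+},J_E,E,\cdot\right)$ solve the very same boundary value problem. The identity $\rho\!\left(\tfrac{1}{\rho_-},\tfrac{1}{\rho_+},J_E,E,x\right)=\tfrac{1}{\rho(\rho_-,\rho_+,E,J_E,x)}$ then follows from the uniqueness of the monotone stationary solution of \eqref{steq}, which was obtained in Section \ref{sec:stasol} through the Cauchy-problem argument. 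Since no genuine analytic difficulty arises, this lemma is essentially a one-line computation: the only (minor) obstacle is to justify positivity of $\rho$ so that $\gamma$ is well defined, and to verify that the exchanged parameters stay in the admissible regime so that the cited uniqueness may be invoked for the final identification.
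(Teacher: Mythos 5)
Your proof is correct and follows exactly the route of the paper: the substitution $\gamma=1/\rho$, the one-line computation $\gamma_x=-\rho_x/\rho^2=-E+J_E\gamma^2$, the trivial boundary check, and the identification via the parameter exchange $(\rho_-,\rho_+,E,J_E)\mapsto(1/\rho_-,1/\rho_+,J_E,E)$. The only difference is that you make explicit two points the paper's proof leaves implicit — the strict positivity of $\rho$ guaranteeing that $\gamma$ is well defined, and the appeal to uniqueness of the monotone stationary solution to justify the final identification — which is a welcome tightening rather than a different argument.
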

\begin{proof}
Defining $\gamma(x):=\frac{1}{\rho(x)}$, we get the equation (2) and by definition the boundary conditions are the desired ones. Equation (2) has the same form of (1) where $E$ is interchanged with $J_E$. Then $\gamma(x)=\rho\left(\frac{1}{\rho_-},\frac{1}{\rho_+},J_E,E,x\right)$ is solution of (2) and by  definition of $\gamma(x)$ the lemma is proved.
\end{proof}

\begin{proposition}
In the limits $E\rightarrow +\infty$ and $E\rightarrow -\infty$ the stationary profile is respectively
\begin{equation}\label{KMP limit profile}
\rho (x)=\left\{\begin{array}{ll}\rho_- \text{ if } x\in [0,1)\\
\rho_+ \text{ if } x=1
\end{array}\right.
\qquad \text{ and } \qquad
\rho (x)=\left\{\begin{array}{ll}\rho_- \text{ if } x=0\\
\rho_+ \text{ if } x=(0,1].
\end{array}\right.
\end{equation}
\end{proposition}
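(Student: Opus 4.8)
The plan is to treat the two limits separately: I would handle $E\to+\infty$ by a direct asymptotic analysis of the explicit stationary profile \eqref{st profile}, and then reduce the case $E\to-\infty$ to it via the inversion symmetry of Lemma \ref{le:sym}.

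For $E\to+\infty$ we are in the regime $E>E^*$, where $\j=-J_E<0$ and the stationary solution is the increasing profile \eqref{st profile}, $\bar\rho_E(x)=\tfrac{1}{\sqrt e}\coth\bigl(\sqrt e(\j x+\k)\bigr)$ with $e=E/|\j|$. First I would read off the asymptotics of the three constants from \eqref{bgn}: since for KMP $\sigma(\rho)=\rho^2$ with $0\le\rho_-\le\rho_+$, one has $J_E/E\to\min_{[\rho_-,\rho_+]}\sigma=\rho_-^2$, hence $e=E/|\j|=E/J_E\to\rho_-^{-2}$ and $\tfrac{1}{\sqrt e}\to\rho_-$. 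Next, using $-\j,\k\sim I(E)E$ together with $-\k/\j\to1^+$ (both established in section \ref{sec:KMPsta}), I would show $\j x+\k\sim\rho_-^2E(1-x)$, so that for every fixed $x\in[0,1)$ the argument $\sqrt e(\j x+\k)\sim\rho_-E(1-x)\to+\infty$ and therefore $\coth(\cdot)\to1$. Combining the two, $\bar\rho_E(x)\to\tfrac{1}{\sqrt e}\to\rho_-$ pointwise on $[0,1)$, while $\bar\rho_E(1)=\rho_+$ is fixed by the boundary condition. This yields the first profile in \eqref{KMP limit profile}; the convergence is only pointwise, with a boundary layer forming at $x=1$.

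For $E\to-\infty$ I would avoid repeating the integration and instead invoke Lemma \ref{le:sym}, which gives $\bar\rho_E(x)=1/\rho(1/\rho_-,1/\rho_+,J_E,E,x)$; so $\gamma_E:=1/\bar\rho_E$ solves the stationary equation with field $J_E$, current $E$, and the (decreasing) boundary data $1/\rho_-\ge1/\rho_+$. Since $J_E/E\to\rho_+^2$ by \eqref{bgn}, here $J_E\to-\infty$, so I would apply the spatial reflection $x\mapsto1-x$ — a routine change of variables turning $\tilde\gamma(x):=\gamma_E(1-x)$ into an increasing profile that solves the equation with field $-J_E\to+\infty$ and boundary data $(1/\rho_+,1/\rho_-)$ satisfying $1/\rho_+\le1/\rho_-$. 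By uniqueness of the stationary solution $\tilde\gamma$ is exactly the profile of a problem in the class just solved, so the $E\to+\infty$ result gives $\tilde\gamma\to1/\rho_+$ on $[0,1)$ and $1/\rho_-$ at $x=1$. Undoing the reflection yields $\gamma_E\to1/\rho_+$ on $(0,1]$ and $1/\rho_-$ at $0$, and inverting gives $\bar\rho_E\to\rho_+$ on $(0,1]$ and $\rho_-$ at $0$, i.e. the second profile in \eqref{KMP limit profile}.

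The hard part is the bookkeeping in the second limit: one must check that the inversion of Lemma \ref{le:sym} combined with the reflection maps the $E\to-\infty$ data onto the already-solved increasing, field-$\to+\infty$ configuration with the correct boundary identification, and that the asymptotic input \eqref{bgn} transforms consistently under these two symmetries (in particular that the transformed field exceeds the transformed critical field for large $|E|$). In the first limit the only delicate point is making the estimate $\sqrt e(\j x+\k)\to+\infty$ locally uniform on compact subsets of $[0,1)$, so as to isolate the endpoint $x=1$ cleanly; this is elementary given the asymptotics of section \ref{sec:KMPsta}.
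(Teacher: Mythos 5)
Your proof is correct and follows essentially the same route as the paper: for $E\to+\infty$ it extracts the limit of the explicit $\coth$ profile \eqref{st profile} by showing the prefactor tends to $\rho_-$ while the argument diverges for every fixed $x\in[0,1)$, and for $E\to-\infty$ it composes the inversion of Lemma \ref{le:sym} with the reflection $x\mapsto 1-x$ to reduce to the first case. The only cosmetic differences are that you pin down $1/\sqrt{e}\to\rho_-$ directly from \eqref{bgn} where the paper identifies $\lim\sqrt{I(E)}=\rho_-$ via the boundary condition at $x=0$, and that you apply the two symmetries in the opposite order.
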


\begin{proof} First we sketch the proof for  the case $E\rightarrow +\infty$. Rewriting $-\j x+\k=\j(1-x)+\frac{1}{2\sqrt{e}}\log\frac{\sqrt{e}\rho_+ +1}{\sqrt{e}\rho_- -1}$ and using $\us{E\rightarrow+\infty}{\lim}-\frac{j}{E}=\us{E\rightarrow+\infty}{\lim}I(E)$ one computes that $\us{E\rightarrow+\infty}{\lim}\rho(x)=\us{E\rightarrow+\infty}{\lim}\sqrt{I(E)}$ for all $x\neq1$.  From the boundary conditions $\us{E\rightarrow+\infty}{\lim}\rho(0)=\rho_-=\us{E\rightarrow+\infty}{\lim}\sqrt{I(E)}$ and  one can verify $\us{E\rightarrow+\infty}{\lim}\rho(1)=\us{E\rightarrow+\infty}{\lim}\sqrt{I(E)}\cdot H(E)=\rho_+$, where $H(E)$ is a finite positive constant such that $\us{E\rightarrow+\infty}{\lim}H(E)=\frac{\rho_+}{\rho_-}$.

We prove the proposition for $E\rightarrow-\infty$ by symmetry  taking  advantage of  lemma \ref{le:sym}.
If $\tilde{\rho}(x)=\tilde{\rho}(\rho_+,\rho_-,\tilde{E},J_{\tilde{E}},x)$ solves $\left\{\begin{array} {l}
\tilde{\rho}_x-\tilde{E}\tilde{\rho}^2=-J_{\tilde{E}}\\ 
\tilde{\rho}(\p_\pm\Lambda)=\rho_{\mp}
\end{array}\right.$ where\footnote{From the physical point of view is the same to consider $\left\{\begin{array} {l}
\rho_x-E\rho^2=-J_{E}\\ 
\rho(\p_\pm\Lambda)=\rho_{\pm}
\end{array}\right.$. } $\tilde{E}=-E$ then $\rho(x)=\tilde{\rho}(\rho_+,\rho_-,\tilde{E},J_{\tilde{E}},1-x)$ solves $\left\{\begin{array} {l}
\rho_x-E\rho^2=-J_E\\ 
\rho(\p_\pm\Lambda)=\rho_{\pm}
\end{array}\right.$ with  $J_E=-J_{\tilde{E}}$. From  previous lemma \ref{le:sym} it follows that  $\tilde{\rho}(x)=\tilde{\rho}(\rho_+,\rho_-,\tilde{E},J_{\tilde{E}},x)=\frac{1}{\tilde{\rho}\left(\frac{1}{\rho_+},\frac{1}{\rho_-},J_{\tilde{E}},\tilde{E},x\right)}$, since   $\tilde{E}$ and $J_{\tilde{E}}$ have the same sign  $\underset{E\rightarrow-\infty}{\lim}\tilde{\rho} (x)=\left\{\begin{array}{ll}\rho_- \text{ if } x=0\\
\rho_+ \text{ if } x=(0,1]\end{array}\right.$ and $\underset{E\rightarrow-\infty}{\lim}\rho(x)$ is as in (\ref{KMP limit profile})  .
\end{proof}

\bigskip

So have found that the stationary solution $ \bar\rho_E $ is as \eqref{st profile} and in the strong asymmetric limit becomes \eqref{KMP limit profile}.


\facciatabianca
\cleardoublepage

\pagestyle{fancy}
\fancyhf{}
\rhead{}
\lhead{}
\cfoot{ \footnotesize \thepage}
\renewcommand{\headrulewidth}{0pt}


\addcontentsline{toc}{chapter}{Bibliography}

\end{document}